\definecolor{darkred}{rgb}{0.8,0.1,0.1}
\theoremstyle{plain}
\newtheorem{theo}{Theorem}[section]
\newtheorem{lem}[theo]{Lemma}
\newtheorem{propo}[theo]{Proposition}
\newtheorem{cor}[theo]{Corollary}
\theoremstyle{definition}
\newtheorem{defi}[theo]{Definition}
\newtheorem{rem}[theo]{Remark}
\numberwithin{equation}{section}
\def\nn{\nonumber}
\def\bbK{\mathbb{K}}
\def\bbR{\mathbb{R}}
\def\bbC{\mathbb{C}}
\def\bbN{\mathbb{N}}
\def\bbZ{\mathbb{Z}}
\def\Ber{\mathrm{Ber}}
\def\OO{\mathscr{O}}
\def\eOO{\mathfrak{e}\mathscr{O}}
\def\bbM{\boldsymbol{M}}
\def\bbV{\boldsymbol{V}}
\def\LL{\mathfrak{L}}
\def\eLL{\mathfrak{eL}}
\def\AA{\mathfrak{A}}
\def\eAA{\mathfrak{eA}}
\def\QQ{\mathfrak{Q}}
\def\eQQ{\mathfrak{eQ}}
\def\ii{{\,{\rm i}\,}}
\def\op{\mathrm{op}}
\def\pt{\mathrm{pt}}
\def\res{\mathrm{res}}
\def\ext{\mathrm{ext}}
\def\Hom{\mathrm{Hom}}
\def\Der{\mathrm{Der}}
\def\id{\mathrm{id}}
\def\supp{\mathrm{supp}}
\def\dd{\mathrm{d}}
\def\vol{\mathrm{vol}}
\def\sc{\mathrm{sc}}
\def\cc{\mathrm{c}}
\def\dim{\mathrm{dim}}
\def\1{\mathbbm{1}}
\def\oone{\mathbf{1}}
\def\SCart{\mathsf{SCart}}
\def\tLor{\mathsf{otLor}}
\def\SMan{\mathsf{SMan}}
\def\ghSCart{\mathsf{ghSCart}}
\def\SLoc{\mathsf{SLoc}}
\def\eSLoc{\mathsf{eSLoc}}
\def\SAlg{\mathsf{S}^\ast\mathsf{Alg}}
\def\eSAlg{\mathsf{eS}^\ast\mathsf{Alg}}
\def\SVec{\mathsf{SVec}}
\def\eSVec{\mathsf{eSVec}}
\def\XX{\mathsf{X}}
\def\eXX{\mathsf{eX}}
\def\SSet{\mathsf{SSet}}
\def\Set{\mathsf{Set}}
\def\SPt{\mathsf{SPt}}
\def\Mod{\mathsf{SMod}}
\def\Gr{\mathsf{Gr}}
\newcommand{\ip}[2]{\left\langle #1,#2 \right\rangle}
\newcommand{\und}[1]{\widetilde{#1}}
\def\sk{\vspace{2mm}}
\title{%
Supergeometry in locally covariant quantum field theory
}
\author{%
Thomas-Paul Hack$^{1,2,a}$, Florian Hanisch$^{3,b}$ and Alexander Schenkel$^{4,c}$ \vspace{4mm}\\
{\small $^1$ Dipartimento di Matematica, Universit{\`a} degli Studi di Genova,}\\
{\small Via Dodecaneso 35, 16146 Genova, Italy.}\vspace{2mm}\\
{\small $^2$ Institut f\"ur Theoretische Physik, Universit\"at Leipzig,}\\
{\small Br\"uderstra\ss e 16, 04103 Leipzig, Germany.}\vspace{2mm}\\
{\small $^3$ Institut f\"ur Mathematik, Universit\"at Potsdam,}\\
{\small Karl-Liebknecht-Stra\ss e 24-25, 14476 Golm (Potsdam), Germany.}\vspace{2mm}\\
{\small $^4$ Department of Mathematics, Heriot-Watt University,}\\
{\small Colin Maclaurin Building, Riccarton, Edinburgh EH14 4AS, United Kingdom.}\vspace{4mm}\\
 {\footnotesize \texttt{Email:} $^a$ \texttt{hack@dima.unige.it} ~,~~ $^b$ \texttt{fhanisch@math.uni-potsdam.de} ~,~~$^c$  \texttt{as880@hw.ac.uk} }
 }
\date{September 2015}
\begin{document}

\maketitle

\begin{abstract}
In this paper we analyze supergeometric locally covariant quantum field theories. We develop suitable categories $\SLoc$ of super-Cartan supermanifolds, which generalize Lorentz manifolds in ordinary quantum field theory, and show that, starting from a few representation theoretic and geometric data, one can construct a functor $\AA : \SLoc\to \SAlg$ to the category of super-$\ast$-algebras which can be interpreted as a non-interacting super-quantum field theory. This construction turns out to disregard supersymmetry transformations as the morphism sets in the above categories are too small. We then solve this problem by using techniques from enriched category theory, which allows us to replace the morphism sets by suitable morphism supersets that contain supersymmetry transformations as their higher superpoints. We construct super-quantum field theories in terms of enriched functors $\eAA : \eSLoc\to \eSAlg$ between the enriched categories and show that supersymmetry transformations are appropriately described within the enriched framework. As examples we analyze the superparticle in $1\vert 1$-dimensions and the free Wess-Zumino model in $3\vert 2$-dimensions.
\end{abstract}

\paragraph*{Report no.:} EMPG-15-01
\paragraph*{Keywords:} supergeometry, algebraic quantum field theory, locally covariant quantum field theory, enriched category theory
\paragraph*{MSC 2010:} 81T05, 58A50, 81T60, 83E50

%%%%%%%%%%%%%%%%%%%%%%%%%%%%%%%%%%%%%%%%%%%%%%%%%%%%%%%
%%%%%%%%%%%%%%%%%%%%%%%%%%%%%%%%%%%%%%%%%%%%%%%%%%%%%%%
\newpage

{\baselineskip=12pt
\tableofcontents
}

\bigskip

\section{\label{sec:intro}Introduction and summary}
Over the past decades, supersymmetry and supergravity have been strongly 
vital research areas in theoretical and mathematical physics. On the one hand, supersymmetric
extensions of the standard model provide interesting perspectives on particle
physics and, on the other hand, supergravity arises as a low-energy limit of string theory
and it might have potential applications to e.g.\ early universe cosmology. 
Regarded from the perspective of a quantum field theorist, interest in supersymmetry
arises because of the well-known fact that certain supersymmetric quantum field theories
enjoy unexpected renormalization properties, which collectively go under the name
`non-renormalization theorems', see e.g.\ \cite{Grisaru:1979wc,Seiberg:1993vc}.
\sk

In contrast to the immense progress which theoretical physics has made during the
past decades, mathematically rigorous developments of supersymmetric 
quantum field theories are quite rare. There are however some notable exceptions:
In \cite{Buchholz:2006pq}, Buchholz and Grundling address the nontrivial problems of 
implementing supersymmetry transformations 
into the $C^\ast$-algebraic framework of algebraic quantum field theory
and constructing super-KMS states. The study of superconformal nets
in two spacetime dimensions has been initiated by Capri, Kawahigashi and 
Longo in \cite{Carpi:2007fj}. Since then superconformal nets have been intensively developed, also
with a focus on extended supersymmetry \cite{Carpi:2012va}.
Perturbative superconformal quantum field theories on a special class of (curved) spacetimes
have been discussed quite recently by de Medeiros and Hollands \cite{deMedeiros:2013mca},
where also a perturbative non-renormalization theorem is rigorously proven.
A formulation of (Euclidean) supersymmetric quantum field theories
within the Atiyah-Segal approach and their connection to elliptic cohomology
has been investigated by Stolz and Teichner, see e.g.\ the survey article
\cite{StolzTeichner}.
\sk
 
In our work we shall study supersymmetric quantum field theories from the perspective of
locally covariant quantum field theory \cite{Brunetti:2001dx}, which is a relatively modern 
extension of algebraic quantum field theory to curved spacetimes. 
In locally covariant quantum field theory, the focus is on the construction and analysis of functors from 
a category of spacetimes to a category of algebras, which are supposed to describe the assignment of
observable algebras to spacetimes. Besides establishing a mathematical foundation for 
quantum field theory on curved spacetimes, locally covariant quantum field theory is
essential for constructing perturbatively interacting models
\cite{Hollands:2001nf,Hollands:2001fb,Brunetti:2009qc}.
Our aim is to extend carefully the formalism of locally covariant 
quantum field theory to the realm of supergeometry,
focusing in the present work only on the case of non-interacting models.
On the one hand, a solid understanding of non-interacting super-quantum field theories
(super-QFTs) is a necessary prerequisite for constructing perturbative models
and analyzing their renormalization behavior, especially concerning potential non-renormalization theorems.
On the other hand, already simple examples of non-interacting super-QFTs indicate
that the basic framework of locally covariant quantum field theory has to be generalized
in order to be able to cope with the concept of supersymmetry transformations.
In more detail, as we will show in this work, the framework of ordinary category theory, on which locally covariant
quantum field theory is based, is insufficient to capture supersymmetry transformations
on the level of the super-QFT functor. In particular, we observe that both the fermionic and 
the bosonic component fields are locally covariant quantum fields in the sense of \cite{Brunetti:2001dx} (i.e.\ natural transformations
to the super-QFT functor), which indicates that supersymmetry transformations are not
appropriately described in this framework.
Using techniques from {\it enriched category theory}, we shall propose a generalization
of the framework in \cite{Brunetti:2001dx} which is general enough to capture supersymmetry transformations.
Loosely speaking, we  shall develop suitable categories of superspacetimes $\eSLoc$ and superalgebras $\eSAlg$
that are enriched over the monoidal category of supersets and consider super-QFTs as enriched 
functors $\eAA : \eSLoc\to\eSAlg$ between these enriched categories.
Supersymmetry transformations are captured in terms of the higher superpoints of the morphism
supersets in $\eSLoc$ and their action on the superalgebras of observables is dictated by the enriched functor
$\eAA : \eSLoc\to\eSAlg$. Our results will therefore clarify the structure of
supersymmetry transformations in locally covariant quantum field theory,
which will be essential for analyzing perturbative super-QFTs and their 
renormalization behavior in future works.
\sk

Let us outline the content of this paper: 
In Section \ref{sec:prelim} we shall give a self-contained and rather 
detailed introduction to those techniques of super-linear algebra and supergeometry
that will be used in our work. This should allow readers who do not have a solid 
background in those fields to follow our constructions in the main part of this paper. 
In Section \ref{sec:cartan} we introduce super-Cartan structures on supermanifolds
and study their properties. These structures have their origin in the superspace formulation
of supergravity \cite{Wess:1977fn} and they are used in our work in order to describe
`superspacetimes', which generalize Lorentz manifolds in ordinary quantum field theory.
The dimension and the `amount of supersymmetry' of a super-Cartan supermanifold
is captured in its local model space, which we describe by using representation theoretic data
corresponding to some spin group. We define a suitable category of super-Cartan supermanifolds
and show that to any super-Cartan supermanifold there is a functorially associated
oriented and time-oriented ordinary Lorentz manifold. This allows us to introduce
a natural notion of the chronological and causal future/past in a super-Cartan supermanifold
and therewith the concept of globally hyperbolic super-Cartan supermanifolds.
In Section \ref{sec:axioms} we formulate a set of axioms to describe non-interacting 
super-field theories in an abstract way. According to our Definition \ref{defi:sft}, a super-field theory
is specified by the following data: 1.)\ A choice of representation theoretic data that fixes
the local model space of the super-Cartan supermanifolds. 2.)\ A full subcategory $\SLoc$ 
of the category of globally hyperbolic super-Cartan supermanifolds, which allows us later
to implement constraints on the super-Cartan structures, e.g.\ the supergravity  
supertorsion constraints \cite{Wess:1977fn}.  3.)\ A suitable natural super-differential operator
which encodes the dynamics of the super-field theory.
We show in Section \ref{sec:ordinary} that given any super-field theory as described above,
one can construct a functor $\AA : \SLoc \to \SAlg$ to the category of super-$\ast$-algebras
which satisfies the axioms of locally covariant quantum field theory \cite{Brunetti:2001dx}
adapted to our supergeometric setting, cf.\ Theorem \ref{theo:LCQFT}. In other words, any super-field theory gives rise
to a super-QFT. As in the case of ordinary quantum field theory, we first construct
a functor $\LL : \SLoc \to \XX$ to the category of super-symplectic spaces
or the category of super-inner product spaces (depending on the representation theoretic data),
which is then quantized by a quantization functor $\QQ :\XX\to\SAlg$ that constructs
super-canonical (anti)commutation relation algebras. 
We analyze the functor $\AA :\SLoc\to\SAlg$ and show that, in addition
to the locally covariant quantum field which describes the linear superfield operators,
the bosonic and fermionic component fields are also natural transformations in this framework.
This is an undesirable feature which indicates that the framework developed in
the Sections \ref{sec:axioms} and \ref{sec:ordinary} does not capture supersymmetry transformations
as those would mix the bosonic and fermionic components. We then solve this problem
by making use of techniques from enriched category theory. In Section \ref{sec:axiomsenriched}
we provide a stronger axiomatic framework for super-field theories by generalizing
the category $\SLoc$ to a suitable category  $\eSLoc$ which is enriched over the monoidal  category 
of supersets $\SSet$. The category $\SSet$ is defined as the functor category $\mathrm{Fun}(\SPt^\op,\Set)$,
where $\SPt$ is the category of superpoints and $^{\op}$ denotes the opposite category.
Hence, a superset is a functor $\SPt^\op\to \Set$, which means that, in addition to its ordinary
points, a superset has further content that is captured by its `higher superpoints'.
Loosely speaking, enriching the morphism sets in $\SLoc$ to the morphism
supersets in $\eSLoc$ we obtain in addition to ordinary supermanifold morphisms
$M\to M^\prime$ also supermanifold morphisms between the `fattened' supermanifolds
$\pt_n\times M\to\pt_n\times M^\prime$, where $\pt_n$ is any superpoint, that are able
to capture supersymmetry transformations; indeed, the odd parameters which are used in
the physics literature in order to parametrize supersymmetry transformations
are elements in the structure sheaf $\Lambda_n = \bigwedge^\bullet \bbR^n$ (the Grassmann algebra) of $\pt_n$.
It is important to notice that in this functorial approach we  {\it do not} have to fix a superpoint (or equivalently a Grassmann algebra)
from the outside, as it is typically done in the physics literature, but we are working functorially over the category
of {\it all} superpoints. Similar techniques have been used before in order to describe super-mapping 
spaces between supermanifolds, see e.g.\ \cite{SachseDiss,SachseWockel,FH}.
In the enriched setting, the super-differential operators which govern the dynamics of the
super-field theory should form an enriched natural transformation. We explicitly characterize these enriched natural
transformations and show that they are in bijective correspondence to ordinary natural transformations
(as used in Section \ref{sec:axioms}) satisfying further conditions, which one may interpret as covariance
conditions under supersymmetry transformations. This allows us to give a simple
axiomatic characterization of enriched super-field theories in Definition \ref{defi:sftenriched}.
In Section \ref{sec:enriched} we show that any enriched super-field theory gives
rise to an enriched super-QFT that we describe by an enriched functor $\eAA : \eSLoc\to\eSAlg$
to a suitable enriched category of super-$\ast$-algebras.
We show that this enriched functor satisfies a generalization of the axioms of locally covariant 
quantum field theory, cf.\ Theorem \ref{theo:enLCQFT}. We further show that
the enriched super-QFT has an enriched locally covariant quantum field (given by an enriched natural transformation)
which describes the linear superfield operators. In contrast to the non-enriched theory studied in Section
\ref{sec:ordinary}, our enriched natural transformation does not decompose into the bosonic and fermionic 
component fields, which indicates that supersymmetry transformations are appropriately 
described within our enriched categorical framework. This is confirmed and illustrated in Section \ref{sec:examples} by
constructing and analyzing explicit examples of $1\vert 1$ and $3\vert 2$-dimensional enriched super-QFTs, together
with the structure of supersymmetry transformations. Our $1\vert 1$-dimensional example is the usual superparticle
and our $3\vert 2$-dimensional example is the free Wess-Zumino model on a class of curved super-Cartan supermanifolds. 
In Appendix \ref{app:enriched} we collect some elementary definitions from enriched category theory which are needed in
our work.

%%%%%%%%%%%%%%%%%%%%%%%%%%%%%%%%%%%%%%%%%%%%%%%%
%%%%%%%%%%%%%%%%%%%%%%%%%%%%%%%%%%%%%%%%%%%%%%%%

\section{\label{sec:prelim}Preliminaries on supergeometry}
We give a self-contained review of those aspects of super-linear algebra and supergeometry
which we shall need for our work. For more details see e.g.\ 
\cite{Carmeli} and \cite{DMSuper}. In the following the ground field $\bbK$ will be 
either $\bbR$ or $\bbC$ and we set $\bbZ_2 :=\{0,1\}$. Whenever there is no need to distinguish between
the real and complex case, we shall drop the field $\bbK$ from our notations.

\paragraph{Super-vector spaces:}
A {\em super-vector space} is a $\bbZ_2$-graded vector space 
$V = V_{0} \oplus V_{1}$.
We assign to the non-zero homogeneous elements $0\neq v\in V_i$ the $\bbZ_2$-parity $\vert v\vert :=i\in\bbZ_2$, for
$i=0,1$, and call elements in $V_0$ even and elements in $V_1$ odd.
The superdimension (or simply dimension) of a super-vector space $V$ 
is denoted by $\dim(V):= \dim(V_0) \vert \dim(V_1)$.
An example of an $n\vert m$-dimensional super-vector space 
 is $\bbK^{n\vert m} := \bbK^n \oplus \bbK^m$, with $n,m\in\bbN^0$. 
For simplicity, we shall denote $\bbK^{1\vert 0}$ simply by $\bbK$.
A {\em super-vector space morphism} $L : V\to V^\prime$ 
is a linear map which preserves the $\bbZ_2$-parity,
i.e.\ $L(V_i)\subseteq V_i^\prime$ for $i=0,1$. 
\sk

The category $\SVec$ of super-vector spaces
has as objects all super-vector spaces  and as morphisms
all super-vector space morphisms.
Recall that  $\SVec$ is a monoidal category
with tensor product functor $\otimes: \SVec \times \SVec \to \SVec$
and unit object $\bbK = \bbK^{1\vert 0}$.
Explicitly, the tensor product $V\otimes W$ 
of two super-vector spaces $V$ and $W$ is the ordinary tensor product $V\otimes W$  of vector spaces equipped
with the $\bbZ_2$-grading
\begin{subequations}
\begin{flalign}
(V\otimes W)_0 &:= (V_0\otimes W_0) \oplus (V_1\otimes W_1)~,\\
(V\otimes W)_1 &:= (V_0 \otimes  W_1) \oplus (V_1\otimes W_0)~.
\end{flalign}
\end{subequations}
The tensor product of two $\SVec$-morphisms is simply given by the tensor product of linear maps.
The monoidal category $\SVec$ is symmetric with respect to the commutativity constraints
\begin{flalign}
\sigma_{V,W} : V\otimes W \longrightarrow W\otimes V~,~~v\otimes w\longmapsto (-1)^{\vert v\vert\,\vert w\vert}\,w\otimes v~.
\end{flalign}
Moreover, it is closed with internal hom-objects given by
the vector space $\underline{\Hom}(V,W)$ of {\em all} linear maps $L : V\to W$
equipped with the obvious $\bbZ_2$-grading;  $L\in \underline{\Hom}(V,W)$ is even/odd if it preserves/reverses
the $\bbZ_2$-parity.
\sk

Given an object $V = V_0\oplus V_1$ in $\SVec$, a {\em super-vector subspace} is a vector subspace
$W\subseteq V$ together with a $\bbZ_2$-grading $W = W_0\oplus W_1$ such that
$W_i$ is a vector subspace of $V_i$, for $i=0,1$. We then may form the {\em quotient super-vector space}
$V/W := V_0/W_0\oplus V_1/W_1$,
which comes together with a canonical $\SVec$-morphism $V\to V/W$ assigning equivalence classes.

\paragraph{Superalgebras:}
A (unital and associative) {\em superalgebra} is an algebra object in $\SVec$. Explicitly, this means that 
a superalgebra is an object $A$ in $\SVec$ together with two $\SVec$-morphisms
$\mu_A : A\otimes A\to A$ (called product) and $\eta_{A} : \bbK \to A$ (called unit),
such that the diagrams
\begin{flalign}
\xymatrix{
\ar[d]_-{\id_A\otimes \mu_A}A\otimes A\otimes A \ar[rr]^-{\mu_A\otimes \id_A} && A\otimes A\ar[d]^-{\mu_A} &&\ar[drr]_-{\simeq}\bbK \otimes  A\ar[rr]^-{\eta_A\otimes \id_{A}} && \ar[d]_-{\mu_A}A\otimes  A && \ar[ll]_-{\id_A\otimes \eta_A}A\otimes \bbK \ar[dll]^-{\simeq}\\
A\otimes A \ar[rr]_-{\mu_A}&& A && &&A&&
}
\end{flalign}
in $\SVec$ commute. We shall often denote the products by juxtaposition, i.e.\ $\mu_A(a_1\otimes a_2) = a_1\,a_2$,
and the unit element by $ \eta_A(1) =\1$.
An example of a (real) superalgebra is the Grassmann algebra $\Lambda_n 
:= \bigwedge^\bullet\bbR^n$, for $n\in\bbN^0$, with product given by the wedge product
and unit element by $\1 = 1\in \bbR = \bigwedge^0\bbR^n\subseteq \bigwedge^\bullet\bbR^n$.
A {\em superalgebra morphism} $\kappa : A\to A^\prime$ 
is a $\SVec$-morphism which preserves products and units, i.e.\ $\mu_{A^\prime} \circ (\kappa \otimes \kappa) 
= \kappa\circ \mu_A$ and $\eta_{A^\prime} = \kappa \circ \eta_{A}$.
\sk

We denote the category of superalgebras  by $\mathsf{SAlg}$ and notice that
it is a monoidal category: The tensor product $A\otimes B$ of two superalgebras
is the super-vector space $A\otimes B$ equipped with the following product and unit
\begin{subequations}
\begin{flalign}
\mu_{A\otimes B} := (\mu_A\otimes \mu_B)\circ (\id_{A}\otimes \sigma_{B,A}\otimes \id_B) & : 
A\otimes B \otimes  A\otimes  B \longrightarrow A\otimes  B~,\\
\eta_{A\otimes B} := \eta_A\otimes \eta_B &: \bbK \otimes \bbK \simeq \bbK \longrightarrow A\otimes B~.
\end{flalign}
\end{subequations}
Explicitly, we have for the product $(a_1\otimes b_1)\,(a_2\otimes b_2) = 
(-1)^{\vert a_2\vert\,\vert b_1\vert}\,(a_1\,a_2)\otimes (b_1\,b_2)$, for all homogeneous
$a_1,a_2\in A$ and $b_1,b_2\in B$,
and for the unit $\1_{A\otimes B} = \1_A\otimes \1_B$.
The tensor product of two $\mathsf{SAlg}$-morphisms is simply given by the tensor product of linear maps.
\sk

We shall require some special classes of superalgebras.
A superalgebra $A$ is called {\em supercommutative} if the product is compatible with the commutativity constraint,
i.e.\ $\mu_A\circ \sigma_{A,A} = \mu_A$. Notice that supercommutative superalgebras form a monoidal subcategory 
of $\mathsf{SAlg}$, which is symmetric with respect to the commutativity constraints induced by $\SVec$.
Moreover, for a supercommutative superalgebra $A$ the product
$\mu_A : A\otimes  A\to A$ is a $\mathsf{SAlg}$-morphism with respect to the tensor product superalgebra structure
on $A\otimes A$.
\sk

Let us now consider superalgebras over $\bbC$. A {\em super-$\ast$-algebra}
is a superalgebra $A$ over $\bbC$ together with an even $\bbC$-antilinear 
map $\ast_A : A\to A $  (called superinvolution) which satisfies
$\ast_A\circ \eta_A = \eta_A$ and $\ast_A \circ \mu_A = \mu_A\circ \sigma_{A,A}\circ (\ast_A\otimes_{\bbC} \ast_A)$.
Explicitly, these conditions read as $\1^\ast = \1$ and
$(a_1\,a_2)^\ast = (-1)^{\vert a_1\vert\,\vert a_2\vert}\,a_2^\ast\,a_1^\ast$, for homogeneous elements
$a_1,a_2\in A$.
A {\em super-$\ast$-algebra morphism} $\kappa : A\to A^\prime$ is
a $\mathsf{SAlg}$-morphism satisfying $\kappa\circ\ast_A = \ast_{A^\prime} \circ \kappa$.
We denote the category of super-$\ast$-algebras  by $\SAlg$ and notice that
it is a monoidal category; the superinvolution on the tensor product $A\otimes_{\bbC} B$ of two 
super-$\ast$-algebras is defined component-wise, i.e.\ $(a\otimes_{\bbC} b)^\ast := a^\ast\otimes_{\bbC} b^\ast$.

\paragraph{Super-Lie algebras:}
A {\em super-Lie algebra} is a Lie algebra object in $\SVec$.
Explicitly, a super-Lie algebra  is an object $\mathfrak{g}$ in
$\SVec$ together with a $\SVec$-morphism
$[\,\cdot\,,\,\cdot\,]_{\mathfrak{g}} : \mathfrak{g}\otimes \mathfrak{g} \to\mathfrak{g}$ (called super-Lie bracket)
which satisfies the super-skew symmetry condition
\begin{subequations}
\begin{flalign}
[\,\cdot\,,\,\cdot\,]_{\mathfrak{g}} \circ \big(\id_{\mathfrak{g}\otimes \mathfrak{g}}+ \sigma_{\mathfrak{g},\mathfrak{g}} \big)= 0~
\end{flalign}
and the super-Jacobi identity
\begin{flalign}
[\,\cdot\,,[\,\cdot\,,\,\cdot\,]_{\mathfrak{g}}]_{\mathfrak{g}} \circ \big(\id_{\mathfrak{g}\otimes \mathfrak{g}\otimes \mathfrak{g}} 
+ \sigma_{\mathfrak{g},\mathfrak{g\otimes\mathfrak{g}}}
+\sigma_{\mathfrak{g}\otimes\mathfrak{g},\mathfrak{g}}\big)=0~.
\end{flalign}
\end{subequations}
A {\em super-Lie algebra morphism} $L : \mathfrak{g}\to\mathfrak{g}^\prime$ 
is a $\SVec$-morphism  which preserves the super-Lie brackets, i.e.\
$[\,\cdot\,,\,\cdot\,]_{\mathfrak{g}^\prime} \circ ( L\otimes L) = L\circ [\,\cdot\,,\,\cdot\,]_{\mathfrak{g}}$.

\paragraph{Supermodules and the Berezinian:}
Let $A$ be a superalgebra. A {\em left $A$-supermodule} is a left module object in $\SVec$. Explicitly,
a left $A$-supermodule is an object $V$ in $\SVec$ together with a $\SVec$-morphism
$l_{V} : A\otimes V\to V$ (called left $A$-action), such that the diagrams
\begin{flalign}
\xymatrix{
\ar[d]_-{\mu_A\otimes \id_V} A\otimes A\otimes V\ar[rr]^-{\id_{A}\otimes l_{V}} && A\otimes V\ar[d]^-{l_V} && \ar[drr]_-{\simeq}\bbK\otimes V \ar[rr]^-{\eta_A\otimes\id_V}&& A\otimes V\ar[d]^-{l_V}\\
A\otimes V \ar[rr]_-{l_{V}} && V && && V
}
\end{flalign}
in $\SVec$ commute. A {\em right $A$-supermodule} is defined similarly and an {\em $A$-bisupermodule}
is a left and right $A$-supermodule with commuting left and right $A$-actions.
If $A$ is a supercommutative superalgebra, then any left $A$-supermodule $V$ is also 
a right $A$-supermodule with right $A$-action $r_V := l_V\circ \sigma_{V,A} : V\otimes A\to V$
and, vice versa, any right $A$-supermodule is also a left $A$-supermodule with left $A$-action
$l_V := r_V\circ \sigma_{A,V}:A\otimes V\to V$. Notice that these left
and right $A$-actions are compatible, hence $V$ is an $A$-bisupermodule. 
We shall often denote the left and right $A$-actions simply by juxtaposition, i.e.\
$l_V(a\otimes v) = a\,v$ and $r_V(v\otimes a) = v\,a$.
A {\em left $A$-supermodule morphism} $L : V\to V^\prime$ is a $\SVec$-morphism
which preserves the left $A$-actions, i.e.\ $l_{V^\prime} \circ (\id_A\otimes L) = L\circ l_V$.
We denote the category of left $A$-supermodules by $A\text{-}\Mod$. In the case of $A$ being supercommutative, 
$A\text{-}\Mod$ is a monoidal category with tensor product functor 
$\otimes_A : A\text{-}\Mod\times A\text{-}\Mod\to A\text{-}\Mod$ (taking tensor products over $A$) 
and unit object $A$ (regarded as a left $A$-supermodule with left $A$-action given by the product $\mu_A$).
Again for $A$ being supercommutative, the monoidal category $A\text{-}\Mod$ is also symmetric
with commutativity constraints induced by those in $\SVec$ and
closed with internal hom-objects given by the left $A$-supermodules $\underline{\Hom}_{A}(V,W)$
of {\em all} right $A$-linear maps $L : V \to W$ equipped with the obvious $\bbZ_2$-grading;
$L\in \underline{\Hom}_A(V,W)$ is even/odd if it preserves/reverses the $\bbZ_2$-parity.
\sk

A {\em free left $A$-supermodule} of dimension $n\vert m$
is a left $A$-supermodule $V$ for which there exists a basis of $n\in\bbN^0$ even 
elements $\{e_1,\dots, e_n\}$ and $m\in\bbN^0$ odd
elements $\{\epsilon_1,\dots,\epsilon_m\}$, such that
\begin{subequations}
\begin{flalign}
V_0 &= \mathrm{span}_{A_0}\{e_1,\dots,e_n\} \oplus \mathrm{span}_{A_1}\{\epsilon_1,\dots,\epsilon_m\}~,\\
V_1 &= \mathrm{span}_{A_1}\{e_1,\dots,e_n\} \oplus \mathrm{span}_{A_0}\{\epsilon_1,\dots,\epsilon_m\}~.
\end{flalign}
\end{subequations}
The collection $\{e_1,\dots,e_{n+m}\} := \{e_1,\dots,e_n,\epsilon_1,\dots,\epsilon_m\}$ of elements in $V$ is called
an adapted basis for $V$. Notice that any free left $A$-supermodule of dimension $n\vert m$ 
is isomorphic (in the category $A\text{-}\Mod$) to the standard free 
left $A$-supermodule $A^{n\vert m} := A\otimes \bbK^{n\vert m}$ with the obvious left $A$-action.
The $A\text{-}\Mod$-morphisms between two free left $A$-supermodules can be represented in terms of 
matrices with entries in $A$. Explicitly, let $L : V\to V^\prime$ be any $A\text{-}\Mod$-morphism
between an $n\vert m$-dimensional free left $A$-supermodule $V$ and an $n^\prime\vert m^\prime$-dimensional 
free left $A$-supermodule $V^\prime$. Making use of any adapted bases for $V$ and $V^\prime$
we define the elements $\{L_i^j\in A : i=1,\dots,n+m \,,~j=1,\dots,n^\prime+m^\prime\}$
 via $L(e_i) = \sum_{j=1}^{n^\prime + m^\prime} L_i^j \,e^\prime_j$, which can be arranged
 in an $(n+m)\times (n^\prime + m^\prime)$-matrix  of the form
 \begin{flalign}\label{eqn:matrixbasis}
\underline{ \underline{L} } = \begin{pmatrix}
L_1 & L_2\\
L_3 & L_4
 \end{pmatrix}~,
 \end{flalign}
 where $L_1$ is an $n\times n^\prime$-matrix with entries in $A_0$,
 $L_2$ is an $n\times m^\prime$-matrix with entries in $A_1$,
 $L_3$ is an $m\times n^\prime$-matrix with entries in $A_1$ 
 and $L_4$ is an $m\times m^\prime$-matrix with entries in $A_0$.
\sk

Let now $A$ be a supercommutative superalgebra and $V$ any free left $A$-supermodule.
Denoting the group of $A\text{-}\Mod$-automorphisms  of $V$ by $\mathrm{GL}(V)$,
there exists a group homomorphism (called the {\em Berezinian}) to the group of invertible elements in $A$
\begin{flalign}\label{eqn:Berezinian}
\Ber : \mathrm{GL}(V) \longrightarrow A_0^\times ~,~~L \longmapsto \Ber(L) 
=\det(L_1 - L_2 L_4^{-1} L_3)\,\det(L_4)^{-1}~,
\end{flalign}
where we have made use of an arbitrary adapted basis for $V$
(the Berezinian does not depend on the choice of adapted basis). 
One can easily check that the Berezinian is multiplicative, i.e.\ 
$\Ber(L^\prime\,L) = \Ber(L^\prime)\,\Ber(L)$, for all $L,L^\prime\in \mathrm{GL}(V)$.
Moreover, we may assign to any free left $A$-supermodule $V$ of dimension $n\vert m$
its {\em Berezinian left $A$-supermodule} $\mathrm{Ber}(V)$ that is defined as follows:
$\mathrm{Ber}(V)$ is the free left $A$-supermodule that is generated by the elements
$[e_1,\dots,e_{n+m}]$, for all adapted bases $\{e_1,\dots,e_{n+m}\}$ for $V$,
subject to the relations
\begin{flalign}\label{eqn:Berezinianrelations}
[L(e_1),\dots,L(e_{n+m})] = \mathrm{Ber}(L)\,[e_1,\dots, e_{n+m}]~,
\end{flalign}
for all $L\in \mathrm{GL}(V)$. We declare the elements $[e_1,\dots,e_{n+m}]$ to be even if $m\in 2\bbN^0$
or to be odd if $m\in 2\bbN^0 +1$. Since any two adapted bases for $V$ can be related
by a $\mathrm{GL}(V)$-transformation, it is clear that $\mathrm{Ber}(V)$ is a free left $A$-supermodule
of dimension $1\vert 0$ if  $m\in 2\bbN^0$ or dimension $0\vert 1$ if $m\in 2\bbN^0+1$.
Notice that any choice of adapted basis $\{e_1,\dots,e_{n+m}\}$ for $V$ defines an adapted basis
$[e_1,\dots,e_{n+m}]$ for $\mathrm{Ber}(V)$. The assignment of the Berezinian left $A$-supermodules
is functorial: Given any $A\text{-}\Mod$-isomorphism $L : V\to V^\prime$
between two free left $A$-supermodules $V$ and $V^\prime$ 
we define an $A\text{-}\Mod$-isomorphism
$\mathrm{Ber}(L) : \mathrm{Ber}(V) \to\mathrm{Ber}(V^\prime)$ by setting for
any adapted basis $[e_1,\dots,e_{n+m}]$ for $\mathrm{Ber}(V)$
\begin{flalign}
\mathrm{Ber}(L)\big([e_1,\dots,e_{n+m}]\big):= [L(e_1),\dots,L(e_{n+m})]~,
\end{flalign}
and extending $\mathrm{Ber}(L)$ as an $A\text{-}\Mod$-morphism.

\paragraph{Supermanifolds:}
In the following let $\bbK= \bbR$.
A {\em superspace} is a pair $M := (\und{M},\OO_M)$ consisting of
a topological space $\und{M}$ (which we always assume to be second-countable and Hausdorff) 
and a sheaf of supercommutative superalgebras $\OO_M$ on $\und{M}$ (called the structure sheaf).
We denote the restriction $\mathsf{SAlg}$-morphisms in the structure sheaf $\OO_M$ by
$\res_{U,V} : \OO_{M}(U)\to \OO_M(V)$, for all open $V\subseteq U\subseteq\und{M}$,
and the global sections of the structure sheaf simply by $\OO(M) := \OO_{M}(\und{M})$.
A {\em superspace morphism} $\chi : M\to M^\prime$ is a pair $\chi := (\und{\chi} ,\chi^\ast)$
consisting of a continuous map $\und{\chi} : \und{M}\to\und{M^\prime}$ and a morphism
of sheaves of superalgebras $\chi^\ast : \OO_{M^\prime} \to \und{\chi}_\ast (\OO_M)$,
where $\und{\chi}_\ast (\OO_M)$ denotes the direct image sheaf. Explicitly,
$\chi^\ast : \OO_{M^\prime} \to \und{\chi}_\ast (\OO_M)$ consists of assigning
to any open $U\subseteq \und{M^\prime}$ a $\mathsf{SAlg}$-morphism
$\chi^\ast_U :\OO_{M^\prime} (U)\to \OO_{M}(\und{\chi}^{-1}(U))$, such that
the diagram
\begin{flalign}\label{eqn:sheafmorph}
\xymatrix{
\ar[d]_-{\res_{U,V}} \OO_{M^\prime} (U) \ar[rr]^-{\chi^\ast_U}  && \OO_{M}(\und{\chi}^{-1}(U))\ar[d]^-{\res_{\und{\chi}^{-1}(U), \und{\chi}^{-1}(V)} }\\
\OO_{M^\prime} (V) \ar[rr]_-{\chi^\ast_V}  && \OO_{M}(\und{\chi}^{-1}(V))\\
}
\end{flalign}
in $\mathsf{SAlg}$ commutes, for all open $V\subseteq U\subseteq \und{M^\prime}$.
For notational simplicity we shall denote the $\mathsf{SAlg}$-morphism of global sections
by  $\chi^\ast := \chi^\ast_{\und{M^\prime}} : \OO(M^\prime)\to \OO(M)$.
An example of a superspace is $\bbR^{n\vert m} :=(\bbR^n , C^\infty_{\bbR^n} \otimes \bigwedge^\bullet\bbR^m) $,
for $n,m\in\bbN^0$, which we denote with the usual abuse of notation by the same symbol 
as the standard super-vector space above.
\sk

A {\em supermanifold} of dimension $n\vert m$ is a superspace $M = (\und{M},\OO_M)$ that is locally isomorphic
to $\bbR^{n\vert m}$, with  $n,m\in\bbN^0$ fixed. In more detail,
given any $p\in\und{M}$, there exists an open neighborhood $V\subseteq \und{M}$ of $p$,
such that $V$ is homeomorphic to an open subset $U\subseteq \bbR^n$ and such that
the restricted sheaves of superalgebras $\OO_M\vert_{V}$ and
$C^\infty_{\bbR^n} \vert_{U}\otimes \bigwedge^\bullet\bbR^m$ are isomorphic.
Taking the standard coordinates $(x^1,\dots,x^n)$ of $\bbR^n$ and the standard
coordinates $(\theta^1,\dots,\theta^m)$ of $\bbR^m$, the sheaf isomorphism $\OO_M\vert_{V}\simeq 
C^\infty_{\bbR^n} \vert_{U}\otimes \bigwedge^\bullet\bbR^m$ induces local coordinate functions
of $M$ in $V$, which we denote by the same symbols $(x^1,\dots,x^{n+m}) := (x^1,\dots,x^n,\theta^1,\dots,\theta^m)$,
hence suppressing the isomorphism.
Notice that the first $n$ coordinate functions $x^i\in \OO_{M}(V)$, for $i=1,\dots,n$, are even and that the last $m$ coordinate
functions $x^{n+i} \in\OO_{M}(V)$, for $i=1,\dots,m$, are odd.
It is clear that the superspaces $\bbR^{n\vert m} = (\bbR^n,C^\infty_{\bbR^n}\otimes\bigwedge^\bullet\bbR^m)$
 are $n\vert m$-dimensional supermanifolds, for all $n,m\in\bbN^0$.
 Moreover, given any $n\vert m$-dimensional supermanifold $M = (\und{M},\OO_M)$ and any open $U\subseteq \und{M}$, then
$ M\vert_U := (U,\OO_M\vert_U)$
is an $n\vert m$-dimensional supermanifold, which we call the {\em open subsupermanifold determined by $U$}.
A {\em supermanifold morphism} is a superspace morphism between supermanifolds.
 It is well known that supermanifold morphisms $\chi : M\to M^\prime$ are 
 already uniquely specified by their associated  $\mathsf{SAlg}$-morphisms 
 $\chi^\ast :\OO(M^\prime)\to\OO(M)$ on \emph{global} sections of the structure sheaf,
see e.g.\ \cite[Proposition 4.6.1]{Carmeli}.  
\sk

We denote the category of supermanifolds by $\SMan$
and notice that it is a monoidal category with monoidal bifunctor
$\times : \SMan\times \SMan\to \SMan$ and unit object $\pt :=(\{\star\}, \bbR )$.
The monoidal bifunctor is defined as follows, cf.\ \cite[Chapter 4.5]{Carmeli}:
To any pair $(M,M^\prime)$ of objects in $\SMan$ it assigns the supermanifold
$M\times M^\prime := (\und{M}\times\und{M^\prime} , \OO_{M\times M^\prime})$,
where the sheaf of superalgebras $\OO_{M\times M^\prime}$
is defined on the rectangular open subsets $U\times V\subseteq \und{M}\times\und{M^\prime}$
by $\OO_{M\times M^\prime}(U\times V) := \OO_{M}(U)\,\widehat{\otimes}\,\OO_{M^\prime}(V) $
and extended to all open subsets of  $\und{M}\times\und{M^\prime}$ by a standard construction.
Here $\widehat{\otimes}$ denotes the completed tensor product in the projective tensor topology
corresponding to the standard Fr{\'e}chet topologies on $\OO_{M}(U)$ and $\OO_{M^\prime}(V)$.
To any pair $(\chi : M\to N,\chi^\prime  :M^\prime\to N^\prime)$ of $\SMan$-morphisms 
it assigns the $\SMan$-morphism $\chi\times\chi^\prime : M\times M^\prime\to N\times N^\prime$
given by $\und{\chi} \times\und{\chi^\prime} : \und{M}\times \und{M^\prime}\to \und{N}\times \und{N^\prime}$
and $\chi^\ast\,\widehat{\otimes}\,{\chi^\prime}^\ast : \OO_{N\times N^\prime} \to 
(\und{\chi} \times\und{\chi^\prime})_\ast(\OO_{M\times M^\prime})$,
which is uniquely determined by continuous extension of the usual tensor product morphism $\chi^\ast\otimes {\chi^\prime}^\ast$.
\sk

Let us recall that any $n\vert m$-dimensional supermanifold $M$ has an 
associated reduced $n$-dimensional ordinary manifold. Let us denote by $J_M$ the super-ideal sheaf of nilpotents in $\OO_M$, i.e.\
$J_M(U) := \{f\in\OO_M(U) : f\text{ is nilpotent}\}$ for all open $U\subseteq\und{M}$. The {\em reduced manifold}
is defined to be the $n\vert 0$-dimensional supermanifold
$(\und{M},\OO_{M}/J_M)$, which for notational convenience we shall simply denote by $\und{M}$.
For any $\SMan$-morphism $\chi : M\to M^\prime$
the underlying continuous map $\und{\chi} : \und{M}\to \und{M^\prime}$ 
is smooth with respect to the reduced manifold structures on $\und{M}$ and $\und{M^\prime}$.
Hence, the assignment $M\mapsto \und{M}$ and $(\chi : M\to M^\prime) \mapsto (\und{\chi} : \und{M}\to\und{M^\prime})$
is a functor from $\SMan$ to the category of manifolds.
Finally, recall that for any supermanifold $M$ there is a closed embedding
$\iota_{\und{M},M} : \und{M}\to M$ of the reduced manifold (regarded as an $n\vert 0$-dimensional supermanifold)
into the supermanifold $M$. Explicitly, we have that $\und{\iota}_{\und{M},M} = \id_{\und{M}}$ is the identity map
and that $\iota_{\und{M},M}^\ast : \OO_{M}\to \OO_{M}/J_M$ is the projection taking equivalence classes. 
Given any $\SMan$-morphism $\chi : M\to M^\prime$ the diagram
\begin{flalign}\label{eqn:reducedmanifolddiagram}
\xymatrix{
M\ar[rr]^-{\chi} && M^\prime\\
\ar[u]^-{\iota_{\und{M},M}}\und{M} \ar[rr]_-{\und{\chi}} && \und{M^\prime}\ar[u]_-{\iota_{\und{M^\prime},M^\prime}}
}
\end{flalign}
in $\SMan$ commutes.

\paragraph{Superderivations and super-differential operators:}
Let $M = (\und{M},\OO_M)$ be any object in $\SMan$.
For any open $U\subseteq \und{M}$, a homogeneous {\em superderivation} 
(or {\em super-vector field}) on $\OO_{M}(U)$ is a homogeneous 
$X\in\underline{\Hom}(\OO_M(U),\OO_M(U))$, such that
\begin{flalign}
X(f\,g) = X(f)\,g + (-1)^{\vert X\vert \,\vert f\vert}\,f\,X(g)~,
\end{flalign}
for all homogeneous $f,g\in \OO_{M}(U)$.
The collection of all superderivations on $\OO_{M}(U)$ forms a super-vector subspace $\Der_M(U)$
of $\underline{\Hom}(\OO_M(U),\OO_M(U))$, which is further a left $\OO_M(U)$-supermodule via
$(f\,X)(g) := f\,X(g)$,
for all $f,g\in \OO_M(U)$ and $X\in \Der_M(U)$. 
 As in the case of ordinary manifolds, the superderivations $\Der_M(U)$ also form a super-Lie algebra
with super-Lie bracket given by the supercommutator $[X,Y] := X\circ Y - (-1)^{\vert X\vert \,\vert Y\vert} Y\circ X$, 
for all  homogeneous $X,Y\in \Der_M(U)$.
The assignment
$U \mapsto\Der_M(U)$ (together with suitable restriction morphisms) 
is a sheaf of left $\OO_M$-supermodules on $\und{M}$,
 which is called the {\em superderivation sheaf} $\Der_M$.
\sk

Note that superderivations are local in the following sense:
Let $U\subseteq \und{M}$ be any open subset.  
The {\em support} of $f\in \OO_M(U)$ is the closed subset of $\und{M}$ that is defined by
\begin{flalign}
\supp(f) := U\setminus \bigcup\big\{ V : V\subseteq U \text{ is open and } \res_{U,V}(f)=0 \big\}~.
\end{flalign}
It then holds true that
$\supp(X(f))\subseteq \supp(f)$,
for all $X\in\Der_M(U)$ and $f\in\OO_M(U)$.
\sk

The superderivation sheaf $\Der_M$ of any object $M$ in $\SMan$ 
is a sheaf of locally free left $\OO_M$-supermodules
of the same dimension $n \vert m$ as the supermanifold $M$. In local coordinates $(x^1,\dots,x^{n+m})$ 
of $M$ in $V$ an adapted basis for $\Der_M(V)$ is given by the partial derivatives $\{\partial_1,\dots,\partial_{n+m}\}$, 
which are defined by $\partial_i x^j = \delta_i^j$, for all $i,j=1,\dots,n+m$. Notice
 that the first $n$ partial derivatives are even and the last $m$ partial derivatives are odd.
 \sk
 
We introduce {\em super-differential operators} via the usual recursive procedure:
For any open $U\subseteq \und{M}$, we have a $\SVec$-morphism $\OO_M(U)\to \underline{\Hom}(\OO_M(U),\OO_M(U))$
by assigning to any $g\in \OO_M(U)$ the linear map (denoted by the same symbol) $g: \OO_M(U)\to\OO_M(U)\,,~f\mapsto g\,f$.
The image of this $\SVec$-morphism is denoted by $\mathrm{DiffOp}_M^0(U)$
and it is called the super-differential operators of order zero. The super-differential operators of order $k\geq 1$ are then defined
recursively via
\begin{flalign}
\mathrm{DiffOp}_M^k(U) :=\big\{L\in \underline{\Hom}(\OO_M(U),\OO_M(U)) : [L,g]\in\mathrm{DiffOp}_M^{k-1}(U)~,~~\forall g\in \OO_M(U)\big\}~,
\end{flalign}
where the bracket denotes the supercommutator.
The super-vector space of super-differential operators on $\OO_M(U)$ is defined by
$\mathrm{DiffOp}_M(U) := \bigcup_{k\geq 0}\mathrm{DiffOp}_M^k(U)$.
Notice that $\mathrm{DiffOp}_M(U)$ carries the structure of a (not necessarily supercommutative) 
superalgebra with product given by the composition $\circ$ of linear maps 
$\underline{\Hom}(\OO_M(U),\OO_M(U))$.
The assignment $U\mapsto \mathrm{DiffOp}_M(U) $ (together with suitable restriction morphisms)
is a sheaf of superalgebras  on $\und{M}$, which is called the sheaf of super-differential operators $\mathrm{DiffOp}_M$.
Super-differential operators are local,
i.e.\ given any open $U\subseteq\und{M}$ we have that
$\supp(D(f))\subseteq\supp(f)$,
for any $D\in\mathrm{DiffOp}_M(U)$ and $f\in\OO_M(U)$.
Any superderivation is a super-differential operator of order $1$.

\paragraph{Super-differential forms:}
Let $M = (\und{M},\OO_{M})$ be any object in $\SMan$.
For any open $U\subseteq \und{M}$ we define the
left $\OO_{M}(U)$-supermodule of {\em super-one-forms} on $\OO_{M}(U)$
as the dual of the left $\OO_{M}(U)$-supermodule $\Der_{M}(U)$,
i.e.\ $\Omega^1_M(U) := \underline{\Hom}_{\OO_{M}(U)}(\Der_{M}(U),\OO_M(U))$.
The assignment $U \mapsto \Omega^1_M(U)$ (together with suitable restriction morphisms)
is a sheaf of left $\OO_M$-supermodules on $\und{M}$ that we shall denote by
$\Omega^1_M = \underline{\Hom}_{\OO_M}(\Der_M,\OO_M)$. For simplifying the notation we 
denote the left $\OO(M)$-supermodule of global sections of $\Omega^1_M$ 
by $\Omega^1(M)$. The canonical evaluation of elements in $\underline{\Hom}_{\OO_{M}(U)}(\Der_{M}(U),\OO_M(U))$
on superderivations $\Der_M(U)$ defines a sheaf morphism
$\mathrm{ev} : \Omega^1_M\otimes_{\OO_M}\Der_M \to \OO_{M}$, which due to the commutativity constraint $\sigma$ 
in $\OO_{M}\text{-}\Mod$ can also be understood as a pairing
\begin{flalign}
\ip{\,\cdot\,}{\,\cdot\,} := \mathrm{ev} \circ \sigma_{\Der_M,\Omega^1_M} : 
\Der_M\otimes_{\OO_M}\Omega^1_M \longrightarrow \OO_M~,
\end{flalign}
where the superderivations are on the left. (Moving $\Der_M$ to the left will lead to more convenient
sign conventions for the differential below.)
Explicitly, given any homogeneous $\omega\in\Omega^1_M(U)$ and $X\in\Der_M(U)$, the pairing reads as
$\ip{X}{\omega} = (-1)^{\vert\omega\vert\,\vert X\vert}\,\omega(X)$.
The {\em differential} $\dd : \OO_{M} \to\Omega^1_M$ is the sheaf morphism defined by the condition
$\ip{X}{\dd f} :=X(f)$,
for all $X\in\Der_M(U)$ and $f\in\OO_M(U)$, where $U\subseteq \und{M}$ is any open subset.
We notice that $\Omega^1_M$ is a sheaf of locally free left $\OO_M$-supermodules of the same dimension
$n\vert m$ as the supermanifold $M$. Explicitly, in local coordinates $(x^1,\dots,x^{n+m})$ of $M$
in $V$ an adapted basis for $\Omega^1_M(V)$ is given by the differentials
$\{\dd x^1,\dots , \dd x^{n+m}\}$, which can also be characterized by the duality relations
$\ip{\partial_i}{\dd x^j} = \delta^j_i$, for all $i,j=1,\dots,n+m$. Notice that the first $n$ differentials
 are even and that the last $m$ differentials are odd.
\sk

Super-one-forms can be pulled back along $\SMan$-morphisms $\chi : M\to M^\prime$.
In our work we shall only need the pull-back for the special case where $\und{\chi}(\und{M})\subseteq \und{M^\prime}$
is open and $\chi: M\to M^\prime\vert_{\und{\chi}(\und{M})}$ is a $\SMan$-isomorphism.
Let us first assume that $\chi: M\to M^\prime$ is a $\SMan$-isomorphism. 
Then $\chi_U^\ast : \OO_{M^\prime}(U)\to\OO_{M}(\und{\chi}^{-1}(U))$ is a $\mathsf{SAlg}$-isomorphism
and we can define a push-forward of superderivations by
\begin{flalign}
{\chi_{\ast}}_{U} : \Der_M(\und{\chi}^{-1}(U))\longrightarrow \Der_{M^\prime}(U) ~,~~
X\longmapsto (\chi_U^\ast)^{-1}\circ X\circ \chi_U^\ast~,
\end{flalign}
for all open $U\subseteq \und{M^\prime}$.
The {\em pull-back of super-one-forms} (denoted with a slight abuse of notation also
by $\chi_U^\ast$) $\chi_U^\ast : \Omega^1_{M^\prime}(U) \to \Omega^1_{M}(\und{\chi}^{-1}(U))$ 
is then defined by the duality relations
\begin{flalign}
\ip{X}{\chi_U^\ast(\omega)}:= \chi_U^\ast\left(\ip{{\chi_\ast}_{U}(X)}{\omega}\right)~,
\end{flalign}
for all $\omega\in\Omega_{M^\prime}^1(U)$ and $X\in\Der_M(\und{\chi}^{-1}(U))$.
In the case where we only have that  $\chi: M\to M^\prime\vert_{\und{\chi}(\und{M})}$
is a $\SMan$-isomorphism we define the pull-back of super-one-forms by
\begin{flalign}
\chi_{U}^\ast(\omega) := \chi^\ast_{U\cap\und{\chi}(\und{M})}\big(\res_{U,U\cap\und{\chi}(\und{M})}(\omega)\big)~,
\end{flalign}
for all $\omega\in\Omega_{M^\prime}^1(U)$ and all open $U\subseteq \und{M^\prime}$.
The following properties are immediate from this definition
\begin{flalign}
\chi_U^\ast(f\,\omega) = \chi_U^\ast(f)\,\chi_U^\ast(\omega)~~,\qquad \chi_U^\ast(\dd f) = \dd \chi_U^\ast(f)~,
\end{flalign}
for all $f\in\OO_{M^\prime}(U)$ and $\omega\in\Omega^1_{M^\prime}(U)$. We shall denote the
pull-back of global sections of $\Omega^1_{M^\prime}$ simply by $\chi^\ast := \chi^\ast_{\und{M^\prime}} : \Omega^1(M^\prime)
\to \Omega^1(M)$.
\sk

In complete analogy to the case of ordinary manifolds, one can define a sheaf
of differential graded superalgebras $\Omega^\bullet_M := \bigwedge^\bullet \Omega^1_M$,
for which we shall use the same sign conventions as in \cite[\S 3.3]{DMSuper}.
Sections of this sheaf are called {\em super-differential forms} and we recall that
for supermanifolds $M$ of dimension $n\vert m$ with $m\neq 0$ there are no top-degree forms.

\paragraph{Berezin integration:}
Given any object $M=(\und{M},\OO_M)$ in $\SMan$,
one can construct the Berezinian sheaf $\Ber(\Omega^1_M)$ 
of the super-one-form sheaf $\Omega^1_M$.
Notice that $\Ber(\Omega^1_M)$ is a sheaf of locally free left $\OO_M$-supermodules:
In local coordinates $(x^1,\dots,x^{n+m}) = (x^1,\dots,x^n,\theta^1,\dots,\theta^m)$ 
of $M$ in $V$ an adapted basis for $\Ber(\Omega^1_M)(V)$
is given by $[\dd x^1,\dots,\dd x^{n},\dd\theta^1,\dots,\dd\theta^m]$.
Given now any $f\in\OO_M(V) $ 
with compact support in $V\subseteq\und{M}$, we define the local {\em Berezin integral} over the open subsupermanifold
$M\vert_V$ by
\begin{flalign}\label{eqn:localintegral}
\int_{M\vert_V} [\dd x^1,\dots,\dd x^{n},\dd\theta^1,\dots,\dd\theta^m]\,f := \int_V \dd x^1\,\cdots \dd x^n~f_{(1,\dots, 1)}~,
\end{flalign}
where we have used the expansion $f = \sum_{\vert I\vert <m} \theta^I\,f_I + \theta^m\,\cdots\theta^1 f_{(1,\dots,1)}$
in terms of the odd coordinate functions. Here $I=(i_1,\dots,i_m)\in \{0,1\}^m$ is a multiindex and $\theta^I :=
(\theta^m)^{i_1}\cdots (\theta^1)^{i_m}$. Due to the change of variables formula, 
 the local Berezin integral can be globalized, see e.g.\ \cite[\S 3.10]{DMSuper}.
In the case of $M$ being oriented this yields a linear functional
\begin{flalign}
\int_M  : \Ber(\Omega^1_M)_{\cc}(\und{M}) \longrightarrow \bbR~,~~v\longmapsto \int_M v~
\end{flalign}
on the compactly supported sections of $\Ber(\Omega^1_M)$,
which for sections $v$ with support in $V$ is given by (\ref{eqn:localintegral}).
\sk

Given any $\SMan$-isomorphism $\chi : M \to M^\prime$,
there exists a pull-back $\chi^\ast : \Ber(\Omega^1_{M^\prime})\to \Ber(\Omega^1_{M})$.
In local coordinates $(x^{\prime 1},\dots,x^{\prime n+m})$ 
of $M^\prime$ in $V^\prime$ and $(x^1,\dots,x^{n+m})$ of $M$ in $V = \und{\chi}^{-1}(V^\prime)$,
the pull-back is given by
\begin{flalign}
\nn \chi^\ast_{V^\prime}\big([\dd x^{\prime 1},\dots,\dd x^{\prime n+m}]\big)
&= [\dd \chi^\ast_{V^\prime}(x^{\prime 1}),\dots,\dd \chi^\ast_{V^\prime}(x^{\prime n+m})]\\
&= \Ber(J(\chi))\,[\dd x^1,\dots,\dd x^{n+m}]~,
\end{flalign}
where $J(\chi)$ is the super-Jacobi matrix with entries defined 
by $\dd\chi^{\ast}_{V^\prime}(x^{\prime i}) =\sum_{j=1}^{n+m}  J(\chi)_j^i\,\dd x^j$.
Notice the property $\chi^\ast_U(v\,f) = \chi^{\ast}_U(v)\,\chi^\ast_U(f)$,
for all $v\in\Ber(\Omega^1_{M^\prime})(U)$, $f\in\OO_{M^\prime}(U)$ 
and all open $U\subseteq\und{M^\prime}$.
Given now two oriented supermanifolds $M$ and $M^\prime$
together with an orientation preserving $\SMan$-isomorphism
$\chi : M \to M^\prime$, the global Berezin integral transforms as
\begin{flalign}\label{eqn:trafoformel}
\int_{M} \chi^\ast(v) = \int_{M^\prime} v~,
\end{flalign}
for all $v\in \Ber(\Omega^1_{M^\prime})_{\cc}(\und{M^\prime})$. This follows in the usual 
way from the aforementioned local change of variables formula.

%%%%%%%%%%%%%%%%%%%%%%%%%%%%%%%%%%%%%%%%%%%%%%%%
%%%%%%%%%%%%%%%%%%%%%%%%%%%%%%%%%%%%%%%%%%%%%%%%

\section{\label{sec:cartan}Super-Cartan supermanifolds}
We introduce the concept of super-Cartan structures on supermanifolds.
These structures have their origin in the superspace formulation of supergravity, 
see e.g.\ \cite{Wess:1977fn}. In our work they are required for constructing natural super-differential
operators which govern the dynamics of super-field theories at the classical 
and quantum level. In this paper we shall only focus on the case where the super-principal bundle 
underlying the super-Cartan structure is {\em globally trivial} (and trivialized),
which considerably simplifies our discussion as we do not have to 
deal with super-principal bundles and their associated 
super-vector bundles. Notice that the latter concepts are
essentially well understood, see e.g.\ \cite{Bruzzo},
but they are cumbersome to work with.
To the best of our knowledge there is not yet a fully developed theory of (global)
super-Cartan supermanifolds along the lines of standard treatments in ordinary 
differential geometry \cite[Chapter 5]{Sharpe}. In particular, we are not aware of conditions 
(on the basis supermanifold) which ensure the existence of super-Cartan structures or 
conditions which ensure that the underlying super-principal bundle is generically trivial.
We hope to come back to this problem in a future work, which then would allow us
to study super-field theories on globally non-trivial super-Cartan supermanifolds.

\subsection{Representation theoretic data and super-Poincar{\'e} super-Lie algebras}
We shall briefly review super-Poincar{\'e} super-Lie algebras in various dimensions. See e.g.\
\cite{Deligne} and \cite[Lecture 3]{FreedLecture} for details.
\sk

Let $W$ be a finite-dimensional real vector space and $g : W\otimes W\to\bbR$ a Lorentz metric
of signature $(+,-,\cdots,-)$.
Let further $S$ be a real spin representation of the associated 
spin group $\mathrm{Spin}(W,g)$ and
\begin{flalign}
\Gamma : S \otimes S \longrightarrow W
\end{flalign}
a symmetric and $\mathrm{Spin}(W,g)$-equivariant pairing.
We denote the spin group actions on $W$ and $S$ by, respectively,
$\rho^W : \mathrm{Spin}(W,g)\times W \to W$ and $\rho^{S} : \mathrm{Spin}(W,g)\times S \to S$.
Moreover, we simply write $\mathfrak{spin}$ for the Lie algebra of $\mathrm{Spin}(W,g)$ and recall
that the spin group actions above induce Lie algebra actions,
which we denote by $\rho^W_\ast : \mathfrak{spin}\otimes W \to W$ and $\rho^{S}_\ast :
\mathfrak{spin}\otimes S \to S$.

\begin{defi}
 Let us fix any choice of the data $(W,g,S,\Gamma)$.
\begin{itemize}
\item[(i)] The super-Poincar{\'e} super-Lie algebra $\mathfrak{sp}$ (corresponding to this data)
is given by the super-vector space
\begin{subequations}
\begin{flalign}
\mathfrak{sp} := (\mathfrak{spin} \oplus W) \oplus S ~
\end{flalign}
(with $(\mathfrak{spin} \oplus W)$ even and $S$ odd), 
together with the super-Lie bracket defined by
\begin{multline}\label{eqn:superbracketsp}
[L_1\oplus w_1 \oplus s_1 , L_2 \oplus w_2\oplus s_2] =\\
 [L_1,L_2] \oplus \Big( \rho^W_\ast\big(L_1\otimes w_2 - L_2\otimes w_1\big)-2\,\Gamma(s_1,s_2) \Big) \oplus \rho_{\ast}^{S} \big(L_1\otimes s_2-L_2\otimes s_1\big)~,
\end{multline}
\end{subequations}
for all $L_1\oplus w_1\oplus s_1, L_2\oplus w_2\oplus s_2 \in \mathfrak{sp}$.

\item[(ii)] 
The  supertranslation super-Lie algebra $\mathfrak{t}$ (corresponding to this data) is given by the 
super-vector space
\begin{subequations}
\begin{flalign}
\mathfrak{t}:= W\oplus S
\end{flalign}
(with $W$ even and $S$ odd), together with the
super-Lie bracket defined by
\begin{flalign}
[w_1\oplus s_1 , w_2\oplus s_2] = -2\,\Gamma(s_1,s_2) \oplus 0~,
\end{flalign}
\end{subequations}
for all $w_1\oplus s_1,w_2\oplus s_2 \in \mathfrak{t}$.
\end{itemize}
\end{defi}
\begin{rem}
Notice that the bracket defined in (\ref{eqn:superbracketsp}) is indeed a super-Lie bracket. The super-skew symmetry
is evident from its definition and the super-Jacobi identity is a straightforward check using 
the $\mathrm{Spin}(W,g)$-equivariance of $\Gamma$, which implies that
\begin{flalign}
\rho^W_\ast\big(L \otimes \Gamma(s_1,s_2)\big) = \Gamma\big(\rho_\ast^{S}(L\otimes s_1),s_2\big) + 
\Gamma\big(s_1,\rho_\ast^{S}(L\otimes s_2)\big)~,
\end{flalign}
for all $L\in\mathfrak{spin}$ and $s_1,s_2\in S$.
\end{rem}

The following statement is easily shown. We therefore can omit the proof.
\begin{propo}
For any choice of the data $(W,g,S,\Gamma)$, the super-Poincar{\'e} super-Lie algebra $\mathfrak{sp}$
is a super-Lie algebra extension of the Lie algebra  $\mathfrak{spin}$ (regarded as a super-Lie algebra)
by the supertranslation super-Lie algebra $\mathfrak{t}$, i.e.\ 
\begin{flalign}
0\longrightarrow \mathfrak{t}\longrightarrow \mathfrak{sp}\longrightarrow \mathfrak{spin}\longrightarrow 0~~,
\end{flalign}
with the obvious definition of the arrows, is a short exact sequence of super-Lie algebras.
\end{propo}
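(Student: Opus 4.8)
The plan is to verify directly that the three maps in the sequence
\begin{flalign*}
0\longrightarrow \mathfrak{t}\longrightarrow \mathfrak{sp}\longrightarrow \mathfrak{spin}\longrightarrow 0
\end{flalign*}
are super-Lie algebra morphisms and that the sequence is exact as a sequence of super-vector spaces. First I would fix the notation for the arrows: the inclusion $\iota : \mathfrak{t}\to\mathfrak{sp}$ is $w\oplus s \mapsto 0\oplus w\oplus s$, and the projection $\pi : \mathfrak{sp}\to\mathfrak{spin}$ is $L\oplus w\oplus s \mapsto L$. Both are manifestly $\SVec$-morphisms since they preserve the $\bbZ_2$-grading ($\mathfrak{spin}\oplus W$ even, $S$ odd on both sides), and set-theoretically $\iota$ is injective, $\pi$ is surjective, and $\Imm(\iota) = \Ker(\pi) = (0\oplus W)\oplus S$, which is the exactness at all three spots.

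Next I would check that $\iota$ and $\pi$ respect the brackets. For $\iota$: restricting the bracket (\ref{eqn:superbracketsp}) to elements of the form $0\oplus w_i\oplus s_i$ kills the $\mathfrak{spin}$-component and the $\rho^W_\ast$ and $\rho^S_\ast$ terms (because $L_1=L_2=0$), leaving exactly $0 \oplus (-2\,\Gamma(s_1,s_2)) \oplus 0$, which is precisely $\iota$ of the bracket in $\mathfrak{t}$. For $\pi$: applying $\pi$ to the right-hand side of (\ref{eqn:superbracketsp}) extracts the $\mathfrak{spin}$-component $[L_1,L_2]$, which is the bracket of $\pi(L_1\oplus w_1\oplus s_1)$ and $\pi(L_2\oplus w_2\oplus s_2)$ in $\mathfrak{spin}$ (viewed as a purely even super-Lie algebra). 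So both maps are super-Lie algebra morphisms.

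The only genuine point to address is that $\mathfrak{t}$ is a super-Lie \emph{ideal} of $\mathfrak{sp}$, i.e.\ that $\Imm(\iota)$ is closed under bracketing with arbitrary elements of $\mathfrak{sp}$; this is what makes the short exact sequence a sequence of super-Lie algebras with $\mathfrak{spin}$ as the quotient. Computing $[L_1\oplus w_1\oplus s_1, 0\oplus w_2\oplus s_2]$ from (\ref{eqn:superbracketsp}) gives $0 \oplus \big(\rho^W_\ast(L_1\otimes w_2) - 2\,\Gamma(s_1,s_2)\big)\oplus \rho^S_\ast(L_1\otimes s_2)$, which again lies in $(0\oplus W)\oplus S = \Imm(\iota)$; hence $\mathfrak{t}$ is an ideal and the quotient $\mathfrak{sp}/\iota(\mathfrak{t})$ inherits a super-Lie bracket that $\pi$ identifies with $\mathfrak{spin}$. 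I do not expect any real obstacle here: everything follows from reading off components of the defining bracket, and the fact that it \emph{is} a super-Lie bracket has already been noted in the preceding remark. This is why the authors can—and do—omit the proof; the write-up above is essentially the verification in full.
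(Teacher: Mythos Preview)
Your proposal is correct and is precisely the straightforward verification the paper alludes to when it says the statement ``is easily shown'' and omits the proof. You have filled in exactly the routine checks (exactness as super-vector spaces, that $\iota$ and $\pi$ preserve brackets, and that $\iota(\mathfrak{t})$ is an ideal) that the authors left to the reader.
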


The data $(W,g,S,\Gamma)$ we have introduced above is sufficient in order to construct the super-Poincar{\'e} and
supertranslation super-Lie algebras (corresponding to this choice of data).
For our applications to super-Cartan supermanifolds and super-field theories
we require some additional data. First, let us fix a positive cone $C\subset W$ of timelike vectors and assume that
$\Gamma : S \otimes S \to W$ is positive in the sense that $\Gamma(s,s)\in\overline{C}$, for all $s\in S$,
with $\Gamma(s,s)=0$ only for $s=0$. Here $\overline{C}$ denotes the closure of the cone 
$C\subset W$. The existence of such $\Gamma$ 
has been shown in \cite{Deligne}. The positive cone $C\subset W$ will later play the role of a time-orientation.
Next, we assume that we have given a $\mathrm{Spin}(W,g)$-invariant linear map
\begin{flalign}
\epsilon : S \otimes S \longrightarrow \bbR~,
\end{flalign} 
which is either a metric (of positive signature) or a symplectic structure. 
Such linear maps exist if $\dim(W)$  is not equal to $2$ or $6$ modulo $8$,
see  \cite{Supersolutions}. Finally, we take as part of the data 
a choice of orientations $o_W$ of $W$ and $o_{S}$ of $S$.
These orientations and also $\epsilon$ will be used define a canonical Berezinian density
on any super-Cartan supermanifold and therefore a notation of integration.
In summary, we will always assume as a starting point for our constructions
that the data $(W,g,S,\Gamma,C,\epsilon,o_W,o_{S})$ are given.

\subsection{Super-Cartan structures}
In order to simplify our studies on super-Cartan supermanifolds 
we shall restrict our attention to super-Cartan structures which are 
based on globally trivial (and also trivialized) super-principal $\mathrm{Spin}(W,g)$-bundles.
Let us fix any choice of the data $(W,g,S,\Gamma,C,\epsilon,o_W,o_{S})$.
\begin{defi}  
Let $M $ be a $\dim(W)\vert \dim(S)$-dimensional supermanifold. 
A (globally trivial) super-Cartan structure
on $M$ is a pair $(\Omega,E)$ consisting of
an even super-one-form $\Omega\in \Omega^1(M,\mathfrak{spin})$ (called the super-spin connection)
and an even and non-degenerate super-one-form $E \in \Omega^1(M,\mathfrak{t})$ (called the supervielbein).
The triple $\bbM:=(M,\Omega,E)$ is called a super-Cartan supermanifold.
\end{defi}
\begin{rem}
Notice that the requirement that $E$ is non-degenerate fixes the dimension of $M$ to be
the dimension $\dim(W)\vert \dim(S)$ of the supertranslation super-Lie algebra $\mathfrak{t}$.
\end{rem}

To any super-Cartan supermanifold $\bbM = (M,\Omega,E)$ we can assign its supercurvature and supertorsion,
which play an important role in supergravity. They are defined by
\begin{subequations}\label{eqn:supertorsion}
\begin{flalign}
R_{\bbM} &:= \dd \Omega + [\Omega,\Omega]\in\Omega^2(M,\mathfrak{spin})~,\\
T_{\bbM} &:= \dd_\Omega E := \dd E +[\Omega,E]\in\Omega^2(M,\mathfrak{t})~,
\end{flalign}
\end{subequations}
where the brackets are those induced by the super-Lie bracket in $\mathfrak{sp}$ via
\begin{flalign}
\nn [\,\cdot\,,\,\cdot\,] : \Omega^k(M,\mathfrak{sp}) \otimes \Omega^l(M,\mathfrak{sp})&\longrightarrow \Omega^{k+l}(M,\mathfrak{sp})~,\\
(\omega_1\otimes X_1)\otimes (\omega_2\otimes X_2)&\longmapsto (-1)^{\vert X_1\vert \,\vert\omega_2\vert}\,\omega_1\wedge\omega_2 \otimes [X_1,X_2]~.
\end{flalign}

We now shall study integration on super-Cartan supermanifolds. Let us recall that the super-vector space $\mathfrak{t}$ has 
an adapted basis $\{p_0,\dots,p_{\dim(M)-1}, q_{1},\dots,q_{\dim(S)}\}$, 
i.e.\ $p_\alpha\in W$ and $q_{a}\in S$, for all $\alpha =0,\dots,\dim(M)-1$ and $a\in 1,\dots,\dim(S)$.
Making use of the Lorentz metric $g$ on $W$ and the metric (or symplectic structure) $\epsilon$ on $S$,
we can demand that $\{p_{\alpha}\}$ is an orthonormal basis for $(W,g)$ 
and that $\{q_a\}$ is an orthonormal (or symplectic/Darboux) basis for $(S,\epsilon)$.
Making further use of the orientations $o_W$ of $W$ and $o_{S}$ of $S$ 
we demand that these bases are oriented and finally by using the positive cone $C\subset W$ we demand
that the basis for $W$ is time-oriented, i.e.\ the timelike basis vector $p_0$ lies in $C$.
We shall call any adapted basis for $\mathfrak{t}$ which is of this kind 
an orthonormal (or orthosymplectic) time-oriented and oriented  adapted basis for $\mathfrak{t}$.
Notice that any two orthonormal (or orthosymplectic) time-oriented and oriented adapted bases for $\mathfrak{t}$
are related by a $\SVec$-automorphism $L \in \mathrm{GL}(\mathfrak{t})$,
whose block-matrix components (cf.\ (\ref{eqn:matrixbasis})) are
$L_1\in \mathrm{SO}_0(1,\dim(W)-1)$, $L_2=L_3=0$ and $L_4\in \mathrm{SO}(\dim(S))$ (or $L_4\in \mathrm{Sp}(\dim(S),\bbR)$).
Because of $\det(L_1) = \det(L_4)=1$ we find that
$\mathrm{Ber}(L) = 1$, cf.\ (\ref{eqn:Berezinian}).
We now may expand the supervielbein $E\in\Omega^1(M,\mathfrak{t})$ in terms of any orthonormal (or orthosymplectic) 
time-oriented and oriented  adapted basis for $\mathfrak{t}$, which yields
\begin{flalign}\label{eqn:Eexpansion}
E = \sum_{\alpha =0}^{\dim(M)-1} e^\alpha\otimes p_{\alpha} + \sum_{a=1}^{\dim(S)} \xi^a\otimes q_a~,
\end{flalign}
where all $e^\alpha \in\Omega^1(M)$ are even and all $\xi^a\in\Omega^1(M)$ are odd.
Notice that the collection $\{e^0,\dots,e^{\dim(W)-1}, \xi^1,\dots,\xi^{\dim(S)}\}$ is an adapted basis for $\Omega^1(M)$
since $E$ was assumed to be non-degenerate. We hence can define an element of the Berezinian 
supermodule of $\Omega^1(M)$ by
\begin{flalign}\label{eqn:Berezindensity}
\mathrm{Ber}(E) := [e^0,\dots, e^{\dim(M)-1},\xi^1,\dots,\xi^{\dim(S)}] \in \mathrm{Ber}(\Omega^1(M))~.
\end{flalign}
Recalling (\ref{eqn:Berezinianrelations}), we find that this definition does not depend 
on the choice of the orthonormal (or orthosymplectic) 
time-oriented and oriented  adapted basis for $\mathfrak{t}$, since, as we have explained above,
any two such bases are related by an $L\in\mathrm{GL}(\mathfrak{t})$ with $\mathrm{Ber}(L)=1$.
\sk

Using  the Berezinian density (\ref{eqn:Berezindensity})  we can define a pairing on the
compactly supported sections of the structure sheaf of $M$ by
\begin{flalign}\label{eqn:pairing}
\ip{\,\cdot\,}{\,\cdot\,}_{\bbM} : \OO_{\cc}(M)\otimes \OO_{\cc}(M)\longrightarrow \bbR~,~~
 F_1\otimes F_2 \longmapsto \int_{M}\mathrm{Ber}(E)\, F_1\,F_2~.
\end{flalign}
Notice that the $\bbZ_2$-parity of the linear map $\ip{\,\cdot\,}{\,\cdot\,}_{\bbM}$
 is $\dim(S)\,\, \mathrm{mod}\,\,2$  and that $\ip{\,\cdot\,}{\,\cdot\,}_{\bbM}$ can be extended to all $F_1,F_2\in\OO(M)$
  with compactly overlapping support.
Notice further that
\begin{flalign}\label{eqn:pairingsymmetry}
\ip{F_1}{F_2}_{\bbM} = (-1)^{\vert F_1\vert \,\vert F_2\vert} \ip{F_2}{F_1}_{\bbM}~,
\end{flalign}
for all homogeneous $F_1,F_2\in\OO(M)$ with compactly overlapping support.
\sk

We finish this subsection by defining a suitable category of super-Cartan supermanifolds.
\begin{defi}\label{defi:SCart0}
The category $\SCart$ consists of the following objects and morphisms:
\begin{itemize}
\item The objects are all super-Cartan supermanifolds $\bbM = (M,\Omega,E)$.
\item The morphisms $\chi : \bbM\to\bbM^\prime$  are all $\SMan$-morphisms
(denoted by the same symbol) $\chi: M\to M^\prime$, such that 
\begin{enumerate}
\item $\und{\chi} : \und{M} \to\und{M^\prime}$ is an open embedding,
\item $\chi : M\to M^\prime\vert_{\und{\chi}(\und{M})}$ is a $\SMan$-isomorphism,
\item the super-Cartan structures are preserved, i.e.\ $\chi^\ast(\Omega^\prime) = \Omega$ and $\chi^\ast(E^\prime) = E$.
\end{enumerate}
\end{itemize}
\end{defi}
\begin{rem}
Since $\chi^\ast(E^\prime) = E$ it is clear that any $\SCart$-morphism $\chi : \bbM\to \bbM^\prime$ preserves
the Berezinian densities (\ref{eqn:Berezindensity}), i.e.\ 
\begin{flalign}\label{eqn:berezindensitymorph}
\chi_{\und{\chi}(\und{M})}^\ast \left(\res_{\und{M^\prime},\und{\chi}(\und{M})}\left(\mathrm{Ber}(E^\prime)\right)\right)
= \mathrm{Ber}(E)~.
\end{flalign}
\end{rem}

\subsection{Lorentz geometry on the reduced manifold}
Given any super-Cartan supermanifold $\bbM=(M,\Omega,E)$ we can equip the reduced manifold
$\und{M}$ with a Lorentz metric, an orientation and a time-orientation. 
Explicitly, the pull-back of the supervielbein $E\in\Omega^1(M,\mathfrak{t})$
along the embedding of the reduced manifold $\iota_{\und{M},M} : \und{M}\to M$
provides us with a non-degenerate one-form on $\und{M}$ with values in the even part 
$W$ of $\mathfrak{t}$ that we  shall denote by $\und{E} := \iota^\ast_{\und{M},M}(E)\in\Omega^1(\und{M},W)$.
We define a Lorentz metric $\und{g}_{\bbM}$ on $\und{M}$ by setting
\begin{flalign}
\und{g}_{\bbM} : \Gamma^\infty (T \und{M})\times \Gamma^\infty(T\und{M}) \longrightarrow C^\infty(\und{M})~,~~(X,Y)\longmapsto g\big(\langle X,\und{E}\rangle,\langle Y,\und{E}\rangle\big)~,
\end{flalign}
where $g:W\otimes W\to \bbR$ is the Lorentz metric on $W$ and $\ip{\,\cdot\,}{\,\cdot\,}$ is the duality 
pairing between vector fields and one-forms on $\und{M}$.
The orientation $o_{W}$ of $W$ induces an orientation $\und{o}_{\bbM}$ 
of $\und{M}$, which we may represent by the volume form $\mathrm{vol}(\widetilde{E}) 
:= \und{e}^0\wedge\dots\wedge \und{e}^{\dim(W)-1}$,
where we have expanded $\und{E} = \sum_{\alpha=0}^{\dim(W)-1} \und{e}^\alpha\otimes p_{\alpha}$ in terms of 
any orthonormal time-oriented and oriented basis $\{p_\alpha\}$ for $W$.
The positive cone $C\subset W$ of timelike vectors in $W$ induces a time-orientation $\und{t}_{\bbM}$ 
on the reduced oriented Lorentz manifold $(\und{M},\und{g}_{\bbM},\und{o}_{\bbM})$,
which we may represent by the vector field $\widetilde{X}_0$ on $\und{M}$ that is
defined via the duality relations $\langle \und{X}_\alpha, \und{e}^\beta \rangle = \delta_{\alpha}^\beta$.
\sk

We shall now show that 
the assignment of the reduced oriented and time-oriented Lorentz manifolds 
$\und{\bbM} :=(\und{M},\und{g}_{\bbM},\und{o}_{\bbM},\und{t}_{\bbM})$
to super-Cartan supermanifolds $\bbM$ is functorial.
We define the category of oriented and time-oriented Lorentz manifolds $\tLor$ as follows:
The objects are all oriented and time-oriented Lorentz manifolds and the morphisms are all
open isometric embeddings that preserve the orientations and time-orientations.
\begin{propo}
The following assignment defines a functor $\und{\,\cdot\,} : \SCart \to \tLor$:
To any object $\bbM$ in $\SCart$ we assign the oriented and time-oriented Lorentz manifold
$\und{\bbM} := (\und{M},\und{g}_{\bbM},\und{o}_{\bbM},\und{t}_{\bbM})$ that has been constructed above.
To any morphism $\chi : \bbM\to \bbM^\prime$ in $\SCart$ we assign the $\tLor$-morphism
$\und{\chi} : \und{\bbM}\to\und{{\bbM^\prime}}$ that is defined by the reduced
morphism $\und{\chi} : \und{M}\to\und{{M^\prime}}$.
\end{propo}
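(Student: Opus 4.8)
The plan is to reduce the claim to the functoriality of the reduced-manifold assignment $M\mapsto\und{M}$ on $\SMan$, which was recorded in Section~\ref{sec:prelim}, and to check that the Lorentzian data $(\und{g}_\bbM,\und{o}_\bbM,\und{t}_\bbM)$ behave naturally under $\SCart$-morphisms. The object assignment lands in $\tLor$ by the construction preceding the Proposition: $\und{E}=\iota^\ast_{\und{M},M}(E)$ is fibrewise a linear isomorphism onto $W$ because $E$ is non-degenerate, so $\und{g}_\bbM$ is a genuine Lorentz metric (with the signature of $g$), and $\und{o}_\bbM$, $\und{t}_\bbM$ are well-defined since any two orthonormal, time-oriented and oriented bases of $(W,g,C,o_W)$ differ by an element of $\mathrm{SO}_0(1,\dim(W)-1)$, which preserves $\vol(\und{E})$ and the timelike leg. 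The heart of the matter is a single naturality identity: for every $\SCart$-morphism $\chi:\bbM\to\bbM^\prime$,
\[
\und{\chi}^\ast(\und{E}^\prime)=\und{E}\qquad\text{in }\Omega^1(\und{M},W)~.
\]
I would prove this by applying $\iota^\ast_{\und{M},M}$ to the defining relation $\chi^\ast(E^\prime)=E$ (the third condition in Definition~\ref{defi:SCart0}) and then using contravariant functoriality of the pull-back of ($W$-valued) super-one-forms together with the commuting square \eqref{eqn:reducedmanifolddiagram}, which gives $\iota_{\und{M^\prime},M^\prime}\circ\und{\chi}=\chi\circ\iota_{\und{M},M}$; hence $\und{E}=\iota^\ast_{\und{M},M}\big(\chi^\ast(E^\prime)\big)=(\chi\circ\iota_{\und{M},M})^\ast(E^\prime)=(\iota_{\und{M^\prime},M^\prime}\circ\und{\chi})^\ast(E^\prime)=\und{\chi}^\ast\big(\iota^\ast_{\und{M^\prime},M^\prime}(E^\prime)\big)=\und{\chi}^\ast(\und{E}^\prime)$. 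Here the first and second conditions in Definition~\ref{defi:SCart0} ensure that $\chi$ restricts to a $\SMan$-isomorphism onto its open image, so $\chi^\ast$ on super-one-forms is the operation defined in Section~\ref{sec:prelim}.

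Granting this identity, the rest is bookkeeping with orthonormal frames. Since $\und{\chi}:\und{M}\to\und{M^\prime}$ is an open embedding (the first condition in Definition~\ref{defi:SCart0}), it suffices to check that it is isometric and preserves the two orientations. Isometry $\und{\chi}^\ast\und{g}_{\bbM^\prime}=\und{g}_\bbM$ follows by unwinding the definition of $\und{g}$ and using $\ip{\und{\chi}_\ast X}{\und{E}^\prime}\circ\und{\chi}=\ip{X}{\und{\chi}^\ast\und{E}^\prime}=\ip{X}{\und{E}}$ for vector fields $X$ on $\und{M}$. Expanding $\und{E}$ and $\und{E}^\prime$ in the \emph{same} orthonormal, time-oriented and oriented basis $\{p_\alpha\}$ of $W$, the identity above yields $\und{\chi}^\ast(\und{e}^{\prime\,\alpha})=\und{e}^\alpha$ for all $\alpha$; hence $\und{\chi}^\ast\big(\vol(\und{E}^\prime)\big)=\vol(\und{E})$, so $\und{\chi}$ preserves the orientations, and dualizing $\und{\chi}^\ast(\und{e}^{\prime\,\alpha})=\und{e}^\alpha$ shows that $\und{\chi}$ maps the frame $\{\und{X}_\alpha\}$ to the restriction of $\{\und{X}^\prime_\alpha\}$ to $\und{\chi}(\und{M})$; in particular $\und{\chi}_\ast\und{X}_0=\und{X}^\prime_0\big|_{\und{\chi}(\und{M})}$, so the time-orientations are preserved. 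Thus $\und{\chi}$ is a $\tLor$-morphism.

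Finally, functoriality---$\und{\id_\bbM}=\id_{\und{\bbM}}$ and $\und{\chi^\prime\circ\chi}=\und{\chi^\prime}\circ\und{\chi}$---is inherited directly from the already-established functoriality of $M\mapsto\und{M}$, $(\chi:M\to M^\prime)\mapsto(\und{\chi}:\und{M}\to\und{M^\prime})$ from $\SMan$ to the category of manifolds, since the Lorentzian structures play no role in the composition law of $\tLor$. The only genuinely delicate point in the whole argument is the naturality identity $\und{\chi}^\ast(\und{E}^\prime)=\und{E}$, and within it keeping straight the restriction-to-the-image conventions for pull-backs of super-one-forms when composing with the reduced embeddings via \eqref{eqn:reducedmanifolddiagram}; everything after that is routine.
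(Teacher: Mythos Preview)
Your proof is correct and follows essentially the same approach as the paper: the central step is the naturality identity $\und{\chi}^\ast(\und{E}^\prime)=\und{E}$, established via contravariant functoriality of pull-back together with the commuting square \eqref{eqn:reducedmanifolddiagram}, from which isometry and preservation of the (time-)orientations are then read off. You supply more detail than the paper (which leaves the ``bookkeeping'' implicit), but the argument is the same.
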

\begin{proof}
We have to prove that the reduced morphism $\und{\chi} : \und{M}\to\und{{M^\prime}}$ 
is isometric and that it preserves the orientations and time-orientations.
All these properties follow from the fact that $\und{\chi}^\ast(\und{E^\prime}) = \und{E}$,
which is shown by the calculation
\begin{flalign}
\nn \und{\chi}^\ast(\und{E^\prime}) &= \und{\chi}^\ast \circ \iota_{\und{M^\prime},M^\prime}^\ast (E^\prime)
=(\iota_{\und{M^\prime},M^\prime} \circ \widetilde{\chi})^\ast(E^\prime)
=(\chi\circ \iota_{\und{M},M})^\ast(E^\prime)\\
&= \iota_{\und{M},M}^\ast\circ \chi^\ast(E^\prime) =  \iota_{\und{M},M}^\ast(E) = \und{E}~,
\end{flalign}
where in the third equality we have used the commutative diagram (\ref{eqn:reducedmanifolddiagram}).
\end{proof}

Due to this proposition we can define the 
chronological and causal future/past 
of a subset $A\subseteq \und{M}$ in a super-Cartan supermanifold $\bbM$
in terms of the corresponding reduced oriented and time-oriented Lorentz manifold
$\und{\bbM}$. (See e.g.\ \cite{Bar:2007zz} for a definition of the chronological and causal future/past
of a subset of a time-oriented Lorentz manifold.)
\begin{defi}
Let $\bbM$ be a super-Cartan supermanifold and $A\subseteq \und{M}$ 
a subset of its reduced manifold.
The chronological future/past of $A$ in $\bbM$ is defined by
\begin{subequations}
\begin{flalign}
I_{\bbM}^\pm(A):= I_{\und{\bbM}}^\pm(A) \subseteq \und{M}
\end{flalign}
and the causal future/past by 
\begin{flalign}
J_{\bbM}^{\pm}(A) :=
 J_{\und{\bbM}}^\pm(A) \subseteq \und{M}~.
\end{flalign}
\end{subequations}
We further define $I_{\bbM}(A) := I_{\bbM}^+(A)\cup I_{\bbM}^-(A)$
and $J_{\bbM}(A) := J_{\bbM}^+(A)\cup J_{\bbM}^-(A)$.
\end{defi}

\subsection{The category of globally hyperbolic super-Cartan supermanifolds}
For our studies on super-field theories the
category of globally hyperbolic super-Cartan supermanifolds
will play a major role. It can be defined as a certain subcategory of $\SCart$.
\begin{defi}\label{defi:SCart}
The category $\ghSCart$ consists of the following objects and morphisms:
\begin{itemize}
\item The objects in $\ghSCart$  are all objects $\bbM$ in $\SCart$ such that 
the reduced oriented and time-oriented Lorentz manifold $\und{\bbM}$ is globally hyperbolic.

\item The morphisms $\chi : \bbM \to \bbM^\prime$ between two objects $\bbM$ and $\bbM^\prime$ in $\ghSCart$
are all $\SCart$-morphisms such that the image of the reduced morphism $\und{\chi} : \und{\bbM}\to\und{\bbM^\prime}$
is a causally compatible subset of $\und{\bbM^\prime}$.\footnote{
Recall that a subset $A\subseteq N$ of a time-oriented Lorentz manifold $N$ 
is called causally compatible provided that $J^\pm_{A}(x) = J^\pm_{N}(x) \cap A$, for all $x\in A$.
}
\end{itemize}
\end{defi}

%%%%%%%%%%%%%%%%%%%%%%%%%%%%%%%%%%%%%%%%%%%%%%%%
%%%%%%%%%%%%%%%%%%%%%%%%%%%%%%%%%%%%%%%%%%%%%%%%

\section{\label{sec:axioms}Axiomatic definition of super-field theories}
Motivated by the examples we will discuss in Section \ref{sec:examples},
we shall give an axiomatic characterization of super-field
theories by representation theoretic and geometric data. This is a reasonable and useful
approach since all of our statements concerning the construction of super-QFTs
in Section \ref{sec:ordinary} can be made 
at this abstract level, so there is no need to focus on explicit models at this point. Moreover, 
the problem of constructing models of super-QFTs is thereby reduced to
finding explicit realizations of the assumed representation theoretic and geometric data.
It will be instructive to first provide some motivations explaining
 our choice of data.

\paragraph{Representation theoretic data:}
Motivated by Section \ref{sec:cartan}, our first choice of data is given by an eight-tuple
$(W,g,S,\Gamma,C,\epsilon,o_{W},o_{S})$ consisting
 of a finite-dimensional real vector space $W$, a Lorentz metric $g:W\otimes W\to \bbR$, 
 a real spin representation $S$ of $\mathrm{Spin}(W,g)$, a symmetric and
 $\mathrm{Spin}(W,g)$-equivariant pairing $\Gamma : S\otimes S\to W$
which is positive with respect a choice of positive cone $C\subset W$ of timelike vectors,
 a $\mathrm{Spin}(W,g)$-invariant linear map $\epsilon : S \otimes S\to\bbR$ 
 which is either a metric (of positive signature) or a symplectic structure,
 and orientations $o_W$ on $W$ and $o_{S}$ on $S$.
It becomes evident from Section \ref{sec:cartan} that this data is required,
on the one hand, to specify a super-Poincar{\'e} and supertranslation super-Lie algebra and,
on the other hand, to describe super-Cartan structures on supermanifolds together with
time-orientation and integration. In other words, the representation theoretic
data fixes the local model space of the super-Cartan supermanifold
and in particular its dimension to $\dim(W)\vert\dim(S)$. Physically speaking, this means 
that the representation theoretic data fixes the dimension of the reduced spacetime and the amount of supersymmetry.

\paragraph{Admissible super-Cartan supermanifolds:} 
As we will show in Section \ref{sec:examples} by studying explicit examples,
one should not expect that the super-field theory will be defined on the whole category
$\ghSCart$ of globally hyperbolic super-Cartan supermanifolds, see Definition \ref{defi:SCart}.
A common feature in many super-field theories (especially in supergravity) is that
one has to impose constraints on the superfields in order to arrive at a reasonable theory.
Such constraints may in particular include the supergravity supertorsion constraints \cite{Wess:1977fn},
which restrict the class of super-Cartan structures, i.e.\ the objects in $\ghSCart$. 
As we would like to keep our axiomatic setting as flexible as possible,
we shall at this point not specify the explicit form of these constraints
but rather include a {\em choice} of full subcategory $\SLoc$ of $\ghSCart$ as a part of the data.
Objects in $\SLoc$ will be called admissible super-Cartan supermanifolds
and, at least for the examples studied in Section \ref{sec:examples}, a reasonable
choice of $\SLoc$ is given by the supergravity supertorsion constraints.

\paragraph{Super-differential operators:}
We shall only consider super-field theories whose configurations on any object $\bbM= (M,\Omega,E)$ in $\SLoc$ 
can be described in terms of the super-vector space  $\OO(\bbM):=\OO(M)$ of global sections
of the structure sheaf. Notice that $\OO : \SLoc^\op \to \SVec$ is a functor, namely the global section functor.
The dynamics of the super-field theory will be encoded in terms 
of super-differential operators $P_{\bbM} : \OO(\bbM) \to \OO(\bbM)$ of $\bbZ_2$-parity
$\dim(S)\,\,\mathrm{mod}\,\,2$, which are assumed to be natural 
in the sense that the diagram
\begin{flalign}\label{eqn:Pnatural}
\xymatrix{
\ar[d]_-{\OO(\chi^\op)=\chi^\ast} \OO(\bbM^\prime) \ar[rr]^{P_{\bbM^\prime}}&& \OO(\bbM^\prime)\ar[d]^-{\OO(\chi^\op) = \chi^\ast} \\
\OO(\bbM) \ar[rr]_-{P_{\bbM}} && \OO(\bbM)  
}
\end{flalign}
of linear maps commutes, for all $\SLoc^\op$-morphisms $\chi^\op : \bbM^\prime\to\bbM$
(i.e.\ all $\SLoc$-morphisms $\chi: \bbM\to\bbM^\prime$).
$P_{\bbM}$ can be regarded as the components of a natural transformation
$P : \OO \Rightarrow \OO$, provided that we enlarge the morphism sets in $\SVec$ by parity reversing linear maps.
More precisely, we shall replace the morphism sets $\Hom_{\SVec}(V,V^\prime)$ in $\SVec$ by the sets underlying the
 internal hom-objects $\underline{\Hom}(V,V^\prime)$. The corresponding category is then denoted by $\underline{\SVec}$
 and we have an obvious functor $\SVec\to\underline{\SVec}$. 
Hence, $\OO : \SLoc^\op\to \SVec$ defines  a functor (denoted by the same symbol)
 $\OO : \SLoc^\op\to \underline{\SVec}$.\footnote{\label{footnote:underline}
Notice that enlarging the category $\SVec$ to $\underline{\SVec}$ is required only if $\dim(S)$ is odd,
which is a peculiarity of the superparticle discussed in Section \ref{sec:examples}. In the more common situation
where $\dim(S)$ is even, all our constructions can be done within the subcategory $\SVec$ of $\underline{\SVec}$, 
so there is no need to introduce the category $\underline{\SVec}$ in this case.
We however decided to work with $\underline{\SVec}$ in order to develop a framework
that is general enough to include the superparticle, which is a valuable example
that can be analyzed in full detail, cf.\ Section \ref{sec:examples}.
 }
We additionally demand that, for any object $\bbM$ in $\SLoc$, the super-differential operator $P_{\bbM}$
is formally super-self adjoint with respect to the pairing (\ref{eqn:pairing}), i.e.\
\begin{flalign}\label{eqn:superself}
\ip{F_1}{P_{\bbM}(F_2)}_{\bbM} = (-1)^{\vert F_1\vert \,\vert P_{\bbM}\vert} \,\ip{P_{\bbM}(F_1)}{F_2}_{\bbM}~,
\end{flalign}
for all homogeneous $F_1,F_2\in\OO(\bbM)$ with compactly overlapping support.
For many of our constructions we also have to assign retarded/advanced super-Green's operators to
the super-differential operators $P_{\bbM}$, for all objects $\bbM$ in $\SLoc$. 
We therefore demand that, for any object $\bbM$ in $\SLoc$, the super-differential operator
$P_{\bbM}$ is super-Green's hyperbolic in the following sense:
\begin{defi}\label{defi:green}
Let $\bbM$ be any object in $\SLoc$. A homogeneous super-differential operator 
$P_{\bbM}: \OO(\bbM) \to \OO(\bbM)$ is called super-Green's hyperbolic
if there exists a retarded and advanced super-Green's operator, i.e.\ linear maps
$G_{\bbM}^\pm : \OO_{\cc}(\bbM)\to \OO(\bbM)$ of $\bbZ_2$-parity $\vert G_{\bbM}^\pm\vert = \vert P_{\bbM}\vert$
that satisfy 
\begin{itemize}
\item[(i)] $P_{\bbM}\circ G_{\bbM}^\pm = \id_{\OO_{\cc}(\bbM)}$,
\item[(ii)] $G_{\bbM}^\pm\circ P_{\bbM} \big\vert_{\OO_{\cc}(\bbM)} = \id_{\OO_{\cc}(\bbM)}$,
\item[(iii)] $\supp\big(G_{\bbM}^\pm(F)\big)\subseteq J_{\bbM}^\pm(\supp(F))$, for all $F\in\OO_{\cc}(\bbM)$.
\end{itemize}
\end{defi}
\sk

Motivated by the discussion above, we can now define abstractly a notion
of super-field theories. Our present axiomatic framework is supposed to
cover all super-field theories which in the physics literature would be called `real superfields'. The typical 
examples in $1\vert 1$ and $3\vert 2$-dimensions are discussed in Section \ref{sec:examples}, where it is shown that
they comply with our axioms. In contrast, `super-gauge theories' and `chiral superfields'
will require a more sophisticated set of axioms, which should include aspects of gauge invariance
and the chirality constraints (see \cite{Hack:2012dm} for an axiomatic approach to ordinary gauge theories). 
We shall leave these problems for future work
and consider in the present work the case of (real) super-field theories which we characterize
by the following axioms:
\begin{defi}\label{defi:sft}
A super-field theory is specified by the following data:
\begin{enumerate}
\item A choice of the representation theoretic data $(W,g,S,\Gamma,C,\epsilon,o_{W},o_{S})$.

\item A full subcategory $\SLoc$ of $\ghSCart$.

\item  A natural transformation $P : \OO \Rightarrow \OO$ of functors
 from $\SLoc^\op$ to $\underline{\SVec}$, such that $P_{\bbM}$
is a formally super-self adjoint and super-Green's hyperbolic super-differential operator
of $\bbZ_2$-parity $\dim(S)\,\,\mathrm{mod}\,\,2$, for any object $\bbM$ in $\SLoc$.
\end{enumerate}
\end{defi}

%%%%%%%%%%%%%%%%%%%%%%%%%%%%%%%%%%%%%%%%%%%%%%%%
%%%%%%%%%%%%%%%%%%%%%%%%%%%%%%%%%%%%%%%%%%%%%%%%

\section{\label{sec:ordinary}Construction of super-quantum field theories}
We show that given any super-field theory according to Definition \ref{defi:sft},
one can construct a functor $\AA : \SLoc \to \SAlg$ which satisfies
a supergeometric modification of the axioms of locally covariant quantum field
theory \cite{Brunetti:2001dx}. 
In other words, any super-field theory gives rise to a super-QFT.
We establish a connection between the super-field theory and its associated
super-QFT by showing that the latter has a locally covariant quantum field
which satisfies (in a weak sense) the equations of motion given by
the super-differential operators $P$. As usual our construction
will be done in two steps. First, we assign to a super-field theory 
a functor $\LL : \SLoc\to \XX$, where $\XX$ is the category of
super-symplectic spaces in the case of $\dim(S)$ even and the
category of super-inner product spaces in the case of $\dim(S)$ odd.
In the spirit of \cite{Brunetti:2001dx} this functor should be interpreted as a 
locally covariant classical field theory.
The locally covariant classical field theory is then quantized by 
a quantization functor $\QQ : \XX \to \SAlg$,
which implements super-canonical commutation relations (SCCR) in the case of $\dim(S)$ even and
super-canonical anticommutation relations (SCAR) in the case of $\dim(S)$ odd.

\subsection{The functor \texorpdfstring{$\OO_{\cc} : \SLoc \to \SVec$}{Oc : SLoc -> SVec}}
As a preparatory step, we shall show that
assigning to objects $\bbM$ in $\SLoc$ 
the super-vector spaces $\OO_{\cc}(\bbM) := \OO_{\cc}(M)$ 
of compactly supported global sections of the structure sheaf
can be described by a functor $\OO_{\cc} : \SLoc \to\SVec$.
Given any $\SLoc$-morphism $\chi : \bbM\to \bbM^\prime$,
we can define a $\SVec$-morphism $\chi_\ast := \OO_{\cc}(\chi) : \OO_{\cc}(\bbM)\to \OO_{\cc}(\bbM^\prime) $,
called the {\em push-forward of compactly supported sections}, by the following construction:
Recall that for any $\SLoc$-morphism $\chi : \bbM\to\bbM^\prime$
the corresponding $\SMan$-morphism $\chi : M\to M^\prime\vert_{\und{\chi}(\und{M})}$ is by assumption
a $\SMan$-isomorphism. In particular, we have a $\SVec$-isomorphism
\begin{flalign}
\chi_{\und{\chi}(\und{M})}^\ast : \OO_{M^\prime}(\und{\chi}(\und{M})) \longrightarrow \OO_{M}(\und{M})~,
\end{flalign}
which induces a $\SVec$-isomorphism on compactly supported sections
\begin{flalign}
(\chi_{\und{\chi}(\und{M})}^\ast)^{-1} : \OO_{M,\cc}(\und{M})\longrightarrow \OO_{M^\prime,\cc}(\und{\chi}(\und{M})) ~.
\end{flalign}
Making use of the sheaf properties of $\OO_{M^\prime}$,
we can define a $\SVec$-morphism 
\begin{flalign}
\ext_{\und{\chi}(\und{M}),\und{M^\prime}} : 
\OO_{M^\prime,\cc}(\und{\chi}(\und{M}))  \longrightarrow \OO_{M^\prime,\cc}(\und{M^\prime}) ~,
\end{flalign}
which extends compactly supported sections by zero. 
This $\SVec$-morphism is a monomorphism since
\begin{flalign}\label{eqn:resext}
\res_{\und{M^\prime},\und{\chi}(\und{M})}\circ \ext_{\und{\chi}(\und{M}),\und{M^\prime}} = \id_{\OO_{M^\prime,\cc}(\und{\chi}(\und{M}))}~.
\end{flalign}
We define the push-forward of compactly supported sections by the composition
\begin{flalign}\label{eqn:pushdefi}
\OO_{\cc}(\chi) := \chi_\ast := \ext_{\und{\chi}(\und{M}),\und{M^\prime}}\circ (\chi_{\und{\chi}(\und{M})}^\ast)^{-1} : \OO_{\cc} (\bbM) \longrightarrow \OO_{\cc}(\bbM^\prime)~
\end{flalign}
and notice that it is a $\SVec$-monomorphism.
The following lemma collects important 
properties of the push-forward of compactly supported sections, 
which shall be frequently used in our work.
\begin{lem}\label{lem:pushforward}
Let $\chi : \bbM\to\bbM^\prime$ be any $\SLoc$-morphism. Then  the following properties hold true:
\begin{itemize}
\item[(i)] $\chi^\ast \circ \chi_\ast = \id_{\OO_{\cc}(\bbM)}$ and $\chi_\ast\circ \chi^\ast(F) = F$, for 
all $F\in \OO_{\cc}(\bbM^\prime)$ such that $\supp(F)\subseteq \und{\chi}(\und{M})$.
\item[(ii)] $ \ip{F_1}{\chi_\ast(F_2)}_{\bbM^\prime}= \ip{\chi^\ast(F_1)}{F_2}_{\bbM}$, for all $F_1\in\OO(\bbM^\prime)$
and $F_2\in\OO_{\cc}(\bbM)$, where the pairing is defined in (\ref{eqn:pairing}).
\item[(iii)] ${\id_{\bbM}}_\ast = \id_{\OO_{\cc}(\bbM)}$ and $(\chi^\prime\circ\chi)_\ast = \chi^\prime_\ast \circ \chi_\ast$, for all
$\SLoc$-morphisms $\chi^\prime : \bbM^\prime\to\bbM^{\prime\prime}$.
\end{itemize}
\end{lem}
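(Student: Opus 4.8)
The plan is to verify the three properties directly from the definition~\eqref{eqn:pushdefi} of the push-forward $\chi_\ast = \ext_{\und{\chi}(\und{M}),\und{M^\prime}}\circ (\chi_{\und{\chi}(\und{M})}^\ast)^{-1}$, using the sheaf-theoretic identities already recorded in the excerpt, in particular~\eqref{eqn:resext} and the basic compatibility of pull-backs with restrictions.

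\textbf{Part (i).} First I would compute $\chi^\ast\circ\chi_\ast$. Here $\chi^\ast = \chi^\ast_{\und{M^\prime}}$ on global sections, and for $F\in\OO_{\cc}(\bbM)$ the section $\chi_\ast(F)$ has support contained in $\und{\chi}(\und{M})$; hence applying $\chi^\ast$ amounts first to restricting to $\und{\chi}(\und{M})$ and then applying $\chi^\ast_{\und{\chi}(\und{M})}$. Using~\eqref{eqn:resext} the restriction undoes the extension, leaving $\chi^\ast_{\und{\chi}(\und{M})}\circ(\chi^\ast_{\und{\chi}(\und{M})})^{-1} = \id$, so $\chi^\ast\circ\chi_\ast = \id_{\OO_{\cc}(\bbM)}$. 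For the second identity, take $F\in\OO_{\cc}(\bbM^\prime)$ with $\supp(F)\subseteq\und{\chi}(\und{M})$; then $\chi^\ast(F) = \ext\circ\chi^\ast_{\und{\chi}(\und{M})}\circ\res_{\und{M^\prime},\und{\chi}(\und{M})}(F)$ modulo identifying supports, and applying $\chi_\ast$ reverses each step (the extension is well-defined precisely because $\supp(F)$ sits inside $\und{\chi}(\und{M})$), giving $\chi_\ast\circ\chi^\ast(F) = F$. The only mild care needed is bookkeeping with the extension-by-zero morphism and the fact that $\supp(F)$ being inside the open set $\und{\chi}(\und{M})$ makes $\ext\circ\res$ act as the identity on such $F$.

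\textbf{Part (ii).} This is an unwinding of the definitions of the pairing~\eqref{eqn:pairing} and of $\chi_\ast$, combined with the transformation formula~\eqref{eqn:trafoformel} for the Berezin integral and the fact~\eqref{eqn:berezindensitymorph} that $\SCart$-morphisms preserve the Berezinian density. Concretely, $\ip{F_1}{\chi_\ast(F_2)}_{\bbM^\prime} = \int_{M^\prime}\mathrm{Ber}(E^\prime)\,F_1\,\chi_\ast(F_2)$; since $\chi_\ast(F_2)$ is supported in $\und{\chi}(\und{M})$ I can restrict the integral to the open subsupermanifold $M^\prime\vert_{\und{\chi}(\und{M})}$, then pull back along the $\SMan$-isomorphism $\chi : M\to M^\prime\vert_{\und{\chi}(\und{M})}$ using~\eqref{eqn:trafoformel}. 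The pull-back turns $\mathrm{Ber}(E^\prime)$ into $\mathrm{Ber}(E)$ by~\eqref{eqn:berezindensitymorph}, turns $\chi_\ast(F_2)$ back into $F_2$ by the construction of $\chi_\ast$, and turns $F_1$ into $\chi^\ast(F_1)$; the result is $\int_M\mathrm{Ber}(E)\,\chi^\ast(F_1)\,F_2 = \ip{\chi^\ast(F_1)}{F_2}_{\bbM}$.

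\textbf{Part (iii).} Functoriality is straightforward: for $\chi = \id_{\bbM}$ we have $\und{\chi}(\und{M}) = \und{M}$, the pull-back is the identity, and extension by zero is trivial, so ${\id_{\bbM}}_\ast = \id$. For composition $\chi^\prime\circ\chi$, I would write out $(\chi^\prime\circ\chi)_\ast$ and $\chi^\prime_\ast\circ\chi_\ast$ and match them by noting $\und{\chi^\prime\circ\chi}(\und{M}) = \und{\chi^\prime}(\und{\chi}(\und{M}))$, that pull-backs compose contravariantly (so $(\chi^\prime\circ\chi)^\ast = \chi^\ast\circ{\chi^\prime}^\ast$ on the relevant open sets, hence the inverses compose the other way), and that extending by zero through $\und{\chi}(\und{M})\subseteq\und{\chi^\prime}(\und{M^\prime})\subseteq\und{M^{\prime\prime}}$ in one step agrees with doing it in two steps. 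I expect the main obstacle to be purely notational rather than conceptual: carefully tracking restrictions, direct images, and the open sets $\und{\chi}(\und{M})$ through the composition, and making sure the extension-by-zero morphisms are composed over the correct nested opens. Once the diagram of open embeddings is set up, each identity follows by a short chase.
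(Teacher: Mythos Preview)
Your proposal is correct and follows essentially the same approach as the paper: part~(i) via the sheaf-morphism identity $\chi^\ast = \chi^\ast_{\und{\chi}(\und{M})}\circ\res_{\und{M^\prime},\und{\chi}(\und{M})}$ together with~\eqref{eqn:resext}, part~(ii) via~\eqref{eqn:trafoformel} and~\eqref{eqn:berezindensitymorph} after restricting to the image, and part~(iii) by a direct sheaf-theoretic unwinding. One small slip: in the second identity of~(i) you wrote $\chi^\ast(F) = \ext\circ\chi^\ast_{\und{\chi}(\und{M})}\circ\res(F)$, but there is no extension in $\chi^\ast$; the correct chain is $\chi_\ast\circ\chi^\ast(F) = \ext\circ(\chi^\ast_{\und{\chi}(\und{M})})^{-1}\circ\chi^\ast_{\und{\chi}(\und{M})}\circ\res(F) = \ext\circ\res(F) = F$, which is what you arrive at anyway.
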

\begin{proof}
Item (i) is shown by two simple calculations. The first part follows from
\begin{flalign}
\chi^\ast \circ \chi_\ast  = \chi_{\und{\chi}(\und{M})}^\ast\circ \res_{\und{M^\prime},\und{\chi}(\und{M})}\circ \ext_{\und{\chi}(\und{M}),\und{M^\prime}}\circ (\chi_{\und{\chi}(\und{M})}^\ast)^{-1} = \id_{\id_{\OO_{\cc}(\bbM)}}~,
\end{flalign}
where in the first equality we have used the diagram (\ref{eqn:sheafmorph}) characterizing sheaf morphisms
 (applied to $U=\und{M^\prime}$ and $V=\und{\chi}(\und{M})$) and in the second equality we have used (\ref{eqn:resext}).
 To show the second part, notice that if $F\in \OO_{\cc}(\bbM^\prime)$ is such that $\supp(F)\subseteq \und{\chi}(\und{M})$,
 then $\chi^\ast(F)\in\OO_{\cc}(\bbM)$, so the composition $\chi_\ast\circ \chi^\ast(F) $ is well-defined.
By using the same argument as above we find that
 \begin{flalign}
\nn  \chi_\ast\circ \chi^\ast(F) &= \ext_{\und{\chi}(\und{M}),\und{M^\prime}}\circ (\chi_{\und{\chi}(\und{M})}^\ast)^{-1} \circ
 \chi_{\und{\chi}(\und{M})}^\ast\circ \res_{\und{M^\prime},\und{\chi}(\und{M})}(F) \\
 &=  \ext_{\und{\chi}(\und{M}),\und{M^\prime}}\circ  \res_{\und{M^\prime},\und{\chi}(\und{M})}(F) =F~,
 \end{flalign}
where the last equality holds true by direct inspection.
\sk

Item (ii) holds true as a consequence of the transformation formula for the Berezin integral  (\ref{eqn:trafoformel})
and the property (\ref{eqn:berezindensitymorph}); explicitly, we have 
\begin{flalign}
\nn \ip{F_1}{\chi_\ast(F_2)}_{\bbM^\prime} &= \int_{M^\prime}\mathrm{Ber}(E^\prime)\,F_1\, \chi_\ast(F_2) \\
\nn &=\int_{M^\prime\vert_{\und{\chi}(\und{M})}} \res_{\und{M^\prime},\und{\chi}(\und{M})}\left(\mathrm{Ber}(E^\prime)\right) ~
\res_{\und{M^\prime},\und{\chi}(\und{M})}(F_1)~\res_{\und{M^\prime},\und{\chi}(\und{M})}(\chi_\ast(F_2)) \\
\nn &=\int_{M} \mathrm{Ber}(E) ~\chi_{\und{\chi}(\und{M})}^\ast\big(\res_{\und{M^\prime},\und{\chi}(\und{M})}(F_1)\big)~\chi_{\und{\chi}(\und{M})}^\ast\big(\res_{\und{M^\prime},\und{\chi}(\und{M})}(\chi_\ast(F_2))\big) \\
&=\int_{M} \mathrm{Ber}(E) ~\chi^\ast(F_1) ~\chi^\ast(\chi_\ast(F_2)) = \ip{\chi^\ast(F_1)}{F_2}_{\bbM} ~,
\end{flalign}
for all $F_1\in\OO(\bbM^\prime)$ and $F_2\in\OO_{\cc}(\bbM)$.
In the second equality we have used that the support of $ \chi_\ast(F_2)$ is contained in $\und{\chi}(\und{M})$
and in the last equality item (i) of the present lemma.
\sk

The first part of item (iii) follows immediately from the definition (\ref{eqn:pushdefi})
and the second part from the following calculation
\begin{flalign}
\nn (\chi^\prime\circ\chi)_\ast  &= \ext_{\und{\chi^\prime}(\und{\chi}(\und{M})),\und{M^{\prime\prime}}}\circ
 ((\chi^\prime\circ\chi)_{\und{\chi^\prime}(\und{\chi}(\und{M}))}^\ast)^{-1} \\
\nn & = \ext_{\und{\chi^\prime}(\und{\chi}(\und{M})),\und{M^{\prime\prime}}}\circ \big(\chi^\ast_{\und{\chi}(\und{M})}\circ\chi_{\und{\chi^\prime}(\und{\chi}(\und{M}))}^{\prime\ast}\big)^{-1}\\
\nn &= \ext_{\und{\chi^\prime}(\und{\chi}(\und{M})),\und{M^{\prime\prime}}}\circ (\chi_{\und{\chi^\prime}(\und{\chi}(\und{M}))}^{\prime\ast})^{-1}\circ (\chi^\ast_{\und{\chi}(\und{M})})^{-1}\\
\nn &= \ext_{\und{\chi^\prime}(\und{\chi}(\und{M})),\und{M^{\prime\prime}}}\circ (\chi_{\und{\chi^\prime}(\und{\chi}(\und{M}))}^{\prime\ast})^{-1}\circ \res_{\und{M^\prime},\und{\chi}(\und{M})} \circ \ext_{\und{\chi}(\und{M}),\und{M^\prime}}\circ  (\chi^\ast_{\und{\chi}(\und{M})})^{-1}\\
\nn &= \ext_{\und{\chi^\prime}(\und{\chi}(\und{M})),\und{M^{\prime\prime}}}\circ \res_{\und{\chi^\prime}(\und{M^\prime}),\und{\chi^\prime}(\und{\chi}(\und{M}))} \circ (\chi_{\und{\chi^\prime}(\und{M^\prime})}^{\prime\ast})^{-1}\circ \ext_{\und{\chi}(\und{M}),\und{M^\prime}}\circ  (\chi^\ast_{\und{\chi}(\und{M})})^{-1}\\
\nn &= \ext_{\und{\chi^\prime}(\und{M^\prime}),\und{M^{\prime\prime}}} \circ (\chi_{\und{\chi^\prime}(\und{M^\prime})}^{\prime\ast})^{-1}\circ \ext_{\und{\chi}(\und{M}),\und{M^\prime}}\circ  (\chi^\ast_{\und{\chi}(\und{M})})^{-1}\\
&= \chi^\prime_\ast \circ \chi_\ast~,
\end{flalign}
where we have made frequent use of standard properties of sheaf morphisms.
\end{proof}

\begin{cor}\label{cor:cmpfunctor}
The following assignment is a functor $\OO_{\cc} : \SLoc \to\SVec$:  To any object
$\bbM$ in $\SLoc$ we assign the super-vector space 
$\OO_{\cc}(\bbM) := \OO_{\cc}(M)$ and to any $\SLoc$-morphism
$\chi :\bbM\to \bbM^\prime$ we assign the push-forward $\OO_{\cc}(\chi) := \chi_\ast :
\OO_{\cc}(\bbM)\to \OO_{\cc}(\bbM^\prime)$ defined in (\ref{eqn:pushdefi}).
\end{cor}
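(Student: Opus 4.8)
The plan is to verify the three things needed for $\OO_{\cc}$ to be a functor into $\SVec$: that it sends objects to objects, morphisms to morphisms, and respects identities and composition; essentially all of the work has already been done in the construction of the push-forward and in Lemma \ref{lem:pushforward}, so the proof amounts to assembling these facts.

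First I would note that for any object $\bbM$ in $\SLoc$ the compactly supported global sections $\OO_{\cc}(\bbM) = \OO_{\cc}(M)$ form a $\bbZ_2$-graded subspace of the super-vector space $\OO(M)$: the support of a homogeneous section is again the support of that same homogeneous section, so the $\bbZ_2$-grading of $\OO(M)$ restricts to $\OO_{\cc}(M)$, and hence $\OO_{\cc}(\bbM)$ is an object in $\SVec$. Next I would recall that the push-forward $\chi_\ast = \ext_{\und{\chi}(\und{M}),\und{M^\prime}}\circ(\chi^\ast_{\und{\chi}(\und{M})})^{-1}$ defined in (\ref{eqn:pushdefi}) is by construction a composition of two $\SVec$-morphisms: the inverse $(\chi^\ast_{\und{\chi}(\und{M})})^{-1}$ of the structure-sheaf isomorphism (which is an even linear isomorphism, being the inverse of the $\mathsf{SAlg}$-isomorphism underlying the $\SMan$-isomorphism $\chi : M\to M^\prime\vert_{\und{\chi}(\und{M})}$, restricted to compactly supported sections) and the extension-by-zero map $\ext_{\und{\chi}(\und{M}),\und{M^\prime}}$ (which is even and linear by the sheaf axioms). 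Hence $\OO_{\cc}(\chi) := \chi_\ast$ is a $\SVec$-morphism — in fact a monomorphism by (\ref{eqn:resext}) — so the assignment genuinely lands in the morphism sets of $\SVec$ and not merely in those of $\underline{\SVec}$.

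Finally, the two functor axioms — preservation of identities, ${\id_{\bbM}}_\ast = \id_{\OO_{\cc}(\bbM)}$, and preservation of composition, $(\chi^\prime\circ\chi)_\ast = \chi^\prime_\ast\circ\chi_\ast$ for composable $\SLoc$-morphisms $\chi$ and $\chi^\prime$ — are exactly item (iii) of Lemma \ref{lem:pushforward}. I do not expect any real obstacle: the only point requiring a moment's care is confirming that every map in sight is parity-even, which is immediate since both the pull-back $\chi^\ast$ and the extension map preserve the $\bbZ_2$-grading. Thus the corollary follows directly by combining the observations above with Lemma \ref{lem:pushforward}(iii).
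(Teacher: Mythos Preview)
Your proposal is correct and follows essentially the same approach as the paper: the paper's proof consists of the single sentence ``This is a direct consequence of Lemma \ref{lem:pushforward} (iii),'' and your argument simply fills in the surrounding details (that the objects and morphisms land in $\SVec$) before invoking that same lemma for the functor axioms.
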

\begin{proof}
This is a direct consequence of Lemma \ref{lem:pushforward} (iii).
\end{proof}

\subsection{\label{subsec:SGOP}Properties of the super-Green's operators}
Let us fix any super-field theory according to Definition \ref{defi:sft}. By assumption,
there exists a  retarded and advanced super-Green's operator $G_{\bbM}^\pm : \OO_{\cc}(\bbM)\to \OO(\bbM)$
for the super-differential operator $P_{\bbM}: \OO(\bbM)\to\OO(\bbM)$, for all objects $\bbM$ in $\SLoc$.
We shall now derive important properties of the super-Green's operators.
\begin{lem}\label{lem:greenadjoint}
Let $\bbM$ be any object in $\SLoc$. Then
\begin{flalign}
\ip{F_1}{G_{\bbM}^\pm(F_2)}_{\bbM} = (-1)^{(\vert F_1\vert + \vert P_{\bbM}\vert)\, \vert P_{\bbM}\vert}\, \ip{G^{\mp}_{\bbM}(F_1)}{F_2}_{\bbM}~,
\end{flalign}
for all homogeneous $F_1,F_2\in\OO_{\cc}(\bbM)$.
\end{lem}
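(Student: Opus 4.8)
The plan is to combine the defining properties of the super-Green's operators from Definition \ref{defi:green} with the formal super-self adjointness (\ref{eqn:superself}) of $P_{\bbM}$. The only genuine subtlety is that the super-Green's operators do not preserve compact support, so one first has to argue that all the pairings appearing in the argument are defined, i.e.\ that the relevant sections have compactly overlapping support; once this is settled the statement follows from a short sign computation.

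First I would establish the support estimate. Let $F_1,F_2\in\OO_{\cc}(\bbM)$ be homogeneous. By item (iii) of Definition \ref{defi:green} we have $\supp\big(G^\mp_{\bbM}(F_1)\big)\subseteq J^\mp_{\bbM}(\supp(F_1))$ and $\supp\big(G^\pm_{\bbM}(F_2)\big)\subseteq J^\pm_{\bbM}(\supp(F_2))$. Since $\und{\bbM}$ is globally hyperbolic (being an object of $\ghSCart$) and $\supp(F_1)$, $\supp(F_2)$ are compact, the intersection $J^\mp_{\bbM}(\supp(F_1))\cap J^\pm_{\bbM}(\supp(F_2))$ is compact. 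Hence $G^\mp_{\bbM}(F_1)$ and $G^\pm_{\bbM}(F_2)$ have compactly overlapping support, so both the pairing $\ip{\,\cdot\,}{\,\cdot\,}_{\bbM}$ and the version of (\ref{eqn:superself}) extended to sections with compactly overlapping support apply to this pair. I would also record, for bookkeeping, that $\vert G^\pm_{\bbM}\vert=\vert P_{\bbM}\vert$ by definition, so $G^\pm_{\bbM}(F)$ is homogeneous of parity $\vert F\vert+\vert P_{\bbM}\vert$ whenever $F$ is.

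The computation then runs as follows. Using item (i) of Definition \ref{defi:green}, write $F_2 = P_{\bbM}\big(G^\pm_{\bbM}(F_2)\big)$ and $F_1 = P_{\bbM}\big(G^\mp_{\bbM}(F_1)\big)$. Applying (\ref{eqn:superself}) to the homogeneous sections $G^\mp_{\bbM}(F_1)$ and $G^\pm_{\bbM}(F_2)$ — legitimate by the support estimate of the previous step — moves $P_{\bbM}$ from $G^\pm_{\bbM}(F_2)$ onto $G^\mp_{\bbM}(F_1)$ at the cost of the sign $(-1)^{\vert G^\mp_{\bbM}(F_1)\vert\,\vert P_{\bbM}\vert}=(-1)^{(\vert F_1\vert+\vert P_{\bbM}\vert)\,\vert P_{\bbM}\vert}$, and using $P_{\bbM}(G^\pm_{\bbM}(F_2))=F_2$ and $P_{\bbM}(G^\mp_{\bbM}(F_1))=F_1$ one obtains
\[
\ip{G^\mp_{\bbM}(F_1)}{F_2}_{\bbM}
= (-1)^{(\vert F_1\vert+\vert P_{\bbM}\vert)\,\vert P_{\bbM}\vert}\,\ip{F_1}{G^\pm_{\bbM}(F_2)}_{\bbM}~.
\]
Since the sign prefactor squares to $1$, moving it to the left-hand side yields exactly the asserted identity, the $\mp/\pm$ choice handling both cases simultaneously.

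The main obstacle is really just this bookkeeping: being careful that the instance of (\ref{eqn:superself}) invoked is the one extended to sections with compactly overlapping support, and verifying that its hypothesis holds via the global hyperbolicity of $\und{\bbM}$. After that, the argument uses only items (i) and (iii) of Definition \ref{defi:green} together with the parity identity $\vert G^\pm_{\bbM}\vert=\vert P_{\bbM}\vert$, and no further properties of the super-differential operator beyond formal super-self adjointness are needed.
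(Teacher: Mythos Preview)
Your proof is correct and follows essentially the same approach as the paper: insert $P_{\bbM}\circ G^{\mp}_{\bbM}$ on $F_1$, use the support estimate coming from global hyperbolicity of $\und{\bbM}$ together with Definition~\ref{defi:green}~(iii) to justify applying formal super-self adjointness~(\ref{eqn:superself}), and then use $\vert G^\pm_{\bbM}\vert=\vert P_{\bbM}\vert$ to simplify the sign. Your write-up is in fact a bit more explicit than the paper about why the pairing is defined, which is the only nontrivial point.
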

\begin{proof}
The proof follows from a short calculation
\begin{flalign}
\nn \ip{F_1}{G_{\bbM}^\pm(F_2)}_{\bbM}  &= \ip{P_{\bbM}\circ G_{\bbM}^\mp(F_1)}{G_{\bbM}^\pm(F_2)}_{\bbM}\\
 \nn &= (-1)^{(\vert F_1\vert +\vert G_{\bbM}^\mp\vert)\,\vert P_{\bbM}\vert }~\ip{G_{\bbM}^\mp(F_1)}{P_{\bbM}\circ G_{\bbM}^\pm(F_2)}_{\bbM}\\
  &= (-1)^{(\vert F_1\vert +\vert P_{\bbM}\vert)\, \vert P_{\bbM}\vert }~\ip{G_{\bbM}^\mp(F_1)}{F_2}_{\bbM}~.
\end{flalign}
The first equality holds because of property (i) of Definition \ref{defi:green}.
The integral on the right-hand side is well-defined because of property (iii) of the same Definition
and the fact that the reduced Lorentz manifold $\und{\bbM}$ is by assumption globally hyperbolic.
The second equality follows from formal super-self adjointness of $P_{\bbM}$, cf.\ (\ref{eqn:superself}).
The last equality is a consequence of property (i) of Definition \ref{defi:green}
and $\vert G_{\bbM}^\mp\vert = \vert P_{\bbM}\vert$. 
\end{proof}

We define the retarded-minus-advanced super-Green's operator
\begin{flalign}
G_{\bbM} := G_{\bbM}^+ - G_{\bbM}^- :  \OO_{\cc} (\bbM)\longrightarrow \OO_{\sc}(\bbM)\subseteq \OO(\bbM)~,
\end{flalign}
whose image lies in the super-vector space $\OO_{\sc}(\bbM)$
of spacelike compact sections (see \cite[Notation 3.4.5]{Bar:2007zz}) 
of the structure sheaf because of Definition \ref{defi:green} (iii).
This operator has the following properties.
\begin{theo}\label{theo:sequence}
Let $\bbM$ be any object in $\SLoc$. Then the sequence of linear maps
\begin{flalign}\label{eqn:sequence}
\xymatrix{
0 \ar[r] & \OO_{\cc}(\bbM) \ar[r]^-{P_{\bbM}} & \OO_{\cc}(\bbM) \ar[r]^-{G_{\bbM}} & \OO_{\sc}(\bbM)\ar[r]^-{P_{\bbM}} & \OO_{\sc}(\bbM)
}
\end{flalign}
is a complex which is exact everywhere.
\end{theo}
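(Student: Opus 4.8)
The plan is to mimic the classical proof of exactness of the Green's operator complex for normally hyperbolic operators (see \cite{Bar:2007zz}) and adapt it to the supergeometric setting, using the properties of the super-Green's operators $G_{\bbM}^\pm$ collected in Definition \ref{defi:green} together with Lemma \ref{lem:greenadjoint}. The complex property, i.e.\ $G_{\bbM}\circ P_{\bbM} = 0$ on $\OO_{\cc}(\bbM)$ and $P_{\bbM}\circ G_{\bbM} = 0$ on $\OO_{\cc}(\bbM)$, follows immediately from properties (i) and (ii) of Definition \ref{defi:green}: indeed $G_{\bbM}\circ P_{\bbM} = (G_{\bbM}^+ - G_{\bbM}^-)\circ P_{\bbM} = \id - \id = 0$ on $\OO_{\cc}(\bbM)$ by (ii), and similarly $P_{\bbM}\circ G_{\bbM} = \id - \id = 0$ by (i). One should also note $P_{\bbM} : \OO_{\sc}(\bbM)\to\OO_{\sc}(\bbM)$ is well-defined since super-differential operators are support non-increasing, and $G_{\bbM} : \OO_{\cc}(\bbM)\to\OO_{\sc}(\bbM)$ lands in spacelike compact sections by (iii) and global hyperbolicity of $\und{\bbM}$.

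For exactness I would proceed position by position. \emph{Injectivity of $P_{\bbM}$ at the first spot:} if $F\in\OO_{\cc}(\bbM)$ with $P_{\bbM}(F)=0$, then $F = G_{\bbM}^\pm\circ P_{\bbM}(F) = 0$ by property (ii). \emph{Exactness at the second spot} (where the incoming arrow is $P_{\bbM}$ and the outgoing one is $G_{\bbM}$): suppose $F\in\OO_{\cc}(\bbM)$ satisfies $G_{\bbM}(F)=0$, i.e.\ $G_{\bbM}^+(F) = G_{\bbM}^-(F)=:H$. Then $\supp(H)\subseteq J_{\bbM}^+(\supp F)\cap J_{\bbM}^-(\supp F)$, which is compact by global hyperbolicity of $\und{\bbM}$, so $H\in\OO_{\cc}(\bbM)$; moreover $P_{\bbM}(H) = P_{\bbM}\circ G_{\bbM}^+(F) = F$ by property (i), exhibiting $F$ in the image of $P_{\bbM}$. \emph{Exactness at the third spot} (incoming $G_{\bbM}$, outgoing $P_{\bbM}$ on spacelike compact sections): given $F\in\OO_{\sc}(\bbM)$ with $P_{\bbM}(F)=0$, the goal is to produce $F'\in\OO_{\cc}(\bbM)$ with $G_{\bbM}(F')=F$. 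Here one uses a partition of unity adapted to a Cauchy surface of $\und{\bbM}$: write $1 = \chi^+ + \chi^-$ with $\supp(\chi^\pm)$ past/future compact (in the sense of the reduced manifold), set $F' := -P_{\bbM}(\chi^+ F) = P_{\bbM}(\chi^- F)$, check that $F'\in\OO_{\cc}(\bbM)$ because $\supp(\chi^+ F)\cap\supp(\chi^- F)$ is compact, and then verify $G_{\bbM}^\pm(F') = \mp\chi^\mp F$ (using uniqueness of the Green's operators, or rather the support properties together with property (i)) so that $G_{\bbM}(F') = G_{\bbM}^+(F') - G_{\bbM}^-(F') = -\chi^- F - \chi^+ F\cdot(-1)$, which after a sign bookkeeping equals $F = (\chi^+ + \chi^-)F$.

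The main obstacle I anticipate is the third step, the surjectivity-type argument at the spacelike compact spot. In the classical (non-super) case this relies on choosing the smooth cutoffs $\chi^\pm$ on the underlying Lorentz manifold and on knowing that $P_{\bbM}$ applied to a spacelike compact section has support controlled as expected; in the supergeometric setting one must argue that such a decomposition of the constant function $\1\in\OO(\bbM)$ into summands with past/future compact support on the reduced manifold $\und{M}$ exists and behaves well under multiplication by elements of $\OO(\bbM)$. This should follow because $\OO(\bbM)$ is a module over $C^\infty(\und{M})$ via $\iota^\ast_{\und{M},M}$ (more precisely, the smooth even functions pulled back from the body act), so one lifts the usual Lorentzian partition of unity; the support statements then reduce to the already-established fact that super-differential operators and the Green's operators are support non-increasing/controlled by $J_{\bbM}^\pm$, and the causal structure of $\bbM$ is by definition that of $\und{\bbM}$. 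The remaining verifications — that $F'$ is compactly supported, and that $G_{\bbM}^\pm(F') = \mp\chi^\mp F$ — are then formal consequences of properties (i)–(iii) in Definition \ref{defi:green} together with the uniqueness of super-Green's operators on globally hyperbolic super-Cartan supermanifolds, which one obtains from Lemma \ref{lem:greenadjoint} or directly from (i)–(iii); all sign factors coming from the $\bbZ_2$-parity $\vert P_{\bbM}\vert = \dim(S)\bmod 2$ are irrelevant for the exactness statement since they only rescale elements by units.
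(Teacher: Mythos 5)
Your proposal is correct and is exactly the route the paper takes: the paper's own proof consists of the single remark that one adapts the argument of \cite[Theorem 3.4.7]{Bar:2007zz} (or \cite[Theorem 3.5]{Bar:2011iu}) to the supergeometric setting, and your write-up is precisely that adaptation, with the needed supergeometric inputs (support properties of super-differential operators, Definition \ref{defi:green}, global hyperbolicity of $\und{\bbM}$) correctly identified. The only blemish is the muddled sign bookkeeping in the last step (one should get $G_{\bbM}^{+}(F')=\chi^{+}F$ and $G_{\bbM}^{-}(F')=-\chi^{-}F$ for $F'=P_{\bbM}(\chi^{+}F)$, via the uniqueness argument of Corollary \ref{cor:greenunique} applied to the past/future-compactly supported solution $\chi^{\pm}F\mp G_{\bbM}^{\pm}(F')$), which is harmless for the exactness claim.
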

\begin{proof}
The proof follows easily by adapting the steps
in the proofs of \cite[Theorem 3.5]{Bar:2011iu} or \cite[Theorem 3.4.7]{Bar:2007zz} 
to our supergeometric setting. We therefore can omit the details.
\end{proof}

\begin{cor}\label{cor:greenunique}
Let $\bbM$ be any object in $\SLoc$. Then the retarded and advanced super-Green's operators
$G_{\bbM}^\pm$  for $P_{\bbM}$ are unique.
\end{cor}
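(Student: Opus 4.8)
The plan is to deduce the statement from an observation about the proof of Lemma~\ref{lem:greenadjoint}, together with non-degeneracy of the Berezin pairing (\ref{eqn:pairing}). First I would note that the computation in the proof of Lemma~\ref{lem:greenadjoint} never used that $G_\bbM^+$ and $G_\bbM^-$ are \emph{the} (presumed unique) super-Green's operators: it shows that for \emph{any} retarded super-Green's operator $G_\bbM^+$ for $P_\bbM$ and \emph{any} advanced one $G_\bbM^-$, and all homogeneous $F_1,F_2\in\OO_\cc(\bbM)$, one has
\[
\ip{F_1}{G_\bbM^+(F_2)}_\bbM = (-1)^{(\vert F_1\vert + \vert P_\bbM\vert)\,\vert P_\bbM\vert}\,\ip{G_\bbM^-(F_1)}{F_2}_\bbM~,
\]
obtained by inserting $F_1 = P_\bbM(G_\bbM^-(F_1))$ via Definition~\ref{defi:green}(i) for $G_\bbM^-$, moving $P_\bbM$ across by formal super-self adjointness (\ref{eqn:superself}), and collapsing $P_\bbM(G_\bbM^+(F_2)) = F_2$ via Definition~\ref{defi:green}(i) for $G_\bbM^+$. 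The crucial feature is that the right-hand side does not involve $G_\bbM^+$.

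Granting this, uniqueness follows at once. Since $P_\bbM$ is super-Green's hyperbolic there is at least one advanced super-Green's operator $G_\bbM^-$; fix it. If $G_{\bbM,1}^+$ and $G_{\bbM,2}^+$ are two retarded super-Green's operators for $P_\bbM$, applying the displayed identity to each and subtracting yields $\ip{F_1}{(G_{\bbM,1}^+-G_{\bbM,2}^+)(F_2)}_\bbM = 0$ for all homogeneous, hence all, $F_1,F_2\in\OO_\cc(\bbM)$. As $(G_{\bbM,1}^+-G_{\bbM,2}^+)(F_2)\in\OO(\bbM)$ and $F_1$ ranges over all compactly supported sections, non-degeneracy of the Berezin integration pairing (\ref{eqn:pairing}) forces $(G_{\bbM,1}^+-G_{\bbM,2}^+)(F_2)=0$ for every $F_2$, i.e.\ $G_{\bbM,1}^+ = G_{\bbM,2}^+$. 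The argument for advanced super-Green's operators is identical after exchanging the roles of $G_\bbM^+$ and $G_\bbM^-$ and fixing instead a retarded operator.

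The only delicate point --- already implicit in Lemma~\ref{lem:greenadjoint} --- is that the pairings above involve one factor supported in $J_\bbM^-(\supp F_1)$ and another supported in $J_\bbM^+(\supp F_2)$, so one must verify that they are well defined and that (\ref{eqn:superself}) still holds in this ``compactly overlapping support'' situation rather than only for compactly supported arguments. This is exactly where global hyperbolicity of $\und{\bbM}$ is used: it makes $J_\bbM^-(\supp F_1)\cap J_\bbM^+(\supp F_2)$ compact, so the relevant products of sections are compactly supported and the extension of both the pairing and of (\ref{eqn:superself}) to this case is a routine localization. I do not anticipate a real obstacle there; the substantive inputs are Definition~\ref{defi:green}(i), formal super-self adjointness, global hyperbolicity, and the non-degeneracy of Berezin integration.
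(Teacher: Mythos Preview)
Your proof is correct and is essentially the same argument as the paper's, differing only in packaging: the paper sets $\Phi:=G_{\bbM,1}^\pm(F)-G_{\bbM,2}^\pm(F)$, observes $P_{\bbM}(\Phi)=0$, and then runs the same ``insert $P_{\bbM}\circ G_{\bbM}^\mp$, move $P_{\bbM}$ across, conclude by non-degeneracy'' computation inline, whereas you obtain it by applying the identity from Lemma~\ref{lem:greenadjoint} to each retarded operator and subtracting. Your observation that the proof of Lemma~\ref{lem:greenadjoint} never uses uniqueness is exactly right, and your remarks on the role of global hyperbolicity match the paper's implicit use.
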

\begin{proof}
Let us assume that there are two retarded/advanced super-Green's operators 
$G_{\bbM}^\pm$ and $\overline{G}_{\bbM}^\pm$
 for $P_{\bbM}$. Then, for any $F\in \OO_{\cc}(\bbM)$, 
we have that $\Phi := G_{\bbM}^\pm(F) - \overline{G}_{\bbM}^\pm(F)$
has  $\supp(\Phi)\subseteq J_{\bbM}^\pm(K)$, for some compact $K\subseteq\und{M}$,
and satisfies $P_{\bbM}(\Phi) =0$. We now show that $\Phi =0$ and hence
that $G_{\bbM}^\pm = \overline{G}_{\bbM}^\pm$ as $F\in\OO_{\cc}(\bbM)$ was arbitrary. Indeed,
we have that
\begin{flalign}
\ip{F^\prime}{\Phi}_{\bbM} = \ip{P_{\bbM}\circ G_{\bbM}^\mp (F^\prime)}{\Phi}_{\bbM} = 
(-1)^{(\vert F^\prime\vert + \vert G_{\bbM}^\mp\vert)\,\vert P_{\bbM}\vert}~\ip{G_{\bbM}^\mp (F^\prime)}{P_{\bbM}(\Phi)}_{\bbM} = 0~,
\end{flalign}
for all $F^\prime \in\OO_{\cc}(\bbM)$, which implies that $\Phi=0$.
\end{proof}

We shall now show that the retarded/advanced super-Green's operators are natural in the following sense.
\begin{lem}\label{lem:greennatural}
Let $\chi : \bbM\to\bbM^\prime$ be any morphism in $\SLoc$. Then
\begin{flalign}\label{eqn:greennatural}
G_{\bbM}^\pm = \chi^\ast \circ G_{\bbM^\prime}^{\pm} \circ \chi_\ast~
\end{flalign}
as linear maps from $\OO_{\cc}(\bbM)$ to $\OO(\bbM)$.
\end{lem}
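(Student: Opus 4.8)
The plan is to define the candidate operator
$\widetilde{G}^\pm_{\bbM} := \chi^\ast \circ G^\pm_{\bbM^\prime} \circ \chi_\ast : \OO_{\cc}(\bbM)\to\OO(\bbM)$
and to prove that it is a retarded/advanced super-Green's operator for $P_{\bbM}$. Since by Corollary \ref{cor:greenunique} such operators are unique, the asserted identity \eqref{eqn:greennatural} follows immediately. Hence I would verify the three conditions of Definition \ref{defi:green} for $\widetilde{G}^\pm_{\bbM}$; the parity condition $\vert\widetilde{G}^\pm_{\bbM}\vert = \vert P_{\bbM}\vert$ is automatic because $\chi^\ast$ and $\chi_\ast$ are even and $\vert G^\pm_{\bbM^\prime}\vert = \vert P_{\bbM^\prime}\vert = \vert P_{\bbM}\vert = \dim(S)\bmod 2$.

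Condition (i) is straightforward: using the naturality of $P$ (diagram \eqref{eqn:Pnatural}, i.e. $P_{\bbM}\circ\chi^\ast = \chi^\ast\circ P_{\bbM^\prime}$), then property (i) of Definition \ref{defi:green} for $\bbM^\prime$, and finally $\chi^\ast\circ\chi_\ast = \id_{\OO_{\cc}(\bbM)}$ from Lemma \ref{lem:pushforward}(i), one computes $P_{\bbM}\circ\widetilde{G}^\pm_{\bbM} = \chi^\ast\circ P_{\bbM^\prime}\circ G^\pm_{\bbM^\prime}\circ\chi_\ast = \chi^\ast\circ\chi_\ast = \id_{\OO_{\cc}(\bbM)}$.

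The heart of the argument, and the step I expect to be the main obstacle, is condition (ii), because naturality of $P$ is only postulated with respect to the pull-backs $\chi^\ast$ and there is no reason a priori for $M^\prime\vert_{\und\chi(\und M)}$ with its restricted super-Cartan structure to be an object of $\SLoc$, so one cannot simply "restrict" $P_{\bbM^\prime}$. I would instead establish the intertwining relation $\chi_\ast \circ P_{\bbM}\big\vert_{\OO_{\cc}(\bbM)} = P_{\bbM^\prime}\circ \chi_\ast$ by a duality argument: for arbitrary $F\in\OO_{\cc}(\bbM)$ and a test section $F_1\in\OO_{\cc}(\bbM^\prime)$, pair $\chi_\ast(P_{\bbM}(F))$ with $F_1$ in the pairing \eqref{eqn:pairing} and move $P$ around using, in turn, Lemma \ref{lem:pushforward}(ii), the formal super-self adjointness \eqref{eqn:superself} of $P_{\bbM}$, the naturality $P_{\bbM}\circ\chi^\ast = \chi^\ast\circ P_{\bbM^\prime}$, Lemma \ref{lem:pushforward}(ii) once more, and the formal super-self adjointness of $P_{\bbM^\prime}$; since $\vert P_{\bbM}\vert = \vert P_{\bbM^\prime}\vert$ the two sign factors produced along the way cancel, yielding $\ip{F_1}{\chi_\ast(P_{\bbM}(F))}_{\bbM^\prime} = \ip{F_1}{P_{\bbM^\prime}(\chi_\ast(F))}_{\bbM^\prime}$ for all $F_1$. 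Non-degeneracy of \eqref{eqn:pairing} (already invoked in the proof of Corollary \ref{cor:greenunique}) then gives the intertwining relation. With it, condition (ii) follows: for $F\in\OO_{\cc}(\bbM)$ we have $\widetilde{G}^\pm_{\bbM}(P_{\bbM}(F)) = \chi^\ast\big(G^\pm_{\bbM^\prime}(P_{\bbM^\prime}(\chi_\ast F))\big) = \chi^\ast(\chi_\ast F) = F$, where we use that $P_{\bbM}(F)$ and $P_{\bbM^\prime}(\chi_\ast F)$ are compactly supported by locality of super-differential operators, property (ii) of Definition \ref{defi:green} for $\bbM^\prime$, and Lemma \ref{lem:pushforward}(i).

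Finally, condition (iii) is a support chase. From $\supp\big(G^\pm_{\bbM^\prime}(\chi_\ast F)\big)\subseteq J^\pm_{\bbM^\prime}(\supp(\chi_\ast F)) = J^\pm_{\bbM^\prime}(\und\chi(\supp F))$ and the fact that applying $\chi^\ast$ amounts to restricting to the open set $\und\chi(\und M)$ and transporting along the $\SMan$-isomorphism $\chi : M\to M^\prime\vert_{\und\chi(\und M)}$, one gets $\supp(\widetilde{G}^\pm_{\bbM}(F))\subseteq \und\chi^{-1}\big(J^\pm_{\bbM^\prime}(\und\chi(\supp F))\cap\und\chi(\und M)\big)$. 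Because $\chi$ is a $\ghSCart$-morphism, $\und\chi(\und M)$ is causally compatible in $\und{\bbM^\prime}$ and $\und\chi$ is an isometric, orientation and time-orientation preserving open embedding, so the right-hand side equals $J^\pm_{\bbM}(\supp F)$, which is exactly (iii). This shows $\widetilde{G}^\pm_{\bbM}$ satisfies Definition \ref{defi:green}, and Corollary \ref{cor:greenunique} then yields $G^\pm_{\bbM} = \widetilde{G}^\pm_{\bbM} = \chi^\ast\circ G^\pm_{\bbM^\prime}\circ\chi_\ast$.
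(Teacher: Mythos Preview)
Your proposal is correct and follows the same overall strategy as the paper: define $\widetilde{G}^\pm_{\bbM}$, verify the three axioms of Definition~\ref{defi:green}, and conclude by uniqueness (Corollary~\ref{cor:greenunique}). Conditions (i) and (iii) are handled essentially identically.

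The only genuine difference is your treatment of condition (ii). You derive the intertwining relation $\chi_\ast\circ P_{\bbM}\big\vert_{\OO_{\cc}(\bbM)} = P_{\bbM^\prime}\circ\chi_\ast$ by a duality argument, pairing against test sections and invoking formal super-self adjointness together with non-degeneracy of $\ip{\,\cdot\,}{\,\cdot\,}_{\bbM^\prime}$. This is valid, but more machinery than necessary. The paper instead simply inserts the identity $\chi^\ast\circ\chi_\ast=\id_{\OO_{\cc}(\bbM)}$ to the \emph{right} of $P_{\bbM}$, then uses naturality \eqref{eqn:Pnatural} to slide $P_{\bbM}\circ\chi^\ast$ to $\chi^\ast\circ P_{\bbM^\prime}$, and finally applies the second part of Lemma~\ref{lem:pushforward}(i), namely $\chi_\ast\circ\chi^\ast = \id$ on compactly supported sections with support in $\und\chi(\und M)$ (which $P_{\bbM^\prime}(\chi_\ast F)$ is, by locality of super-differential operators). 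This yields condition (ii) in two lines, with no need for super-self adjointness or non-degeneracy of the pairing. Your duality route does have the virtue of isolating the push-forward naturality of $P$ as a stand-alone statement, which is needed later (e.g.\ in the well-definedness argument for $\LL(\chi)$), but the paper obtains that relation by the same elementary insertion trick rather than by duality.
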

\begin{proof}
Let us define $\overline{G}^\pm_{\bbM} := \chi^\ast\circ G_{\bbM^\prime}^{\pm} \circ \chi_\ast$.
We will show that $\overline{G}^\pm_{\bbM}$ is a retarded/advanced super-Green's operator
for $P_{\bbM}$, which due to the uniqueness result in Corollary \ref{cor:greenunique} implies
that $\overline{G}^\pm_{\bbM} =G^\pm_{\bbM}$.
\sk

We have to show that $\overline{G}^\pm_{\bbM}$ satisfies the three conditions of Definition \ref{defi:green}.
Item (i) is satisfied because of
\begin{flalign}
P_{\bbM}\circ \overline{G}^\pm_{\bbM} = P_{\bbM}\circ \chi^\ast\circ G_{\bbM^\prime}^{\pm} \circ \chi_\ast
= \chi^\ast\circ P_{\bbM^\prime} \circ G_{\bbM^\prime}^{\pm} \circ \chi_\ast = \chi^\ast\circ\chi_\ast = \id_{\OO_{\cc}(\bbM)}~,
\end{flalign}
where in the second equality we have used (\ref{eqn:Pnatural}) and in the last equality 
Lemma \ref{lem:pushforward} (i).
Item (ii) is satisfied because of
\begin{flalign}
\nn \overline{G}^\pm_{\bbM}\circ P_{\bbM}\big\vert_{\OO_{\cc}(\bbM)} &=
\chi^\ast\circ G_{\bbM^\prime}^{\pm} \circ \chi_\ast \circ P_{\bbM}\circ \chi^\ast\circ \chi_\ast =\chi^\ast\circ G_{\bbM^\prime}^{\pm} \circ \chi_\ast \circ  \chi^\ast\circ P_{\bbM^\prime}\circ \chi_\ast \\
&=\chi^\ast\circ G_{\bbM^\prime}^{\pm} \circ P_{\bbM^\prime}\circ \chi_\ast =\chi^\ast\circ \chi_\ast = \id_{\OO_{\cc}(\bbM)}~.
\end{flalign}
In the second equality we have used (\ref{eqn:Pnatural}) and in the first, third and last equality 
Lemma \ref{lem:pushforward} (i).
To show that item (iii) is satisfied we use  the same argument as in \cite[Lemma 3.2]{Bar:2011iu},
which is based on the causal compatibility of the image of the reduced 
morphism $\und{\chi} : \und{\bbM}\to\und{\bbM^\prime}$. Indeed,
\begin{flalign}
\nn \supp \big(\overline{G}^\pm_{\bbM}(F)\big) &= \supp\big(\chi^\ast\circ G_{\bbM^\prime}^{\pm} \circ \chi_\ast(F)\big) \subseteq \und{\chi}^{-1}\big(\supp\big(G_{\bbM^\prime}^{\pm} \circ \chi_\ast(F)\big)\big)\\
 &\subseteq \und{\chi}^{-1}\big( J_{\bbM^\prime}^\pm\big( \und{\chi} \big(\supp(F)\big)\big)\big)= J_{\bbM}^\pm(\supp(F))~,
\end{flalign}
for all $F\in\OO_{\cc}(\bbM)$.
\end{proof}

\subsection{The functor \texorpdfstring{$\LL : \SLoc \to\XX$}{L : SLoc -> X}}
Let us fix any super-field theory according to Definition \ref{defi:sft}. 
For any object $\bbM$ in $\SLoc$ we can define a linear map
\begin{flalign}\label{eqn:taupre}
\tau_{\bbM} : \OO_{\cc}(\bbM) \otimes \OO_{\cc}(\bbM)\longrightarrow \bbR~,~~F_1\otimes F_2 \longmapsto\tau_{\bbM}(F_1,F_2)
=\ip{G_{\bbM}(F_1)}{F_2}_{\bbM}~,
\end{flalign}
where $G_{\bbM} := G_{\bbM}^+ - G_{\bbM}^-$ is the retarded-minus-advanced
super-Green's operator and $\ip{\,\cdot\,}{\,\cdot\,}_{\bbM}$ is the pairing (\ref{eqn:pairing}).
Since, by definition, the $\bbZ_2$-parity of $G_{\bbM}$ agrees with  that of the pairing,
the linear map $\tau_{\bbM}$ is even and hence a $\SVec$-morphism.
As a consequence of (\ref{eqn:pairingsymmetry}) and Lemma \ref{lem:greenadjoint}
we find that
\begin{flalign}
\nn \tau_{\bbM}(F_1,F_2) &=(-1)^{\vert P_{\bbM}\vert +1} \, (-1)^{\vert F_1\vert\,\vert F_2\vert} \,\tau_{\bbM}(F_2,F_1)\\
&=\begin{cases}
- \,(-1)^{\vert F_1\vert\,\vert F_2\vert}\,\tau_{\bbM}(F_2,F_1) & ~,~~\text{ for $\dim(S)$ even}~,\\
(-1)^{\vert F_1\vert\,\vert F_2\vert}\,\tau_{\bbM}(F_2,F_1) & ~,~~\text{ for $\dim(S)$ odd}~,
\end{cases}
\end{flalign}
for all homogeneous $F_1,F_2\in\OO_{\cc}(\bbM)$.
Hence, $\tau_{\bbM}$ is super-skew symmetric if $\dim(S)$ is 
even and super-symmetric if $\dim(S)$ is odd.
\sk

Let us recall that by Theorem \ref{theo:sequence} the kernel
of the linear map $G_{\bbM} : \OO_{\cc}(\bbM)\to \OO_{\sc}(\bbM)$
coincides with the image of $P_{\bbM} : \OO_{\cc}(\bbM)\to\OO_{\cc}(\bbM)$.
As a consequence, the $\SVec$-morphism $\tau_{\bbM}$ defined in (\ref{eqn:taupre}) descends
to the $\SVec$-morphism (denoted with a slight abuse of notation by the same symbol)
\begin{flalign}
\tau_{\bbM} : \frac{\OO_{\cc}(\bbM)}{P_{\bbM}(\OO_{\cc}(\bbM))}\otimes \frac{\OO_{\cc}(\bbM)}{P_{\bbM}(\OO_{\cc}(\bbM))} \longrightarrow \bbR~,~~[F_1]\otimes[F_2]\longmapsto \ip{G_{\bbM}(F_1)}{F_2}_{\bbM}~,
\end{flalign}
which is weakly non-degenerate, i.e.\ $\tau_{\bbM}([F_1],[F_2]) =0$ for all
$[F_1]\in \OO_{\cc}(\bbM)/P_{\bbM}(\OO_{\cc}(\bbM))$ implies that $[F_2] =0$.
The pair
\begin{flalign}\label{eqn:SSspace}
\LL(\bbM) := \left(\frac{\OO_{\cc}(\bbM)}{P_{\bbM}(\OO_{\cc}(\bbM))},\tau_{\bbM}\right) 
\end{flalign}
is therefore a super-symplectic space if $\dim(S)$ is even and a super-inner product space if
$\dim(S)$ is odd.
\sk

We shall now show that the assignment (\ref{eqn:SSspace}) is functorial.
For this we introduce the following category which depends on the choice of
super-field theory via $\dim(S)~\mathrm{mod}~2$.
\begin{defi} \label{defi:XX}
The category $\XX$  consists of the following objects and morphisms: 
\begin{itemize}
\item The objects are all pairs $\bbV:=(V,\tau)$ consisting of a real super-vector space $V$ 
and a weakly non-degenerate $\SVec$-morphism
$\tau : V\otimes V\to \bbR$, which is super-skew symmetric if $\dim(S)$ is even and super-symmetric
if $\dim(S)$ is odd, i.e.\
\begin{flalign}
\tau(v_1,v_2) = (-1)^{\dim(S)+1}\,(-1)^{\vert v_1\vert\,\vert v_2\vert}\,\tau(v_2,v_1)~,
\end{flalign}
for all homogeneous $v_1,v_2\in V$.

\item The morphisms $L : \bbV\to\bbV^\prime$ are all
$\SVec$-morphisms (denoted by the same symbol) $L : V\to V^\prime$ satisfying
$\tau^\prime \circ (L\otimes L) = \tau$.
\end{itemize}
\end{defi}

\begin{propo}\label{propo:LLfunctor}
The following assignment is a functor $\LL : \SLoc \to \XX$: 
To any object $\bbM$ in $\SLoc$ we assign the object
$\LL(\bbM)$ in $\XX$ given by (\ref{eqn:SSspace}) and to any $\SLoc$-morphism
$\chi : \bbM\to\bbM^\prime$ we assign the $\XX$-morphism
\begin{flalign}\label{eqn:SSmap}
\LL(\chi) : \LL(\bbM)\longrightarrow \LL(\bbM^\prime)~,~~[F]\longmapsto [\chi_\ast(F)]~.
\end{flalign}
\end{propo}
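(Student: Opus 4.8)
The plan is to verify three things: (a) the map \eqref{eqn:SSmap} is well-defined, i.e.\ independent of the choice of representative $F\in\OO_{\cc}(\bbM)$; (b) it is a legitimate $\XX$-morphism, i.e.\ it is a $\SVec$-morphism compatible with the pairings $\tau_{\bbM}$ and $\tau_{\bbM^\prime}$; and (c) the assignment respects identities and composition. For (a), I would note that the push-forward $\chi_\ast:\OO_{\cc}(\bbM)\to\OO_{\cc}(\bbM^\prime)$ from Corollary \ref{cor:cmpfunctor} is a $\SVec$-morphism, so it suffices to check that it sends $P_{\bbM}(\OO_{\cc}(\bbM))$ into $P_{\bbM^\prime}(\OO_{\cc}(\bbM^\prime))$. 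This follows from the naturality square \eqref{eqn:Pnatural} together with Lemma \ref{lem:pushforward}(i): for $F\in\OO_{\cc}(\bbM)$ one has $\chi_\ast\big(P_{\bbM}(F)\big)=\chi_\ast\big(P_{\bbM}(\chi^\ast(\chi_\ast(F)))\big)=\chi_\ast\big(\chi^\ast(P_{\bbM^\prime}(\chi_\ast F))\big)=P_{\bbM^\prime}(\chi_\ast F)$, using $\chi_\ast\circ\chi^\ast(G)=G$ for $G=P_{\bbM^\prime}(\chi_\ast F)$ since $\supp(\chi_\ast F)\subseteq\und{\chi}(\und{M})$ and super-differential operators are local. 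Hence $\LL(\chi)$ is well-defined on equivalence classes, and it is automatically a $\SVec$-morphism since $\chi_\ast$ is.

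For (b), I would compute, for homogeneous $F_1,F_2\in\OO_{\cc}(\bbM)$,
\begin{flalign*}
\tau_{\bbM^\prime}\big([\chi_\ast F_1],[\chi_\ast F_2]\big)
=\ip{G_{\bbM^\prime}(\chi_\ast F_1)}{\chi_\ast F_2}_{\bbM^\prime}
=\ip{\chi^\ast\big(G_{\bbM^\prime}(\chi_\ast F_1)\big)}{F_2}_{\bbM}
=\ip{G_{\bbM}(F_1)}{F_2}_{\bbM}
=\tau_{\bbM}\big([F_1],[F_2]\big)~,
\end{flalign*}
where the second equality is Lemma \ref{lem:pushforward}(ii) (noting that $G_{\bbM^\prime}(\chi_\ast F_1)\in\OO(\bbM^\prime)$ while $F_2\in\OO_{\cc}(\bbM)$, so the pairing is defined) and the third equality is the naturality of the retarded-minus-advanced super-Green's operator, i.e.\ Lemma \ref{lem:greennatural} applied to $G_{\bbM}=G_{\bbM}^+-G_{\bbM}^-=\chi^\ast\circ G_{\bbM^\prime}\circ\chi_\ast$. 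Thus $\tau_{\bbM^\prime}\circ(\LL(\chi)\otimes\LL(\chi))=\tau_{\bbM}$, so $\LL(\chi)$ is an $\XX$-morphism.

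For (c), functoriality is immediate from Lemma \ref{lem:pushforward}(iii): $\LL(\id_{\bbM})[F]=[{\id_{\bbM}}_\ast F]=[F]$, and for composable $\SLoc$-morphisms $\chi:\bbM\to\bbM^\prime$, $\chi^\prime:\bbM^\prime\to\bbM^{\prime\prime}$ one has $\LL(\chi^\prime)\circ\LL(\chi)[F]=[\chi^\prime_\ast(\chi_\ast F)]=[(\chi^\prime\circ\chi)_\ast F]=\LL(\chi^\prime\circ\chi)[F]$. I expect the only mildly delicate point to be the bookkeeping in step (b)—making sure the support conditions guarantee all pairings and Green's-operator naturality identities apply as stated—but this is entirely controlled by the lemmas already established (Lemmas \ref{lem:pushforward} and \ref{lem:greennatural}, and Theorem \ref{theo:sequence} for the identification of $\ker G_{\bbM}$ with $P_{\bbM}(\OO_{\cc}(\bbM))$ underlying the definition of $\LL(\bbM)$). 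No genuine obstacle is anticipated; the proof is essentially an assembly of the preceding results.
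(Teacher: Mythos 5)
Your proposal is correct and follows essentially the same route as the paper: well-definedness via the naturality square for $P$ combined with Lemma \ref{lem:pushforward}(i), compatibility with the pairings via Lemma \ref{lem:pushforward}(ii) and Lemma \ref{lem:greennatural}, and functoriality from Lemma \ref{lem:pushforward}(iii). Your explicit remark that locality of $P_{\bbM^\prime}$ guarantees $\supp\big(P_{\bbM^\prime}(\chi_\ast F)\big)\subseteq\und{\chi}(\und{M})$, so that $\chi_\ast\circ\chi^\ast$ acts as the identity there, is a small point the paper leaves implicit but is exactly the right justification.
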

\begin{proof}
We have already seen above that $\LL(\bbM)$ is an object in $\XX$.
It remains to show that (\ref{eqn:SSmap}) is well-defined and an $\XX$-morphism.
It is well-defined since 
\begin{flalign}
\nn \chi_\ast\circ P_{\bbM}(F) &= \chi_\ast\circ P_{\bbM}\circ\chi^\ast\circ\chi_\ast(F) 
= \chi_\ast\circ\chi^\ast\circ P_{\bbM^\prime}\circ\chi_\ast(F) \\
&= P_{\bbM^\prime}\circ\chi_\ast(F) \in P_{\bbM^\prime}(\OO_{\cc}(\bbM^\prime))~,
\end{flalign}
for all $F\in\OO_{\cc}(\bbM)$.
Moreover it is an $\XX$-morphism since 
\begin{flalign}
\nn \tau_{\bbM^\prime}\big([\chi_\ast(F_1)],[\chi_\ast(F_2)]\big) &= 
\ip{ G_{\bbM^\prime} \circ\chi_\ast(F_1)}{ \chi_\ast(F_2)}_{\bbM^\prime}=
\ip{ \chi^\ast\circ G_{\bbM^\prime} \circ \chi_\ast(F_1)}{F_2}_{\bbM}\\
&= \ip{ G_{\bbM} (F_1)}{F_2}_{\bbM} = \tau_{\bbM}\big([F_1],[F_2]\big) ~,
\end{flalign}
for all $[F_1],[F_2]\in\OO_{\cc}(\bbM)/P_{\bbM}(\OO_{\cc}(\bbM))$.
In the second equality we have used Lemma \ref{lem:pushforward} (ii)
and in the third equality we have used Lemma \ref{lem:greennatural}.
The functoriality of $\LL$ is induced by the functoriality of $\OO_{\cc} : \SLoc\to\SVec$
which has been established in Corollary \ref{cor:cmpfunctor}.
\end{proof}

We finish this subsection by proving some properties of the functor
$\LL : \SLoc\to \XX$, which are the axioms of locally covariant
quantum field theory \cite{Brunetti:2001dx} applied to classical theories.
\begin{theo}\label{theo:LCCFT}
For any super-field theory according to Definition \ref{defi:sft} the
associated functor $\LL : \SLoc\to\XX$ satisfies the following properties:
\begin{itemize}
\item Locality: For any $\SLoc$-morphism $\chi :\bbM\to\bbM^\prime$, the $\XX$-morphism 
$\LL(\chi) : \LL(\bbM)\to\LL(\bbM^\prime)$ is monic.

\item Super-causality: Given two $\SLoc$-morphisms 
$\bbM_1 \stackrel{\chi_1}{\longrightarrow}\bbM \stackrel{\chi_2}{\longleftarrow}\bbM_2$ such that
the images of the reduced $\tLor$-morphisms $\und{\bbM_1} \stackrel{\und{\chi_1}}{\longrightarrow}\und{\bbM} 
\stackrel{\und{\chi_2}}{\longleftarrow}\und{\bbM_2}$ are causally disjoint, 
then 
\begin{flalign}
\tau_{\bbM}\left(\LL(\chi_1)\big(\LL(\bbM_1)\big) , \LL(\chi_2)\big(\LL(\bbM_2)\big) \right) =\{0\}~.
\end{flalign}

\item Time-slice axiom: Given any Cauchy $\SLoc$-morphism\footnote{
A Cauchy $\SLoc$-morphism is a $\SLoc$-morphism $\chi: \bbM\to\bbM^\prime$
such that its reduced $\tLor$-morphism $\und{\chi}: \und{\bbM}\to\und{\bbM^\prime}$ is Cauchy,
i.e.\ the image of $\und{\chi}$ contains a Cauchy surface in $\und{\bbM^\prime}$.
} $\chi: \bbM\to\bbM^\prime$, then $\LL(\chi) : \LL(\bbM)\to\LL(\bbM^\prime)$ is an isomorphism.

\end{itemize}
\end{theo}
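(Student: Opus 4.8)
These three properties are the super\-geometric versions of the classical locally covariant axioms, and I would prove them by adapting the standard arguments (cf.\ \cite{Bar:2007zz,Bar:2011iu}), the only new bookkeeping being the $\bbZ_2$-grading and the passage between a supermanifold and its reduced Lorentz manifold. All the analytic substance is already packaged in Theorem \ref{theo:sequence}, Lemma \ref{lem:pushforward}, Lemma \ref{lem:greennatural} and the support axiom Definition \ref{defi:green}(iii).

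\textbf{Locality.} Suppose $\LL(\chi)([F]) = [\chi_\ast(F)] = 0$ in $\LL(\bbM^\prime)$, i.e.\ $\chi_\ast(F)\in P_{\bbM^\prime}(\OO_{\cc}(\bbM^\prime))$. Since the sequence (\ref{eqn:sequence}) for $\bbM^\prime$ is a complex, $G_{\bbM^\prime}(\chi_\ast(F))=0$; applying $\chi^\ast$ and using the naturality $G_{\bbM} = \chi^\ast\circ G_{\bbM^\prime}\circ\chi_\ast$ of Lemma \ref{lem:greennatural} gives $G_{\bbM}(F)=0$. Exactness of (\ref{eqn:sequence}) at the second $\OO_{\cc}(\bbM)$ (Theorem \ref{theo:sequence}) then yields $F\in P_{\bbM}(\OO_{\cc}(\bbM))$, i.e.\ $[F]=0$. (Alternatively, for any $[H]\in\LL(\bbM)$ one has $\tau_{\bbM}([F],[H]) = \ip{\chi^\ast G_{\bbM^\prime}(\chi_\ast F)}{H}_{\bbM} = \ip{G_{\bbM^\prime}(\chi_\ast F)}{\chi_\ast H}_{\bbM^\prime} = \tau_{\bbM^\prime}([\chi_\ast F],[\chi_\ast H]) = 0$ by Lemma \ref{lem:pushforward}(ii) and Lemma \ref{lem:greennatural}, so $[F]=0$ by weak non-degeneracy of $\tau_{\bbM}$.)

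\textbf{Super-causality.} For $F_1\in\OO_{\cc}(\bbM_1)$, $F_2\in\OO_{\cc}(\bbM_2)$ I would compute
\begin{flalign}
\tau_{\bbM}\big(\LL(\chi_1)([F_1]),\LL(\chi_2)([F_2])\big) = \ip{G_{\bbM}\big((\chi_1)_\ast(F_1)\big)}{(\chi_2)_\ast(F_2)}_{\bbM}~.
\end{flalign}
Now $\supp\big((\chi_1)_\ast(F_1)\big)\subseteq\und{\chi_1}(\und{M_1})$ by construction of the push-forward, so Definition \ref{defi:green}(iii) gives $\supp\big(G_{\bbM}((\chi_1)_\ast(F_1))\big)\subseteq J_{\bbM}\big(\und{\chi_1}(\und{M_1})\big)$, which is disjoint from $\supp\big((\chi_2)_\ast(F_2)\big)\subseteq\und{\chi_2}(\und{M_2})$ because the images are causally disjoint. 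Hence the product of these two sections vanishes and the Berezin integral in (\ref{eqn:pairing}) is zero.

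\textbf{Time-slice.} This is the main point. By Locality $\LL(\chi)$ is monic, so it remains to prove surjectivity, and since $\chi:M\to M^\prime\vert_{\und{\chi}(\und{M})}$ is an isomorphism, $\chi_\ast$ is extension-by-zero after the identification $\OO_{\cc}(\bbM)\cong\OO_{M^\prime,\cc}(\und{\chi}(\und{M}))$; thus it suffices to show that every class in $\OO_{\cc}(\bbM^\prime)/P_{\bbM^\prime}(\OO_{\cc}(\bbM^\prime))$ has a representative supported in $\und{\chi}(\und{M})$. Given $F^\prime\in\OO_{\cc}(\bbM^\prime)$, I would pick Cauchy surfaces $\Sigma_-\subseteq I^-_{\und{\bbM^\prime}}(\Sigma_+)$ of $\und{\bbM^\prime}$ whose intermediate region lies in $\und{\chi}(\und{M})$ (possible since a Cauchy $\SLoc$-morphism has image containing a Cauchy surface of $\und{\bbM^\prime}$ and, being causally compatible, this image is itself globally hyperbolic with that Cauchy surface), and an even section $\rho\in\OO(M^\prime)$ that is locally constant $=1$ on $I^+_{\und{\bbM^\prime}}(\Sigma_+)$ and $=0$ on $I^-_{\und{\bbM^\prime}}(\Sigma_-)$; such $\rho$ exists because $\OO_{M^\prime}$ is a fine sheaf and $\OO_{M^\prime}\to\OO_{M^\prime}/J_{M^\prime}$ is surjective. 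From $G^+_{\bbM^\prime}(F^\prime) = \big((1-\rho)G^+_{\bbM^\prime}(F^\prime)+\rho\,G^-_{\bbM^\prime}(F^\prime)\big) + \rho\,G_{\bbM^\prime}(F^\prime)$ and $P_{\bbM^\prime}\circ G^+_{\bbM^\prime} = \id$, $P_{\bbM^\prime}\circ G_{\bbM^\prime}=0$, I get $F^\prime = P_{\bbM^\prime}(\widetilde{F}) + F^{\prime\prime}$ with $\widetilde{F} := (1-\rho)G^+_{\bbM^\prime}(F^\prime)+\rho\,G^-_{\bbM^\prime}(F^\prime)$ and $F^{\prime\prime} := [P_{\bbM^\prime},\rho]\big(G_{\bbM^\prime}(F^\prime)\big)$. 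Using the support axiom and global hyperbolicity of $\und{\bbM^\prime}$, $\supp(\widetilde{F})$ is contained in the union of $J^-(\Sigma_+)\cap J^+(\supp F^\prime)$ and $J^+(\Sigma_-)\cap J^-(\supp F^\prime)$, both compact, so $\widetilde{F}\in\OO_{\cc}(\bbM^\prime)$ and $[F^\prime]=[F^{\prime\prime}]$; moreover $[P_{\bbM^\prime},\rho]$ vanishes wherever $\rho$ is locally constant, hence $\supp(F^{\prime\prime})\subseteq\big(J^+(\Sigma_-)\cap J^-(\Sigma_+)\big)\cap J_{\bbM^\prime}(\supp F^\prime)$, a compact subset of $\und{\chi}(\und{M})$. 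Then Lemma \ref{lem:pushforward}(i) gives $F^{\prime\prime}=\chi_\ast(\chi^\ast F^{\prime\prime})$, so $[F^\prime]=\LL(\chi)\big([\chi^\ast F^{\prime\prime}]\big)$, proving surjectivity and hence that $\LL(\chi)$ is an isomorphism.

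\textbf{Expected obstacle.} No new idea beyond the classical case is needed; the delicate spots are both in Time-slice: (a) exhibiting the cut-off \emph{section} $\rho\in\OO(M^\prime)$ and verifying that the super-differential operator $[P_{\bbM^\prime},\rho]$ is supported only where $\rho$ fails to be locally constant (here one genuinely uses that $P_{\bbM^\prime}$ is a super-differential operator and $\bbR$-linear), and (b) the Lorentzian input that a causally compatible open subset of $\und{\bbM^\prime}$ containing a Cauchy surface is again globally hyperbolic, so that the region between $\Sigma_-$ and $\Sigma_+$ can be arranged inside $\und{\chi}(\und{M})$.
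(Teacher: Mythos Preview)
Your proof is correct and follows essentially the same approach as the paper. Your alternative locality argument (via weak non-degeneracy of $\tau_{\bbM}$ and the $\XX$-morphism property) is exactly the paper's; for the time-slice axiom your single cut-off $\rho$ is precisely the paper's $\rho^+$ from a partition of unity $\rho^+ + \rho^- = 1$ subordinate to $U^\pm = I^\pm_{\bbM^\prime}(\Sigma^\mp)$, so that your $\widetilde{F}$ coincides with the paper's $H := \rho^-\,G^+_{\bbM^\prime}(F) + \rho^+\,G^-_{\bbM^\prime}(F)$, and your commutator formula $F^{\prime\prime} = [P_{\bbM^\prime},\rho]\big(G_{\bbM^\prime}(F^\prime)\big)$ merely makes explicit why $F - P_{\bbM^\prime}(H)$ is supported in the slab.
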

\begin{proof}
The locality property is as usual a consequence of $\tau_{\bbM}$ being weakly non-degenerate; indeed,
assuming that $\LL(\chi)([F]) =0$, for some $[F]\in\LL(\bbM)$, we find that
\begin{flalign}
\tau_{\bbM^\prime}\big(\LL(\chi)([H]) ,\LL(\chi)([F]) \big) = \tau_{\bbM}\big([H],[F]\big) =0~,
\end{flalign}
for all $[H]\in \LL(\bbM)$, and hence $[F]=0$.
The super-causality property is a consequence of the support properties of the super-Green's operators, 
cf.\ Definition \ref{defi:green} (iii).
\sk

To show the time-slice axiom, it remains to prove that $\LL(\chi)$ is surjective for any Cauchy
$\SLoc$-morphism, which is equivalent to proving that any class $[F]\in \LL(\bbM^\prime)$
has a representative $F^\prime\in\OO_{\cc}(\bbM^\prime)$ with support contained
in $\und{\chi}(\und{\bbM})\subseteq \und{\bbM^\prime}$. This follows from a standard argument
which we shall now generalize to the case of supermanifolds.
Let us take any two non-intersecting 
Cauchy surfaces $\Sigma^\pm \subset \und{\chi}(\und{\bbM})\subseteq\und{\bbM^\prime}$ of $\und{\bbM^\prime}$,
 such that $\Sigma^+$ lies in the chronological future of $\Sigma^-$. Take the open cover
 $U^\pm := I_{\bbM^\prime}^\pm (\Sigma^\mp)$ of $\und{\bbM^\prime}$
 and choose some partition of unity $\rho^\pm \in\OO(\bbM^\prime)$ subordinated to this cover (see
 \cite[Proposition 4.2.7]{Carmeli} for a proof of existence of partitions of unity on supermanifolds).
We choose any representative $F\in\OO_{\cc}(\bbM^\prime)$ of the class $[F]\in\LL(\bbM^\prime)$
and define
\begin{flalign}
H:= \rho^-\,G_{\bbM^\prime}^+(F) + \rho^+\,G_{\bbM^\prime}^-(F)\in\OO_{\cc}(\bbM^\prime)~.
\end{flalign}
Then $F^\prime := F-P_{\bbM^\prime}(H)\in\OO_{\cc}(\bbM^\prime)$ is a representative of the class $[F]$
with support contained in $\und{\chi}(\und{\bbM})\subseteq\und{\bbM^\prime}$.
\end{proof}

\subsection{The quantization functor \texorpdfstring{$\QQ : \XX \to \SAlg $}{Q : X -> S*Alg}}
The quantization is performed by assigning to objects $\bbV$
in $\XX$ SCCR superalgebras in the
case of $\dim(S)$ even and SCAR superalgebras
in the case of $\dim(S)$ odd. This reflects the fact that the objects in $\XX$ are
super-symplectic spaces if $\dim(S)$ is even and super-inner product spaces
if $\dim(S)$ is odd. We can perform this construction in one step by
using suitable sign and imaginary unit $\ii\in\bbC$ factors (depending on $\dim(S)\,\mathrm{mod}\,2$) in the definitions below.
\sk

Let $\bbV = (V,\tau)$ be any object in $\XX$. We consider the complexified tensor superalgebra
\begin{flalign}
\mathcal{T}_\bbC(\bbV) := \bigoplus_{n\geq 0} \mathcal{T}_{\bbC}^n(\bbV) := \bigoplus_{n\geq 0} V^{\otimes n}\otimes \bbC
\end{flalign}
and denote its product simply by juxtaposition. Notice that 
$\mathcal{T}_\bbC(\bbV)$ is generated (over $\bbC$) by the unit $\1 :=1\in\mathcal{T}_\bbC^0(\bbV)\simeq \bbC$
and the elements $v \cong v\otimes 1 \in \mathcal{T}_\bbC^1(\bbV) = V\otimes \bbC$, for all $v\in V$.
We equip $\mathcal{T}_\bbC(\bbV)$ with the superinvolution which is defined on the generators
by $\1^\ast =\1$ and $v^\ast = v$, for all $v\in V$,
and extended to all of $\mathcal{T}_\bbC(\bbV)$ by $\bbC$-antilinearity and
\begin{flalign}
(a_1\,a_2)^\ast =(-1)^{\vert a_1\vert\,\vert a_2\vert}\,a_2^\ast\,a_1^\ast ~,
\end{flalign}
for all $\bbZ_2$-parity homogeneous $a_1,a_2\in \mathcal{T}_\bbC(\bbV)$. Using the $\SVec$-morphism
$\tau : V\otimes V\to\bbR$, we define $\mathcal{I}(\bbV)$ to be the two-sided
super-$\ast$-ideal in $\mathcal{T}_\bbC(\bbV)$ that is generated by the elements
\begin{flalign}\label{eqn:comrel}
v_1\,v_2  + (-1)^{\dim(S)+1} \,(-1)^{\vert v_1\vert\,\vert v_2\vert}\, v_2\,v_1 - \beta \,\tau(v_1,v_2)\,\1~,
\end{flalign}
for all homogeneous generators $v_1,v_2\in V$,
where we have used
\begin{flalign}
\beta:=\begin{cases}
\ii &~,~~\text{ for $\dim(S)$ even}~,\\
1 &~,~~\text{ for $\dim(S)$ odd}~.
\end{cases}
\end{flalign}
Notice that (\ref{eqn:comrel}) describes SCCR
if $\dim(S)$ is even and SCAR
if $\dim(S)$ is odd.
We define the object $\QQ(\bbV)$ in $\SAlg$ by the quotient
\begin{flalign}\label{eqn:Qalg}
\QQ(\bbV) := \frac{\mathcal{T}_{\bbC}(\bbV)}{\mathcal{I}(\bbV)}~.
\end{flalign}
\begin{propo}
The following assignment is a functor $\QQ : \XX \to \SAlg$: To any object $\bbV$ in $\XX$
we assign the object $\QQ(\bbV)$ in $\SAlg$ given by (\ref{eqn:Qalg}) and to any 
$\XX$-morphism $L : \bbV \to\bbV^\prime$ we assign the $\SAlg$-morphism
$\QQ(L) : \QQ(\bbV) \to \QQ(\bbV^\prime)$ that is specified by
defining on the generators $\QQ(L)(v) := L(v)$, for all $v\in V$.
\end{propo}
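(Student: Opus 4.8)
The plan is to follow the usual three-step argument for quantization functors: produce a homomorphism out of the free object (the tensor superalgebra), show it descends to the quotient by the (anti)commutation ideal, and then check compatibility with the superinvolution and functoriality. The only model-dependent input is the defining property $\tau^\prime\circ(L\otimes L)=\tau$ of an $\XX$-morphism (Definition \ref{defi:XX}) together with the fact that $L$ is even; all the rest is bookkeeping with the sign $(-1)^{\dim(S)+1}$ and the factor $\beta=\beta(\dim(S))$. Before starting, it is worth recording once and for all (if this was not already done when (\ref{eqn:Qalg}) was introduced) that $\mathcal{I}(\bbV)$ is a homogeneous, $\ast$-closed two-sided ideal of $\mathcal{T}_{\bbC}(\bbV)$: homogeneity because $\tau$ is even, so $\tau(v_1,v_2)\neq0$ forces $\vert v_1\vert=\vert v_2\vert$ and the relator in (\ref{eqn:comrel}) is then $\bbZ_2$-homogeneous of parity $\vert v_1\vert+\vert v_2\vert$; $\ast$-closedness because applying $\ast$ to the relator and multiplying by $(-1)^{\dim(S)+1}$ returns the relator up to the scalar identity $(-1)^{\dim(S)+1}\,\overline{\beta}=\beta$, which one verifies in each of the two cases. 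Hence $\QQ(\bbV)$ is genuinely an object of $\SAlg$, with induced product, unit, grading and superinvolution.

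First I would construct $\QQ(L)$. Given an $\XX$-morphism $L:\bbV\to\bbV^\prime$, the even linear map $V\to\QQ(\bbV^\prime)$ sending $v$ to the class of $L(v)\in V^\prime\subseteq\mathcal{T}_{\bbC}(\bbV^\prime)$ extends, by the universal property of the complexified tensor superalgebra $\mathcal{T}_{\bbC}(\bbV)$ as the free unital $\bbC$-superalgebra on $V\otimes\bbC$, to a unique $\mathsf{SAlg}$-morphism $\widehat{L}:\mathcal{T}_{\bbC}(\bbV)\to\QQ(\bbV^\prime)$ with $\widehat{L}(\1)=\1$; it preserves the $\bbZ_2$-grading because $L$ does. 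Since $\mathcal{I}(\bbV)$ is the two-sided ideal generated by the elements (\ref{eqn:comrel}) and $\widehat{L}$ is an algebra homomorphism, it suffices to evaluate $\widehat{L}$ on one relator. Using $\vert L(v_i)\vert=\vert v_i\vert$ and $\tau^\prime\big(L(v_1),L(v_2)\big)=\tau(v_1,v_2)$, the image $\widehat{L}\big(v_1v_2+(-1)^{\dim(S)+1}(-1)^{\vert v_1\vert\vert v_2\vert}v_2v_1-\beta\,\tau(v_1,v_2)\,\1\big)$ equals precisely the corresponding generator of $\mathcal{I}(\bbV^\prime)$ for the elements $L(v_1),L(v_2)\in V^\prime$, hence vanishes in $\QQ(\bbV^\prime)$. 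Therefore $\widehat{L}$ annihilates $\mathcal{I}(\bbV)$ and factors through a $\mathsf{SAlg}$-morphism $\QQ(L):\QQ(\bbV)\to\QQ(\bbV^\prime)$ determined by $\QQ(L)(v)=L(v)$ on generators.

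It then remains to see that $\QQ(L)$ intertwines the superinvolutions and that $\QQ$ is a functor. For the $\ast$-structure one checks the identity $\QQ(L)\circ\ast_{\QQ(\bbV)}=\ast_{\QQ(\bbV^\prime)}\circ\QQ(L)$ on the algebra generators $\1$ and $v\in V$, where it is immediate from $\1^\ast=\1$, $v^\ast=v$ and $L(v)^\ast=L(v)$ in $\QQ(\bbV^\prime)$; since both sides are $\bbC$-antilinear and satisfy the same graded-anti-multiplicativity $(a_1a_2)^\bullet=(-1)^{\vert a_1\vert\vert a_2\vert}a_2^\bullet a_1^\bullet$, an induction on word length propagates the equality to all of $\QQ(\bbV)$, so $\QQ(L)\in\SAlg$. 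Finally, $\QQ(\id_\bbV)$ and $\QQ(L^\prime\circ L)$ agree on generators with $\id_{\QQ(\bbV)}$ and $\QQ(L^\prime)\circ\QQ(L)$ respectively, and the uniqueness clause in the universal property forces equality, giving $\QQ(\id_\bbV)=\id_{\QQ(\bbV)}$ and $\QQ(L^\prime\circ L)=\QQ(L^\prime)\circ\QQ(L)$. The only place any care is needed is the descent step: one must carry the signs and the factor $\beta$ consistently so that the image of a relator is again a relator; once that is done, the argument is entirely formal and uses no analytic or geometric input.
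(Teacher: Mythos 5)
Your proof is correct and follows essentially the same route as the paper: the paper's own proof simply notes that $\QQ(\bbV)$ is an object of $\SAlg$ by the preceding construction and that $\QQ(L)$ is well-defined because the generators (\ref{eqn:comrel}) of the super-$\ast$-ideal are mapped to generators, owing to $\tau^\prime\circ(L\otimes L)=\tau$. You have merely spelled out the details (universal property of $\mathcal{T}_{\bbC}(\bbV)$, the sign check $(-1)^{\dim(S)+1}\overline{\beta}=\beta$ for $\ast$-closedness, and functoriality on generators) that the paper leaves implicit.
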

\begin{proof}
By our constructions above, we already know that $\QQ(\bbV)$ is an object in $\SAlg$, for all
objects $\bbV$ in $\XX$. It remains to show that $\QQ(L) : \QQ(\bbV) \to \QQ(\bbV^\prime)$
specified above is well-defined, i.e.\ that it preserves the two-sided super-$\ast$-ideals.
This is a consequence of the explicit form of the generators of these ideals (\ref{eqn:comrel})
and the fact that $L : \bbV\to\bbV^\prime$ satisfies $\tau^\prime\circ (L\otimes L) = \tau$.
\end{proof}

\subsection{\label{subsec:lcqft}The locally covariant quantum field theory \texorpdfstring{$\AA : \SLoc \to \SAlg $}{A : SLoc -> S*Alg}}
We compose the functors $\LL : \SLoc\to \XX$ and $\QQ : \XX \to \SAlg$ in order to define
the functor
\begin{flalign}
\AA := \QQ \circ \LL : \SLoc \longrightarrow \SAlg~.
\end{flalign}
This functor satisfies a supergeometric modification of the axioms of locally covariant quantum field theory \cite{Brunetti:2001dx}.
\begin{theo}\label{theo:LCQFT}
For any super-field theory according to Definition \ref{defi:sft} the
associated functor $\AA : \SLoc\to\SAlg$ satisfies the following properties:
\begin{itemize}
\item Locality: For any $\SLoc$-morphism $\chi : \bbM \to \bbM^\prime$, the $\SAlg$-morphism
$\AA(\chi) : \AA(\bbM)\to\AA(\bbM^\prime)$ is monic.

\item Super-causality: Given two $\SLoc$-morphisms
 $\bbM_1\stackrel{\chi_1}{\longrightarrow}\bbM \stackrel{\chi_2}{\longleftarrow} \bbM_2$
 such that the images of the reduced $\tLor$-morphisms 
 $\und{\bbM_1}\stackrel{\und{\chi_1}}{\longrightarrow}\und{\bbM} \stackrel{\und{\chi_2}}{\longleftarrow} \und{\bbM_2}$
 are causally disjoint, then
 \begin{flalign}\label{eqn:supercausaltmp}
 a_1\,a_2 + (-1)^{\dim(S) +1} \,(-1)^{\vert a_1\vert\,\vert a_{2}\vert} \,a_2\,a_1 =0~,
 \end{flalign}
 for all homogeneous $a_1\in\AA(\chi_1)(\AA(\bbM_1))$ and $a_2\in \AA(\chi_2)(\AA(\bbM_2))$.
 
 \item Time-slice axiom: Given any Cauchy $\SLoc$-morphism $\chi: \bbM\to\bbM^\prime$,
  then $\AA(\chi) : \AA(\bbM)\to\AA(\bbM^\prime)$ is a $\SAlg$-isomorphism.
\end{itemize}
\end{theo}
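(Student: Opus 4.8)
The plan is to use the factorisation $\AA=\QQ\circ\LL$ together with Theorem \ref{theo:LCCFT}, which already establishes the analogues of all three properties for $\LL:\SLoc\to\XX$. The \emph{time-slice axiom} is then immediate, since any functor preserves isomorphisms: if $\chi$ is a Cauchy $\SLoc$-morphism, $\LL(\chi)$ is an $\XX$-isomorphism and hence $\AA(\chi)=\QQ(\LL(\chi))$ is a $\SAlg$-isomorphism. Thus only locality and super-causality need work, and in both cases one only has to understand how $\QQ$ interacts with the defining relations \eqref{eqn:comrel} and with injective $\XX$-morphisms.

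For \emph{super-causality}, let $\chi_1,\chi_2$ be as in the statement. The image subalgebras $\AA(\chi_i)\big(\AA(\bbM_i)\big)\subseteq\AA(\bbM)$ are generated by the degree-one elements $v_i=\LL(\chi_i)([F_i])\in\LL(\bbM)\subseteq\QQ(\LL(\bbM))$. By the super-causality of $\LL$ one has $\tau_{\bbM}(v_1,v_2)=0$ whenever $v_1$ comes from the image of $\chi_1$ and $v_2$ from the image of $\chi_2$, so on such generators the relation \eqref{eqn:comrel} reduces to $v_1v_2+(-1)^{\dim(S)+1}(-1)^{\vert v_1\vert\,\vert v_2\vert}v_2v_1=0$. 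When $\dim(S)$ is even this is graded commutativity, and since the super-commutator is a graded derivation in each slot, the relation \eqref{eqn:supercausaltmp} propagates from generators to arbitrary homogeneous $a_1,a_2$ by induction on word length. When $\dim(S)$ is odd — which in the present paper means the superparticle, whose reduced spacetime $\und{M}$ is one-dimensional — the set $J_{\bbM}(A)$ equals $\und{M}$ for every non-empty open $A$, so no two such sets are causally disjoint and the hypothesis of the axiom is void.

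The main obstacle is \emph{locality}: one must show that $\QQ$ maps monic $\XX$-morphisms to monic $\SAlg$-morphisms. Since a superalgebra morphism is monic as soon as it is injective on underlying super-vector spaces, it suffices to show that $\QQ(L)$ is injective whenever $L:V\hookrightarrow V'$ is injective, and for this I would prove a super Poincar\'{e}--Birkhoff--Witt theorem for the algebras $\QQ(\bbV)$. Filtering $\QQ(\bbV)$ by the images of $\bigoplus_{k\le n}\mathcal{T}_\bbC^k(\bbV)$, the relations \eqref{eqn:comrel} identify the associated graded algebra with the free supercommutative superalgebra on $V\otimes\bbC$ when $\dim(S)$ is even, and with its parity-reversed analogue when $\dim(S)$ is odd; in particular $\QQ(\bbV)\neq 0$. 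A diamond-lemma argument for the rewriting rule that reorders a product of generators by means of \eqref{eqn:comrel}, whose single overlap ambiguity (on products of three generators) is resolved by the super-skew-symmetry resp.\ super-symmetry of $\tau_{\bbM}$, then shows that for any homogeneous basis of $V$ the corresponding ordered monomials form a $\bbC$-basis of $\QQ(\bbV)$. This PBW basis is natural in $\bbV$: completing an ordered basis of $V$ to one of $V'$ exhibits $\QQ(L)$ as sending the monomial basis of $\QQ(\bbV)$ injectively into that of $\QQ(\bbV')$, so $\QQ(L)=\AA(\chi)$ is injective. (Alternatively one could realise $\QQ(\bbV)$ faithfully on a Fock-type module built on $\mathrm{Sym}(V\otimes\bbC)$ and deduce injectivity of $\QQ(L)$ from that of $L$; the PBW route is more self-contained, and verifying the diamond-lemma ambiguity is where most of the care is needed.)
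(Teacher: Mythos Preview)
Your argument is correct and follows the same skeleton as the paper's: reduce everything to Theorem~\ref{theo:LCCFT} for $\LL$ and lift through $\QQ$. The time-slice axiom is handled identically (functors preserve isomorphisms). For super-causality the paper is terser---it says the generator case follows from \eqref{eqn:comrel} and the general case ``by expressing these elements in terms of generators and using iteratively the super-causality property for the generators''---whereas you spell out that in the even-$\dim(S)$ case this iteration is the super-derivation property of the super-commutator, and you correctly note that in the odd case the bracket in \eqref{eqn:supercausaltmp} is the super-\emph{anti}commutator, which is not a derivation, so the naive iteration fails once either side is a product of two or more generators. Your fix (in a one-dimensional reduced spacetime $J_{\bbM}(A)=\und{M}$ for every non-empty open $A$, so the causal-disjointness hypothesis is vacuous) is the right resolution and is sharper than the paper's sketch; the only loose end is that ``odd $\dim(S)$ means the superparticle'' is recorded in the paper only as an aside, so at the generality of Definition~\ref{defi:sft} one should verify that the representation-theoretic constraints on the data rule out odd $\dim(S)$ for $\dim(W)\geq 2$.

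For locality the routes genuinely differ. The paper gives no argument in the body and refers to \cite[Appendix~A]{Fewster:2011pn}. Your super-PBW approach via Bergman's diamond lemma---with the single overlap ambiguity on triple products resolved by the (super-)symmetry properties of $\tau$---is self-contained and produces an explicit ordered-monomial basis of $\QQ(\bbV)$ natural under injective $\XX$-morphisms, from which injectivity of $\QQ(L)$ is immediate. Both methods are standard for CCR/CAR-type algebras; yours has the advantage of not depending on an external reference.
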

\begin{proof}
All properties listed above follow by standard arguments from the corresponding properties of the classical theory given in
Theorem \ref{theo:LCCFT}. Let us briefly give a sketch or reference:
The locality property follows by using the techniques summarized in \cite[Appendix A]{Fewster:2011pn}.
The super-causality property for $a_1$ and $a_2$ being two generators 
follows explicitly from the form of the two-sided super-$\ast$-ideal (\ref{eqn:comrel})
and for generic $a_1$ and $a_2$ by expressing these elements in terms of generators and using
iteratively the super-causality property for the generators. 
The time-slice axiom for $\AA  = \QQ\circ \LL$ follows since functors (here $\QQ$) preserve isomorphisms.
\end{proof}
\begin{rem}\label{rem:evensubtheory}
Notice that the super-causality property (\ref{eqn:supercausaltmp}) is similar to the graded-causality property
encountered in fermionic quantum field theories, see e.g.\ \cite{Bar:2011iu}. In particular, if
$\dim(S)$ is even then (\ref{eqn:supercausaltmp}) implies that two even elements 
commute and two odd elements anti-commute whenever they are spacelike separated.
In an ordinary (i.e.\ non-supergeometric) quantum field theory one usually postulates 
that only the even elements of the algebras are true physical observables,
which includes in particular bilinear terms in fermionic quantum fields such as the stress-energy tensor.
A similar construction is also possible in our present description of super-QFTs:
We may define a new functor $\AA_0 : \SLoc \to {}^\ast\mathsf{Alg}$ (the even part of the super-QFT 
$\AA : \SLoc\to\SAlg$) to the category of ordinary $\ast$-algebras by assigning to any object $\bbM$ in $\SLoc$
the even sub-$\ast$-algebra $\AA_0(\bbM)$ of $\AA(\bbM)$ and to any $\SLoc$-morphism
$\chi : \bbM\to \bbM^{\prime}$ the restriction of $\AA(\chi) : \AA(\bbM)\to \AA(\bbM^{\prime}) $
to $\AA_0(\bbM)$. (Because $\AA(\chi)$ preserves $\bbZ_2$-parity we have that 
$\AA_0(\chi) :=\AA(\chi)\vert_{\AA_0(\bbM)}  : \AA_0(\bbM)\to \AA_0(\bbM^{\prime}) $ maps to $\AA_0(\bbM^{\prime})$.)
The functor $\AA_0 : \SLoc \to {}^\ast\mathsf{Alg}$  then satisfies the ordinary locality, causality and time-slice axiom
of locally covariant quantum field theory \cite{Brunetti:2001dx}.
\end{rem}

We conclude this section by constructing a locally covariant quantum field 
for the functor $\AA : \SLoc\to \SAlg$, which establishes a connection to the data specifying
a super-field theory in Definition \ref{defi:sft}. Let us consider the functor $\OO_{\cc} :\SLoc\to \SVec$
and regard $\AA$ also as a functor to $\SVec$ (denoted by the same symbol) by composing
with the forgetful functor. There is a natural transformation $\Phi : \OO_{\cc}\Rightarrow \AA$ of functors
from $\SLoc$ to $\SVec$ with components given by the $\SVec$-morphisms
\begin{flalign}\label{eqn:quantumfield}
\Phi_{\bbM} : \OO_{\cc}(\bbM)\longrightarrow \AA(\bbM)~,~~F\longmapsto [F]~.
\end{flalign}
In analogy to \cite{Brunetti:2001dx} we shall interpret $\Phi_{\bbM}(F)\in\AA(\bbM)$, for $F\in\OO_{\cc}(\bbM)$,
as a smeared linear hermitian superfield operator on $\bbM$. The connection to the data in Definition \ref{defi:sft}
is established by noticing that the superfield operators satisfy the equations of motion (in weak form)
\begin{flalign}
 \Phi_{\bbM}(P_{\bbM}(F))=0~,
\end{flalign}
for all $F\in\OO_{\cc}(\bbM)$ and all objects $\bbM$ in $\SLoc$. Moreover, they satisfy
the super-canonical (anti)commutation relations
\begin{flalign}
\Phi_{\bbM}(F_1) \,\Phi_{\bbM}(F_2) + (-1)^{\dim(S) +1} \,(-1)^{\vert F_1\vert \,\vert F_2\vert}\,
\Phi_{\bbM}(F_2) \,\Phi_{\bbM}(F_1) = \beta \,\tau_{\bbM}(F_1,F_2)~,
\end{flalign}
for all homogeneous $F_1,F_2\in\OO_{\cc}(\bbM)$ and all objects $\bbM$ in $\SLoc$.
We recall that $\beta =\ii$ if $\dim(S)$ is even and $\beta =1$ if $\dim(S)$ is odd.

\begin{rem}\label{rem:componentfields}
For any object $\bbM$ in $\SLoc$ we have a
decomposition $\OO_{\cc}(\bbM) = \OO_{\cc}(\bbM)_0 \oplus \OO_{\cc}(\bbM)_1$
into the even and odd part. We can define new super-vector spaces
$\OO_{\cc}^\mathrm{even} (\bbM) := \OO_{\cc}(\bbM)_0 \oplus 0 $
and $\OO_{\cc}^\mathrm{odd}(\bbM) := 0\oplus\OO_{\cc}(\bbM)_1$
and notice that $\OO_{\cc}^\mathrm{even} : \SLoc \to \SVec$ and $\OO_{\cc}^\mathrm{odd} : \SLoc \to \SVec$
are subfunctors of $\OO_{\cc} :\SLoc\to \SVec$. (The latter statement is due to the fact that the push-forwards
$\OO_{\cc}(\chi) = \chi_\ast$ preserve the $\bbZ_2$-parity.) Consequently, 
our locally covariant quantum field $\Phi : \OO_{\cc}\Rightarrow \AA$
 decomposes into two natural transformation $\Phi^\mathrm{even} : \OO_{\cc}^\mathrm{even}\Rightarrow \AA$ 
 and $\Phi^\mathrm{odd} : \OO_{\cc}^\mathrm{odd}\Rightarrow \AA$,
 which describe within our physical interpretation the even and odd component quantum fields
 of the superfield $\Phi$. The appearance of the even and odd quantum fields
 is an undesirable feature, which indicates that our formulation does not appropriately
capture supersymmetry transformations. In fact, supersymmetry transformations
are supposed to mix the even and odd component fields, hence allowing neither of them to be
 a natural transformation, i.e.\ a locally covariant quantum field.
It is the goal of the next section
to `enrich' (in a mathematically precise way) the categories and functors appearing in
our construction in order to capture also supersymmetry transformations.
\end{rem}

%%%%%%%%%%%%%%%%%%%%%%%%%%%%%%%%%%%%%%%%%%%%%%%%
%%%%%%%%%%%%%%%%%%%%%%%%%%%%%%%%%%%%%%%%%%%%%%%%

\section{\label{sec:axiomsenriched}Axiomatic definition of enriched super-field theories}
Motivated by the shortcomings of our present theory, which have been
summarized in Remark \ref{rem:componentfields}, we shall now systematically `enrich' all categories,
functors and natural transformations appearing in the Definition \ref{defi:sft} of super-field theories. 
A suitable mathematical framework is that of {\em enriched category theory}, see e.g.\ \cite{Enriched,Enriched2}
and also Appendix \ref{app:enriched} for a brief introduction to the basic concepts.
Loosely speaking, in an ordinary category the morphisms between two objects have to form a set
and in an enriched category the morphisms between two objects are allowed
to be an object in another (monoidal) category. Enriched functors and natural transformations
are then defined by a suitable generalization of the standard concepts of functors and natural transformations
in ordinary category theory. In our definition of enriched super-field theories, as well as in the construction
of the corresponding enriched super-QFTs in Section \ref{sec:enriched}, we shall consider enriched categories
over the monoidal category $\SSet$ of supersets, which we also call $\SSet$-categories. 
Again loosely speaking, while an ordinary set
is determined by its points, a superset is determined by its superpoints. To make precise the notion of supersets,
we shall use the category theoretical approach to supergeometry proposed by Schwarz \cite{Schwarz} and 
Molotkov \cite{Molotkov}, and developed in detail by Sachse \cite{Sachse}, see also \cite{SachseWockel,AL}.

\subsection{The monoidal category \texorpdfstring{$\SSet$}{SSet} of supersets}
For better understanding the concept of supersets, it will be helpful to view ordinary sets 
from a categorical perspective. Let $A$ be any set. Then $A$ is determined by its
points, which can be described by maps $x : \mathrm{pt} \to A$ from a (once and for all fixed) 
singleton set $\mathrm{pt}:=\{\star\}$ to the set $A$. In other words,
the points of $A$ are described by the morphism set $\Hom_{\Set}(\mathrm{pt},A)$.
Using the usual composition of maps, any map between two sets $f: A\to B$ 
induces a map between the morphism sets
\begin{flalign}\label{eqn:setfunctormap}
\Hom_{\Set}(\mathrm{pt},A) \longrightarrow \Hom_{\Set}(\mathrm{pt},B)~,~~x\longmapsto f \circ x~.
\end{flalign}
Let $\mathsf{Pt}$ be the category consisting of the single object $\mathrm{pt}$ and the single morphism 
$\id_{\mathrm{pt}}$. Then the morphism set above can be regarded as  a functor
$\Hom_{\Set}(\,\cdot\,,A) : \mathsf{Pt}^\op \to \Set$ (in foresight we use here the opposite category $\mathsf{Pt}^\op$)
and the map (\ref{eqn:setfunctormap}) as a natural transformation 
$\Hom_{\Set}(\,\cdot\,,A)\Rightarrow \Hom_{\Set}(\,\cdot\,,B)$ between functors from $\mathsf{Pt}^\op$ to $\Set$.
What this means is that we have constructed a functor
\begin{flalign}\label{eqn:setfunctor}
\Set \longrightarrow \mathrm{Fun}(\mathsf{Pt}^\op , \Set)
\end{flalign}
from the category of sets to the category of functors from $\mathsf{Pt}^\op$ to $\Set$.
Notice that the functor (\ref{eqn:setfunctor}) is an 
equivalence between the categories $\Set$ and $\mathrm{Fun}(\mathsf{Pt}^{\op},\Set)$.
In other words, we can choose freely if we want to work with the usual category $\Set$ of sets 
or with the functor category $\mathrm{Fun}(\mathsf{Pt}^{\op},\Set)$.
\sk

Motivated by this functorial point of view, we shall define the category of supersets
as the functor category $\mathrm{Fun}(\SPt^{\op},\Set)$, where $\SPt$ is the following category of superpoints:
\begin{defi}
The category $\SPt$ consists of the following objects and morphisms:
\begin{itemize}
\item The objects are given by the supermanifolds 
$\pt_n := (\mathrm{pt},\Lambda_n)$, where $\Lambda_n :=\bigwedge^\bullet \bbR^n$ 
is the real Grassmann algebra over $\bbR^n$ and $n\in \bbN^0$.
\item The morphisms $\lambda : \pt_n \to \pt_m$ are
all supermanifold morphisms.
\end{itemize}
The category $\SSet$ of supersets is defined as the functor category
\begin{flalign} \label{eqn:DefSSet}
\SSet := \mathrm{Fun}(\SPt^\op , \Set) ~.
\end{flalign}
\end{defi}
\begin{rem}
In \cite{Sachse}, the category of superpoints is defined
as the full subcategory of $\SMan$ with objects given by all 
supermanifolds whose underlying topological space is a singleton.
This category is equivalent to our category $\SPt$ and moreover we have
that $\SPt^\op$ is equivalent to the category of finite-dimensional real Grassmann 
algebras $\Gr$. Hence, our category of supersets (\ref{eqn:DefSSet}) is equivalent
to the functor category $\mathrm{Fun}(\Gr,\Set)$, which is used for example in \cite{Sachse}.
\end{rem}
\sk

Recall that the category $\Set$ of ordinary sets is a monoidal category with bifunctor
$\times : \Set\times\Set\to\Set$ given by the Cartesian product and unit object
given by the singleton set $\pt$. By a general construction, the monoidal structure
on $\Set$ induces a monoidal structure on the functor category $\mathrm{Fun}(\SPt^\op , \Set)$
and hence on the category $\SSet$ of supersets. Let us briefly recall this construction and give explicit formulas.
We define a bifunctor (denoted with a slight abuse of notation also by $\times$)
\begin{flalign}
\times :\SSet\times\SSet \longrightarrow \SSet
\end{flalign}
by assigning to any object $(\mathfrak{F} : \SPt^\op \to\Set, \mathfrak{F}^\prime : \SPt^\op \to \Set )$ 
in $\SSet\times \SSet$ the object $\mathfrak{F}\times \mathfrak{F}^\prime : \SPt^\op\to \Set$ 
in $\SSet$, which is the functor specified on objects $\pt_n$ in $\SPt^\op$ by 
\begin{subequations}\label{eqn:monoidalSSets}
\begin{flalign}
(\mathfrak{F}\times \mathfrak{F}^\prime)(\pt_n) := \mathfrak{F}(\pt_n)\times \mathfrak{F}^\prime(\pt_n)
\end{flalign}
and on $\SPt^\op$-morphisms $\lambda^\op : \pt_n\to \pt_m$  by
\begin{flalign}
(\mathfrak{F}\times \mathfrak{F}^\prime)(\lambda^\op) := \mathfrak{F}(\lambda^\op)\times \mathfrak{F}^\prime(\lambda^\op)  : (\mathfrak{F}\times \mathfrak{F}^\prime)(\pt_n) \longrightarrow (\mathfrak{F}\times \mathfrak{F}^\prime)(\pt_m)~.
\end{flalign}
To any morphism $(\eta : \mathfrak{F}\Rightarrow \mathfrak{G},\eta^\prime : \mathfrak{F}^\prime \Rightarrow \mathfrak{G}^\prime)$
in $\SSet\times \SSet$ we assign the morphism 
$\eta\times\eta^\prime : \mathfrak{F}\times\mathfrak{F}^\prime \Rightarrow
 \mathfrak{G}\times\mathfrak{G}^\prime$ in $\SSet$ which is given by the natural transformation with components
\begin{flalign}
(\eta\times\eta^\prime)_{\pt_n} := \eta_{\pt_n} \times \eta^\prime_{\pt_n} :  (\mathfrak{F}\times \mathfrak{F}^\prime)(\pt_n) \longrightarrow  (\mathfrak{G}\times \mathfrak{G}^\prime)(\pt_n) ~,
\end{flalign}
\end{subequations}
for all objects $\pt_n$ in $\SPt^\op$. The unit object in $\SSet$ is the functor
$\mathfrak{I} : \SPt^\op\to \Set$ specified on objects $\pt_n$  and morphisms $\lambda^\op : \pt_n\to \pt_m$ in $\SPt^\op$ by
\begin{flalign}\label{eqn:unitSSets}
\mathfrak{I}(\pt_n) :=\pt
~~,\quad \mathfrak{I}(\lambda^\op) := \id_{\pt}~.
\end{flalign}
In summary, we have
\begin{propo}
The category $\SSet$ of supersets is a monoidal category with bifunctor
$\times : \SSet\times\SSet \to\SSet$ defined by (\ref{eqn:monoidalSSets}) and unit object
$\mathfrak{I}$ defined by (\ref{eqn:unitSSets}).
\end{propo}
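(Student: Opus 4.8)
The plan is to invoke the general principle that a functor category $\mathrm{Fun}(\mathcal{C},\mathcal{V})$ inherits a monoidal structure pointwise from a monoidal target $\mathcal{V}$, instantiated here with $\mathcal{V}=\Set$ (Cartesian monoidal) and $\mathcal{C}=\SPt^\op$. The formulas (\ref{eqn:monoidalSSets}) and (\ref{eqn:unitSSets}) are exactly the objectwise-at-$\pt_n$ transcriptions of the $\Set$-structure, so every assertion to be checked — that $\times$ is a bifunctor, that $\mathfrak{I}$ is an object of $\SSet$, and that the coherence axioms hold — reduces to the corresponding statement in $\Set$ after evaluation at each superpoint $\pt_n$.

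First I would verify that $\times : \SSet\times\SSet \to \SSet$ is a well-defined bifunctor. For a pair $(\mathfrak{F},\mathfrak{F}^\prime)$ one confirms that $\mathfrak{F}\times\mathfrak{F}^\prime$ is a functor $\SPt^\op\to\Set$: it preserves $\id_{\pt_n}$ since $\mathfrak{F}(\id)\times\mathfrak{F}^\prime(\id)=\id\times\id=\id$, and it preserves a composition $\lambda^\op\circ\mu^\op$ of $\SPt^\op$-morphisms because the Cartesian product bifunctor on $\Set$ does. Likewise $\eta\times\eta^\prime$ is a natural transformation: its naturality square at $\pt_n$ is the image under $\times$ of the naturality squares of $\eta$ and $\eta^\prime$ at $\pt_n$. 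Functoriality of $(\eta,\eta^\prime)\mapsto\eta\times\eta^\prime$ (preservation of identities and of vertical composition of natural transformations) is again checked componentwise in $\Set$. That $\mathfrak{I}$ is an object of $\SSet$ is immediate, being a constant functor.

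Next I would supply the coherence data. The associator $\alpha$ has, at a triple $(\mathfrak{F},\mathfrak{G},\mathfrak{H})$ of objects of $\SSet$, the morphism of $\SSet$ whose component at $\pt_n$ is the $\Set$-associator $\alpha^{\Set}_{\mathfrak{F}(\pt_n),\,\mathfrak{G}(\pt_n),\,\mathfrak{H}(\pt_n)}$; naturality of $\alpha^{\Set}$ in its three arguments shows both that these components assemble into a genuine $\SSet$-morphism (i.e.\ are natural in $\pt_n$) and that $\alpha$ is natural in $(\mathfrak{F},\mathfrak{G},\mathfrak{H})$. The unit constraints $\ell$ and $r$ are built the same way from the $\Set$-constraints $\ell^{\Set}$ and $r^{\Set}$, using $(\mathfrak{I}\times\mathfrak{F})(\pt_n)=\pt\times\mathfrak{F}(\pt_n)$ and $(\mathfrak{F}\times\mathfrak{I})(\pt_n)=\mathfrak{F}(\pt_n)\times\pt$. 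Finally the pentagon and triangle identities for $(\times,\mathfrak{I},\alpha,\ell,r)$ hold in $\SSet$ if and only if they hold after evaluation at every $\pt_n$, where they become the pentagon and triangle identities in $\Set$, which are valid since $\Set$ is monoidal.

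The only thing that requires a little care — and the place where a careless argument could slip — is keeping track of two nested notions of "component": a morphism in $\SSet$ is itself a natural transformation, hence has components indexed by the objects $\pt_n$ of $\SPt$, whereas $\alpha$, $\ell$, $r$ are natural transformations between functors on $\SSet\times\SSet$ (and on $\SSet$), hence have components indexed by objects of $\SSet\times\SSet$, each such component being itself a morphism in $\SSet$. Once this is disentangled there is no genuine obstacle: every verification is a direct transcription of the corresponding fact for $(\Set,\times,\pt)$, so the routine diagram chases may be omitted.
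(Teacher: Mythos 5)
Your proposal is correct and follows essentially the same route as the paper, which simply invokes the general fact that a monoidal structure on the target category $\Set$ induces a pointwise monoidal structure on the functor category $\mathrm{Fun}(\SPt^\op,\Set)$ and records the explicit formulas (\ref{eqn:monoidalSSets}) and (\ref{eqn:unitSSets}) without further verification. Your componentwise checks of bifunctoriality, the coherence data and the pentagon and triangle identities are exactly the routine details the paper leaves implicit.
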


\subsection{The \texorpdfstring{$\SSet$}{SSet}-category \texorpdfstring{$\eSLoc$}{eSLoc}}
Let us choose as in Definition \ref{defi:sft} any full subcategory $\SLoc$ of $\ghSCart$.
The goal of this subsection is to define a $\SSet$-category
$\eSLoc$, such that the objects in $\eSLoc$ coincide with those in $\SLoc$
and that the morphism supersets in $\eSLoc$ enrich (in a suitable way)
the ordinary morphism sets in $\SLoc$. The main feature of this enrichment will be that
supersymmetry transformations appear as superpoints of the morphism supersets, see Section \ref{sec:examples}
for explicit examples.
\sk

Before we can define the $\SSet$-category $\eSLoc$ we need some preparations. A 
supermanifold $M$  can be described in the framework 
of supersets (\ref{eqn:DefSSet}) by the functor $\Hom_{\SMan}(\,\cdot\,,M): \SPt^\op \to \Set$. 
We will not describe the details of this approach (see \cite{Schwarz,Molotkov,Sachse,SachseDiss}), 
but we make use of an equivalent picture: $\Hom_\SMan(\pt_n,M)$ clearly coincides with the set sections 
of the trivial super-fibre bundle $\pt_n \times M \to \pt_n$ and natural transformations $\Hom_\SMan(\,\cdot\,,M) 
\to \Hom_\SMan(\,\cdot\,,M^\prime)$ correspond to super-fibre bundle morphisms. We will discuss the basic properties of this 
``family point of view'' and refer to the literature \cite[Chapter 3.3]{SachseDiss} and \cite[\S 2.8 and \S 2.9]{Deligne} 
for more details.
\sk

Given any object $M$ in $\SMan$ and any object $\pt_n$ in $\SPt^\op$, we can consider 
the product supermanifold $\pt_n\times M = (\und{M}, \Lambda_n\otimes \OO_M)$ together with the projection
$\SMan$-morphism $\mathrm{pr}_{\pt_n\times M,\pt_n} :\pt_n\times M\to\pt_n$ onto the first factor.
The pair $(\pt_n\times M, \mathrm{pr}_{\pt_n\times M,\pt_n})$ is typically 
called {\em a $\pt_n$-relative supermanifold} and denoted by the compact notation $M/\pt_n$.
A {\em a $\pt_n$-relative $\SMan$-morphism} (in short:  $\SMan/\pt_n$-morphism)
$\chi : M/\pt_n \to M^\prime/\pt_n$ between two $\pt_n$-relative supermanifolds
is a $\SMan$-morphism $\chi : \pt_n\times M\to \pt_n\times M^\prime$ between the product supermanifolds
which preserves the projections, i.e.\ the diagram
\begin{flalign}\label{diag:relmorph}
\xymatrix{
\ar[dr]_-{\mathrm{pr}_{\pt_n\times M,\pt_n}~~~} \pt_n\times M \ar[rr]^-{\chi} && \pt_n\times M^\prime \ar[dl]^-{~~~\mathrm{pr}_{\pt_n\times M^\prime,\pt_n}}\\
&\pt_n&
}
\end{flalign}
in $\SMan$ commutes.
Explicitly, a $\SMan$-morphism $\chi : \pt_n\times M \to \pt_n\times M^\prime$ is a $\SMan/\pt_n$-morphism
if and only if $\chi^\ast (\zeta\otimes \1) = \zeta\otimes \1$, for all $\zeta\in\Lambda_n$.
Notice that the identity $\id_{\pt_n\times M} : M/\pt_n\to M/\pt_n$ is a $\SMan/\pt_n$-morphism
and that two $\SMan/\pt_n$-morphisms  $\chi : M/\pt_n\to M^\prime/\pt_n$
and $\chi^\prime : M^{\prime}/\pt_n\to M^{\prime\prime}/\pt_n$ can be composed,
i.e.\ $\chi^\prime \circ \chi : M/\pt_n\to M^{\prime\prime}/\pt_n$ is a $\SMan/\pt_n$-morphism. 
Using the defining property \eqref{diag:relmorph}, the set of all $\SMan/\pt_n$-morphisms
 $\chi : M/\pt_n\to M^\prime/\pt_n$ can be easily characterized.
\begin{lem}\label{lem:SManptcharacterization}
Let $M$ and $M^\prime$ be any two objects in $\SMan$  and $\pt_n$ any object in $\SPt^\op$. Then 
the map 
\begin{flalign}
\nn \alpha_{\pt_n} : \Hom_{\SMan/\pt_n} (M/\pt_n,M^\prime/\pt_n) &\longrightarrow  \Hom_{\SMan}(\pt_n\times M,M^\prime)~,\\
\big(\chi : \pt_n\times M\to\pt_n\times M^\prime\big ) &\longmapsto  \big(\mathrm{pr}_{\pt_n\times M^\prime,M^\prime}\circ \chi: \pt_n\times M \to  M^\prime\big)~,
\end{flalign}
is a bijection of sets, where $\mathrm{pr}_{\pt_n\times M^\prime,M^\prime} : \pt_n\times M^\prime\to M^\prime$
denotes the projection $\SMan$-morphism on the second factor. In fact, its inverse is given by
\begin{flalign}
\nn \alpha_{\pt_n}^{-1} :  \Hom_{\SMan}(\pt_n\times M,M^\prime)&\longrightarrow \Hom_{\SMan/\pt_n} (M/\pt_n,M^\prime/\pt_n) ~,\\
 \big(\psi : \pt_n\times M \to  M^\prime\big)  &\longmapsto 
 \big((\id_{\pt_n},\psi) : \pt_n\times M\to\pt_n\times M^\prime\big )~.
\end{flalign}
\end{lem}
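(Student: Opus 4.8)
The plan is to reduce the statement to the fact that the bifunctor $\times : \SMan\times\SMan\to\SMan$ realizes finite categorical products in $\SMan$ (see \cite[Chapter 4.5]{Carmeli}). Concretely, for any object $N$ in $\SMan$ a $\SMan$-morphism $\phi : N\to\pt_n\times M^\prime$ is uniquely determined by the pair of its components $\big(\mathrm{pr}_{\pt_n\times M^\prime,\pt_n}\circ\phi,\ \mathrm{pr}_{\pt_n\times M^\prime,M^\prime}\circ\phi\big)$, and conversely every pair $(\phi_1 : N\to\pt_n,\ \phi_2 : N\to M^\prime)$ of $\SMan$-morphisms arises from a unique such $\phi$, which we denote $(\phi_1,\phi_2)$. (If one prefers not to invoke categorical products abstractly, the same fact can be obtained directly on global sections, using that a $\SMan$-morphism is determined by its pullback on $\OO(\,\cdot\,)$ \cite[Proposition 4.6.1]{Carmeli} together with $\OO(\pt_n\times M^\prime) = \Lambda_n\,\widehat{\otimes}\,\OO(M^\prime)$.)

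First I would apply this with $N = \pt_n\times M$ to identify $\Hom_{\SMan}(\pt_n\times M,\pt_n\times M^\prime)$ with the set of pairs $(\phi_1,\phi_2)$, where $\phi_1 : \pt_n\times M\to\pt_n$ and $\phi_2 : \pt_n\times M\to M^\prime$. By the commutativity of the diagram \eqref{diag:relmorph}, such a $\chi$ is a $\SMan/\pt_n$-morphism precisely when its first component $\phi_1$ equals the projection $\mathrm{pr}_{\pt_n\times M,\pt_n}$. Hence $\Hom_{\SMan/\pt_n}(M/\pt_n,M^\prime/\pt_n)$ is in bijection with $\Hom_{\SMan}(\pt_n\times M,M^\prime)$ via $\chi = (\mathrm{pr}_{\pt_n\times M,\pt_n},\phi_2)\mapsto\phi_2$, and $\phi_2 = \mathrm{pr}_{\pt_n\times M^\prime,M^\prime}\circ\chi = \alpha_{\pt_n}(\chi)$ is exactly the map in the statement.

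Next I would check that the stated formula for $\alpha_{\pt_n}^{-1}$ is a genuine two-sided inverse. Given $\psi : \pt_n\times M\to M^\prime$, the universal property produces the unique $\SMan$-morphism $(\mathrm{pr}_{\pt_n\times M,\pt_n},\psi) : \pt_n\times M\to\pt_n\times M^\prime$ — this is what the shorthand $(\id_{\pt_n},\psi)$ in the statement abbreviates — and its first component being the projection means exactly that \eqref{diag:relmorph} commutes, so it is a $\SMan/\pt_n$-morphism. Then $\alpha_{\pt_n}\big((\mathrm{pr}_{\pt_n\times M,\pt_n},\psi)\big) = \mathrm{pr}_{\pt_n\times M^\prime,M^\prime}\circ(\mathrm{pr}_{\pt_n\times M,\pt_n},\psi) = \psi$ by construction, giving $\alpha_{\pt_n}\circ\alpha_{\pt_n}^{-1} = \id$. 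Conversely, for a $\SMan/\pt_n$-morphism $\chi$ the morphisms $\chi$ and $(\mathrm{pr}_{\pt_n\times M,\pt_n},\alpha_{\pt_n}(\chi))$ have the same two components — the first by \eqref{diag:relmorph}, the second by the definition of $\alpha_{\pt_n}$ — so by uniqueness in the universal property they coincide, yielding $\alpha_{\pt_n}^{-1}\circ\alpha_{\pt_n} = \id$.

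I do not expect a serious obstacle: the only slightly delicate input is that $\times$ really is the categorical product in $\SMan$ (equivalently, that morphisms into $\pt_n\times M^\prime$ are detected by and freely built from their two projection components), which is standard and available from \cite[Chapter 4.5]{Carmeli}; once this is in hand, the entire argument is formal bookkeeping with the projections $\mathrm{pr}_{\pt_n\times M^\prime,\pt_n}$ and $\mathrm{pr}_{\pt_n\times M^\prime,M^\prime}$ and the defining diagram \eqref{diag:relmorph}.
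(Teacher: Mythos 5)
Your argument is correct and is essentially the one the paper intends: the lemma is stated without proof, prefaced only by the remark that it follows ``using the defining property \eqref{diag:relmorph}'', and your reduction to the universal property of the categorical product in $\SMan$ (morphisms into $\pt_n\times M^\prime$ are freely and uniquely determined by their two projection components, with the relative condition pinning the first component to $\mathrm{pr}_{\pt_n\times M,\pt_n}$) is precisely that easy check, carried out carefully in both directions. Your reading of the shorthand $(\id_{\pt_n},\psi)$ as the morphism with components $(\mathrm{pr}_{\pt_n\times M,\pt_n},\psi)$ is also the intended one, so nothing is missing.
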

\sk

We next show that the assignment $\pt_n \mapsto \Hom_{\SMan/\pt_n}(M/\pt_n,M'/\pt_n)$ 
defines a functor $\SPt^{\op} \to \Set$, which is basically the functor 
used in \cite{SachseDiss,FH} to define super-mapping spaces. 
Given any two objects $M$ and $M^\prime$ in $\SMan$ and any
$\SPt^\op$-morphism $\lambda^\op : \pt_n\to \pt_m$ (i.e.\ a $\SPt$-morphism
$\lambda : \pt_m\to \pt_n$) we can define a map of sets
\begin{flalign}
\nn \Hom_{\SMan}(\pt_n\times M,M^\prime)&\longrightarrow \Hom_{\SMan}(\pt_m\times M,M^\prime)~,\\
\big(\psi : \pt_n\times M \to M^\prime\big) &\longmapsto \big(\psi\circ (\lambda\times\id_{M}) : \pt_m\times M\to M^\prime\big)~.
\end{flalign}
Using also Lemma \ref{lem:SManptcharacterization} we obtain a map of sets
\begin{flalign}
\nn \lambda^\op_\ast : \Hom_{\SMan/\pt_n} (M/\pt_n,M^\prime/\pt_n)  &\longrightarrow \Hom_{\SMan/\pt_m} (M/\pt_m,M^\prime/\pt_m) ~,\\
\chi &\longmapsto  \alpha_{\pt_m}^{-1}\left(\alpha_{\pt_n}(\chi)\circ (\lambda\times\id_{M})\right)~,\label{eqn:lambdaopast}
\end{flalign}
which describes how relative $\SMan$-morphisms behave under the exchange of superpoints.
The following  properties can be easily derived from (\ref{eqn:lambdaopast}). 
We therefore can omit the proof.
\begin{lem}\label{lem:lambdaopast}
\begin{itemize}
\item[(i)] For any identity $\SPt^\op$-morphism $\lambda^\op = \id_{\pt_n} : \pt_n\to\pt_n$
the map $\lambda^\op_\ast$ is the identity. For any two $\SPt^\op$-morphisms $\lambda^\op : \pt_n\to\pt_m$ and
 $\lambda^{\prime\op} : \pt_m\to\pt_l$ 
we have that $(\lambda^{\prime\op}\circ^\op \lambda^\op)_\ast =\lambda^{\prime\op}_\ast\circ \lambda^{\op}_\ast $.
\item[(ii)] $\lambda^\op_\ast$ preserves identities and compositions, i.e.\
\begin{flalign}
\lambda^\op_\ast(\id_{\pt_n\times M}) = \id_{\pt_m\times M}~,\qquad \lambda^\op_\ast(\chi^\prime\circ \chi) = \lambda^\op_\ast(\chi^\prime)\circ \lambda^\op_\ast(\chi) ~,
\end{flalign}
for all objects $M$ in $\SMan$ and 
all $\SMan/\pt_n$-morphisms $\chi : M/\pt_n\to M^\prime/\pt_n$ and
$\chi^\prime : M^\prime/\pt_n \to M^{\prime\prime}/\pt_n$.
\item[(iii)] $\lambda^\op_\ast$ preserves isomorphisms, i.e.\ 
$\chi : M/\pt_n\to M^\prime/\pt_n$ is a $\SMan/\pt_n$-isomorphism if and only if 
$\lambda^\op_\ast(\chi) : M/\pt_m\to M^\prime/\pt_m$ is a
$\SMan/\pt_m$-isomorphism.
\end{itemize}
\end{lem}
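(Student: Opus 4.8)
The plan is to obtain all three items from a single commutativity identity. Write $\lambda:\pt_m\to\pt_n$ for the $\SMan$-morphism underlying the $\SPt^\op$-morphism $\lambda^\op:\pt_n\to\pt_m$. Unwinding the definition of $\lambda^\op_\ast$ in (\ref{eqn:lambdaopast}) using the explicit formulas for $\alpha_{\pt_n}$ and $\alpha_{\pt_m}^{-1}$ from Lemma \ref{lem:SManptcharacterization}, one first records the reformulation $\alpha_{\pt_m}\big(\lambda^\op_\ast(\chi)\big)=\alpha_{\pt_n}(\chi)\circ(\lambda\times\id_M)$, and from it the commuting square
\[
(\lambda\times\id_{M'})\circ\lambda^\op_\ast(\chi)=\chi\circ(\lambda\times\id_M)
\]
of $\SMan$-morphisms $\pt_m\times M\to\pt_n\times M'$. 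I would prove this square by comparing the two components of each side into the product $\pt_n\times M'$: the component into $\pt_n$ equals $\lambda\circ\mathrm{pr}_{\pt_m\times M,\pt_m}$ on both sides because $\chi$ and $\lambda^\op_\ast(\chi)$ restrict to the identity on the superpoint factor, i.e.\ they are $\SMan/\pt_n$- resp.\ $\SMan/\pt_m$-morphisms (property (\ref{diag:relmorph})), while the component into $M'$ equals $\alpha_{\pt_n}(\chi)\circ(\lambda\times\id_M)$ on both sides by the reformulation just noted. This is the only place where (\ref{diag:relmorph}) is really used.

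Granting the square, items (i) and (ii) are pure bookkeeping. Since each $\alpha_{\pt_k}$ is a bijection it suffices to check the asserted equalities after applying $\alpha$. For (i): if $\lambda^\op=\id_{\pt_n}$ then $\lambda\times\id_M=\id_{\pt_n\times M}$ and the reformulation immediately gives $\lambda^\op_\ast(\chi)=\chi$; for the cocycle property I would use bifunctoriality of $\times:\SMan\times\SMan\to\SMan$ in the form $(\lambda\circ\lambda')\times\id_M=(\lambda\times\id_M)\circ(\lambda'\times\id_M)$, together with the observation that $\lambda^{\prime\op}\circ^\op\lambda^\op$ corresponds to the $\SMan$-morphism $\lambda\circ\lambda'$. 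For (ii), $\lambda^\op_\ast(\id)=\id$ follows the same way from $\mathrm{pr}_{\pt_n\times M,M}\circ(\lambda\times\id_M)=\mathrm{pr}_{\pt_m\times M,M}$, and $\lambda^\op_\ast(\chi'\circ\chi)=\lambda^\op_\ast(\chi')\circ\lambda^\op_\ast(\chi)$ is obtained by applying $\alpha_{\pt_m}$, rewriting $\alpha_{\pt_n}(\chi'\circ\chi)=\alpha_{\pt_n}(\chi')\circ\chi$, then pulling $\chi\circ(\lambda\times\id_M)$ through the square as $(\lambda\times\id_{M'})\circ\lambda^\op_\ast(\chi)$ and recognising the result as $\alpha_{\pt_m}$ of $\lambda^\op_\ast(\chi')\circ\lambda^\op_\ast(\chi)$.

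Item (iii) is the only part that needs genuine supergeometry. The implication ``$\chi$ iso $\Rightarrow\lambda^\op_\ast(\chi)$ iso'' is immediate from (ii): $\lambda^\op_\ast(\chi^{-1})$ is then a two-sided inverse of $\lambda^\op_\ast(\chi)$. For the converse I would invoke the standard characterization (cf.\ \cite{Carmeli}) that a $\SMan$-morphism $\phi$ is an isomorphism precisely when $\und\phi$ is a diffeomorphism and its super-differential $\dd\phi_p$ (the induced map on tangent super-vector spaces) is invertible at every point $p$ of the reduced manifold, noting that a $\SMan/\pt_n$-morphism is a $\SMan/\pt_n$-isomorphism if and only if it is an isomorphism in $\SMan$ (the inverse then automatically respecting the projection). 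The commuting square gives $\und{\lambda^\op_\ast(\chi)}=\und\chi$, since the reduced map of $\lambda\times\id$ is the identity. Moreover, because $\chi$ and $\lambda^\op_\ast(\chi)$ are identities on the superpoint directions, their super-differentials are block triangular with respect to the splitting of the tangent super-vector space of $\pt_n\times M$ (resp.\ $\pt_m\times M$) at $p$ into the superpoint directions and the $M$-directions, and the square identifies the ``$M$-blocks'' of $\dd\chi_p$ and $\dd(\lambda^\op_\ast\chi)_p$. Hence invertibility of the super-differential transfers in both directions, and the equivalence follows.

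The only real obstacle, therefore, sits in item (iii): one must use that isomorphisms of (relative) supermanifolds are detected by the reduced map together with the super-differential, and one must check that changing the superpoint base leaves these two ingredients intact — the latter resting on the block-triangular shape of $\dd\chi_p$ forced by (\ref{diag:relmorph}), which is exactly what makes $\lambda^\op_\ast$ reflect invertibility even when $\lambda$ is far from being an isomorphism. Items (i) and (ii) are, by contrast, routine consequences of the commuting square, bijectivity of $\alpha$, and bifunctoriality of $\times$.
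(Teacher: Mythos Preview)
The paper does not actually give a proof of this lemma; after stating it, the authors write that ``the following properties can be easily derived from (\ref{eqn:lambdaopast})'' and omit the details. Your proposal is therefore supplying a proof rather than competing with one.

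Your treatment of (i) and (ii) via the commuting square $(\lambda\times\id_{M'})\circ\lambda^\op_\ast(\chi)=\chi\circ(\lambda\times\id_M)$, together with bijectivity of the $\alpha_{\pt_k}$ and bifunctoriality of $\times$, is correct and is precisely the routine unwinding the authors presumably had in mind. For (iii), the forward implication is immediate from (ii), as you say. Your argument for the converse --- using the characterization of $\SMan$-isomorphisms via the reduced map and the super-differential, observing that $d\chi_p$ is block triangular with identity on the superpoint block because of (\ref{diag:relmorph}), and identifying the $M$-blocks of $d\chi_p$ and $d(\lambda^\op_\ast\chi)_p$ through the square --- is correct. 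A slightly cleaner packaging of the same idea: the $M$-block of $d\chi_p$ is nothing but $d\big((\tau_n^\op)_\ast\chi\big)_p$, where $\tau_n^\op:\pt_n\to\pt_0$ is the unique $\SPt^\op$-morphism to the terminal object $\pt_0$, so that $\chi$ is a $\SMan/\pt_n$-isomorphism if and only if $(\tau_n^\op)_\ast\chi$ is an ordinary $\SMan$-isomorphism; since $\tau_m^\op\circ^\op\lambda^\op=\tau_n^\op$, item (i) then gives $(\tau_n^\op)_\ast=(\tau_m^\op)_\ast\circ\lambda^\op_\ast$ and the biconditional follows. Either way, the supergeometric input you single out --- the inverse function theorem for supermanifolds --- is indeed what is needed, and your identification of it as the only non-formal step is accurate.
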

\sk

We shall also require a relative notion of differential geometry on $\pt_n$-relative 
supermanifolds $M/\pt_n$. Let $\Der_{\pt_n\times M}$ be the superderivation sheaf of the product supermanifold $\pt_n\times M$
and $U\subseteq \und{M}$ be any open subset. A superderivation 
$X\in \Der_{\pt_n\times M}(U)$ is called a {\em $\pt_n$-relative superderivation} 
provided that $X(\zeta\otimes \1) =0$, for all $\zeta \in \Lambda_n$.
The $\pt_n$-relative superderivations form a subsheaf $\Der_{M/\pt_n}$ of left
$\OO_{\pt_n\times M}$-supermodules of $\Der_{\pt_n\times M}$,
which is isomorphic to $\Lambda_n\otimes \Der_{M}$. 
The dual $\Omega^1_{M/\pt_n}:= \underline{\Hom}_{\OO_{\pt_n\times M}}(\Der_{M/\pt_n},\OO_{\pt_n\times M})$ 
of the $\pt_n$-relative superderivation sheaf $\Der_{M/\pt_n}$
is called the {\em $\pt_n$-relative super-one-form sheaf} and it is isomorphic
to $\Lambda_n\otimes \Omega^1_M$. The $\pt_n$-relative differential
$\dd_{M/\pt_n} : \OO_{\pt_n\times M}  \to \Omega^1_{M/\pt_n} $ is defined as in the non-relative case
 and it can be identified with $\id_{\Lambda_n}\otimes \dd : \Lambda_n\otimes \OO_{M} \to \Lambda_n\otimes \Omega^1_M$,
where $\dd :  \OO_{M}\to \Omega^1_M$ is the usual differential. Loosely speaking, we obtain 
the $\pt_n$-relative geometric objects on $M/\pt_n$ by $\Lambda_n$-superlinear extension of the ones on $M$. 
As a consequence, given any object $\bbM = (M,\Omega,E)$ in  $\SLoc$
and any object $\pt_n$ in $\SPt^\op$,
we can assign a $\pt_n$-relative supermanifold $M/\pt_n$ together with
$\pt_n$-relative super-one-forms $\1\otimes \Omega \in  \Lambda_n\otimes \Omega^1(M,\mathfrak{spin})
\simeq \Omega^1(M/\pt_n,\mathfrak{spin})$ and
$\1\otimes E \in  \Lambda_n\otimes \Omega^1(M,\mathfrak{t})
\simeq \Omega^1(M/\pt_n,\mathfrak{t})$.
\sk

With these preparations we can now define the $\SSet$-category
$\eSLoc$.
\begin{defi}\label{defi:eSLoc}
The $\SSet$-category $\eSLoc$ is given by the following data:
\begin{itemize}
\item The objects are all objects $\bbM = (M,\Omega,E)$ in $\SLoc$.

\item For any two objects $\bbM$ and $\bbM^\prime$ in $\eSLoc$, 
the object of morphisms from $\bbM$ to $\bbM^\prime$ is given by the following functor
$\eSLoc(\bbM,\bbM^\prime) : \SPt^\op \to \Set$:
For any object $\pt_n$ in $\SPt^\op$  we define $\eSLoc(\bbM,\bbM^\prime) (\pt_n)$ to be the
set of all $\SMan/\pt_n$-morphisms $\chi : M/\pt_n\to M^\prime/\pt_n$, such that
\begin{enumerate}
\item $\und{\chi}: \und{M}\to \und{M^\prime} $ is an open embedding with 
causally compatible image in the reduced oriented and time-oriented Lorentz manifold $\und{\bbM^\prime}$,
\item $\chi :  M/\pt_n\to M^\prime\vert_{\und{\chi}(\und{M})}/\pt_n$ is a $\SMan/\pt_n$-isomorphism,
\item the $\pt_n$-relative super-Cartan structures are preserved, i.e.\ $\chi^\ast(\1\otimes \Omega^\prime) = \1\otimes \Omega$
and $\chi^\ast(\1\otimes E^\prime) = 1\otimes E$.
\end{enumerate}
For any $\SPt^\op$- morphism $\lambda^\op : \pt_n\to \pt_m$ 
we define the map of sets 
\begin{flalign}
\eSLoc(\bbM,\bbM^\prime) (\lambda^\op)  :=\lambda^\op_\ast : 
\eSLoc(\bbM,\bbM^\prime) (\pt_n)\longrightarrow \eSLoc(\bbM,\bbM^\prime) (\pt_m)~,
\end{flalign}
where $ \lambda^\op_\ast$ is given in (\ref{eqn:lambdaopast}).

\item For any three objects $\bbM$, $\bbM^\prime$ and $\bbM^{\prime\prime}$ 
in $\eSLoc$, we define the composition morphism $\bullet : \eSLoc(\bbM^\prime,\bbM^{\prime\prime})
\times \eSLoc(\bbM,\bbM^{\prime})\to \eSLoc(\bbM,\bbM^{\prime\prime})$ to be the natural transformation
with components 
\begin{flalign}
\nn\bullet_{\pt_n} :=\circ : \eSLoc(\bbM^\prime,\bbM^{\prime\prime})(\pt_n) \times \eSLoc(\bbM,\bbM^{\prime})(\pt_n) 
&\longrightarrow \eSLoc(\bbM,\bbM^{\prime\prime})(\pt_n)~,
\end{flalign}
where $\circ$ is the composition of $\SMan/\pt_n$-morphisms.

\item For any object $\bbM$ in $\eSLoc$, we define the identity on $\bbM$ morphism
$\oone : \mathfrak{I} \to \eSLoc(\bbM,\bbM)$ to be the natural transformation with components
\begin{flalign}
\oone_{\pt_n} : \pt = \mathfrak{I}(\pt_n)  \longrightarrow \eSLoc(\bbM,\bbM)(\pt_n)~,~~\star\longmapsto  \id_{\pt_n\times M}~,
\end{flalign}
where $\id_{\pt_n\times M}$ is the identity  $\SMan/\pt_n$-morphism.
\end{itemize}
\end{defi}

\begin{rem}
Using Lemma \ref{lem:lambdaopast} one can easily see that $\eSLoc$ 
is a $\SSet$-category according to Definition \ref{defi:enrichedcategory}:
Lemma \ref{lem:lambdaopast} (iii) implies that the map of sets $\eSLoc(\bbM,\bbM^\prime)(\lambda^\op)$
is well-defined, i.e.\ that it has the claimed codomain, and Lemma \ref{lem:lambdaopast} (i) implies that 
$\eSLoc(\bbM,\bbM^\prime) : \SPt^\op \to \Set$ is a functor. The composition $\bullet$ and identity
$\oone$ are natural transformations because of Lemma \ref{lem:lambdaopast} (ii). Finally,
the diagrams in Definition \ref{defi:enrichedcategory} commute because of the associativity and identity property
of the composition $\circ$ of $\SMan/\pt_n$-morphisms.
\end{rem}

\subsection{\label{ssec:eOO}The \texorpdfstring{$\SSet$}{SSet}-functor \texorpdfstring{$\eOO : \eSLoc^\op\to\eSVec$}{eOO : eSLocop -> eSVec}}
Our next goal is to show that the ordinary global section functor $\OO : \SLoc^\op \to\SVec$
can be promoted to a $\SSet$-functor $\eOO : \eSLoc^\op\to \eSVec$ with values in the $\SSet$-category
$\eSVec$ of super-vector spaces. For defining the latter $\SSet$-category we will first 
discuss how to promote super-vector spaces to (left) supermodules over Grassmann algebras. 
This procedure is  known as ``extension of ring of scalars'' and discussed in detail in \cite[Chapter II.5]{BourbakiA1}.
\sk

Given any object $V$ in $\SVec$ and any object $\pt_n$ in $\SPt^\op$,
we can consider the left $\Lambda_n$-supermodule $\Lambda_n\otimes V$.
Given any two objects $V$ and $V^\prime$ in $\SVec$, the set of $\Lambda_n\text{-}\Mod$-morphisms 
$L : \Lambda_n\otimes V\to \Lambda_n\otimes V^\prime$ can be easily characterized.
\begin{lem}\label{lem:SVecptcharacterization}
Let $V$ and $V^\prime$ be any two objects in $\SVec$  and $\pt_n$ any object in $\SPt^\op$. 
Then the map 
\begin{flalign}
\nn \beta_{\pt_n} : \Hom_{\Lambda_n\text{-}\Mod} (\Lambda_n\otimes V,\Lambda_n\otimes V^\prime) &\longrightarrow  \Hom_{\SVec}(V,\Lambda_n\otimes V^\prime)~,\\
\big( L : \Lambda_n\otimes V\to \Lambda_n\otimes V^\prime \big) 
&\longmapsto  \big( L \circ (\eta_n\otimes \id_{V}) : V\simeq \bbR\otimes V\to \Lambda_n\otimes V^\prime \big)~,
\end{flalign}
is a bijection of sets, where $\eta_n : \bbR\to \Lambda_n$ denotes the unit in $\Lambda_n$.
\end{lem}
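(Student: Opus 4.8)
The statement is the familiar ``free module = extension of scalars'' adjunction, adapted to supermodules, so the strategy is simply to write down the inverse of $\beta_{\pt_n}$ explicitly. Regard $\Lambda_n\otimes V$ as a left $\Lambda_n$-supermodule with action $l_{\Lambda_n\otimes V}$ induced by $\mu_{\Lambda_n}$ (after the evident reshuffling of tensor factors), and likewise for $\Lambda_n\otimes V^\prime$. The $\SVec$-morphism $\eta_n\otimes\id_V : V\simeq\bbR\otimes V\to\Lambda_n\otimes V$, $v\mapsto\1\otimes v$, exhibits $\Lambda_n\otimes V$ as the free left $\Lambda_n$-supermodule generated by $V$, and $\beta_{\pt_n}$ is nothing but precomposition with this universal map. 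I would therefore define
\begin{flalign}
\gamma_{\pt_n} : \Hom_{\SVec}(V,\Lambda_n\otimes V^\prime)\longrightarrow \Hom_{\Lambda_n\text{-}\Mod}(\Lambda_n\otimes V,\Lambda_n\otimes V^\prime)~,\quad \phi\longmapsto l_{\Lambda_n\otimes V^\prime}\circ(\id_{\Lambda_n}\otimes\phi)~,
\end{flalign}
i.e.\ $\gamma_{\pt_n}(\phi)$ sends $\zeta\otimes v$ to $\zeta\,\phi(v)$, and then show that $\gamma_{\pt_n}$ is a two-sided inverse of $\beta_{\pt_n}$.

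The key steps are the following. First, $\gamma_{\pt_n}(\phi)$ is a well-defined $\SVec$-morphism, being the composite of the $\SVec$-morphisms $\id_{\Lambda_n}\otimes\phi$ and $l_{\Lambda_n\otimes V^\prime}$; in particular it is $\bbZ_2$-parity preserving because $\phi$ is. Second, $\gamma_{\pt_n}(\phi)$ is left $\Lambda_n$-linear, i.e.\ $l_{\Lambda_n\otimes V^\prime}\circ(\id_{\Lambda_n}\otimes\gamma_{\pt_n}(\phi))=\gamma_{\pt_n}(\phi)\circ l_{\Lambda_n\otimes V}$; this is exactly the associativity square in the definition of a left $\Lambda_n$-supermodule together with associativity of $\mu_{\Lambda_n}$, and on elements it reads $\zeta^\prime\,(\zeta\,\phi(v))=(\zeta^\prime\zeta)\,\phi(v)$.

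Next I would check the two composites. For $\beta_{\pt_n}\circ\gamma_{\pt_n}=\id$: the map $\beta_{\pt_n}(\gamma_{\pt_n}(\phi))=\gamma_{\pt_n}(\phi)\circ(\eta_n\otimes\id_V)$ sends $v$ to $\1\,\phi(v)=\phi(v)$ by the unit axiom of the left $\Lambda_n$-action. For $\gamma_{\pt_n}\circ\beta_{\pt_n}=\id$: given $L\in\Hom_{\Lambda_n\text{-}\Mod}(\Lambda_n\otimes V,\Lambda_n\otimes V^\prime)$ and a homogeneous element $\zeta\otimes v=\zeta\,(\1\otimes v)$ one computes
\begin{flalign}
\gamma_{\pt_n}\big(\beta_{\pt_n}(L)\big)(\zeta\otimes v)=\zeta\,\big(L(\1\otimes v)\big)=L\big(\zeta\,(\1\otimes v)\big)=L(\zeta\otimes v)~,
\end{flalign}
where the middle equality is left $\Lambda_n$-linearity of $L$; since the elements $\zeta\otimes v$ span $\Lambda_n\otimes V$ over $\bbR$, this gives $\gamma_{\pt_n}(\beta_{\pt_n}(L))=L$. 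Hence $\beta_{\pt_n}$ is a bijection with inverse $\gamma_{\pt_n}$.

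The only mild obstacle is conventions: one must use the precise definition of a left $A$-supermodule morphism recalled in Section~\ref{sec:prelim} (no Koszul sign, $l_{V^\prime}\circ(\id_A\otimes L)=L\circ l_V$) and keep track of the identification $\bbR\otimes V\simeq V$ and of the unit $\eta_n:\bbR\to\Lambda_n$. Once these are fixed, everything above is immediate from the associativity and unit axioms of supermodules, and no subtlety with parities arises because $\phi$, $L$ and all structure maps are $\bbZ_2$-parity preserving. (One may optionally note that $\beta_{\pt_n}$ is natural in $V$ and $V^\prime$, which will be convenient when upgrading $\OO$ to the enriched functor $\eOO$, but this is not needed for the bijectivity claim.)
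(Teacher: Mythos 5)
Your proof is correct and follows essentially the same route as the paper: the paper also simply exhibits the inverse map $K\mapsto\big(\zeta\otimes v\mapsto(\zeta\otimes\1)\,K(v)\big)$, which is exactly your $\gamma_{\pt_n}$. You additionally spell out the verification that the two composites are identities and that $\gamma_{\pt_n}(\phi)$ is left $\Lambda_n$-linear, details the paper leaves to the reader.
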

\begin{proof}
The map $\beta_{\pt_n}$ is invertible by assigning to any $\SVec$-morphism $K : V\to \Lambda_n\otimes V^\prime$
the $\Lambda_n\text{-}\Mod$-morphism $\beta_{\pt_n}^{-1} (K) : \Lambda_n\otimes V\to \Lambda_n\otimes V^\prime$
that is specified by $\beta_{\pt_n}^{-1} (K)(\zeta\otimes v) := (\zeta\otimes \1)\,K(v)$, for all $\zeta\otimes v\in\Lambda_n\otimes V$,
and $\bbR$-linear extension.
\end{proof}
\sk

Given any two objects $V$ and $V^\prime$ in $\SVec$ and any $\SPt^\op$-morphism $\lambda^\op : \pt_n\to\pt_m$ 
(i.e.\ a $\SPt$-morphism $\lambda : \pt_m\to\pt_n$)
we can define a map of sets
\begin{flalign}
\Hom_{\SVec}(V,\Lambda_n\otimes V^\prime) \longrightarrow \Hom_{\SVec}(V,\Lambda_m\otimes V^\prime)~,~~ K\longmapsto (\lambda^\ast\otimes \id_{V^\prime})\circ K~.
\end{flalign}
Using also Lemma \ref{lem:SVecptcharacterization} we obtain a map of sets
\begin{flalign}
\nn \lambda^\op_\ast : \Hom_{\Lambda_n\text{-}\Mod} (\Lambda_n\otimes V,\Lambda_n\otimes V^\prime) &\longrightarrow
\Hom_{\Lambda_m\text{-}\Mod} (\Lambda_m\otimes V,\Lambda_m\otimes V^\prime) ~,\\
L&\longmapsto \beta_{\pt_m}^{-1}\left((\lambda^\ast\otimes \id_{V^\prime}) \circ \beta_{\pt_n}(L)\right)~.
\label{eqn:lambdaopastSVec}
\end{flalign}
The following properties can be easily derived from (\ref{eqn:lambdaopastSVec}), see also \cite{BourbakiA1}. 
We therefore can omit the proof.
\begin{lem}\label{lem:lambdaopastSVec}
\begin{itemize}
\item[(i)] For any identity $\SPt^\op$-morphism $\lambda^\op = \id_{\pt_n} : \pt_n\to\pt_n$
the map $\lambda^\op_\ast$ is the identity. For any two $\SPt^\op$-morphisms $\lambda^\op : \pt_n\to\pt_m$ and
 $\lambda^{\prime\op} : \pt_m\to\pt_l$ 
we have that $(\lambda^{\prime\op}\circ^\op \lambda^\op)_\ast =\lambda^{\prime\op}_\ast\circ \lambda^{\op}_\ast $.
\item[(ii)] $\lambda^\op_\ast$ preserves identities and compositions, i.e.\
\begin{flalign}
\lambda^\op_\ast(\id_{\Lambda_n\otimes V}) = \id_{\Lambda_m\otimes V}~,\qquad \lambda^\op_\ast(L^\prime\circ L) = \lambda^\op_\ast(L^\prime)\circ \lambda^\op_\ast(L) ~,
\end{flalign}
for all objects $V$ in $\SVec$ and 
all $\Lambda_n\text{-}\Mod$-morphisms $L : \Lambda_n\otimes V\to\Lambda_n\otimes V^\prime$ 
and $L^\prime : \Lambda_n\otimes V^{\prime} \to\Lambda_{n}\otimes V^{\prime\prime}$.
\item[(iii)] $\lambda^\op_\ast$ preserves isomorphisms, i.e.\ 
$L : \Lambda_n\otimes V\to\Lambda_n\otimes V^\prime$ is a $\Lambda_n\text{-}\Mod$-isomorphism if and only if 
$\lambda^\op_\ast(L) : \Lambda_m\otimes V\to\Lambda_m\otimes V^\prime$ is a
$\Lambda_m\text{-}\Mod$-isomorphism.
\end{itemize}
\end{lem}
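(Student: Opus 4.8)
The plan is to identify the operation $\lambda^\op_\ast$ of \eqref{eqn:lambdaopastSVec} with a base change of supermodules and deduce everything from standard properties of extension of scalars (cf.\ \cite{BourbakiA1}), the one exception being the non-trivial half of item~(iii), which needs an extra input. Recall that $\Lambda_n\otimes V$ is the extension of scalars of $V$ along the unit $\eta_n:\bbR\to\Lambda_n$ and that $\beta_{\pt_n}$ is precisely the corresponding adjunction bijection $\Hom_{\Lambda_n\text{-}\Mod}(\Lambda_n\otimes V,W)\cong\Hom_{\SVec}(V,W)$ with $W=\Lambda_n\otimes V^\prime$. A $\SPt$-morphism $\lambda:\pt_m\to\pt_n$ induces a superalgebra morphism $\lambda^\ast:\Lambda_n\to\Lambda_m$ on global sections, making $\Lambda_m$ a $\Lambda_n$-algebra, and under the canonical $\Lambda_m\text{-}\Mod$-isomorphism $\Lambda_m\otimes_{\Lambda_n}(\Lambda_n\otimes V)\cong\Lambda_m\otimes V$ the map $\lambda^\op_\ast(L)$ becomes the base change $\id_{\Lambda_m}\otimes_{\Lambda_n}L$; I would verify this identification just by tracking $\1\otimes v$ through both sides, the key formula being $\beta_{\pt_m}\big(\lambda^\op_\ast(L)\big)=(\lambda^\ast\otimes\id_{V^\prime})\circ\beta_{\pt_n}(L)$.

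Given this, item~(i) is functoriality of base change in the ring argument. For $\lambda^\op=\id_{\pt_n}$ one has $\lambda^\ast=\id_{\Lambda_n}$, whence $\lambda^\op_\ast(L)=\beta_{\pt_n}^{-1}\big(\beta_{\pt_n}(L)\big)=L$; for a composite $\lambda^{\prime\op}\circ^\op\lambda^\op$ (corresponding to $\lambda\circ\lambda^\prime:\pt_l\to\pt_n$ in $\SPt$) one combines the key formula above with the contravariance $(\lambda\circ\lambda^\prime)^\ast=\lambda^{\prime\ast}\circ\lambda^\ast$. For item~(ii), preservation of identities uses $\beta_{\pt_n}(\id_{\Lambda_n\otimes V})=\eta_n\otimes\id_V$ and the unitality $\lambda^\ast\circ\eta_n=\eta_m$. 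Preservation of composition is the only substantial computation: as $\lambda^\op_\ast(L^\prime\circ L)$ and $\lambda^\op_\ast(L^\prime)\circ\lambda^\op_\ast(L)$ are both $\Lambda_m$-linear it suffices to compare them on elements $\1\otimes v$, and writing $\beta_{\pt_n}(L)(v)=\sum_i\zeta_i\otimes v_i^\prime$ and using $\Lambda_n$-linearity of $L^\prime$ to expand $L^\prime\big(\beta_{\pt_n}(L)(v)\big)$ through $\beta_{\pt_n}(L^\prime)$, both sides collapse to $\sum_i\lambda^\ast(\zeta_i)\cdot(\lambda^\ast\otimes\id_{V^{\prime\prime}})\big(\beta_{\pt_n}(L^\prime)(v_i^\prime)\big)$, where multiplicativity of $\lambda^\ast$ is what lets the prefactors $\zeta_i$ pass through $\lambda^\ast\otimes\id$. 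This is the module-level shadow of transitivity of change of scalars.

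For item~(iii) the easy direction follows from~(ii): if $L$ is a $\Lambda_n\text{-}\Mod$-isomorphism then $\lambda^\op_\ast(L)\circ\lambda^\op_\ast(L^{-1})=\lambda^\op_\ast(L\circ L^{-1})=\lambda^\op_\ast(\id_{\Lambda_n\otimes V})=\id_{\Lambda_m\otimes V}$ and symmetrically, so $\lambda^\op_\ast(L)$ is an isomorphism. The converse is the step I expect to be the main obstacle, since it cannot be extracted from \eqref{eqn:lambdaopastSVec} by algebraic manipulation and relies on the local structure of Grassmann algebras. Let $J_n\subseteq\Lambda_n$ be the ideal of nilpotents, so $\Lambda_n$ is local Artinian with $\Lambda_n/J_n=\bbR$ and $J_n^{\,n+1}=0$, and let $\overline L:V\to V^\prime$ be the map $L$ induces on bodies, i.e.\ on $(\Lambda_n\otimes V)/(J_n\otimes V)\cong V$ (this is well defined as $L$ is $\Lambda_n$-linear). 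Every superalgebra morphism $\lambda^\ast:\Lambda_n\to\Lambda_m$ satisfies $\lambda^\ast(\1)=\1$ and $\lambda^\ast(J_n)\subseteq J_m$, hence reduces to the identity $\bbR\to\bbR$ on bodies, and a short check with the key formula gives $\overline{\lambda^\op_\ast(L)}=\overline L$.

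It then remains to establish the auxiliary fact that a $\Lambda_n\text{-}\Mod$-morphism $L:\Lambda_n\otimes V\to\Lambda_n\otimes V^\prime$ between free $\Lambda_n$-supermodules is an isomorphism if and only if $\overline L:V\to V^\prime$ is an isomorphism in $\SVec$; the ``only if'' is functoriality of taking bodies. For the ``if'', observe that $L$ respects the finite, exhaustive and separated filtration by the submodules $J_n^{\,k}\otimes V$ and that the induced map on $(J_n^{\,k}/J_n^{\,k+1})\otimes V$ is $\id_{J_n^{\,k}/J_n^{\,k+1}}\otimes\overline L$, an isomorphism whenever $\overline L$ is; a downward induction on $k$ from $J_n^{\,n+1}\otimes V=0$, applying the five lemma to the short exact sequences $0\to J_n^{\,k+1}\otimes V\to J_n^{\,k}\otimes V\to (J_n^{\,k}/J_n^{\,k+1})\otimes V\to 0$, then yields that $L$ is an isomorphism (one may equally invoke a Nakayama-type argument over the local ring $\Lambda_n$). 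Putting the pieces together, if $\lambda^\op_\ast(L)$ is an isomorphism then so is $\overline L=\overline{\lambda^\op_\ast(L)}$, hence so is $L$, which completes item~(iii).
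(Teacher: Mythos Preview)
Your proof is correct. The paper itself omits the proof entirely, stating only that the properties ``can be easily derived from \eqref{eqn:lambdaopastSVec}, see also \cite{BourbakiA1},'' so your write-up supplies precisely the details left implicit there; in particular, you correctly flag that the converse implication in item~(iii) is not a purely formal consequence of the definition but rests on the fact that $\Lambda_n$ is local with nilpotent maximal ideal, and your filtration/five-lemma argument (equivalently, Nakayama) is the standard way to close that gap.
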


With these preparations we can now define the $\SSet$-category
$\eSVec$.
\begin{defi}\label{defi:eSVec}
The $\SSet$-category $\eSVec$ is given by the following data:
\begin{itemize}
\item The objects are  all objects $V$ in $\SVec$.

\item For any two objects $V$ and $V^\prime$ in $\eSVec$,
the object of morphisms from $V$ to $V^\prime$ is given by the following
functor $\eSVec(V,V^\prime) : \SPt^\op \to\Set$: For any object 
$\pt_n$ in $\SPt^\op$ we define $\eSVec(V,V^\prime) (\pt_n)$ to be the set
of all $\Lambda_n\text{-}\Mod$-morphisms $L : \Lambda_n\otimes V\to \Lambda_n\otimes V^\prime$.
For any $\SPt^\op$-morphism $\lambda^\op : \pt_n\to\pt_m$ we define the map of sets
\begin{flalign}
\eSVec(V,V^\prime)(\lambda^\op) :=\lambda^\op_\ast :  \eSVec(V,V^\prime) (\pt_n) \longrightarrow  \eSVec(V,V^\prime) (\pt_m)~,
\end{flalign}
where $\lambda^\op_\ast$  is given in (\ref{eqn:lambdaopastSVec}).

 \item For any three objects $V$, $V^{\prime}$ and $V^{\prime\prime}$ 
 in $\eSVec$, we define the composition morphism $\bullet : \eSVec(V^\prime,V^{\prime\prime})
 \times \eSVec(V,V^\prime) \to \eSVec(V,V^{\prime\prime})$ 
 to be the natural transformation with components 
 \begin{flalign}
\nn \bullet_{\pt_n} :=\circ : \eSVec(V^\prime,V^{\prime\prime})(\pt_n)
 \times \eSVec(V,V^\prime)(\pt_n) &\longrightarrow \eSVec(V,V^{\prime\prime})(\pt_n)~,
 \end{flalign}
where $\circ $ is the composition of $\Lambda_n\text{-}\Mod$-morphisms.

\item For any object $V$ in $\eSVec$, we define the identity on $V$ morphism 
$\oone : \mathfrak{I} \to \eSVec(V,V)$ to be the natural transformation with components
\begin{flalign}
\oone_{\pt_n} : \pt = \mathfrak{I}(\pt_n) \longrightarrow \eSVec(V,V)(\pt_n)~,~~\star \longmapsto \id_{\Lambda_n\otimes V}~,
\end{flalign}
where $\id_{\Lambda_n\otimes V}$ is the identity $\Lambda_n\text{-}\Mod$-morphism.
\end{itemize}
\end{defi}
\sk

\begin{rem}
Let us briefly compare our category $\eSVec$ with the functorial formulation of super(linear) algebra 
used in \cite[Section 3]{Sachse}, \cite[Section 2]{Molotkov} and \cite{Schwarz}. In the latter approach, 
one considers module objects in the monoidal category $\SSet$, i.e.\ functors $M : \SPt^\op \to \Set$ such that all $M(\pt_n)$ 
carry module structures and all $M(\lambda^\op) : M(\pt_n) \to M(\pt_m)$ are homomorphisms of modules. 
Together with appropriate morphisms in $\SSet$, they form the category $\mathrm{Mod}(\SSet)$. It is easy 
to see \cite[Section 3.4]{Sachse} that any object $V$  in $\SVec$ gives rise to an 
object $\overline{V}$ in $\mathsf{Mod}(\SSet)$ (i.e.\ a functor) defined by 
\begin{flalign}
\overline{V}(\pt_n) := (V \otimes \Lambda_n)_{0} ~~,\qquad
\overline{V}(\lambda^\op : \pt_n \to \pt_m) := (\id_V \otimes \lambda^\ast)\vert_{0}~.
\end{flalign}
Moreover, it can be shown \cite[Corollary 3.4.2]{Sachse} that the assignment 
$\overline{\,\cdot\, }: \SVec \to \mathsf{Mod}(\SSet)$ defines a fully faithful functor; 
its image $\overline{\SVec}$ consists of \emph{representable} modules. Comparing with 
Definition \ref{defi:eSVec}, we see that $\eSVec$ and $\overline{\SVec}$ have isomorphic classes 
of objects but our enriched category $\eSVec$ contains more morphisms. In fact, since $\overline{\, \cdot \, }$ is full, 
we have $\Hom_{\overline{\SVec}}(\overline{V},\overline{W}) \cong \Hom_{\SVec}(V,W) \cong \eSVec(V,W)(\pt_0)$. Thus, all 
additional information contained in $\eSVec(V,W)(\pt_n)$ for $n > 0$ is not seen in $\overline{\SVec}$ and our 
definition provides a proper enrichment of the latter category. It may be possible to
give a natural meaning to our enrichment constructions inside the functor categories
$\mathsf{Mod}(\SSet)$ (or $\mathsf{SMod}(\SSet)$), but this discussion is beyond the scope of the present publication.      
\end{rem}
\sk

We now can define a $\SSet$-functor $\eOO : \eSLoc^\op \to \eSVec$ as follows: To any
object $\bbM$ in $\eSLoc^\op$ we assign the object $\eOO(\bbM) := \OO(\bbM)$
in $\eSVec$. To any two objects $\bbM$ and $\bbM^\prime$ in $\eSLoc^\op$
we assign the $\SSet$-morphism $\eOO_{\bbM,\bbM^\prime} : \eSLoc^\op(\bbM,\bbM^\prime) \to
\eSVec(\OO(\bbM),\OO(\bbM^\prime))$ given by the natural transformation (of functors
from $\SPt^\op$ to $\Set$) with components
\begin{flalign}
(\eOO_{\bbM,\bbM^\prime})_{\pt_n} : \big(\chi : \pt_n\times M^\prime\to \pt_n \times M\big) \longmapsto \big(\chi^\ast : \Lambda_n\otimes \OO(\bbM)\to \Lambda_n\otimes \OO(\bbM^\prime)\big)~.
\end{flalign}
Naturality of these components is easily checked. We obtain
\begin{propo}
The assignment $\eOO : \eSLoc^\op\to \eSVec$ given above is a $\SSet$-functor.
\end{propo}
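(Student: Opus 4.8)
The plan is to verify the three defining conditions of a $\SSet$-functor (Definition \ref{defi:enrichedcategory} in the appendix, read contravariantly for $\eSLoc^\op$): namely that the components $(\eOO_{\bbM,\bbM^\prime})_{\pt_n}$ assemble into an honest natural transformation of functors $\SPt^\op\to\Set$, that $\eOO$ is compatible with composition, and that it is compatible with identities. First I would check that $(\eOO_{\bbM,\bbM^\prime})_{\pt_n}$ lands in the correct codomain: given $\chi\in\eSLoc^\op(\bbM,\bbM^\prime)(\pt_n)$, i.e.\ a $\SMan/\pt_n$-morphism $\chi:\pt_n\times M^\prime\to\pt_n\times M$ satisfying the conditions of Definition \ref{defi:eSLoc}, condition 1.\ and 2.\ give that $\chi$ is a $\SMan$-isomorphism onto an open subsupermanifold, so $\chi^\ast$ restricts to a $\SVec$-isomorphism on the relevant global sections; more importantly, since $\chi$ is a $\SMan/\pt_n$-morphism it fixes $\zeta\otimes\1$ for all $\zeta\in\Lambda_n$, hence $\chi^\ast:\Lambda_n\otimes\OO(\bbM)\to\Lambda_n\otimes\OO(\bbM^\prime)$ is $\Lambda_n$-linear, so it is indeed an element of $\eSVec(\OO(\bbM),\OO(\bbM^\prime))(\pt_n)$.

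Next I would establish naturality of $\eOO_{\bbM,\bbM^\prime}$, which amounts to the commutativity of the square relating $(\eOO_{\bbM,\bbM^\prime})_{\pt_n}$ and $(\eOO_{\bbM,\bbM^\prime})_{\pt_m}$ along a $\SPt^\op$-morphism $\lambda^\op:\pt_n\to\pt_m$. Here the left vertical is $\eSLoc^\op(\bbM,\bbM^\prime)(\lambda^\op)=\lambda^\op_\ast$ from \eqref{eqn:lambdaopast} and the right vertical is $\eSVec(\OO(\bbM),\OO(\bbM^\prime))(\lambda^\op)=\lambda^\op_\ast$ from \eqref{eqn:lambdaopastSVec}. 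Unwinding \eqref{eqn:lambdaopast}, $\lambda^\op_\ast(\chi)=\alpha_{\pt_m}^{-1}(\alpha_{\pt_n}(\chi)\circ(\lambda\times\id_{M^\prime}))$, which concretely is the $\SMan/\pt_m$-morphism whose pullback is obtained from $\chi^\ast$ by applying $\lambda^\ast$ on the $\Lambda$-factor; comparing with the analogous unwinding of \eqref{eqn:lambdaopastSVec} via $\beta_{\pt_n}$, $\beta_{\pt_m}$ shows both routes send $\chi$ to the $\Lambda_m$-linear extension of $(\lambda^\ast\otimes\id)\circ\chi^\ast\circ(\eta_n\otimes\id)$. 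This is essentially a bookkeeping computation with the bijections $\alpha_\bullet$ of Lemma \ref{lem:SManptcharacterization} and $\beta_\bullet$ of Lemma \ref{lem:SVecptcharacterization}, using that $(\lambda\times\id_M)^\ast=\lambda^\ast\otimes\id_{\OO(M)}$ on global sections of the product supermanifold.

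Finally, compatibility with composition and identities: for $\chi\in\eSLoc(\bbM,\bbM^\prime)(\pt_n)$ and $\chi^\prime\in\eSLoc(\bbM^\prime,\bbM^{\prime\prime})(\pt_n)$ — so that in $\eSLoc^\op$ their composite is $\chi\circ\chi^\prime$ read appropriately — contravariance of the pullback gives $(\chi\circ\chi^\prime)^\ast=\chi^{\prime\ast}\circ\chi^\ast$, which is exactly the composition law in $\eSVec$, and $\id_{\pt_n\times M}^\ast=\id_{\Lambda_n\otimes\OO(\bbM)}$ matches the identity morphism $\oone$ of Definition \ref{defi:eSVec}; these both hold componentwise at each $\pt_n$ and hence as equalities of $\SSet$-morphisms since the identities are natural in $\pt_n$. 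The main obstacle, such as it is, is the naturality square: one must be careful that the two superficially different-looking descriptions of $\lambda^\op_\ast$ on the geometric side and on the algebraic side genuinely coincide under $\eOO$, i.e.\ that pulling back along $\lambda\times\id_M$ on the supermanifold level corresponds precisely to the scalar-extension operation $\lambda^\ast\otimes\id$ on global sections; once this identification is made explicit via the change-of-superpoint formulas \eqref{eqn:lambdaopast} and \eqref{eqn:lambdaopastSVec}, everything else is formal. Since all of this is straightforward but notationally heavy, one may legitimately state that naturality of the components is easily checked, as is done in the excerpt.
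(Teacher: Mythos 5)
Your proposal is correct and follows essentially the same route as the paper: the core of the argument is the identity $(\chi\circ\chi^\prime)^\ast=\chi^{\prime\ast}\circ\chi^\ast$ for compatibility with composition and $(\id_{\pt_n\times M})^\ast=\id_{\Lambda_n\otimes\OO(\bbM)}$ for compatibility with identities, exactly as in the paper's proof. The only difference is that you additionally spell out the naturality of the components $(\eOO_{\bbM,\bbM^\prime})_{\pt_n}$ in $\pt_n$ via the bijections $\alpha_{\pt_n}$ and $\beta_{\pt_n}$, which the paper dismisses as ``easily checked''; this is a welcome but not substantively different addition.
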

\begin{proof}
We have to prove that this assignment is compatible with the composition and identity, see Definition
\ref{defi:enrichedfunctor}. Let $\pt_n$ be any object in $\SPt^\op$ and $\bbM$, $\bbM^\prime$ and $\bbM^{\prime\prime}$ 
any three objects in $\eSLoc^\op$. We obtain that, for all $\chi \in \eSLoc^\op(\bbM,\bbM^\prime)(\pt_n)$
 and $\chi^\prime\in\eSLoc^\op(\bbM^\prime,\bbM^{\prime\prime})(\pt_n)$,
\begin{flalign}
\nn (\eOO_{\bbM,\bbM^{\prime\prime}})_{\pt_n}\big( \chi^{\prime}\bullet_{\pt_n}^\op \chi\big)
&=(\eOO_{\bbM,\bbM^{\prime\prime}})_{\pt_n}\big( \chi\circ \chi^\prime\big) =
(\chi\circ \chi^\prime)^\ast = \chi^{\prime\ast}\circ\chi^{\ast}\\
\nn & =  (\eOO_{\bbM^\prime,\bbM^{\prime\prime}})_{\pt_n}(\chi^\prime)\circ 
(\eOO_{\bbM,\bbM^{\prime}})_{\pt_n}(\chi)\\
&= (\eOO_{\bbM^\prime,\bbM^{\prime\prime}})_{\pt_n}(\chi^\prime)\bullet_{\pt_n}  (\eOO_{\bbM,\bbM^{\prime}})_{\pt_n}(\chi)~,
\end{flalign}
which proves compatibility with the composition. Compatibility with the identity is shown by
\begin{flalign}
\nn (\eOO_{\bbM,\bbM})_{\pt_n}(\oone_{\pt_n}(\star)) &=(\eOO_{\bbM,\bbM})_{\pt_n}(\id_{\pt_n\times M})
=(\id_{\pt_n\times M})^\ast \\
&= \id_{\Lambda_n\otimes \OO(\bbM)}= \oone_{\pt_n}(\star)~,
\end{flalign}
for all objects $\bbM$ in $\eSLoc^\op$.
\end{proof}

\begin{rem} \label{rem:furtherExt}
As in the ordinary case, we can enlarge the morphisms in the $\SSet$-category
$\eSVec$ by replacing in Definition \ref{defi:eSVec} all appearances of 
the sets $\Hom_{\Lambda_n\text{-}\Mod}(\Lambda_n\otimes V,\Lambda_n\otimes V^\prime)$ 
of $\Lambda_n\text{-}\Mod$-morphisms by the sets underlying 
the internal hom-objects $\underline{\Hom}_{\Lambda_n}(\Lambda_n\otimes V,\Lambda_n\otimes V^\prime)$ in the 
category $\Lambda_n\text{-}\Mod$, i.e.\ all right $\Lambda_n$-linear maps 
$L: \Lambda_n\otimes V \to \Lambda_n\otimes V^\prime$. We denote the resulting
$\SSet$-category by $\underline{\eSVec}$ and note that there
is an obvious $\SSet$-functor $\eSVec \to \underline{\eSVec}$.
Consequently, we can regard the $\SSet$-functor $\eOO : \eSLoc^\op \to \eSVec$ 
also as a $\SSet$-functor (denoted with a slight abuse of notation by the same symbol)
$\eOO : \eSLoc^\op \to \underline{\eSVec}$. As explained in Footnote \ref{footnote:underline},
this generalization will be needed in Section \ref{sec:examples} to describe 
$1\vert 1$-dimensional examples (i.e.\ superparticles),  which are somewhat peculiar. 
\end{rem}

\subsection{Structure of the \texorpdfstring{$\SSet$}{SSet}-natural transformations \texorpdfstring{$\eOO\Rightarrow \eOO$}{eO => eO}}
In super-field theories, cf.\ Definition \ref{defi:sft}, we have described
the dynamics by a suitable natural transformation $P : \OO\Rightarrow \OO$ of functors
from $\SLoc^\op$ to $\underline{\SVec}$. In the enriched setting, we shall use
suitable $\SSet$-natural transformations $P : \eOO \Rightarrow \eOO $
of $\SSet$-functors from $\eSLoc^\op$ to $\underline{\eSVec}$.
Recalling Definition \ref{defi:enrichednatural}, such enriched natural transformations
are given by assigning to every object $\bbM$ in $\eSLoc^\op$
a $\SSet$-morphism $P_{\bbM} : \mathfrak{I} \to \underline{\eSVec}(\OO(\bbM),\OO(\bbM))$,
such that the diagram given in this definition commutes.
The $\SSet$-morphisms $P_{\bbM} : \mathfrak{I} \to \underline{\eSVec}(\OO(\bbM),\OO(\bbM))$
are given by a natural transformations of functors from $\SPt^\op$ to $\Set$, whose 
components are maps of sets that we denote by
\begin{flalign}
\nn  (P_{\bbM})_{\pt_n} : \pt = \mathfrak{I}(\pt_n)  &\longrightarrow \underline{\eSVec}(\OO(\bbM),\OO(\bbM))(\pt_n)~,\\
\star &\longmapsto (P_{\bbM})_{\pt_n}(\star)=:P_{\bbM/\pt_n}~.
\end{flalign}
Consequently, any $\SSet$-morphism $P_{\bbM} : \mathfrak{I} \to \underline{\eSVec}(\OO(\bbM),\OO(\bbM))$
is specified by a collection of right $\Lambda_n$-linear maps $P_{\bbM/\pt_n} : \Lambda_n\otimes 
\OO(\bbM)\to \Lambda_n\otimes \OO(\bbM) $, for all objects $\pt_n$ in $\SPt^\op$,
such that
 \begin{flalign}\label{eqn:enrichednaturaldiffop}
\lambda^\op_\ast\big(P_{\bbM/\pt_n} \big)  = P_{\bbM/\pt_m}~,
\end{flalign}
for all $\SPt^\op$-morphisms $\lambda^\op : \pt_n\to\pt_m$.
Notice that, for all objects $\pt_m$ in $\SPt$,  there is the terminal $\SPt$-morphism $\lambda : \pt_m\to\pt$
given by $\und{\lambda} = \id_{\pt}$ and the unit $\lambda^\ast= \eta_m : \bbR\to\Lambda_m$
in $\Lambda_m$.
As a consequence, the equation (\ref{eqn:enrichednaturaldiffop}) applied to the 
terminal $\SPt^\op$-morphisms  $\lambda^\op : \pt\to\pt_m$
implies that any $P_{\bbM/\pt_m}$ can be expressed in terms of the single right
 $\Lambda_0 {=} \bbR$-linear map $P_{\bbM/\pt} : \OO(\bbM)\to\OO(\bbM)$ via
\begin{flalign}
\nn P_{\bbM/\pt_m} = \id_{\Lambda_m}\otimes P_{\bbM/\pt} : \Lambda_m\otimes 
\OO(\bbM) &\longrightarrow \Lambda_m\otimes \OO(\bbM)~,\\
\zeta\otimes F &\longmapsto (-1)^{\vert \zeta\vert \,\vert  P_{\bbM/\pt}\vert }~\zeta\otimes  P_{\bbM/\pt} (F)~.\label{eqn:enrichednaturaldiffopsimple}
\end{flalign}
The commutative diagram in Definition \ref{defi:enrichednatural} requires that
\begin{flalign}
P_{\bbM^\prime/\pt_n} \bullet_{\pt_n}  (\eOO_{\bbM,\bbM^\prime})_{\pt_n}(\chi) = 
(\eOO_{\bbM,\bbM^\prime})_{\pt_n}(\chi)\bullet_{\pt_n} P_{\bbM/\pt_n}~,
\end{flalign}
for all objects $\bbM$ and $\bbM^\prime$ in $\eSLoc^\op$, all objects $\pt_n$ in $\SPt^\op$
and all $\chi\in\eSLoc^\op(\bbM,\bbM^\prime)(\pt_n)$.
Using (\ref{eqn:enrichednaturaldiffopsimple}), these conditions are equivalent to
the commutative diagrams
\begin{flalign}
\xymatrix{
\ar[d]_-{\chi^\ast} \Lambda_n\otimes \OO(\bbM^\prime) \ar[rrr]^-{\id_{\Lambda_n}\otimes P_{\bbM^\prime/\pt}} &&& \Lambda_n\otimes\OO(\bbM^\prime)\ar[d]^-{\chi^\ast}\\
\Lambda_n\otimes \OO(\bbM) \ar[rrr]_-{\id_{\Lambda_n}\otimes P_{\bbM/\pt}} &&& \Lambda_n\otimes \OO(\bbM)
}
\end{flalign}
 of right $\Lambda_n$-linear maps, for all objects $\bbM$ and $\bbM^\prime$ in $\eSLoc^\op$, all objects $\pt_n$ in $\SPt^\op$
and all $\chi\in\eSLoc^\op(\bbM^\prime,\bbM)(\pt_n)$. (For later convenience we have exchanged here $\bbM$ and $\bbM^\prime$.)
We therefore have shown 
\begin{propo}\label{propo:ordinaryimpliesenriched}
There is a bijective correspondence between 
\begin{enumerate}
\item $\SSet$-natural transformations 
$P : \eOO\Rightarrow \eOO$ of $\SSet$-functors
from $\eSLoc^\op$ to $\underline{\eSVec}$,
\item ordinary natural transformations $P : \OO\Rightarrow \OO$ of functors from $\SLoc^\op$ to
$\underline{\SVec}$, such that the diagram
\begin{flalign}\label{eqn:enricheddiffopsimplediagrams}
\xymatrix{
\ar[d]_-{\chi^\ast} \Lambda_n\otimes \OO(\bbM^\prime) \ar[rrrr]^-{P_{\bbM^\prime/\pt_n} \,:=\, \id_{\Lambda_n}\otimes P_{\bbM^\prime}} &&&& \Lambda_n\otimes \OO(\bbM^\prime)\ar[d]^-{\chi^\ast}\\
\Lambda_n\otimes \OO(\bbM) \ar[rrrr]_-{P_{\bbM/\pt_n} \,:=\,\id_{\Lambda_n}\otimes P_{\bbM}} &&&& \Lambda_n\otimes \OO(\bbM)
}
\end{flalign}
of right $\Lambda_n$-linear maps commutes, for all objects 
$\bbM$ and $\bbM^\prime$ in $\eSLoc^\op$, all objects $\pt_n$ in $\SPt^\op$
and all $\chi\in\eSLoc^\op(\bbM^\prime,\bbM)(\pt_n)$.
\end{enumerate}
\end{propo}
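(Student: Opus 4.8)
The proposition is proved by unwinding Definition~\ref{defi:enrichednatural} into concrete data; indeed the computation has essentially been carried out in the prose above, and what follows is how I would organize it rigorously. The plan is to first translate a $\SSet$-natural transformation $P:\eOO\Rightarrow\eOO$ into a family of maps, then to use naturality over $\SPt^\op$ to collapse this family to a single $\underline{\SVec}$-endomorphism per object, and finally to read off what the $\eSLoc^\op$-naturality square of the $\SSet$-natural transformation says about this endomorphism.

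First I would unfold the definition. A $\SSet$-natural transformation $P:\eOO\Rightarrow\eOO$ consists of a $\SSet$-morphism $P_\bbM:\mathfrak{I}\to\underline{\eSVec}(\OO(\bbM),\OO(\bbM))$ for each object $\bbM$, subject to the naturality square of Definition~\ref{defi:enrichednatural}. Since morphisms in $\SSet=\mathrm{Fun}(\SPt^\op,\Set)$ are ordinary natural transformations, each $P_\bbM$ is the same datum as a family of maps $(P_\bbM)_{\pt_n}:\pt=\mathfrak{I}(\pt_n)\to\underline{\eSVec}(\OO(\bbM),\OO(\bbM))(\pt_n)$, i.e.\ a choice of right $\Lambda_n$-linear map $P_{\bbM/\pt_n}:\Lambda_n\otimes\OO(\bbM)\to\Lambda_n\otimes\OO(\bbM)$ for every $n$, satisfying the compatibility $(\ref{eqn:enrichednaturaldiffop})$ for all $\SPt^\op$-morphisms $\lambda^\op:\pt_n\to\pt_m$.

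Next I would collapse this family. For each $m$ there is the terminal $\SPt$-morphism $\pt_m\to\pt$ with pull-back the unit $\eta_m:\bbR\to\Lambda_m$; applying $(\ref{eqn:enrichednaturaldiffop})$ to the corresponding $\SPt^\op$-morphism $\pt\to\pt_m$ and evaluating $\lambda^\op_\ast$ by its defining formula $(\ref{eqn:lambdaopastSVec})$ gives $P_{\bbM/\pt_m}=\id_{\Lambda_m}\otimes P_{\bbM/\pt}$, with the Koszul sign displayed in $(\ref{eqn:enrichednaturaldiffopsimple})$, so that the whole family is determined by the single $\underline{\SVec}$-morphism $P_\bbM:=P_{\bbM/\pt}:\OO(\bbM)\to\OO(\bbM)$. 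Conversely, starting from an arbitrary $\underline{\SVec}$-endomorphism $P_\bbM$, I would set $P_{\bbM/\pt_n}:=\id_{\Lambda_n}\otimes P_\bbM$ and verify $(\ref{eqn:enrichednaturaldiffop})$ using the explicit form of $\lambda^\op_\ast$ in $(\ref{eqn:lambdaopastSVec})$ (equivalently, functoriality of the extension-of-scalars, Lemma~\ref{lem:lambdaopastSVec}); this shows the two passages are mutually inverse at the level of the underlying data. It then remains to translate the $\eSLoc^\op$-naturality square: for all $\bbM,\bbM^\prime$, all $\pt_n$ and all $\chi\in\eSLoc^\op(\bbM,\bbM^\prime)(\pt_n)$ it demands $P_{\bbM^\prime/\pt_n}\bullet_{\pt_n}(\eOO_{\bbM,\bbM^\prime})_{\pt_n}(\chi)=(\eOO_{\bbM,\bbM^\prime})_{\pt_n}(\chi)\bullet_{\pt_n}P_{\bbM/\pt_n}$; inserting $\eOO$-on-morphisms $=$ pull-back $\chi\mapsto\chi^\ast$, $\bullet_{\pt_n}=\circ$, and $P_{\bbM/\pt_n}=\id_{\Lambda_n}\otimes P_\bbM$, and relabelling $\bbM\leftrightarrow\bbM^\prime$ to match conventions, this is exactly the family of commuting squares $(\ref{eqn:enricheddiffopsimplediagrams})$. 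The instance $n=0$ (i.e.\ $\Lambda_0=\bbR$) is precisely the ordinary naturality square $(\ref{eqn:Pnatural})$ of $P:\OO\Rightarrow\OO$ over $\SLoc^\op$, and the instances $n>0$ are the extra covariance conditions in item~2; this yields the claimed bijection together with the fact that the two assignments are inverse to one another.

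I do not expect a genuine obstacle, but the steps requiring the most care — the closest thing to a difficulty — are (a) the evaluation of $\lambda^\op_\ast$ on the terminal morphism and the correct tracking of the Koszul sign, so that the collapse $P_{\bbM/\pt_m}=\id_{\Lambda_m}\otimes P_{\bbM/\pt}$ comes out as in $(\ref{eqn:enrichednaturaldiffopsimple})$, and (b) confirming that the extension-of-scalars construction really produces a $\SSet$-morphism and a $\SSet$-natural transformation — i.e.\ that right $\Lambda_n$-linearity and naturality over $\SPt^\op$ are preserved — for which Lemma~\ref{lem:lambdaopastSVec} does the work. One must also keep the orientations of the various opposite categories ($\eSLoc^\op$ versus $\eSLoc$, $\SPt^\op$ versus $\SPt$) consistent throughout the bookkeeping.
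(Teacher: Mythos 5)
Your proposal is correct and follows essentially the same route as the paper: unfold the $\SSet$-natural transformation into the family $P_{\bbM/\pt_n}$, use the terminal $\SPt$-morphisms together with (\ref{eqn:lambdaopastSVec}) to collapse it to $P_{\bbM/\pt_n}=\id_{\Lambda_n}\otimes P_{\bbM/\pt}$ as in (\ref{eqn:enrichednaturaldiffopsimple}), and then identify the $\eSLoc^\op$-naturality square with the commuting diagrams (\ref{eqn:enricheddiffopsimplediagrams}). Your additional remarks (that the $n=0$ instance recovers (\ref{eqn:Pnatural}) and that the two assignments are mutually inverse) are consistent with, and make explicit, what the paper leaves implicit.
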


\subsection{The definition}
With this preparation we can now give a simple definition of enriched super-field theories.
In particular, using Proposition \ref{propo:ordinaryimpliesenriched} we can define an enriched
super-field theory to be a super-field theory (cf.\ Definition \ref{defi:sft})
together with extra conditions which ensure that the ordinary natural transformation $P : \OO\Rightarrow
\OO$ extends to a $\SSet$-natural transformation $P : \eOO\Rightarrow \eOO$.  
\begin{defi}\label{defi:sftenriched}
An enriched super-field theory is a super-field theory according to
Definition \ref{defi:sft}, such that the diagram (\ref{eqn:enricheddiffopsimplediagrams}) 
of right $\Lambda_n$-linear maps commutes, for all objects $\bbM$ and 
$\bbM^\prime$ in $\eSLoc^\op$, all objects $\pt_n$ in $\SPt^\op$
and all $\chi\in\eSLoc^\op(\bbM^\prime,\bbM)(\pt_n)$.
\end{defi}

%%%%%%%%%%%%%%%%%%%%%%%%%%%%%%%%%%%%%%%%%%%%%%%%
%%%%%%%%%%%%%%%%%%%%%%%%%%%%%%%%%%%%%%%%%%%%%%%%

\section{\label{sec:enriched}Construction of enriched super-quantum field theories}
We show that given any enriched super-field theory according to Definition \ref{defi:sftenriched}
one can construct a $\SSet$-functor $\eAA : \eSLoc\to\eSAlg$, i.e.\ an enriched super-QFT.
As in Section \ref{sec:ordinary} we decompose our construction into two steps:
First, we construct a $\SSet$-functor $\eLL : \eSLoc\to\eXX$ which describes the enriched classical theory.
Second, we construct a $\SSet$-functor $\eQQ : \eXX\to\eSAlg$ describing the enriched quantization.
We shall also study properties of the enriched functors and establish a connection
between the enriched super-field theory and the enriched super-QFT by constructing
an enriched locally covariant quantum field. In contrast to ordinary super-QFTs, our 
enriched approach captures also supersymmetry transformations. It should be emphasized 
that the $\SSet$-categories $\eSLoc$, $\eXX$ and $\eSAlg$ 
are defined using $\bbZ_2$-parity preserving morphisms. 
The appearance of supersymmetry transformations in the enriched setting is
due to the higher superpoints of the morphism supersets in  $\eSLoc$, $\eXX$ and $\eSAlg$.

\subsection{The \texorpdfstring{$\SSet$}{SSet}-functor \texorpdfstring{$\eOO_{\cc} : \eSLoc \to \eSVec$}{eOc : eSLoc -> eSVec}}
As a preparatory step we construct a $\SSet$-functor $\eOO_{\cc} : \eSLoc \to \eSVec$
that assigns to objects $\bbM$ in $\eSLoc$ the super-vector spaces of compactly supported sections 
$\eOO_{\cc}(\bbM) :=\OO_{\cc}(\bbM)$. 
We assign to any two objects $\bbM$ and $\bbM^\prime$ in $\eSLoc$ 
the $\SSet$-morphism ${\eOO_{\cc}}_{\bbM,\bbM^\prime} : \eSLoc(\bbM,\bbM^\prime) 
\to \eSVec(\OO_{\cc}(\bbM),\OO_{\cc}(\bbM^\prime))$ given by the natural transformation
(of functors from $\SPt^\op$ to $\Set$) with components
\begin{flalign}
({\eOO_{\cc}}_{\bbM,\bbM^\prime})_{\pt_n} : \big(\chi : \pt_n\times M\to\pt_n\times M^\prime\big)\longmapsto
\big(\chi_\ast : \Lambda_n \otimes \OO_{\cc}(\bbM)\to \Lambda_n\otimes \OO_{\cc}(\bbM^\prime)\big)~,
\end{flalign}
where $\chi_\ast$ denotes the push-forward of compactly supported sections of the structure sheaf $\OO_{\pt_n\times M}$
 (cf.\ (\ref{eqn:pushdefi})), which exists as a consequence of the conditions 1.)\ and 2.)\ in Definition \ref{defi:eSLoc}.
Because of functoriality of the push-forwards, i.e.\ $(\id_{\pt_n\times M})_\ast = \id_{\Lambda_n\otimes \OO_{\cc}(\bbM)}$
and $(\chi^\prime \circ\chi)_\ast = \chi^\prime_\ast\circ\chi_\ast$ as  in Lemma \ref{lem:pushforward} (iii),
we have shown
\begin{propo}
The assignment $\eOO_{\cc} : \eSLoc \to \eSVec$ given above is a $\SSet$-functor.
\end{propo}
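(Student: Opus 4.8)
The plan is to verify the two axioms of an enriched functor from Definition \ref{defi:enrichedfunctor}, namely compatibility with composition and with identities, exactly as in the proof of the analogous statement for $\eOO$. First I would observe that the underlying assignment on objects is the obvious one, $\bbM \mapsto \OO_\cc(\bbM)$, and that $({\eOO_\cc}_{\bbM,\bbM^\prime})_{\pt_n}$ is a well-defined map of sets: for $\chi\in\eSLoc(\bbM,\bbM^\prime)(\pt_n)$ the conditions 1.)\ and 2.)\ in Definition \ref{defi:eSLoc} guarantee that $\chi : M/\pt_n \to M^\prime\vert_{\und{\chi}(\und{M})}/\pt_n$ is a $\SMan/\pt_n$-isomorphism with open causally compatible image, so the push-forward $\chi_\ast$ of compactly supported sections of $\OO_{\pt_n\times M}$ exists by the construction in (\ref{eqn:pushdefi}) applied to the product supermanifolds. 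One then checks that these components assemble into a natural transformation $\eSLoc(\bbM,\bbM^\prime)\Rightarrow\eSVec(\OO_\cc(\bbM),\OO_\cc(\bbM^\prime))$ of functors $\SPt^\op\to\Set$, i.e.\ that $\lambda^\op_\ast$ on the source side (relative morphisms, (\ref{eqn:lambdaopast})) intertwines with $\lambda^\op_\ast$ on the target side (module maps, (\ref{eqn:lambdaopastSVec})) under the push-forward assignment; this is the naturality claim asserted just above the proposition and follows by unwinding the definitions of $\alpha_{\pt_n}$, $\beta_{\pt_n}$ and the push-forward.

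The core of the proof is then a short diagram chase, entirely parallel to the proof displayed for $\eOO$ in Section \ref{ssec:eOO}. For compatibility with composition, fix an object $\pt_n$ in $\SPt^\op$ and three objects $\bbM,\bbM^\prime,\bbM^{\prime\prime}$ in $\eSLoc$; for $\chi\in\eSLoc(\bbM,\bbM^\prime)(\pt_n)$ and $\chi^\prime\in\eSLoc(\bbM^\prime,\bbM^{\prime\prime})(\pt_n)$ one computes
\begin{flalign}
\nn ({\eOO_\cc}_{\bbM,\bbM^{\prime\prime}})_{\pt_n}\big(\chi^\prime \bullet_{\pt_n}\chi\big)
&= ({\eOO_\cc}_{\bbM,\bbM^{\prime\prime}})_{\pt_n}\big(\chi^\prime\circ\chi\big)
= (\chi^\prime\circ\chi)_\ast = \chi^\prime_\ast\circ\chi_\ast\\
&= ({\eOO_\cc}_{\bbM^\prime,\bbM^{\prime\prime}})_{\pt_n}(\chi^\prime)\bullet_{\pt_n}({\eOO_\cc}_{\bbM,\bbM^\prime})_{\pt_n}(\chi)~,
\end{flalign}
where the third equality is functoriality of the push-forward, Lemma \ref{lem:pushforward} (iii), applied to the product supermanifolds $\pt_n\times M$, $\pt_n\times M^\prime$, $\pt_n\times M^{\prime\prime}$ (note that the relevant $\SMan/\pt_n$-morphisms are in particular $\SMan$-morphisms to which the lemma applies). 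For compatibility with the identity one has, for any object $\bbM$ in $\eSLoc$,
\begin{flalign}
\nn ({\eOO_\cc}_{\bbM,\bbM})_{\pt_n}\big(\oone_{\pt_n}(\star)\big)
&= ({\eOO_\cc}_{\bbM,\bbM})_{\pt_n}\big(\id_{\pt_n\times M}\big)
= (\id_{\pt_n\times M})_\ast = \id_{\Lambda_n\otimes\OO_\cc(\bbM)} = \oone_{\pt_n}(\star)~,
\end{flalign}
again by Lemma \ref{lem:pushforward} (iii).

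The main thing to be careful about — rather than a genuine obstacle — is the bookkeeping that identifies the push-forward on $\OO_{\pt_n\times M}$ with the $\Lambda_n$-linear map $\chi_\ast : \Lambda_n\otimes\OO_\cc(\bbM)\to\Lambda_n\otimes\OO_\cc(\bbM^\prime)$ under the isomorphism $\OO_{\pt_n\times M}(\und M)\simeq\Lambda_n\otimes\OO_M(\und M)$, and the verification that this map indeed lands in $\eSVec(\OO_\cc(\bbM),\OO_\cc(\bbM^\prime))(\pt_n) = \Hom_{\Lambda_n\text-\Mod}(\Lambda_n\otimes\OO_\cc(\bbM),\Lambda_n\otimes\OO_\cc(\bbM^\prime))$. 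Right $\Lambda_n$-linearity is immediate since $\chi^\ast(\zeta\otimes\1)=\zeta\otimes\1$ for $\zeta\in\Lambda_n$ (the defining property of $\SMan/\pt_n$-morphisms) forces $\chi_\ast$ to commute with multiplication by constants in $\Lambda_n$, and $\bbZ_2$-parity preservation is clear. Everything else is a transcription of the non-enriched arguments of Section \ref{sec:ordinary}, so the proof can be kept to the two short computations above.

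\begin{proof}
This assignment is well-defined on morphism supersets: for $\chi\in\eSLoc(\bbM,\bbM^\prime)(\pt_n)$, conditions 1.)\ and 2.)\ in Definition \ref{defi:eSLoc} ensure that $\chi : M/\pt_n\to M^\prime\vert_{\und{\chi}(\und M)}/\pt_n$ is a $\SMan/\pt_n$-isomorphism, hence the push-forward $\chi_\ast$ of compactly supported sections of $\OO_{\pt_n\times M}$ exists by (\ref{eqn:pushdefi}), and it is right $\Lambda_n$-linear because $\SMan/\pt_n$-morphisms satisfy $\chi^\ast(\zeta\otimes\1)=\zeta\otimes\1$ for all $\zeta\in\Lambda_n$; naturality in $\pt_n$ follows by unwinding (\ref{eqn:lambdaopast}) and (\ref{eqn:lambdaopastSVec}). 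It remains to check the two conditions of Definition \ref{defi:enrichedfunctor}. For all objects $\pt_n$ in $\SPt^\op$, all objects $\bbM,\bbM^\prime,\bbM^{\prime\prime}$ in $\eSLoc$ and all $\chi\in\eSLoc(\bbM,\bbM^\prime)(\pt_n)$, $\chi^\prime\in\eSLoc(\bbM^\prime,\bbM^{\prime\prime})(\pt_n)$ we have, using Lemma \ref{lem:pushforward} (iii),
\begin{flalign}
\nn ({\eOO_\cc}_{\bbM,\bbM^{\prime\prime}})_{\pt_n}\big(\chi^\prime \bullet_{\pt_n}\chi\big)
&= ({\eOO_\cc}_{\bbM,\bbM^{\prime\prime}})_{\pt_n}\big(\chi^\prime\circ\chi\big)
= (\chi^\prime\circ\chi)_\ast = \chi^\prime_\ast\circ\chi_\ast\\
&= ({\eOO_\cc}_{\bbM^\prime,\bbM^{\prime\prime}})_{\pt_n}(\chi^\prime)\bullet_{\pt_n}({\eOO_\cc}_{\bbM,\bbM^\prime})_{\pt_n}(\chi)~,
\end{flalign}
which proves compatibility with the composition. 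Compatibility with the identity follows from
\begin{flalign}
\nn ({\eOO_\cc}_{\bbM,\bbM})_{\pt_n}\big(\oone_{\pt_n}(\star)\big)
&= ({\eOO_\cc}_{\bbM,\bbM})_{\pt_n}\big(\id_{\pt_n\times M}\big)
= (\id_{\pt_n\times M})_\ast\\
&= \id_{\Lambda_n\otimes\OO_\cc(\bbM)} = \oone_{\pt_n}(\star)~,
\end{flalign}
for all objects $\bbM$ in $\eSLoc$, again by Lemma \ref{lem:pushforward} (iii).
\end{proof}
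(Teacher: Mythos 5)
Your proposal is correct and follows essentially the same route as the paper: the paper's proof consists precisely of the observation that the push-forwards are functorial, i.e.\ $(\id_{\pt_n\times M})_\ast = \id_{\Lambda_n\otimes\OO_{\cc}(\bbM)}$ and $(\chi^\prime\circ\chi)_\ast = \chi^\prime_\ast\circ\chi_\ast$ as in Lemma \ref{lem:pushforward} (iii), which is exactly the content of your two displayed computations. Your additional remarks on well-definedness, $\Lambda_n$-linearity and naturality in $\pt_n$ are correct and merely make explicit what the paper leaves implicit.
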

\sk

Let $\bbM = (M,\Omega,E)$ be any object in $\eSLoc$ and $\pt_n$ any object in $\SPt^\op$.
Making use of the $\pt_n$-relative differential geometry on $M/\pt_n$,
together with the $\pt_n$-relative supervielbein $\1\otimes E\in \Omega^{1}(M/\pt_n,\mathfrak{t})$, 
we can define a $\pt_n$-relative version of the pairing
(\ref{eqn:pairing}) by
\begin{flalign}
\nn \ip{\,\cdot\,}{\,\cdot\, }_{\bbM/\pt_n} : \big(\Lambda_n\otimes\OO_{\cc}(\bbM)\big)\otimes_{\Lambda_n}
\big(\Lambda_n\otimes\OO_{\cc}(\bbM)\big) &\longrightarrow \Lambda_n~,\\
H_1\otimes_{\Lambda_n} H_2 &\longmapsto \int_{M/\pt_n} \Ber(\1\otimes E)~H_1\,H_2~,\label{eqn:relativepairing}
\end{flalign}
where $\int_{M/\pt_n}$ is the $\pt_n$-relative Berezin integral, see e.g.\ \cite[\S 3.10]{DMSuper}.
Explicitly, the $\pt_n$-relative pairing reads as
\begin{flalign}
\nn \ip{\zeta_1\otimes F_1}{\zeta_2\otimes F_2}_{\bbM/\pt_n} &=
(-1)^{(\vert \zeta_1\vert + \vert\zeta_2\vert) \,\vert\Ber(E)\vert}\,(-1)^{\vert\zeta_2\vert\,\vert F_1\vert}\,\zeta_1\,\zeta_2\,\int_{M} \Ber(E)~F_1\,F_2\\
 &= (-1)^{(\vert \zeta_1\vert + \vert\zeta_2\vert) \,\vert P_{\bbM}\vert}\,(-1)^{\vert\zeta_2\vert\,\vert F_1\vert}\,\zeta_1\,\zeta_2\,\ip{F_1}{F_2}_{\bbM}~,\label{eqn:relativepairing2}
\end{flalign}
for all homogeneous $\zeta_1,\zeta_2\in\Lambda_n$ and $F_1,F_2\in\OO_{\cc}(\bbM)$, 
i.e.\ it is given by $\Lambda_n$-superlinear extension of the pairing on $\OO_c(\bbM)$. 
Here, we have moreover 
used that  by assumption $\vert \Ber(E)\vert = \vert P_{\bbM}\vert$.
Notice that (\ref{eqn:relativepairing}) can be extended to all $H_1,H_2\in\Lambda_n\otimes \OO(\bbM)$ with compactly 
overlapping support. 
\begin{lem}\label{lem:relativepushforward}
Let $\bbM$ and $\bbM^\prime$ be any two objects in $\eSLoc$ and $\pt_n$ 
any object in $\SPt^\op$. Then for any $\chi\in\eSLoc(\bbM,\bbM^\prime)(\pt_n)$ the following properties hold true:
\begin{itemize}
\item[(i)] $\chi^\ast\circ \chi_\ast = \id_{\Lambda_n\otimes \OO_{\cc}(\bbM)}$ and 
$\chi_\ast\circ \chi^\ast(H) = H$, for all $H\in \Lambda_n\otimes \OO_{\cc}(\bbM^\prime)$
 such that $\supp(H) \subseteq\und{\chi}(\und{M})$.
 \item[(ii)] $ \ip{H_1}{\chi_\ast(H_2)}_{\bbM^\prime/\pt_n}=\ip{\chi^\ast(H_1)}{H_2}_{\bbM/\pt_n} $,
 for all $H_1\in \Lambda_n\otimes \OO(\bbM^\prime)$ and $H_2\in\Lambda_n\otimes \OO_{\cc}(\bbM)$.
\end{itemize}
\end{lem}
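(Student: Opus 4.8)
The plan is to reduce both items to the non-enriched Lemma \ref{lem:pushforward} by regarding a $\SMan/\pt_n$-morphism $\chi\in\eSLoc(\bbM,\bbM^\prime)(\pt_n)$ as what it literally is: an ordinary $\SMan$-morphism $\chi:\pt_n\times M\to\pt_n\times M^\prime$ between the product supermanifolds, whose reduced manifold $\und{\pt_n\times M}=\und{M}$ is unchanged. Conditions 1.)\ and 2.)\ in Definition \ref{defi:eSLoc} say precisely that $\und{\chi}:\und{M}\to\und{M^\prime}$ is an open embedding and that $\chi$ restricts to a $\SMan$-isomorphism $\pt_n\times M\to(\pt_n\times M^\prime)\vert_{\und{\chi}(\und{M})}$; this is exactly the input used in Section \ref{sec:ordinary} to define the push-forward $\chi_\ast$ on $\OO_{\cc}(\pt_n\times M)=\Lambda_n\otimes\OO_{\cc}(\bbM)$ via the composition (\ref{eqn:pushdefi}) and the pull-back $\chi^\ast$ on global sections. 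Condition 3.)\ gives $\chi^\ast(\1\otimes E^\prime)=\1\otimes E$, so that the $\pt_n$-relative Berezinian density $\Ber(\1\otimes E)$ is preserved by $\chi^\ast$ (the relative analogue of (\ref{eqn:berezindensitymorph}), proved by the same argument using the pull-back of Berezinians on the product supermanifold).

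With this identification, item (i) is obtained by repeating verbatim the two short computations in the proof of Lemma \ref{lem:pushforward} (i): they use only the sheaf-morphism diagram (\ref{eqn:sheafmorph}) applied to $U=\und{M^\prime}$ and $V=\und{\chi}(\und{M})$, the identity (\ref{eqn:resext}), and the fact that $\chi$ is an isomorphism onto its image — all of which hold here since $\pt_n\times M$ and $\pt_n\times M^\prime$ are supermanifolds whose structure sheaves are sheaves of supercommutative superalgebras. Likewise, item (ii) follows from the chain of equalities in the proof of Lemma \ref{lem:pushforward} (ii), with the ordinary Berezin integral replaced by the $\pt_n$-relative one (see \cite[\S 3.10]{DMSuper}) and the transformation formula (\ref{eqn:trafoformel}) replaced by its $\Lambda_n$-valued relative version; one then feeds in the relative density-preservation together with item (i) exactly as before.

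I expect no serious obstacle here: the content of the lemma is that everything in Lemma \ref{lem:pushforward} (i)--(ii) goes through word-for-word for the product supermanifolds $\pt_n\times M\to\pt_n\times M^\prime$. The one point that needs a little care — and which I would flag explicitly in the write-up — is that $\chi$ need \emph{not} be a product $\id_{\pt_n}\times\chi_0$ of the identity with an ordinary $\SMan$-morphism (this is precisely where supersymmetry transformations sit), so one cannot merely invoke Lemma \ref{lem:pushforward} for an ordinary morphism and extend $\Lambda_n$-superlinearly through (\ref{eqn:relativepairing2}); one genuinely has to work with the relative Berezin calculus on $\pt_n\times M$. Since the reduced manifold is untouched, the causal-structure and global-hyperbolicity ingredients are literally the same, and the relative Berezin integral and its change-of-variables formula are standard, so the proof reduces to checking that the referenced ingredients of the proof of Lemma \ref{lem:pushforward} are sheaf-theoretic/Berezinian statements about supermanifolds and therefore insensitive to whether we are in $\SLoc$ or merely working with $\pt_n$-relative super-Cartan data.
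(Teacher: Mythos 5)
Your proposal is correct and follows essentially the same route as the paper: the authors likewise prove item (i) verbatim as in Lemma \ref{lem:pushforward} (i) and item (ii) by repeating the computation of Lemma \ref{lem:pushforward} (ii) with the ordinary transformation formula (\ref{eqn:trafoformel}) replaced by its $\pt_n$-relative version for the relative Berezin integral. Your added remark that $\chi$ need not factor as $\id_{\pt_n}\times\chi_0$, so that one must genuinely use the relative Berezin calculus rather than a $\Lambda_n$-superlinear extension of the non-enriched statement, is a correct and worthwhile clarification of a point the paper leaves implicit.
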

\begin{proof}
The proof of item (i) is as in Lemma \ref{lem:pushforward} (i).
In the proof of item (ii) one follows the same steps as in Lemma \ref{lem:pushforward} (ii),
but uses instead of the usual transformation formula (\ref{eqn:trafoformel}) its relative version for the relative Berezin integral, see
\cite[\S 3.10]{DMSuper}.
\end{proof}

\subsection{Enriched properties of the super-Green's operators}
Let us fix any enriched super-field theory according to Definition
\ref{defi:sftenriched}. As a consequence of $P_{\bbM}$ being formally super-self adjoint (cf.\ (\ref{eqn:superself}))
one easily checks by using (\ref{eqn:relativepairing2}) that the $\pt_n$-relative super-differential operator
$P_{\bbM/\pt_n} := \id_{\Lambda_n}\otimes P_{\bbM} : \Lambda_n\otimes \OO(\bbM)\to \Lambda_n\otimes \OO(\bbM)$
satisfies
\begin{flalign}
\ip{H_1}{P_{\bbM/\pt_n}(H_2)}_{\bbM/\pt_n} = (-1)^{\vert H_1\vert \,\vert P_{\bbM}\vert}\,
\ip{ P_{\bbM/\pt_n}(H_1)}{H_2}_{\bbM/\pt_n}~,
\end{flalign}
for any object $\pt_n$ in $\SPt^\op$ and all homogeneous $H_1,H_2\in\Lambda_n\otimes \OO(\bbM)$
with compactly overlapping support. The right $\Lambda_n$-linear maps
$G_{\bbM/\pt_n}^\pm := 
\id_{\Lambda_n}\otimes G_{\bbM}^\pm : \Lambda_n\otimes \OO_{\cc}(\bbM) \to \Lambda_n\otimes \OO(\bbM)$
are easily seen to be retarded/advanced super-Green's operators for $P_{\bbM/\pt_n}$.
We also define  $G_{\bbM/\pt_n} := G_{\bbM/\pt_n}^+ - G_{\bbM/\pt_n}^-  : \Lambda_n\otimes \OO_{\cc}(\bbM)
 \to \Lambda_n\otimes \OO_{\sc}(\bbM)$.
Our statements in Subsection \ref{subsec:SGOP} easily generalize to the $\pt_n$-relative setting 
since all the proofs are algebraic and only use the properties of super-Green's operators.
We summarize without repeating the proofs the main properties which are used in the next subsections.
\begin{lem}\label{lem:enrichedgreen}
\begin{itemize}
\item[(i)] Let $\bbM$ be any object in $\eSLoc$ and $\pt_n$ any object in $\SPt^\op$. Then
\begin{flalign}
\ip{H_1}{G_{\bbM/\pt_n}^\pm(H_2)}_{\bbM/\pt_n} = (-1)^{(\vert H_1\vert +\vert P_{\bbM}\vert)\,\vert P_{\bbM}\vert}\, \ip{G_{\bbM/\pt_n}^\mp(H_1)}{H_2}_{\bbM/\pt_n} ~,
\end{flalign}
for all homogeneous $H_1,H_2\in\Lambda_n\otimes \OO_{\cc}(\bbM)$.

\item[(ii)] Let $\bbM$ be any object in $\eSLoc$ and $\pt_n$ any object in $\SPt^\op$. Then the 
sequence of right $\Lambda_n$-linear maps
\begin{flalign}\label{eqn:enrichedsequence}
\xymatrixcolsep{3pc}\xymatrix{
0 \to  \Lambda_n\otimes \OO_{\cc}(\bbM) \ar[r]^-{ P_{\bbM/\pt_n}} &
\Lambda_n\otimes \OO_{\cc}(\bbM) \ar[r]^-{ G_{\bbM/\pt_n}} &\Lambda_n\otimes \OO_{\sc}(\bbM)\ar[r]^-{ P_{\bbM/\pt_n}} & \Lambda_n\otimes \OO_{\sc}(\bbM)
}
\end{flalign}
is a complex which is exact everywhere.

\item[(iii)] Let $\pt_n$ be any object in $\SPt^\op$ and $\chi\in\eSLoc(\bbM,\bbM^\prime)(\pt_n)$. Then
\begin{flalign}
G^\pm_{\bbM/\pt_n} = \chi^\ast\circ G^\pm_{\bbM^\prime/\pt_n}\circ \chi_\ast~.
\end{flalign}
\end{itemize}
\end{lem}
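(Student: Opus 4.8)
The plan is to establish each of the three parts of Lemma~\ref{lem:enrichedgreen} by reducing it to the corresponding non-relative statement proved in Subsection~\ref{subsec:SGOP}, exploiting the fact that every $\pt_n$-relative object in sight is obtained by $\Lambda_n$-superlinear extension of the non-relative one. Concretely, $P_{\bbM/\pt_n} = \id_{\Lambda_n}\otimes P_{\bbM}$, $G^\pm_{\bbM/\pt_n} = \id_{\Lambda_n}\otimes G^\pm_{\bbM}$, and by (\ref{eqn:relativepairing2}) the pairing $\ip{\,\cdot\,}{\,\cdot\,}_{\bbM/\pt_n}$ is determined by $\ip{\,\cdot\,}{\,\cdot\,}_{\bbM}$ up to Koszul signs and multiplication in $\Lambda_n$. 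So the strategy is: write $H_i = \zeta_i\otimes F_i$ for homogeneous $\zeta_i\in\Lambda_n$, $F_i\in\OO(\bbM)$ (or $\OO_\cc(\bbM)$), push all the $\Lambda_n$-factors through, invoke the already-proved non-relative identity on the $\OO(\bbM)$-components, and then carefully collect the signs, checking that everything matches by the usual right-$\Lambda_n$-linearity and the identity $\vert\Ber(E)\vert = \vert P_{\bbM}\vert$.

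For item (i), I would start from $\ip{\zeta_1\otimes F_1}{G^\pm_{\bbM/\pt_n}(\zeta_2\otimes F_2)}_{\bbM/\pt_n}$, use that $G^\pm_{\bbM/\pt_n}(\zeta_2\otimes F_2) = (-1)^{\vert\zeta_2\vert\,\vert P_{\bbM}\vert}\,\zeta_2\otimes G^\pm_{\bbM}(F_2)$, apply (\ref{eqn:relativepairing2}), then apply Lemma~\ref{lem:greenadjoint} to the $\OO(\bbM)$-pairing, and finally re-assemble the right-hand side as $\ip{G^\mp_{\bbM/\pt_n}(\zeta_1\otimes F_1)}{\zeta_2\otimes F_2}_{\bbM/\pt_n}$ with the claimed sign. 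The sign bookkeeping uses $\vert G^\pm_{\bbM}\vert = \vert P_{\bbM}\vert$ (Definition~\ref{defi:green}) and the fact that $\vert\zeta_i\vert\,\vert\Ber(E)\vert = \vert\zeta_i\vert\,\vert P_{\bbM}\vert$. For item (ii), the point is that the complex (\ref{eqn:enrichedsequence}) is $\id_{\Lambda_n}\otimes$ applied to the complex (\ref{eqn:sequence}) of Theorem~\ref{theo:sequence}; since $\Lambda_n = \bigwedge^\bullet\bbR^n$ is free (hence flat) as an $\bbR$-module, the functor $\Lambda_n\otimes_{\bbR}(-)$ is exact and therefore preserves exactness of the sequence, and it clearly preserves the property of being a complex. (One should note that $\Lambda_n\otimes\OO_\sc(\bbM) = (\Lambda_n\otimes\OO(\bbM))_\sc$ in the $\pt_n$-relative sense, since the reduced manifold and hence the causal structure is unchanged by taking the product with $\pt_n$.) For item (iii), I would invoke Lemma~\ref{lem:greennatural} on the $\OO(\bbM)$-level together with the defining formula (\ref{eqn:lambdaopast})--style description of $\chi^\ast$ and $\chi_\ast$ on $\pt_n$-relative sections (from Subsection~\ref{ssec:eOO} and Lemma~\ref{lem:relativepushforward}), noting that for $\chi\in\eSLoc(\bbM,\bbM^\prime)(\pt_n)$ the maps $\chi^\ast$ and $\chi_\ast$ are right-$\Lambda_n$-linear, so the identity $G^\pm_{\bbM/\pt_n} = \chi^\ast\circ G^\pm_{\bbM^\prime/\pt_n}\circ\chi_\ast$ reduces to the non-relative one after applying both sides to $\zeta\otimes F$; alternatively, one repeats verbatim the uniqueness-plus-support argument of Lemma~\ref{lem:greennatural} in the relative setting, using Lemma~\ref{lem:relativepushforward}(i) in place of Lemma~\ref{lem:pushforward}(i) and the enriched naturality diagram (\ref{eqn:enricheddiffopsimplediagrams}) in place of (\ref{eqn:Pnatural}).

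The main obstacle, as the statement of the lemma itself half-admits (``all the proofs are algebraic and only use the properties of super-Green's operators''), is not conceptual but notational: one must be scrupulous with the Koszul signs that appear when commuting $\Lambda_n$-elements past the odd operators $P_{\bbM}$, $G^\pm_{\bbM}$ and past $\Ber(E)$ when $\dim(S)$ is odd. The cleanest way to control this is to prove once and for all a ``functoriality of $\Lambda_n\otimes(-)$'' statement --- that applying $\id_{\Lambda_n}\otimes(-)$ to any right-$\bbR$-linear map of a given $\bbZ_2$-parity yields a right-$\Lambda_n$-linear map of the same parity, compatibly with composition and with the bilinear pairings via (\ref{eqn:relativepairing2}) --- and then deduce (i)--(iii) by transporting the non-relative identities of Lemmas~\ref{lem:greenadjoint}, \ref{lem:greennatural} and Theorem~\ref{theo:sequence} along this functor. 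Since the excerpt already instructs us that the proofs may be omitted, I would present the argument at this level of generality and leave the explicit sign computations to the reader, exactly as the surrounding text does.
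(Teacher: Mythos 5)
Your proposal is correct and follows essentially the same route as the paper, which deliberately omits the proof on the grounds that ``all the proofs are algebraic and only use the properties of super-Green's operators'' and hence carry over verbatim from Lemma \ref{lem:greenadjoint}, Theorem \ref{theo:sequence} and Lemma \ref{lem:greennatural} by $\Lambda_n$-superlinear extension; your sign bookkeeping for (i) and the flatness argument for (ii) are valid ways of making this precise. The one point to be careful about is item (iii): a general $\chi\in\eSLoc(\bbM,\bbM^\prime)(\pt_n)$ is \emph{not} of the product form $\id_{\pt_n}\times\chi_0$ (its pull-back mixes the $\Lambda_n$- and $\OO$-factors), so your first route of ``reducing to the non-relative Lemma \ref{lem:greennatural} after applying both sides to $\zeta\otimes F$'' does not apply; your alternative of repeating the uniqueness-plus-support argument in the relative setting, using Lemma \ref{lem:relativepushforward} (i) and the enriched naturality diagram (\ref{eqn:enricheddiffopsimplediagrams}), is the correct one and is what the paper intends.
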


\subsection{The \texorpdfstring{$\SSet$}{SSet}-functor \texorpdfstring{$\eLL : \eSLoc \to\eXX$}{eL : eSLoc -> eX}}
Given any object $\bbV = (V,\tau)$ in
$\XX$ and any object $\pt_n$ in $\SPt^\op$, we can consider the object
$\Lambda_n\otimes V$ in $\Lambda_n\text{-}\Mod$ and define a 
$\Lambda_n\text{-}\Mod$-morphism
\begin{flalign}
\nn \tau_{\pt_n} : (\Lambda_n\otimes V)\otimes_{\Lambda_n}(\Lambda_n\otimes V)&\longrightarrow \Lambda_n~,\\
(\zeta_1\otimes v_1)\otimes_{\Lambda_n}(\zeta_2\otimes v_2) &\longmapsto (-1)^{\vert \zeta_2\vert\,\vert v_1\vert}\,\zeta_1\zeta_2\,\tau(v_1,v_2)~.
\end{flalign}
Let us now enrich the category $\XX$ given in Definition \ref{defi:XX}.
\begin{defi}\label{defi:eXX}
The $\SSet$-category $\eXX$ is given by the following data:
\begin{itemize}
\item The objects are all objects $\bbV = (V,\tau)$ in $\XX$.

\item For any two objects $\bbV$ and $\bbV^\prime$ in $\eXX$, the object of morphisms
from $\bbV$ to $\bbV^\prime$ is the following functor $\eXX(\bbV,\bbV^\prime) : \SPt^\op\to \Set$:
For any object $\pt_n$ in $\SPt^\op$ we define $\eXX(\bbV,\bbV^\prime)(\pt_n)$ to 
be the set of all $\Lambda_n\text{-}\Mod$-morphisms
$L : \Lambda_n\otimes V\to \Lambda_n\otimes V^\prime$ satisfying
$\tau_{\pt_n}^\prime\circ (L\otimes_{\Lambda_n} L) = \tau_{\pt_n}$.
For any $\SPt^\op$-morphism 
$\lambda^\op: \pt_n\to\pt_m$ we define the map of sets
\begin{flalign}
\eXX(\bbV,\bbV^\prime)(\lambda^\op):= \lambda^\op_\ast : \eXX(\bbV,\bbV^\prime)(\pt_n)\longrightarrow \eXX(\bbV,\bbV^\prime)(\pt_m)~,
\end{flalign}
where $\lambda^\op_\ast$ is given in (\ref{eqn:lambdaopastSVec}). 

\item The composition and identity morphisms are defined as in the $\SSet$-category 
$\eSVec$, see Definition \ref{defi:eSVec}.
\end{itemize}
\end{defi}
\sk

We can now define the $\SSet$-functor $\eLL : \eSLoc\to \eXX$ as follows:
To any object $\bbM$ in $\eSLoc$ we assign the object $\eLL(\bbM) 
:=\LL(\bbM)$ in $\eXX$ that is given in (\ref{eqn:SSspace}).
 To any two objects $\bbM$ and $\bbM^\prime$
in $\eSLoc$ we assign the $\SSet$-morphism $\eLL_{\bbM,\bbM^\prime} : \eSLoc(\bbM,\bbM^\prime)\to
\eXX(\LL(\bbM),\LL(\bbM^\prime))$ given by the natural transformation (of functors
from $\SPt^\op$ to $\Set$) with components 
\begin{flalign}\label{eqn:chiastenr}
(\eLL_{\bbM,\bbM^\prime})_{\pt_n} (\chi)  : \Lambda_n\otimes \LL(\bbM)\longrightarrow \Lambda_n\otimes \LL(\bbM^\prime)~,~~
[H] \longmapsto \big[\chi_\ast(H)\big]~,
\end{flalign}
for all $\chi \in \eSLoc(\bbM,\bbM^\prime)(\pt_n)$. Notice that (\ref{eqn:chiastenr}) is well-defined
since
\begin{flalign}
\nn \chi_\ast \big(\zeta\otimes P_{\bbM}(F) \big) &= (-1)^{\vert P_{\bbM}\vert\,\vert\zeta\vert}\,
\chi_\ast \circ  P_{\bbM/\pt_n} \circ \chi^\ast\circ \chi_\ast(\zeta\otimes F) \\
\nn &=(-1)^{\vert P_{\bbM}\vert\,\vert\zeta\vert}\,
\chi_\ast \circ \chi^\ast\circ P_{\bbM^\prime/\pt_n} \circ \chi_\ast(\zeta\otimes F) \\
&= (-1)^{\vert P_{\bbM}\vert\,\vert\zeta\vert}\, P_{\bbM^\prime/\pt_n} \circ \chi_\ast(\zeta\otimes F)
\in\Lambda_n\otimes P_{\bbM^\prime}(\OO_{\cc}(\bbM^\prime))~,
\end{flalign}
where in the first and third equality we have used Lemma \ref{lem:relativepushforward} (i)
and in the second equality the commutative diagram (\ref{eqn:enricheddiffopsimplediagrams}).
Moreover, (\ref{eqn:chiastenr}) defines an element in $\eXX(\LL(\bbM),\LL(\bbM^\prime))(\pt_n)$ because of the equality
\begin{flalign}\label{eqn:tauenriched}
\tau_{\bbM/\pt_n}\left([H_1] ,[H_2]\right)= \ip{G_{\bbM/\pt_n}(H_1)}{H_2}_{\bbM/\pt_n}~,
\end{flalign}
for all $[H_1],[H_2]\in\Lambda_n\otimes \LL(\bbM)$ and all objects $\bbM$ in $\eSLoc$, 
from which it follows that $\tau_{\bbM^\prime/\pt_n} \circ 
((\eLL_{\bbM,\bbM^\prime})_{\pt_n} (\chi)\otimes_{\Lambda_n}(\eLL_{\bbM,\bbM^\prime})_{\pt_n} (\chi)) = \tau_{\bbM/\pt_n}$,
 for all $\chi\in \eSLoc(\bbM,\bbM^\prime)(\pt_n)$; indeed,
\begin{flalign}
\nn \tau_{\bbM^\prime/\pt_n}\left([\chi_\ast(H_1)] ,[\chi_\ast(H_2)]\right) &= 
 \ip{G_{\bbM^\prime/\pt_n}\circ \chi_\ast(H_1)}{\chi_\ast(H_2)}_{\bbM^\prime/\pt_n}\\
\nn &= \ip{\chi^\ast\circ G_{\bbM^\prime/\pt_n}\circ \chi_\ast(H_1)}{H_2}_{\bbM/\pt_n}\\
&=\ip{G_{\bbM/\pt_n}(H_1)}{H_2}_{\bbM/\pt_n} = \tau_{\bbM/\pt_n}\left([H_1],[H_2]\right)~,
\end{flalign}
for all $[H_1],[H_2]\in \Lambda_n\otimes \LL(\bbM)$,
where in the second equality we have used Lemma \ref{lem:relativepushforward} (ii) and in the third equality
Lemma \ref{lem:enrichedgreen} (iii). In summary, we have shown 
\begin{propo}
The assignment $\eLL : \eSLoc \to \eXX$ given above is a $\SSet$-functor.
\end{propo}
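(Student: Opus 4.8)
The plan is to verify that the assignment $\eLL : \eSLoc \to \eXX$ defined above satisfies the two axioms of a $\SSet$-functor from Definition \ref{defi:enrichedfunctor}: compatibility with composition and compatibility with identities. The bulk of the statement has in fact already been proven in the preceding paragraphs; what remains is to assemble these pieces and to check naturality of the components together with the two structural axioms.

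First I would observe that the paragraph preceding the Proposition has already established the three nontrivial well-definedness facts: (a) $\eLL(\bbM) = \LL(\bbM)$ is an object of $\eXX$, since $\LL(\bbM)$ is an object of $\XX$ by Proposition \ref{propo:LLfunctor}; (b) for each $\chi \in \eSLoc(\bbM,\bbM^\prime)(\pt_n)$ the map (\ref{eqn:chiastenr}) is well-defined on equivalence classes, because $\chi_\ast$ maps $\Lambda_n\otimes P_{\bbM}(\OO_{\cc}(\bbM))$ into $\Lambda_n\otimes P_{\bbM^\prime}(\OO_{\cc}(\bbM^\prime))$ via the enriched naturality diagram (\ref{eqn:enricheddiffopsimplediagrams}); and (c) (\ref{eqn:chiastenr}) lands in $\eXX(\LL(\bbM),\LL(\bbM^\prime))(\pt_n)$, i.e.\ it preserves the relative pairings $\tau_{\bbM/\pt_n}$, using Lemma \ref{lem:relativepushforward}~(ii) and Lemma \ref{lem:enrichedgreen}~(iii). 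So these facts may simply be cited.

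Next I would check that $\eLL_{\bbM,\bbM^\prime}$ is genuinely a natural transformation of functors $\SPt^\op\to\Set$, i.e.\ that it intertwines the maps $\lambda^\op_\ast$ on both sides. This follows because $(\eLL_{\bbM,\bbM^\prime})_{\pt_n}(\chi)$ is obtained from $\chi_\ast$ by passing to quotients, and both $\lambda^\op_\ast$ on $\eSLoc$ (from (\ref{eqn:lambdaopast})) and $\lambda^\op_\ast$ on $\eXX$ (from (\ref{eqn:lambdaopastSVec})) are defined by $\Lambda$-linear base change along $\lambda^\ast$; since the push-forward of compactly supported sections commutes with such base change (it is a $\Lambda_n$-linear extension of the non-relative push-forward, exactly as in the construction of $\eOO_{\cc}$), the required square commutes. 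I would state this as a short verification rather than grind through it.

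Finally, compatibility with composition reduces to the functoriality of the push-forward: by Lemma \ref{lem:pushforward}~(iii) (applied at the level of the product supermanifolds $\pt_n\times M$, as already used in the proof that $\eOO_{\cc}$ is a $\SSet$-functor), one has $(\chi^\prime\circ\chi)_\ast = \chi^\prime_\ast\circ\chi_\ast$, hence $(\eLL_{\bbM,\bbM^{\prime\prime}})_{\pt_n}(\chi^\prime\bullet_{\pt_n}\chi) = (\eLL_{\bbM^\prime,\bbM^{\prime\prime}})_{\pt_n}(\chi^\prime)\bullet_{\pt_n}(\eLL_{\bbM,\bbM^\prime})_{\pt_n}(\chi)$ after descending to the quotients; and compatibility with identities follows from $(\id_{\pt_n\times M})_\ast = \id$. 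I do not expect a genuine obstacle here — the construction is a straightforward transport of the non-enriched functor $\LL$ through the relative/family formalism — so the only mild care needed is bookkeeping of the equivalence classes $[\,\cdot\,]$ and of the $\Lambda_n$-module structures, ensuring that every map in sight is right $\Lambda_n$-linear and parity-preserving; this is where I would be most careful to cite Lemma \ref{lem:relativepushforward} and the enriched diagram (\ref{eqn:enricheddiffopsimplediagrams}) rather than reproving anything.
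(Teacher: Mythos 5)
Your proposal is correct and follows essentially the same route as the paper: the substantive content (well-definedness on the quotients via the enriched naturality diagram, and preservation of the relative pairings via Lemma \ref{lem:relativepushforward} and Lemma \ref{lem:enrichedgreen}) is exactly the argument the paper gives in the paragraph preceding the Proposition, with the remaining functoriality axioms reduced, as in the paper's treatment of $\eOO_{\cc}$, to functoriality of the push-forward. Your extra explicit check that $\eLL_{\bbM,\bbM^\prime}$ intertwines the maps $\lambda^\op_\ast$ is a detail the paper leaves implicit, but it is verified correctly.
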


We finish this subsection by observing that the $\SSet$-functor $\eLL :\eSLoc \to \eXX$ satisfies an
enriched version of the properties in Theorem \ref{theo:LCCFT} for ordinary super-field theories, 
which can be proven in exactly the same way:
\begin{theo}\label{theo:enLCCFT}
For any enriched super-field theory according to Definition \ref{defi:sftenriched} the associated $\SSet$-functor
$\eLL :\eSLoc\to\eXX$ satisfies the following properties, for all objects $\pt_n$ in $\SPt^\op$:
\begin{itemize}
\item Enriched locality: For any $\chi \in \eSLoc(\bbM,\bbM^\prime)(\pt_n)$, we 
have that $(\eLL_{\bbM,\bbM^\prime})_{\pt_n}(\chi)\in\eXX(\LL(\bbM),\LL(\bbM^\prime))(\pt_n)$ is monic.
\item Enriched super-causality: Given $\chi_1 \in \eSLoc(\bbM_1,\bbM)(\pt_n)$ and $\chi_2 \in \eSLoc(\bbM_2,\bbM)(\pt_n)$,
such that the images of the reduced $\tLor$-morphisms $\und{\bbM_1}\stackrel{\und{\chi_1}}{\longrightarrow} 
\und{\bbM} \stackrel{\und{\chi_2}}{\longleftarrow} \und{\bbM_2}$ are causally disjoint, then
\begin{flalign}
\tau_{\bbM/\pt_n}\left((\eLL_{\bbM_1,\bbM})_{\pt_n}(\chi_1)\big(\Lambda_n\otimes \LL(\bbM_1)\big) , 
(\eLL_{\bbM_2,\bbM})_{\pt_n}(\chi_2)\big(\Lambda_n\otimes \LL(\bbM_2)\big)\right) =\{0\}~.
\end{flalign}
\item Enriched time-slice axiom: Given any $\chi \in \eSLoc(\bbM,\bbM^\prime)(\pt_n)$ such that
$\und{\chi} : \und{\bbM}\to\und{\bbM^\prime}$ is Cauchy, then
 $(\eLL_{\bbM,\bbM^\prime})_{\pt_n}(\chi) \in\eXX(\LL(\bbM),\LL(\bbM^\prime))(\pt_n)$ is an isomorphism.
\end{itemize} 
\end{theo}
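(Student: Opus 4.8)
The plan is to reduce the three enriched properties in Theorem \ref{theo:enLCCFT} to the corresponding non-enriched properties that were already established in Theorem \ref{theo:LCCFT}, using the fact that all the relevant structures on the $\pt_n$-relative level are obtained by $\Lambda_n$-superlinear extension of the structures on $\OO_{\cc}(\bbM)$, together with the fact that the $\pt_n$-relative super-Green's operators $G^\pm_{\bbM/\pt_n} = \id_{\Lambda_n}\otimes G^\pm_{\bbM}$ satisfy exactly the same algebraic identities as in the non-enriched case, cf.\ Lemma \ref{lem:enrichedgreen}. In particular, I would note at the outset that the pairing $\ip{\,\cdot\,}{\,\cdot\,}_{\bbM/\pt_n}$ given in \eqref{eqn:relativepairing2} is weakly non-degenerate as a pairing on $(\Lambda_n\otimes\OO_{\cc}(\bbM))/P_{\bbM/\pt_n}(\Lambda_n\otimes\OO_{\cc}(\bbM))$, because it is the $\Lambda_n$-superlinear extension of the weakly non-degenerate $\tau_{\bbM}$ and the exactness statement in Lemma \ref{lem:enrichedgreen} (ii) identifies $\ker(G_{\bbM/\pt_n})$ with $\mathrm{Im}(P_{\bbM/\pt_n})$.

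For enriched locality, I would argue exactly as in the proof of Theorem \ref{theo:LCCFT}: assuming $(\eLL_{\bbM,\bbM^\prime})_{\pt_n}(\chi)([H]) = [\chi_\ast(H)] = 0$ for some $[H]\in\Lambda_n\otimes\LL(\bbM)$, one uses the identity $\tau_{\bbM^\prime/\pt_n}((\eLL_{\bbM,\bbM^\prime})_{\pt_n}(\chi)([H']),(\eLL_{\bbM,\bbM^\prime})_{\pt_n}(\chi)([H])) = \tau_{\bbM/\pt_n}([H'],[H])$ established right before the statement of the proposition above, together with weak non-degeneracy of $\tau_{\bbM/\pt_n}$, to conclude $[H]=0$. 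For enriched super-causality, I would use Lemma \ref{lem:enrichedgreen} (iii) to write $G_{\bbM/\pt_n} = \chi_i^\ast\circ G_{\bbM_i/\pt_n}\circ (\chi_i)_\ast$ and then invoke the support property of $G^\pm_{\bbM_i}$ (Definition \ref{defi:green} (iii), which extends verbatim to $G^\pm_{\bbM_i/\pt_n}$) together with the causal disjointness of the images of $\und{\chi_1},\und{\chi_2}$; the integral $\int_{M/\pt_n}\Ber(\1\otimes E)\,(\cdots)$ then vanishes for the same reason as in the non-enriched computation. For the enriched time-slice axiom I would first observe that monic plus $\tau$-preserving already gives injectivity; for surjectivity of $(\eLL_{\bbM,\bbM^\prime})_{\pt_n}(\chi)$ I would run the Cauchy-surface argument from the proof of Theorem \ref{theo:LCCFT} at the $\pt_n$-relative level: choose a partition of unity $\rho^\pm\in\OO(\bbM^\prime)$ subordinate to $U^\pm = I^\pm_{\bbM^\prime}(\Sigma^\mp)$, set $H := (\1\otimes\rho^-)\,G^+_{\bbM^\prime/\pt_n}(H_0) + (\1\otimes\rho^+)\,G^-_{\bbM^\prime/\pt_n}(H_0)$ for a representative $H_0$, and check that $H_0 - P_{\bbM^\prime/\pt_n}(H)$ is a representative of $[H_0]$ supported in $\und{\chi}(\und M)$; since $\und{\chi}$ is Cauchy this does not depend on $n$ and works by $\Lambda_n$-linearity.

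The main point to be careful about — and the one I expect to require the most attention — is verifying that all of the sign-sensitive identities used in the non-enriched proofs survive the passage to the $\pt_n$-relative setting with the correct Koszul signs coming from \eqref{eqn:relativepairing2}; concretely, one must check that the $\Lambda_n$-superlinear extension of $\tau_{\bbM}$ really does coincide with the $\tau_{\bbM/\pt_n}$ appearing in \eqref{eqn:tauenriched}, so that the identity $\tau_{\bbM/\pt_n}([H_1],[H_2]) = \ip{G_{\bbM/\pt_n}(H_1)}{H_2}_{\bbM/\pt_n}$ holds with no spurious sign, and that $\tau_{\pt_n}$ from Definition \ref{defi:eXX} agrees with this. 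Once the signs are matched, every step is algebraic and purely formal, so I would state the theorem with the remark that the proof is obtained by repeating the arguments of Theorem \ref{theo:LCCFT} verbatim on the $\pt_n$-relative level, invoking Lemma \ref{lem:relativepushforward} in place of Lemma \ref{lem:pushforward} and Lemma \ref{lem:enrichedgreen} in place of the results of Subsection \ref{subsec:SGOP}, and omit the routine repetition. The only genuinely new ingredient is the existence of a partition of unity on the supermanifold $\pt_n\times M$, which follows from the existence of partitions of unity on supermanifolds (used already in the proof of Theorem \ref{theo:LCCFT}) applied to $\pt_n\times M$, or equivalently by tensoring a partition of unity on $M$ with $\1\in\Lambda_n$.
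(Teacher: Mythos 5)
Your proposal is correct and follows exactly the route the paper takes: the paper's own proof of Theorem \ref{theo:enLCCFT} consists precisely of the remark that the arguments of Theorem \ref{theo:LCCFT} carry over verbatim to the $\pt_n$-relative level, with Lemma \ref{lem:relativepushforward} and Lemma \ref{lem:enrichedgreen} replacing their non-enriched counterparts. Your additional checks (weak non-degeneracy of $\tau_{\bbM/\pt_n}$ via the exact sequence, the Koszul signs in \eqref{eqn:relativepairing2}, and the $\pt_n$-relative partition-of-unity argument) are exactly the points one needs to verify, and they all go through as you describe.
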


\subsection{The \texorpdfstring{$\SSet$}{SSet}-quantization functor \texorpdfstring{$\eQQ : \eXX \to \eSAlg $}{eQ : eX -> eS*Alg}}
We define an enriched version of the category $\SAlg$ using ``extension of scalars'' for algebras 
(cf.\ \cite[Chapter  III.1.5]{BourbakiA1} for the general concept) and adapt the results obtained in Subsection
\ref{ssec:eOO} to the category of super-$\ast$-algebras. For this let us denote by $\Lambda_n^\bbC:=\Lambda_n\otimes \bbC$
 the complexification of the Grassmann algebra, for all $n\in \bbN^0$,
and notice that $\Lambda_n^\bbC$ is an object in $\SAlg$ when equipped with the superinvolution
$\ast := \id_{\Lambda_n} \otimes \overline{\,\cdot\,} : \Lambda_n^\bbC\to \Lambda_n^\bbC$.
We shall denote the product in $\Lambda_n^\bbC$ by $\mu_n^\bbC$ and the unit by $\eta_n^\bbC$.
Let $A$ and  $A^\prime$ be any two objects in $\SAlg$ and $\pt_n$ any object in $\SPt^\op$.
A $\SAlg$-morphism $\kappa : \Lambda_n^\bbC \otimes_{\bbC} A \to\Lambda_n^\bbC\otimes_{\bbC} A^\prime$
is called a {\em $\Lambda^\bbC_n$-relative $\SAlg$-morphism} (in short $\Lambda_n^{\bbC}\text{-}\SAlg$-morphism) 
provided that $\kappa(\zeta\otimes_{\bbC} \1) = \zeta \otimes_{\bbC}\1$ for all $\zeta\in\Lambda_n^{\bbC}$,
 i.e.\ $\kappa$ is $\Lambda_n$-superlinear.
Notice that the identity $\id_{\Lambda_n^\bbC\otimes_{\bbC} A}$ is a $\Lambda_{n}^\bbC\text{-}\SAlg$-morphism
and that any two $\Lambda_{n}^\bbC\text{-}\SAlg$-morphisms 
$\kappa : \Lambda_n^\bbC \otimes_{\bbC} A \to\Lambda_n^\bbC\otimes_{\bbC} A^\prime$ and
$\kappa^\prime : \Lambda_n^\bbC \otimes_{\bbC} A^\prime \to\Lambda_n^\bbC\otimes_{\bbC} A^{\prime\prime}$
can be composed, i.e.\ $\kappa^\prime \circ \kappa : \Lambda_n^\bbC \otimes_{\bbC} A \to
\Lambda_n^\bbC\otimes_{\bbC} A^{\prime\prime}$ is a $\Lambda_{n}^\bbC\text{-}\SAlg$-morphism.
In analogy to Lemma \ref{lem:SVecptcharacterization}, the $\Lambda_n^\bbC\text{-}\SAlg$-morphisms 
$\kappa : \Lambda_n^{\bbC}\otimes_{\bbC} A\to \Lambda_n^{\bbC}\otimes_{\bbC} A^\prime$
can be easily characterized.
\begin{lem}\label{lem:relalgmorph}
Let $A$ and $A^\prime$ be any two objects in $\SAlg$ and $\pt_n$ any object in $\SPt^\op$.  Then the map
\begin{flalign}
\nn \gamma_{\pt_n} : \Hom_{\Lambda_n^\bbC\text{-}\SAlg}(\Lambda^\bbC_n\otimes_{\bbC}A,\Lambda_{n}^{\bbC}\otimes_{\bbC}A^\prime) &\longrightarrow \Hom_{\SAlg}(A,\Lambda_n^{\bbC}\otimes_{\bbC} A^\prime)~,\\
\big(\kappa : \Lambda_n^{\bbC}\otimes_{\bbC} A\to \Lambda^n_{\bbC}\otimes_{\bbC} A^\prime\big)&\longmapsto
\big(\kappa \circ (\eta_n^\bbC\otimes_{\bbC}\id_{A}) :  A\to \Lambda_n^{\bbC}\otimes_{\bbC} A^\prime\big)
\end{flalign}
is a bijection of sets.
\end{lem}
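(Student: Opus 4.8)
The plan is to mimic the proof of Lemma \ref{lem:SVecptcharacterization} (the ``extension of scalars'' bijection $\beta_{\pt_n}$ for super-vector spaces), upgrading it from the level of $\SVec$-morphisms to that of $\SAlg$-morphisms. The key point is that $\Lambda_n^\bbC\otimes_\bbC A$ is the pushout (in the category of superalgebras) of $A$ along the unit $\eta_n^\bbC:\bbC\to\Lambda_n^\bbC$, so that $\Lambda_n^\bbC\text{-}\SAlg$-morphisms out of it are the same data as $\SAlg$-morphisms out of $A$ landing in the same target. Concretely, I would first check that $\gamma_{\pt_n}$ is well-defined: given a $\Lambda_n^\bbC\text{-}\SAlg$-morphism $\kappa$, the composite $\kappa\circ(\eta_n^\bbC\otimes_\bbC\id_A)$ is a composition of $\SAlg$-morphisms (the inclusion $A\simeq\bbC\otimes_\bbC A\to\Lambda_n^\bbC\otimes_\bbC A$ is an $\SAlg$-morphism since $\Lambda_n^\bbC\otimes_\bbC A$ carries the tensor-product superalgebra structure and $\eta_n^\bbC$ is the unit), hence an $\SAlg$-morphism.

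Second, I would construct the inverse explicitly. Given an $\SAlg$-morphism $\varphi:A\to\Lambda_n^\bbC\otimes_\bbC A^\prime$, define
\begin{flalign}\label{eqn:gammainv}
\gamma_{\pt_n}^{-1}(\varphi):\Lambda_n^\bbC\otimes_\bbC A\longrightarrow \Lambda_n^\bbC\otimes_\bbC A^\prime~,~~\zeta\otimes_\bbC a\longmapsto (\zeta\otimes_\bbC\1)\,\varphi(a)~,
\end{flalign}
extended $\bbC$-linearly, where the product on the right is taken in $\Lambda_n^\bbC\otimes_\bbC A^\prime$. One must verify that this is an $\SAlg$-morphism, which is the only step involving a short computation: multiplicativity follows from supercommutativity of $\Lambda_n^\bbC$ (needed to move the Grassmann factors past each other, producing the correct Koszul sign $(-1)^{\vert a_1\vert\,\vert\zeta_2\vert}$ that matches the sign in the tensor-product product $\mu_{\Lambda_n^\bbC\otimes_\bbC A}$) together with multiplicativity of $\varphi$; unit preservation is immediate from $\varphi(\1)=\1$; and $\Lambda_n$-superlinearity, i.e.\ $\gamma_{\pt_n}^{-1}(\varphi)(\zeta\otimes_\bbC\1)=\zeta\otimes_\bbC\1$, is built into the formula. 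Hence $\gamma_{\pt_n}^{-1}(\varphi)$ is a $\Lambda_n^\bbC\text{-}\SAlg$-morphism.

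Third, I would check the two composites are identities. For $\gamma_{\pt_n}\circ\gamma_{\pt_n}^{-1}=\id$: evaluating $\gamma_{\pt_n}^{-1}(\varphi)$ on $\1\otimes_\bbC a$ gives $(\1\otimes_\bbC\1)\,\varphi(a)=\varphi(a)$, so applying $\gamma_{\pt_n}$ returns $\varphi$. For $\gamma_{\pt_n}^{-1}\circ\gamma_{\pt_n}=\id$: given a $\Lambda_n^\bbC\text{-}\SAlg$-morphism $\kappa$, write any element as $\zeta\otimes_\bbC a=(\zeta\otimes_\bbC\1)\,(\1\otimes_\bbC a)$ in $\Lambda_n^\bbC\otimes_\bbC A$; since $\kappa$ is multiplicative and $\Lambda_n$-superlinear, $\kappa(\zeta\otimes_\bbC a)=(\zeta\otimes_\bbC\1)\,\kappa(\1\otimes_\bbC a)=(\zeta\otimes_\bbC\1)\,\bigl(\gamma_{\pt_n}(\kappa)\bigr)(a)=\bigl(\gamma_{\pt_n}^{-1}(\gamma_{\pt_n}(\kappa))\bigr)(\zeta\otimes_\bbC a)$, using that $\Lambda_n^\bbC\otimes_\bbC A$ is generated as a superalgebra by the two subalgebras $\Lambda_n^\bbC\otimes_\bbC\1$ and $\1\otimes_\bbC A$.

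The only mild obstacle is bookkeeping the Koszul signs in verifying that \eqref{eqn:gammainv} is multiplicative: one needs $\bigl((\zeta_1\otimes_\bbC\1)\varphi(a_1)\bigr)\bigl((\zeta_2\otimes_\bbC\1)\varphi(a_2)\bigr)=(\zeta_1\zeta_2\otimes_\bbC\1)\,\varphi(a_1a_2)$ up to exactly the sign $(-1)^{\vert a_1\vert\,\vert\zeta_2\vert}$ that appears when one reorders to match $\mu_{\Lambda_n^\bbC\otimes_\bbC A}(\zeta_1\otimes_\bbC a_1\otimes\zeta_2\otimes_\bbC a_2)=(-1)^{\vert a_1\vert\,\vert\zeta_2\vert}(\zeta_1\zeta_2)\otimes_\bbC(a_1a_2)$; this is precisely the consistency already built into the tensor-product superalgebra structure recalled in the preliminaries, so it works out. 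Everything else is formal, which is presumably why the statement can be given without the authors writing out the proof in full. I would note in passing that one may also phrase the whole argument abstractly as the universal property of the pushout $\Lambda_n^\bbC\otimes_\bbC A$ in $\SAlg$, which makes the bijection manifest, but the explicit formulas above are more in keeping with the style of Lemma \ref{lem:SVecptcharacterization}.
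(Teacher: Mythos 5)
Your proof is correct and follows exactly the route the paper intends: the paper omits the proof of this lemma, indicating only that it is ``in analogy to Lemma \ref{lem:SVecptcharacterization}'', whose proof constructs precisely your explicit inverse $\gamma_{\pt_n}^{-1}(\varphi)(\zeta\otimes_{\bbC}a)=(\zeta\otimes_{\bbC}\1)\,\varphi(a)$ by $\Lambda_n$-superlinear extension, and your Koszul-sign bookkeeping for multiplicativity is the right (and only nontrivial) computation. The one item to add is the check that $\gamma_{\pt_n}^{-1}(\varphi)$ also preserves the superinvolution (morphisms in $\SAlg$ are super-$\ast$-algebra morphisms), which goes through because $\zeta^\ast\otimes_{\bbC}\1$ supercommutes with $\varphi(a)^\ast$ and the resulting signs cancel.
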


Given any two objects $A$ and $A^\prime$ in $\SAlg$ and any $\SPt^\op$-morphism
$\lambda^\op : \pt_n\to\pt_m$ (i.e.\ a $\SPt$-morphism $\lambda : \pt_m\to\pt_n$)
we can define a map of sets
\begin{flalign}
\Hom_{\SAlg}(A,\Lambda_n^{\bbC}\otimes_{\bbC} A^\prime)\longrightarrow \Hom_{\SAlg}(A,\Lambda_m^{\bbC}\otimes_{\bbC} A^\prime)~,~~\varphi \longmapsto (\lambda^\ast \otimes_{\bbC}\id_{A^\prime})\circ \varphi~,
\end{flalign}
where the extension of $\lambda^\ast : \Lambda_n\to \Lambda_m$ to the complexifications is implicitly understood.
Using also Lemma \ref{lem:relalgmorph} we obtain a map of sets
\begin{flalign}
\nn \lambda^\op_\ast : \Hom_{\Lambda_n^\bbC\text{-}\SAlg}(\Lambda^\bbC_n \otimes_{\bbC}A,\Lambda_{n}^{\bbC}\otimes_{\bbC}A^\prime)&\longrightarrow \Hom_{\Lambda_m^\bbC\text{-}\SAlg}(\Lambda^\bbC_m\otimes_{\bbC}A,\Lambda_{m}^{\bbC}\otimes_{\bbC}A^\prime)~,\\
\kappa &\longmapsto \gamma_{\pt_m}^{-1}\big((\lambda^\ast\otimes_{\bbC}\id_{A^\prime})\circ \gamma_{\pt_n}(\kappa)\big)~.\label{eqn:changebasisalg}
\end{flalign}
The following properties can be easily derived from (\ref{eqn:changebasisalg}). We therefore can omit the proof.
\begin{lem}\label{lem:lambdaopastSAlg}
\begin{itemize}
\item[(i)] For any identity $\SPt^\op$-morphism $\lambda^\op = \id_{\pt_n} : \pt_n\to\pt_n$
the map $\lambda^\op_\ast$ is the identity. For any two $\SPt^\op$-morphisms $\lambda^\op : \pt_n\to\pt_m$ and
 $\lambda^{\prime\op} : \pt_m\to\pt_l$ 
we have that $(\lambda^{\prime\op}\circ^\op \lambda^\op)_\ast =\lambda^{\prime\op}_\ast\circ \lambda^{\op}_\ast $.
\item[(ii)] $\lambda^\op_\ast$ preserves identities and compositions, i.e.\
\begin{flalign}
\lambda^\op_\ast(\id_{\Lambda_n^\bbC\otimes_\bbC A}) = \id_{\Lambda^\bbC_m\otimes_{\bbC} A}~,\qquad \lambda^\op_\ast(\kappa^\prime\circ \kappa) = \lambda^\op_\ast(\kappa^\prime)\circ \lambda^\op_\ast(\kappa) ~,
\end{flalign}
for all objects $A$ in $\SAlg$ and 
all $\Lambda_n^\bbC\text{-}\SAlg$-morphisms $\kappa : 
\Lambda_n^\bbC\otimes_{\bbC} A\to\Lambda_n^\bbC \otimes_{\bbC} A^\prime$ 
and $\kappa^\prime : \Lambda_n^\bbC\otimes_{\bbC} A^{\prime} \to\Lambda_{n}^\bbC\otimes_{\bbC} A^{\prime\prime}$.
\item[(iii)] $\lambda^\op_\ast$ preserves isomorphisms, i.e.\ 
$\kappa : \Lambda_n^\bbC\otimes_{\bbC} A\to\Lambda_n^\bbC\otimes_{\bbC} A^\prime$ is 
a $\Lambda_n^\bbC\text{-}\SAlg$-isomorphism if and only if 
$\lambda^\op_\ast(\kappa) : \Lambda_m^\bbC\otimes_{\bbC} A\to\Lambda_m^\bbC \otimes_{\bbC} A^\prime$ is a
$\Lambda_m^\bbC\text{-}\SAlg$-isomorphism.
\end{itemize}
\end{lem}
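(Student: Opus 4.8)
The plan is to deduce all three items from the bijections $\gamma_{\pt_n}$ of Lemma~\ref{lem:relalgmorph}, exactly along the lines one would use for the parallel Lemmas~\ref{lem:lambdaopast} and~\ref{lem:lambdaopastSVec}. First I would record the explicit inverse of $\gamma_{\pt_n}$, obtained just as for $\beta_{\pt_n}$ in the proof of Lemma~\ref{lem:SVecptcharacterization}: for a $\SAlg$-morphism $\varphi:A\to\Lambda_n^\bbC\otimes_\bbC A^\prime$ one sets
\begin{flalign}
\gamma_{\pt_n}^{-1}(\varphi):\Lambda_n^\bbC\otimes_\bbC A\longrightarrow \Lambda_n^\bbC\otimes_\bbC A^\prime~,\quad \zeta\otimes_\bbC a\longmapsto (\zeta\otimes_\bbC \1)\,\varphi(a)~,
\end{flalign}
and extends $\bbC$-linearly. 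Relative to the purely linear situation of Lemma~\ref{lem:lambdaopastSVec} the only extra point to check here is that $\gamma_{\pt_n}^{-1}(\varphi)$ is again an algebra morphism; this follows from $\varphi$ being one together with the supercommutativity of $\Lambda_n^\bbC$, i.e.\ it is the extension-of-scalars statement of \cite[Chapter~III.1.5]{BourbakiA1}.

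With this in hand, items (i) and (ii) become pure diagram chases. For (i): if $\lambda^\op=\id_{\pt_n}$ then $\lambda^\ast=\id_{\Lambda_n}$, so $\lambda^\ast\otimes_\bbC\id_{A^\prime}$ is the identity and $\lambda^\op_\ast(\kappa)=\gamma_{\pt_n}^{-1}\bigl(\gamma_{\pt_n}(\kappa)\bigr)=\kappa$; and for two $\SPt^\op$-morphisms $\lambda^\op:\pt_n\to\pt_m$, $\lambda^{\prime\op}:\pt_m\to\pt_l$ I would use that $\lambda^{\prime\op}\circ^\op\lambda^\op$, read in $\SPt$, is the composite $\lambda\circ\lambda^\prime$, hence $(\lambda\circ\lambda^\prime)^\ast=\lambda^{\prime\ast}\circ\lambda^\ast$, together with $\gamma_{\pt_m}\circ\gamma_{\pt_m}^{-1}=\id$ and functoriality of $(-)\otimes_\bbC\id_{A^\prime}$, to chain the two applications of (\ref{eqn:changebasisalg}) into one. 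For (ii): preservation of the identity uses $\gamma_{\pt_n}\bigl(\id_{\Lambda_n^\bbC\otimes_\bbC A}\bigr)=\eta_n^\bbC\otimes_\bbC\id_A$ and the fact that $\lambda^\ast$ preserves units, so that $(\lambda^\ast\otimes_\bbC\id_A)\circ(\eta_n^\bbC\otimes_\bbC\id_A)=\eta_m^\bbC\otimes_\bbC\id_A$, whose image under $\gamma_{\pt_m}^{-1}$ is $\id_{\Lambda_m^\bbC\otimes_\bbC A}$. For preservation of composition the cleanest route is to identify $\lambda^\op_\ast(\kappa)$ with the base change $\id_{\Lambda_m^\bbC}\otimes_{\lambda^\ast}\kappa$ of $\kappa$ along the superalgebra morphism $\lambda^\ast:\Lambda_n^\bbC\to\Lambda_m^\bbC$, under the canonical isomorphism $\Lambda_m^\bbC\otimes_{\Lambda_n^\bbC}\bigl(\Lambda_n^\bbC\otimes_\bbC A\bigr)\cong\Lambda_m^\bbC\otimes_\bbC A$; checking this on simple tensors is immediate from the formulas above, and then preservation of composition (and of identities) is just the functoriality of base change, cf.\ \cite[Chapter~III.1.5]{BourbakiA1}.

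Finally (iii). The ``only if'' direction is formal: given a $\Lambda_n^\bbC\text{-}\SAlg$-isomorphism $\kappa$, its set-theoretic inverse is automatically $\Lambda_n$-superlinear and multiplicative, hence a $\Lambda_n^\bbC\text{-}\SAlg$-morphism $\kappa^{-1}$, and by (ii) $\lambda^\op_\ast(\kappa)\circ\lambda^\op_\ast(\kappa^{-1})=\lambda^\op_\ast(\kappa\circ\kappa^{-1})=\lambda^\op_\ast(\id)=\id$, and likewise on the other side. The ``if'' direction is the one genuinely nontrivial step, and I expect it to be the main obstacle, precisely because $\lambda^\ast:\Lambda_n^\bbC\to\Lambda_m^\bbC$ need not be flat (e.g.\ it may be the augmentation), so faithful flatness is not available. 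The way around this: any superalgebra morphism between complexified Grassmann algebras sends nilpotents to nilpotents, so $\lambda^\ast$ maps the nilpotent augmentation ideal $\mathfrak{m}_n\subset\Lambda_n^\bbC$ into $\mathfrak{m}_m\subset\Lambda_m^\bbC$ and induces $\id_\bbC$ on the residue fields $\Lambda_n^\bbC/\mathfrak{m}_n\cong\bbC\cong\Lambda_m^\bbC/\mathfrak{m}_m$; consequently, using the base-change description from (ii), one gets $\lambda^\op_\ast(\kappa)\otimes_{\Lambda_m^\bbC}\bbC\cong\kappa\otimes_{\Lambda_n^\bbC}\bbC$ as $\bbC$-linear maps. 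Since $\mathfrak{m}_n$ and $\mathfrak{m}_m$ are nilpotent, a $\Lambda_n^\bbC$-linear (respectively $\Lambda_m^\bbC$-linear) map between free supermodules is invertible exactly when its reduction modulo the maximal ideal is, by induction along the finite $\mathfrak{m}$-adic filtration; hence $\lambda^\op_\ast(\kappa)$ invertible forces $\kappa\otimes_{\Lambda_n^\bbC}\bbC$, and thus $\kappa$ itself, invertible, its inverse being again a $\Lambda_n^\bbC\text{-}\SAlg$-morphism.
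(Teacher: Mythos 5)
Your proof is correct. The paper itself omits the argument, stating only that all three items ``can be easily derived from (\ref{eqn:changebasisalg})'', and your treatment of (i) and (ii) is exactly the routine unwinding of the definition via the bijections $\gamma_{\pt_n}$ (together with the extension-of-scalars description) that the authors have in mind; your sign-free manipulations go through because elements of the form $\zeta\otimes_{\bbC}\1$ supercommute with everything in the tensor-product superalgebra. The one place where something beyond formal bookkeeping is needed is the ``if'' direction of (iii) --- note that $\lambda^\ast$ may well be the augmentation $\Lambda_n^{\bbC}\to\bbC$, so no flatness argument is available --- and you identify and resolve this correctly: every $\lambda^\ast$ induces the identity on the residue fields, so $\lambda^\op_\ast(\kappa)$ and $\kappa$ have the same body, and invertibility of a $\Lambda_n^{\bbC}$-superlinear map between extended-scalar modules is detected by its body because the augmentation ideal is nilpotent (the finite geometric series inverts $\1+N$). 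This is the same mechanism that underlies the analogous items (iii) of Lemmas \ref{lem:lambdaopast} and \ref{lem:lambdaopastSVec}, so your writeup supplies precisely the detail the paper suppresses.
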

\sk

With these preparations we can now define the $\SSet$-category $\eSAlg$.
\begin{defi}
The $\SSet$-category $\eSAlg$ is given by the following data:
\begin{itemize}
\item The objects are all objects $A$ in $\SAlg$.

\item For any two objects $A$ and $A^\prime$ in $\eSAlg$, 
the object of morphisms from $A$ to $A^\prime$ is given by the following functor $\eSAlg(A,A^\prime) : \SPt^\op \to \Set$:
 For any object $\pt_n$ in $\SPt^\op$ we define $\eSAlg(A,A^\prime)(\pt_n)$ to be the set of
all $\Lambda_n^\bbC\text{-}\SAlg$-morphisms $\kappa : \Lambda_n^\bbC\otimes_{\bbC}A\to 
\Lambda_n^\bbC\otimes_{\bbC} A^\prime$.
For any $\SPt^\op$-morphism $\lambda^\op :\pt_n\to\pt_m$ we define the
map of sets
\begin{flalign}
\eSAlg(A,A^\prime)(\lambda^\op) := \lambda^\op_\ast : \eSAlg(A,A^\prime)(\pt_n)\longrightarrow \eSAlg(A,A^\prime)(\pt_m)~,
\end{flalign}
where $\lambda^\op_\ast$ is given in (\ref{eqn:changebasisalg}).

\item For any three objects $A$, $A^\prime$ and $A^{\prime\prime}$ in $\eSAlg$, we define the composition
morphism $\bullet : \eSAlg(A^\prime,A^{\prime\prime})\times \eSAlg(A,A^\prime) \to \eSAlg(A,A^{\prime\prime})$
to be the natural transformation with components 
\begin{flalign}
\bullet_{\pt_n}:=\circ : \eSAlg(A^\prime,A^{\prime\prime})(\pt_n)\times \eSAlg(A,A^\prime)(\pt_n)\longrightarrow\eSAlg(A,A^{\prime\prime})(\pt_n)~,
\end{flalign}
where $\circ$ is the composition of $\Lambda_n^\bbC\text{-}\SAlg$-morphisms.

\item For any object $A$ in $\eSAlg$, we define the identity on $A$ morphism
$\oone : \mathfrak{I} \to  \eSAlg(A,A)$ to be the natural transformation with components
\begin{flalign}
\oone_{\pt_n}: \pt= \mathfrak{I}(\pt_n) \longrightarrow \eSAlg(A,A)(\pt_n)~,~~\star\longmapsto \id_{\Lambda_n^\bbC\otimes_{\bbC} A}~,
\end{flalign}
where $\id_{\Lambda_n^\bbC\otimes_{\bbC}A}$ is the identity $\Lambda_n^\bbC\text{-}\SAlg$-morphism.
\end{itemize}
\end{defi}
\sk

The quantization $\SSet$-functor $\eQQ : \eXX \to \eSAlg$ is constructed as follows:
To any object $\bbV$ in $\eXX$ we assign the object $\eQQ(\bbV) := \QQ(\bbV)$ in $\eSAlg$
that has been constructed in (\ref{eqn:Qalg}).
To any two objects $\bbV$ and $\bbV^\prime$ in $\eXX$
we assign the $\SSet$-morphism $\eQQ_{\bbV,\bbV^\prime} : \eXX(\bbV,\bbV^\prime) \to \eSAlg(\QQ(\bbV),\QQ(\bbV^\prime))$
given by the natural transformation (of functors from $\SPt^\op$ to $\Set$) with components
 \begin{flalign}
 (\eQQ_{\bbV,\bbV^\prime})_{\pt_n} : \big(L: \Lambda_n\otimes V\to \Lambda_n\otimes V^\prime\big) 
\mapsto \big((\eQQ_{\bbV,\bbV^\prime})_{\pt_n} (L) :  \Lambda^\bbC_n\otimes_{\bbC} \QQ(\bbV) \to\Lambda^\bbC_n\otimes_{\bbC} \QQ(\bbV^\prime)\big)\,,
 \end{flalign}
 where  $(\eQQ_{\bbV,\bbV^\prime})_{\pt_n} (L)$ is the $\Lambda_n^\bbC\text{-}\SAlg$-morphism which is specified
by defining on the generators $(\eQQ_{\bbV,\bbV^\prime})_{\pt_n} (L)(\zeta\otimes_{\bbC} v) := L(\zeta\otimes v)$,
for all $v\in V$ and $\zeta\in \Lambda_n$. 
It remains to show that $(\eQQ_{\bbV,\bbV^\prime})_{\pt_n} (L)$ is well-defined, i.e.\ that it
preserves the two-sided super-$\ast$-ideals (\ref{eqn:comrel}).  Written in terms of the tensor product
superalgebra $\Lambda_n^\bbC\otimes_{\bbC} \mathcal{T}_\bbC(\bbV)$ the 
super-canonical (anti)commutation relation super-$\ast$-ideal is generated by the elements
\begin{flalign}
w_1\,w_2+ (-1)^{\dim(S)+1}\,(-1)^{\vert w_1\vert\,\vert w_2\vert}\,w_2\,w_1 - \beta\, \tau_{\pt_n}(w_1,w_2)\otimes_{\bbC}\1~,
\end{flalign}
for all homogeneous $w_1,w_2\in\Lambda_n\otimes V$. Using now 
that by definition of $\eXX$, $\tau_{\pt_n}^\prime\circ (L\otimes_{\Lambda_n} L) = \tau_{\pt_n}$, 
we obtain that $(\eQQ_{\bbV,\bbV^\prime})_{\pt_n} (L)$ 
is a well-defined $\Lambda_n^\bbC\text{-}\SAlg$-morphism. 
By direct inspection one further observes that $\eQQ_{\bbV,\bbV^\prime} $ is compatible
with composition and identity.
We therefore have shown
\begin{propo}
The assignment $\eQQ : \eXX \to \eSAlg$ given above is a $\SSet$-functor.
\end{propo}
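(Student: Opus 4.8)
The plan is to verify the three conditions that make $\eQQ$ a $\SSet$-functor in the sense of Definition \ref{defi:enrichedfunctor}: that each component $\eQQ_{\bbV,\bbV^\prime} : \eXX(\bbV,\bbV^\prime)\to\eSAlg(\QQ(\bbV),\QQ(\bbV^\prime))$ is a well-defined $\SSet$-morphism (that is, a natural transformation of functors $\SPt^\op\to\Set$), and that the family $\{\eQQ_{\bbV,\bbV^\prime}\}$ is compatible with the composition $\bullet$ and with the identity morphisms $\oone$. Most of the substance has already been assembled in the paragraph preceding the statement, so the task is mainly to organise it.

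First I would fix $\pt_n$ in $\SPt^\op$ and $L\in\eXX(\bbV,\bbV^\prime)(\pt_n)$ and check that $(\eQQ_{\bbV,\bbV^\prime})_{\pt_n}(L)$ is a well-defined $\Lambda_n^\bbC\text{-}\SAlg$-morphism. Since it is prescribed on the generators $\zeta\otimes_\bbC v$ of $\Lambda_n^\bbC\otimes_\bbC\QQ(\bbV)$ and extended as a $\Lambda_n^\bbC$-relative superalgebra morphism, two things remain: that it maps the SCCR/SCAR super-$\ast$-ideal into the corresponding ideal of $\Lambda_n^\bbC\otimes_\bbC\QQ(\bbV^\prime)$, and that it intertwines the superinvolutions. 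The first is exactly the computation already displayed: rewriting the ideal of $\Lambda_n^\bbC\otimes_\bbC\QQ(\bbV)$ in terms of generators $w_1,w_2\in\Lambda_n\otimes V$ and invoking the defining identity $\tau_{\pt_n}^\prime\circ(L\otimes_{\Lambda_n}L)=\tau_{\pt_n}$ of an $\eXX$-morphism shows that the generators $w_1\,w_2 + (-1)^{\dim(S)+1}\,(-1)^{\vert w_1\vert\,\vert w_2\vert}\,w_2\,w_1 - \beta\,\tau_{\pt_n}(w_1,w_2)\otimes_\bbC\1$ are sent into the target ideal. The second holds because the superinvolution on $\Lambda_n^\bbC\otimes_\bbC\QQ(\bbV)$ acts on the scalar factor by complex conjugation and fixes the generators $v\in V$, so compatibility on generators is immediate and extends to all of $\mathcal{T}_\bbC(\bbV)$ by the rule $(a_1\,a_2)^\ast = (-1)^{\vert a_1\vert\,\vert a_2\vert}\,a_2^\ast\,a_1^\ast$; this is just the argument for the non-enriched quantization functor $\QQ:\XX\to\SAlg$ run over the base ring $\Lambda_n^\bbC$.

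Next I would verify naturality of $\eQQ_{\bbV,\bbV^\prime}$ in the superpoint, i.e.\ that for any $\SPt^\op$-morphism $\lambda^\op:\pt_n\to\pt_m$ the map $\lambda^\op_\ast$ on $\eSAlg(\QQ(\bbV),\QQ(\bbV^\prime))$, defined by extension of scalars along $\lambda^\ast$ as in (\ref{eqn:changebasisalg}), agrees after composition with $\eQQ$ with the map $\lambda^\op_\ast$ on $\eXX(\bbV,\bbV^\prime)$ defined as in (\ref{eqn:lambdaopastSVec}). Both transition maps are computed on generators by applying $\lambda^\ast$ to the scalar factor, and $(\eQQ_{\bbV,\bbV^\prime})_{\pt_n}(L)$ is computed on generators by $\zeta\otimes_\bbC v\mapsto L(\zeta\otimes v)$, so the required square commutes by a direct check on generators.

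Finally, compatibility with $\bullet$ and $\oone$ is again a check on generators: $(\eQQ_{\bbV,\bbV^{\prime\prime}})_{\pt_n}(L^\prime\circ L)$ and $(\eQQ_{\bbV^\prime,\bbV^{\prime\prime}})_{\pt_n}(L^\prime)\circ(\eQQ_{\bbV,\bbV^\prime})_{\pt_n}(L)$ both send $\zeta\otimes_\bbC v$ to $(L^\prime\circ L)(\zeta\otimes v)$ --- here one uses the $\Lambda_n$-relativity of $(\eQQ_{\bbV^\prime,\bbV^{\prime\prime}})_{\pt_n}(L^\prime)$ to push it through the $\Lambda_n$-linear combination $L(\zeta\otimes v)$ --- and then agree as $\Lambda_n^\bbC$-relative superalgebra morphisms because they agree on generators; likewise $(\eQQ_{\bbV,\bbV})_{\pt_n}(\id_{\Lambda_n\otimes V}) = \id_{\Lambda_n^\bbC\otimes_\bbC\QQ(\bbV)}$. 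I do not expect any genuine obstacle here; the one place that deserves a moment of care is the $\ast$-compatibility in the well-definedness step, but it is handled precisely as in the non-enriched case.
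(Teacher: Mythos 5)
Your proposal is correct and follows essentially the same route as the paper: well-definedness is reduced to checking that the SCCR/SCAR ideal generators $w_1\,w_2+(-1)^{\dim(S)+1}(-1)^{\vert w_1\vert\,\vert w_2\vert}w_2\,w_1-\beta\,\tau_{\pt_n}(w_1,w_2)\otimes_{\bbC}\1$ are preserved using $\tau_{\pt_n}^\prime\circ(L\otimes_{\Lambda_n}L)=\tau_{\pt_n}$, and the remaining compatibilities (with $\ast$, with $\lambda^\op_\ast$, with $\bullet$ and $\oone$) are checked on generators. The paper dismisses these last points as ``direct inspection''; you spell them out, which is fine but not a different argument.
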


\subsection{The enriched locally covariant quantum field theory \texorpdfstring{$\eAA : \eSLoc \to \eSAlg $}{eA : eSLoc -> eS*Alg}}
Recalling Remark \ref{rem:enrichedcomposition}, we can compose the two $\SSet$-functors
$\eLL :\eSLoc\to \eXX$ and $\eQQ : \eXX \to \eSAlg$
in order to define the $\SSet$-functor
\begin{flalign}
\eAA := \eQQ\circ\eLL : \eSLoc \longrightarrow \eSAlg~.
\end{flalign}
By using the same arguments as in the proof of Theorem \ref{theo:LCQFT},
the results of Theorem \ref{theo:enLCCFT} imply
\begin{theo}\label{theo:enLCQFT}
For any enriched super-field theory according to Definition \ref{defi:sftenriched} the associated $\SSet$-functor
$\eAA :\eSLoc\to\eSAlg$ satisfies the following properties, for all objects $\pt_n$ in $\SPt^\op$:
\begin{itemize}
\item Enriched locality: For any $\chi \in \eSLoc(\bbM,\bbM^\prime)(\pt_n)$, we have that 
$(\eAA_{\bbM,\bbM^\prime})_{\pt_n}(\chi)\in\eSAlg(\AA(\bbM),\AA(\bbM^\prime))(\pt_n)$ is monic.

\item Enriched super-causality: Given $\chi_1 \in \eSLoc(\bbM_1,\bbM)(\pt_n)$ and $\chi_2 \in \eSLoc(\bbM_2,\bbM)(\pt_n)$,
such that the images of the reduced $\tLor$-morphisms 
$\und{\bbM_1}\stackrel{\und{\chi_1}}{\longrightarrow} \und{\bbM} \stackrel{\und{\chi_2}}{\longleftarrow} \und{\bbM_2}$
 are causally disjoint, then
\begin{flalign}
A_1\,A_2 + (-1)^{\dim(S)+1} \,(-1)^{\vert A_1\vert \,\vert A_2\vert}\,A_2\,A_1 =0~,
\end{flalign}
for all homogeneous $A_1\in (\eAA_{\bbM_1,\bbM})_{\pt_n}(\chi_1)\big(\Lambda_n^\bbC\otimes_{\bbC} \AA(\bbM_1)\big)
\subseteq \Lambda_n^\bbC\otimes_{\bbC} \AA(\bbM)$
and $A_2\in (\eAA_{\bbM_2,\bbM})_{\pt_n}(\chi_2)\big(\Lambda_n^\bbC\otimes_{\bbC} \AA(\bbM_2)\big)\subseteq
\Lambda_n^\bbC\otimes_{\bbC} \AA(\bbM)$.

\item Enriched time-slice axiom: Given any $\chi \in \eSLoc(\bbM,\bbM^\prime)(\pt_n)$ such that
$\und{\chi} : \und{\bbM}\to\und{\bbM^\prime}$ is Cauchy, then 
$(\eAA_{\bbM,\bbM^\prime})_{\pt_n}(\chi) \in\eSAlg(\AA(\bbM),\AA(\bbM^\prime))(\pt_n)$ is an isomorphism.
\end{itemize} 
\end{theo}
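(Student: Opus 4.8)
The plan is to obtain all three statements from their classical analogues in Theorem \ref{theo:enLCCFT} by pushing them through the quantization $\SSet$-functor $\eQQ$, in complete parallel to the way Theorem \ref{theo:LCQFT} was deduced from Theorem \ref{theo:LCCFT}. For every object $\pt_n$ in $\SPt^\op$ and every $\chi \in \eSLoc(\bbM,\bbM^\prime)(\pt_n)$ we have, by $\eAA = \eQQ\circ\eLL$ and Remark \ref{rem:enrichedcomposition}, the factorization $(\eAA_{\bbM,\bbM^\prime})_{\pt_n}(\chi) = (\eQQ_{\LL(\bbM),\LL(\bbM^\prime)})_{\pt_n}\big((\eLL_{\bbM,\bbM^\prime})_{\pt_n}(\chi)\big)$, so each assertion reduces to a property of $(\eLL_{\bbM,\bbM^\prime})_{\pt_n}(\chi)$ --- supplied by Theorem \ref{theo:enLCCFT} --- together with a structural property of the set map $(\eQQ)_{\pt_n}$. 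Since $\eQQ$ is a $\SSet$-functor, every $(\eQQ)_{\pt_n}$ preserves compositions and identities, hence also isomorphisms; this already settles the enriched time-slice axiom, because for a Cauchy $\und{\chi}$ the morphism $(\eLL_{\bbM,\bbM^\prime})_{\pt_n}(\chi)$ is an $\eXX$-isomorphism (enriched time-slice part of Theorem \ref{theo:enLCCFT}) and its image under $(\eQQ)_{\pt_n}$ is therefore an $\eSAlg$-isomorphism.

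For enriched locality I would show that $(\eQQ)_{\pt_n}$ maps monics of $\eXX$ to monics of $\eSAlg$, by adapting the arguments of \cite[Appendix A]{Fewster:2011pn} to the $\pt_n$-relative, $\Lambda_n$-superlinear situation. Here a morphism $L \in \eXX(\bbV,\bbV^\prime)(\pt_n)$ is an injective $\Lambda_n$-linear, $\tau_{\pt_n}$-preserving map between the free left $\Lambda_n$-supermodules $\Lambda_n\otimes V$ and $\Lambda_n\otimes V^\prime$, and one must check that the induced $\Lambda_n^\bbC$-relative super-$\ast$-algebra morphism on the super-canonical (anti)commutation quotients is still injective. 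Working with adapted bases and a chosen basis ordering (a relative Poincar{\'e}--Birkhoff--Witt argument for the super-CCR/CAR algebras), one realizes the relative super-$\ast$-algebra $\Lambda_n^\bbC\otimes_\bbC\QQ(\bbV)$ as a subalgebra of $\Lambda_n^\bbC\otimes_\bbC\QQ(\bbV^\prime)$. I expect this to be the main obstacle: over the non-field ground ring $\Lambda_n$ one cannot simply extend an injective $L$ to an isomorphism on complements as over $\bbR$, so the $\bbZ_2$-grading and the module structure have to be handled carefully; the identity $\chi^\ast\circ\chi_\ast=\id$ of Lemma \ref{lem:relativepushforward}(i) and the compatibility diagram (\ref{eqn:enricheddiffopsimplediagrams}) are the tools that should make this go through.

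Enriched super-causality I would first prove on generators and then extend. If $A_i = (\eAA_{\bbM_i,\bbM})_{\pt_n}(\chi_i)(a_i)$ with $a_i$ a generator of $\Lambda_n^\bbC\otimes_\bbC \AA(\bbM_i)$ coming from $\Lambda_n\otimes\OO_{\cc}(\bbM_i)$, then $A_i$ is again a generator of $\Lambda_n^\bbC\otimes_\bbC \AA(\bbM)$, lying in the image of $(\eLL_{\bbM_i,\bbM})_{\pt_n}(\chi_i)$ after quantization, and the defining relations (\ref{eqn:comrel}) of the relative super-CCR/CAR algebra give $A_1\,A_2 + (-1)^{\dim(S)+1}(-1)^{\vert A_1\vert\,\vert A_2\vert}\,A_2\,A_1 = \beta\,\tau_{\bbM/\pt_n}\big(\ldots\big)$, the argument of $\tau_{\bbM/\pt_n}$ being precisely a pair in the images $(\eLL_{\bbM_i,\bbM})_{\pt_n}(\chi_i)\big(\Lambda_n\otimes\LL(\bbM_i)\big)$. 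This pairing vanishes by the enriched super-causality part of Theorem \ref{theo:enLCCFT}, since the reduced images $\und{\chi_1}(\und{\bbM_1})$ and $\und{\chi_2}(\und{\bbM_2})$ are causally disjoint. For general $A_i$ in the images, which are $\Lambda_n^\bbC$-spans of products of such generators, I would iterate the graded Leibniz rule for super-(anti)commutators, exactly as in the fermionic case \cite{Bar:2011iu}, reducing everything to the generator computation. Assembling the three parts completes the proof.
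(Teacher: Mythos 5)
Your proposal is correct and follows essentially the same route as the paper: the paper's proof simply invokes the arguments of Theorem \ref{theo:LCQFT} applied to the enriched classical properties of Theorem \ref{theo:enLCCFT}, i.e.\ time-slice via preservation of isomorphisms by the $\SSet$-functor $\eQQ$, locality via the techniques of \cite[Appendix A]{Fewster:2011pn}, and super-causality first on generators from the ideal relations and then by iteration. Your additional remarks on the $\Lambda_n$-module subtleties in the locality step go beyond what the paper spells out but do not change the argument.
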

\sk

We conclude this section by showing that the $\SSet$-functor $\eAA$ has
an enriched locally covariant quantum field given by a $\SSet$-natural transformation (cf.\ Definition \ref{defi:enrichednatural})
$\Phi : \eOO_{\cc} \Rightarrow \eAA$ of $\SSet$-functors from $\eSLoc$ to $\eSVec$, where $\eAA$ is implicitly
understood to be composed with the forgetful $\SSet$-functor $\eSAlg\to\eSVec$.
We assign to every object $\bbM$ in $\eSLoc$ the $\SSet$-morphism $\Phi_{\bbM} : 
\mathfrak{I}\to \eSVec(\OO_{\cc}(\bbM),\AA(\bbM))$ given by the natural transformation (of functors
from $\SPt^\op$ to $\Set$) with components
\begin{subequations}\label{eqn:enlcqf}
\begin{flalign}
\nn(\Phi_{\bbM})_{\pt_n} : \pt = \mathfrak{I}(\pt_n) &\longrightarrow \eSVec(\OO_{\cc}(\bbM),\AA(\bbM))(\pt_n)~,\\
\star &\longmapsto (\Phi_{\bbM})_{\pt_n} (\star) =:\Phi_{\bbM/\pt_n} 
\end{flalign}
given by the $\Lambda_n\text{-}\Mod$-morphisms
\begin{flalign}
\Phi_{\bbM/\pt_n}  : \Lambda_n\otimes \OO_{\cc}(\bbM)\longrightarrow \Lambda_{n}^\bbC\otimes_{\bbC} \AA(\bbM)~,
~~\zeta \otimes F \longmapsto \zeta\otimes_{\bbC} [F]~.
\end{flalign}
\end{subequations}
Notice that the diagram in Definition \ref{defi:enrichednatural} commutes, i.e.\
\begin{flalign}
\Phi_{\bbM^\prime/\pt_n} \bullet_{\pt_n} ({\eOO_{\cc}}_{\bbM,\bbM^\prime})_{\pt_n}(\chi) =
(\eAA_{\bbM,\bbM^\prime})_{\pt_n}(\chi)\bullet_{\pt_n}  \Phi_{\bbM/\pt_n}~,
\end{flalign}
for all $\chi\in\eSLoc(\bbM,\bbM^\prime)(\pt_n)$ and all objects  $\pt_n$ in $\SPt^\op$.
In complete analogy to the ordinary case discussed in Subsection \ref{subsec:lcqft}
we obtain that 
\begin{flalign}
\Phi_{\bbM/\pt_n}\big(P_{\bbM/\pt_n}(H)\big) = 0~,
\end{flalign}
for all $H\in\Lambda_n\otimes \OO_{\cc}(\bbM)$,
 all objects $\bbM$ in $\eSLoc$ and all objects $\pt_n$ in $\SPt^\op$,
as well as the super-canonical (anti)commutation relations
\begin{flalign}
\Phi_{\bbM/\pt_n}(H_1)\,\Phi_{\bbM/\pt_n}(H_2) + (-1)^{\vert H_1\vert\,\vert H_2\vert}\,\Phi_{\bbM/\pt_n}(H_2)\,\Phi_{\bbM/\pt_n}(H_1) = \beta\,\tau_{\bbM/\pt_n}(H_1,H_2)~,
\end{flalign}
for all homogeneous $H_1,H_2\in \Lambda_n\otimes \OO_{\cc}(\bbM)$, all objects $\bbM$ in $\eSLoc$ and
all objects $\pt_n$ in $\SPt^\op$.
\sk

In contrast to the ordinary case discussed in Remark \ref{rem:componentfields},
the even and odd component quantum fields {\em do not} 
form natural transformations in our enriched setting. 
The reason for this is that the push-forward
$({\eOO_{\cc}}_{\bbM,\bbM^\prime})_{\pt_n}(\chi) =\chi_\ast : \Lambda_{n}\otimes \OO_{\cc}(\bbM) 
\to \Lambda_n\otimes \OO_{\cc}(\bbM^\prime)$
along a generic $\chi\in \eSLoc(\bbM,\bbM^\prime)(\pt_n)$
{\em does not} restrict to a $\Lambda_n\text{-}\Mod$-morphism
$\Lambda_{n}\otimes \OO^{\mathrm{even}/\mathrm{odd}}_{\cc}(\bbM) \to
\Lambda_{n}\otimes \OO^{\mathrm{even}/\mathrm{odd}}_{\cc}(\bbM^\prime) $.
In fact, the push-forward of a homogeneous element 
$\1\otimes F\in\Lambda_n\otimes \OO^{\mathrm{even}/\mathrm{odd}}_{\cc}(\bbM)$
schematically reads as
\begin{flalign} \label{eqn:SUSYtrafo}
\chi_\ast(\1\otimes F) = \sum_{I}\zeta^{I}\otimes F_I~,
\end{flalign}
where $\{\zeta^I\in\Lambda_n\}$ is any adapted basis for $\Lambda_n$ and $F_I\in\OO_{\cc}(\bbM^\prime)$
is of $\bbZ_2$-parity $\vert F_I\vert = \vert F\vert - \vert \zeta_I\vert ~\text{mod} ~2$.
Hence, if there is a non-vanishing coefficient $F_I$ for some odd $\zeta^I$, then
we can not restrict $\chi_\ast$ to a $\Lambda_n\text{-}\Mod$-morphism
$\Lambda_{n}\otimes \OO^{\mathrm{even}/\mathrm{odd}}_{\cc}(\bbM) \to
\Lambda_{n}\otimes \OO^{\mathrm{even}/\mathrm{odd}}_{\cc}(\bbM^\prime)$.
Those $\chi$ with this property can be identified with supersymmetry transformations, see
Section \ref{sec:examples} for some illustrating examples.
In summary, our formalism implies that  (as expected) supersymmetry transformations 
prevent the even and odd component quantum fields to be enriched natural transformations.

\begin{rem}
We would like to emphasize that the presence of the supersymmetry transformations is due to the 
enrichment of the involved categories over $\SSet$. The morphism 
$\chi\in \eSLoc(\bbM,\bbM^\prime)(\pt_n) \subseteq \Hom_{\SMan/\pt_n}(M/\pt_n,M'/\pt_n)$ appearing
in \eqref{eqn:SUSYtrafo} respects $\bbZ_2$-parity by definition. However, due to the presence of odd sections
of the structure sheaf $\Lambda_n$ of $\pt_n$, the action of $\chi_\ast$ need not preserve the splittings  
$\OO_\cc(M^{(\prime)}) = \OO_\cc^{\mathrm{even}}(M^{(\prime)}) \oplus \OO_\cc^{\mathrm{odd}}(M^{(\prime)})$ 
of the second tensor factor. Moreover, these odd parameters appearing in the structure sheaf
$\Lambda_n$ of $\pt_n$ are exactly the odd quantities used in the physics literature to parametrize 
supersymmetry transformations. 
\end{rem}

\begin{rem}
We note that our enriched super-QFT $\eAA :\eSLoc\to\eSAlg$ 
cannot be restricted to its even part\footnote{
We would like to thank the anonymous referee for suggesting this question to us.
}  $\eAA_0$ 
(cf.\ Remark \ref{rem:evensubtheory} for such a construction for the ordinary super-QFT
$\AA :\SLoc\to\SAlg$): Recall that the $\SSet$-morphism 
$\eAA_{\bbM,\bbM^\prime} : \eSLoc(\bbM,\bbM^\prime) \to \eSAlg(\eAA(\bbM),\eAA(\bbM^{\prime}))$
is a natural transformation (of functors from $\SPt^{\op}$ to $\Set$) whose components 
$(\eAA_{\bbM,\bbM^\prime})_{\pt_n}$ assign to elements $\chi\in\eSLoc(\bbM,\bbM^\prime)(\pt_n)$
certain super-$\ast$-algebra morphisms $(\eAA_{\bbM,\bbM^\prime})_{\pt_n}(\chi) : 
\Lambda^{\bbC}_n\otimes_{\bbC} \eAA(\bbM) \to \Lambda_n^{\bbC}\otimes_{\bbC} \eAA(\bbM^\prime)$.
If now $\chi$ corresponds to a supersymmetry transformation, i.e.\ it is like in (\ref{eqn:SUSYtrafo})
with a non-vanishing coefficient $F_I$ for some odd $\zeta^I$, then $(\eAA_{\bbM,\bbM^\prime})_{\pt_n}(\chi)$
does {\em not} map 
$\Lambda^{\bbC}_n\otimes_{\bbC} \eAA_0(\bbM)$ to $\Lambda^{\bbC}_n\otimes_{\bbC} \eAA_0(\bbM^\prime)$;
indeed, a supersymmetry transformation ``maps an element of the even component 
$\eAA_0(\bbM)$ to some element of the odd component $\eAA_1(\bbM^\prime)$''.
The fact that the even (i.e.\ true) observables in a supergeometric quantum 
field theory do not carry a representation of the supersymmetry transformations 
is however not problematic from the physical point of view. Supersymmetry and supergeometry are
meant to serve as a selection criterion for quantum field theories by demanding that 
only a limited collection of observables, such as for example the action functional and quantities derived from it,
transform covariantly under supersymmetry transformations.
\end{rem}

%%%%%%%%%%%%%%%%%%%%%%%%%%%%%%%%%%%%%%%%%%%%%%%%
%%%%%%%%%%%%%%%%%%%%%%%%%%%%%%%%%%%%%%%%%%%%%%%%

\section{\label{sec:examples}Examples}

\subsection{\texorpdfstring{$1\vert 1$}{1,1}-dimensions}
As our first example we shall study a super-field theory in $1\vert 1$-dimensions, i.e.\ a superparticle.
For defining this theory we have to provide all the data listed in Definition \ref{defi:sft}.

\paragraph{Representation theoretic data:} We take $W = \bbR$ together with 
the standard $1$-dimensional (Lorentz) metric 
\begin{flalign}
g :W\otimes W \longrightarrow \bbR~,~~w_1\otimes w_2 \longmapsto w_1w_2~.
\end{flalign}
The corresponding spin group is $\mathrm{Spin}(W,g) \simeq \{+1,-1\}$ and we take $S=\bbR$ 
the $1$-dimensional spin representation
\begin{flalign}
\rho^{S} : \mathrm{Spin}(W,g) \times S \longrightarrow S\,,~(z,s)\longmapsto z\,s~.
\end{flalign}
Notice that $\rho^{W} : \mathrm{Spin}(W,g) \times W\to W\,,~(z,w)\mapsto w$ is the trivial representation.
As $\Gamma$ we shall take the following  $\mathrm{Spin}(W,g) $-equivariant symmetric pairing
\begin{flalign}
\Gamma : S \otimes S \longrightarrow W~,~~s_1\otimes s_2\longmapsto s_1 s_2 ~.
\end{flalign}
As positive cone we take $C:=\bbR^+ \subset W$ and we notice that $\Gamma(s,s)\in \overline{C}$, for all $s\in S$,
and $\Gamma(s,s) =0$ only for $s=0$. For $\epsilon$ we take 
\begin{flalign}
\epsilon : S \otimes S \longrightarrow \bbR~,~~s_1\otimes s_2\longmapsto s_1 s_2
\end{flalign}
 and we notice that it is symmetric and $\mathrm{Spin}(W,g) $-invariant.
We define orientations on $W$ and $S$ by using the normalized standard bases
$p=1\in W=\bbR$ and $q=1\in S =\bbR$.
In $1\vert 1$-dimensions, the super-Poincar{\'e} super-Lie algebra coincides with the 
supertranslation super-Lie algebra (since $\mathfrak{spin}$ is trivial), and we obtain for 
the super-Lie bracket relations in the normalized adapted basis $\{p,q\}$ for $\mathfrak{sp} = \mathfrak{t} = W\oplus S$
\begin{flalign}
[p,p] =0~,\quad [p,q]=0~,\quad [q,q] = -2\,p~.
\end{flalign}

\paragraph{The objects in $\ghSCart$:} 
Let us characterize explicitly the objects in the category
 $\ghSCart$ for the case of  $1\vert 1$-dimensions.
To simplify our studies we shall further demand that the underlying topological spaces
are connected, which is also physically reasonable as they describe a time interval. 
Given any such object $\bbM = (M,\Omega,E)$ we first notice
 that $\Omega = 0$ since $\mathfrak{spin}$ is trivial in $1\vert 1$-dimensions.
 Moreover, the reduced $1$-dimensional manifold $\und{M}$ is diffeomorphic
 to the real line $\bbR$ as the reduced Lorentz manifold $\und{\bbM}$ is assumed to be globally hyperbolic.
 The structure sheaf $\OO_{M}$ is isomorphic to $C^\infty_{\bbR}\otimes \bigwedge^\bullet\bbR$
 and the supervielbein can be expanded as
 \begin{flalign} 
 E = (\gamma \, \dd t + \alpha\,\theta\,\dd\theta)\otimes p + (\delta\,\dd\theta + \beta\,\theta\,\dd t)\otimes q~,
 \end{flalign}
 where $\alpha,\beta,\gamma,\delta\in C^\infty(\bbR)$ are coefficient functions and 
 $t,\theta\in \OO(\bbM)$ are any global even/odd coordinate functions.  
 As $E$ is by assumption non-degenerate, the functions $\gamma$ and $\delta$ 
 have to be invertible and we may choose new coordinates (denoted with abuse of notation by the same symbols)
 $t\in (t_0,t_1)\subseteq \bbR$ and $\theta$, such that 
\begin{flalign}\label{eqn:11E}
 E = (\dd t + \alpha\,\theta\,\dd\theta)\otimes p + (\dd\theta + \beta\,\theta\,\dd t)\otimes q~,
\end{flalign}
where now $\alpha,\beta\in C^\infty(t_0,t_1)$.
Coordinate functions in which $E$ takes the form (\ref{eqn:11E}) are called {\em geometric coordinates}. 
The supercurvature $R_{\bbM}=0$ vanishes in $1\vert 1$-dimensions and the supertorsion is given by
\begin{flalign}\label{eqn:torsion11}
T_{\bbM} = \dd E = (\partial_t\alpha\, \theta\,\dd t\wedge \dd\theta + \alpha\,\dd\theta\wedge\dd\theta) \otimes p + 
(\beta\,\dd\theta\wedge\dd t)\otimes q~,
\end{flalign}
where $\partial_t \alpha$ denotes the time derivative of $\alpha$. 
The pairing (\ref{eqn:pairing}) reads as
\begin{flalign}\label{eqn:pairing11}
\ip{F_1}{F_2}_{\bbM} = \int_{t_0}^{t_1} \dd t\, \big(f_1\,h_2 + h_1\,f_2\big)~,
\end{flalign}
where we have used the expansion $F = f + \theta \,h\in\OO(\bbM)$
with $f,h\in C^\infty(t_0,t_1)$.

\paragraph{Super-differential operators:}
Using the dual superderivations 
\begin{flalign}
X = \partial_t - \beta\,\theta\,\partial_{\theta} ~,\quad D = \partial_{\theta} + \alpha\,\theta\,\partial_{t}
\end{flalign}
corresponding to $E$, we define an odd super-differential operator
\begin{flalign}\label{eqn:diffop11}
P_{\bbM} := X\circ D : \OO(\bbM)\longrightarrow \OO(\bbM)\,,~ F=f+\theta \,h\longmapsto \partial_t h + \theta \, \partial_t(\alpha\partial_t f)  - \theta\, \alpha\,\beta\,\partial_t f~.
\end{flalign}
Notice that due to the last term in (\ref{eqn:diffop11}), the super-differential operator $P_{\bbM}$
is i.g.\ not formally super-self adjoint with respect to the pairing (\ref{eqn:pairing11}).
If however $\beta=0$, then $P_{\bbM}$ is formally super-self adjoint.
Comparing with  (\ref{eqn:torsion11}), the constraint $\beta=0$ can be regarded 
as a supertorsion constraint which demands that the odd part of the supertorsion vanishes.
Such constraints also appear in supergravity, see e.g.\ \cite[Eqns.\ (10) and (11)]{Wess:1977fn}.

\paragraph{The category $\SLoc$:}
We define a full category $\SLoc$ of $\ghSCart$ by the conditions that 1.) the underlying
topological spaces are connected and 2.) {\em all} supergravity supertorsion constraints given in 
\cite[Eqns.\ (10) and (11)]{Wess:1977fn} hold true, which implies that $\beta =0$ and that
$\alpha$ is a constant which we fix to $\alpha=1$. 
Then the admissible super-Cartan supermanifolds $\bbM = (M,\Omega=0,E)$ 
are such that $\und{M} = (t_0,t_1)\subseteq\bbR$ is an open interval (or $\bbR$ itself) 
 and 
\begin{flalign}
E = (\dd t + \theta\,\dd\theta)\otimes p + \dd\theta \otimes q~
\end{flalign}
in geometric coordinates.
The morphisms in $\SLoc$ can be explicitly characterized:
Let $\bbM$ and $\bbM^\prime$ be any two objects in $\SLoc$. A $\SMan$-morphism
$\chi : M\to M^\prime$ is specified by its action on the (geometric) coordinates 
$t^\prime,\theta^\prime\in\OO(\bbM^\prime)$, which we can parametrize by the ansatz
\begin{flalign}
\chi^\ast(t^\prime) = a(t)~,\quad \chi^{\ast}(\theta^\prime)  = b(t)\,\theta~,
\end{flalign}
where $a,b\in C^\infty(t_0,t_1)$. In order to qualify as a $\SLoc$-morphism, $\chi$ has to preserve
the supervielbeins
\begin{flalign}
\chi^\ast(E^\prime) = \big(\dd a(t) + b(t)^2\,\theta\, \dd\theta\big)\otimes p + \dd \big(b(t)\,\theta\big) \otimes q = E=(\dd t + \theta\,\dd\theta)\otimes p + \dd\theta \otimes q~,
\end{flalign}
which implies that $a(t) = t+c$, with $c\in\bbR$, and $b(t)\equiv 1$.
Furthermore, for the reduced morphism
$\und{\chi} : \und{M} = (t_0,t_1)\to \und{M^\prime} = (t_0^\prime,t_1^\prime)$
to exist, the constant $c\in\bbR$ has to be such that $(t_0+c,t_1+c)\subseteq (t_0^\prime,t_1^\prime)$,
i.e.\ $t_0^\prime -t_0 \leq c\leq t_1^\prime-t_1$.
For a generic $F^\prime=f^\prime(t^\prime)+\theta^\prime\,h^\prime(t^\prime)\in\OO(\bbM^\prime)$, we have that
\begin{flalign}
\chi^\ast(F^\prime) = f^\prime(t+c) +\theta\,h^\prime(t+c) \in\OO(\bbM)~,
\end{flalign}
hence $\chi : \bbM\to\bbM^\prime$ describes a translation by $c$. 
 It then follows automatically that such $\chi : \bbM\to \bbM^\prime$ are morphisms 
 in $\ghSCart$ (and hence in $\SLoc$),  i.e.\ they satisfy the additional conditions imposed 
 in Definitions \ref{defi:SCart0} and \ref{defi:SCart}.

\paragraph{The natural transformation $P: \OO \Rightarrow \OO$:}
So far we have established  the first two points in the Definition \ref{defi:sft}
of a super-field theory. It remains to show that (\ref{eqn:diffop11}) are the components
of a natural transformation of formally super-self adjoint and super-Green's hyperbolic super-differential
operators. For any object $\bbM$ in $\SLoc$ the super-differential operator (\ref{eqn:diffop11}) 
takes the form
\begin{flalign}\label{eqn:diffop11simple}
P_{\bbM} (F) = \partial_t h +   \theta\,\partial_t^2 f~,
\end{flalign}
for all $F= f + \theta\,h\in \OO(\bbM)$, from which it is clear that it is formally 
super-self adjoint and super-Green's hyperbolic with super-Green's operators given by 
\begin{flalign}
G_{\bbM}^\pm (F) = G^\pm_{\partial_t^2}(h) + \theta\,G_{\partial_t}^{\pm}(f)~,
\end{flalign}
for all $F=f + \theta\,h\in \OO_{\cc}(\bbM)$, where $G^\pm_{\partial_t^2}$ and $G_{\partial_t}^{\pm}$
denote the retarded/advanced Green's operators for the component differential operators
$\partial_t^2$ and $\partial_t$, respectively.
The super-differential operators (\ref{eqn:diffop11simple}) are the components of a natural transformation
since they are translation invariant,
hence we have constructed an example of a super-field theory according to Definition
\ref{defi:sft}. Applying Theorem \ref{theo:LCQFT} we further
obtain a super-QFT, which in the $1\vert 1$-dimensional case describes a quantized superparticle.

\paragraph{Enriched super-field theory:}
We shall now show that the super-field theory defined above is also an enriched super-field theory
according to Definition \ref{defi:sftenriched}.
Let us take any two objects  $\bbM$ and $\bbM^\prime$ in $\SLoc^\op$ and
any object $\pt_n$ in $\SPt^\op$. 
We characterize explicitly the set $\eSLoc^\op(\bbM^\prime,\bbM)(\pt_n)$. Any 
$\SMan/\pt_n$-morphism $\chi : M/\pt_n\to M^\prime/\pt_n$ is specified by its action on the (geometric)
coordinates $\1\otimes t^\prime,\1\otimes \theta^\prime\in\Lambda_n\otimes \OO(M^\prime)$, 
which we can parametrize by the ansatz
\begin{subequations}
\begin{flalign}
\chi^\ast (\1\otimes t^\prime) &= \sum_{I \,\text{even}} \zeta^I\otimes a_I(t) + \sum_{I\,\text{odd}}\zeta^I\otimes  b_I(t)\,\theta~,\\
\chi^\ast(\1\otimes \theta^\prime) &= \sum_{I\,\text{even}} \zeta^I\otimes c_I(t)\,\theta + \sum_{I\,\text{odd}}\zeta^I\otimes  d_I(t)~,
\end{flalign}
\end{subequations}
where $\zeta^I$ is any adapted basis for the super-vector space underlying the Grassmann algebra
 $\Lambda_n$ and $a_I,b_I,c_I,d_I\in C^\infty(t_0,t_1)$. 
 For $\chi\in \eSLoc^\op(\bbM^\prime,\bbM)(\pt_n)$ it has to preserve the $\pt_n$-relative 
 supervielbeins, i.e.\ $\chi^\ast(\1\otimes E^\prime) = \1\otimes E$. 
 The odd part of this condition reads as
\begin{flalign}
\nn \chi^\ast(\1\otimes \dd \theta^\prime) &= (\id_{\Lambda_n} \otimes \dd) \big(\chi^\ast(\1\otimes \theta^\prime)\big)\\
 &=\sum_{I\,\text{even}} \zeta^I\otimes \dd\big(c_I(t)\,\theta\big) 
+ \sum_{I\,\text{odd}}\zeta^I\otimes  \dd\big(d_I(t)\big)= \1\otimes\dd\theta~,
\end{flalign}
and it implies that 
$\chi^\ast(\1\otimes \theta^\prime) = \1\otimes\theta + \zeta\otimes \1$,
for some odd element $\zeta\in\Lambda_n$. Using this result, the even part of the above condition reads as
\begin{flalign}
\nn \chi^\ast\big(\1\otimes(\dd t^\prime +\theta^\prime \,\dd\theta^\prime)\big)  &= 
(\id_{\Lambda_n} \otimes \dd) \big(\chi^\ast(\1\otimes t^\prime)\big)
+ \chi^\ast(\1\otimes \theta^\prime)\,(\id_{\Lambda_n} \otimes \dd) \big(\chi^\ast(\1\otimes \theta^\prime)\big)\\
\nn &=\sum_{I \,\text{even}} \zeta^I\otimes\dd\big( a_I(t) \big)+ \sum_{I\,\text{odd}}\zeta^I\otimes \dd\big( b_I(t)\,\theta\big) + \1\otimes \theta\,\dd\theta + \zeta \otimes \dd\theta\\
&= \1\otimes \big(\dd t +\theta \,\dd\theta\big)~,
\end{flalign}
which implies that $\chi^\ast(\1\otimes t^\prime) = \1\otimes (t +c)  - \zeta\otimes \theta$.
As in the ordinary case, the constant $c\in\bbR$ has to satisfy
$t_0^\prime-t_0\leq c\leq t_1^\prime-t_1 $ for the reduced morphism 
$\und{\chi} : \und{M} = (t_0,t_1)\to\und{M^\prime}= (t^\prime_0,t^\prime_1)$ to exist.
Hence, we have shown that
\begin{flalign}\label{eqn:eSLocmorph11}
\eSLoc^\op(\bbM^\prime,\bbM)(\pt_n) \simeq \{c\in \bbR : t^\prime_0-t_0\leq c\leq t^\prime_1-t_1 \}\times (\Lambda_n)_1~,
\end{flalign} 
where  $(\Lambda_n)_1$ is the odd part of the Grassmann algebra $\Lambda_n = (\Lambda_{n})_0\oplus(\Lambda_n)_1$.
In particular, the $\pt_n$-relative automorphisms $\eSLoc^\op(\bbM,\bbM)(\pt_n)$ are in bijective correspondence with
$\bbR\times (\Lambda_n)_1$ if $\und{M} =\bbR$ and with $(\Lambda_n)_1$  if $\und{M}\subset \bbR$ 
is a proper interval. These $\pt_n$-relative automorphisms describe ordinary and supertranslations.
For a generic $F^\prime = f(t^\prime) + \theta^\prime\,h(t^\prime)\in\OO(\bbM^\prime)$, we have that
\begin{flalign}\label{eqn:11supertrafos}
\chi^\ast(\1\otimes F^\prime) =\1\otimes \big(f^\prime(t+c) + \theta\,h^\prime(t+c)\big)
+\zeta\otimes\big(h^\prime(t+c) - \theta\,\partial_{t} f^\prime(t+c)\big)~,
\end{flalign}
which reproduces the usual supersymmetry transformations.
The diagram (\ref{eqn:enricheddiffopsimplediagrams}) commutes, since using
(\ref{eqn:11supertrafos}) one can easily compute that
\begin{flalign}
\chi^\ast\big(\1\otimes P_{\bbM^\prime}(F^\prime)\big) = (\id_{\Lambda_n}\otimes P_{\bbM})\big(\chi^\ast(\1\otimes F^\prime)\big)~,
\end{flalign}
for all $F^\prime\in\OO(\bbM^\prime)$. We remind the reader that 
$(\id_{\Lambda_n}\otimes P_{\bbM})(\zeta\otimes F) 
= (-1)^{\vert\zeta\vert\,\vert P_{\bbM}\vert} \,\zeta\otimes P_{\bbM}(F)$,
for all homogeneous $\zeta\in\Lambda_n$ and $F\in\OO(\bbM)$.
 As a consequence, we have constructed an enriched super-field theory
according to Definition \ref{defi:sftenriched}. 
Applying Theorem \ref{theo:enLCQFT} we further
obtain an enriched super-QFT, which in the $1\vert 1$-dimensional case describes a quantized 
superparticle together with its supersymmetry transformations.

\paragraph{Supersymmetry transformations in the enriched super-QFT:}
In order to illustrate the structure of supersymmetry transformations 
let us fix any object $\bbM$ in $\SLoc$. The $\SSet$-functor $\eAA : \eSLoc \to \eSAlg$
studied in Theorem \ref{theo:enLCQFT} provides us with a superalgebra of observables
$\AA(\bbM)$ and a $\SSet$-morphism 
$\eAA_{\bbM,\bbM} : \eSLoc(\bbM,\bbM)\to \eSAlg(\AA(\bbM),\AA(\bbM))$ which describes the enriched
automorphism group of $\AA(\bbM)$. Using (\ref{eqn:eSLocmorph11}) and
focusing only on the odd part given by $(\Lambda_n)_1$  (i.e.\ proper supersymmetry transformations),
we obtain the $\Lambda_n^\bbC\text{-}\SAlg$-automorphisms
$(\eAA_{\bbM,\bbM})_{\pt_n}(\chi) : \Lambda_n^\bbC\otimes_{\bbC} \AA(\bbM)
\to\Lambda_n^\bbC\otimes_{\bbC}\AA(\bbM)$, for any object $\pt_n$ in $\SPt^\op$,
which can be parametrized by the odd elements $\zeta\in (\Lambda_n)_1$. 
Explicitly, on the generators $\Phi_{\bbM/\pt_n}(\1\otimes F) = \1\otimes_{\bbC} [F]\in 
\Lambda_n^\bbC\otimes_{\bbC} \AA(\bbM)$, with $F\in \OO_{\cc}(\bbM)$, these 
$\Lambda_n^\bbC\text{-}\SAlg$-automorphisms act as
\begin{flalign}
\nn (\eAA_{\bbM,\bbM})_{\pt_n}(\chi) \big(\Phi_{\bbM/\pt_n}(\1\otimes F) \big) 
&= \Phi_{\bbM/\pt_n} (\1\otimes F) - \Phi_{\bbM/\pt_n}(\zeta\otimes Q(F))\\
&=\Phi_{\bbM/\pt_n} (\1\otimes F) - (\zeta\otimes_{\bbC}\1)\,\Phi_{\bbM/\pt_n}(\1 \otimes Q(F))~,\label{eqn:superexplicit}
\end{flalign}
where $Q := \partial_\theta - \theta\,\partial_t$ is the generator 
of supersymmetry transformations and $\zeta\in(\Lambda_n)_1$. 
On the superalgebra of observables $\AA(\bbM)$ itself, 
these $\Lambda_n^\bbC\text{-}\SAlg$-automorphisms can be 
understood as an odd superderivation $\widehat{Q} : \AA(\bbM)\to\AA(\bbM)$:
On the generators $\Phi_{\bbM}(F) = [F]\in\AA(\bbM)$, with $F\in \OO_{\cc}(\bbM)$, the superderivation 
$\widehat{Q}$ reads as
\begin{subequations}
\begin{flalign}
\widehat{Q}\big(\Phi_{\bbM}(F)\big) = -\Phi_{\bbM}(Q(F))
\end{flalign}
and it satisfies the superderivation property
\begin{flalign}
\widehat{Q}\big(\Phi_{\bbM}(F_1)\,\Phi_{\bbM}(F_2)\big)= \widehat{Q}\big(\Phi_{\bbM}(F_1)\big)\, \Phi_{\bbM}(F_2) 
+ (-1)^{\vert F_1\vert} \,\Phi_{\bbM}(F_1) \, \widehat{Q}\big(\Phi_{\bbM}(F_2)\big)~,
\end{flalign}
\end{subequations}
for all homogeneous $F_1,F_2\in\OO_{\cc}(\bbM)$, as a consequence of $(\eAA_{\bbM,\bbM})_{\pt_n}(\chi)$ being
a $\Lambda_{n}^\bbC\text{-}\SAlg$-morphism.
We may decompose $\Phi_{\bbM}$ into its component quantum fields 
(careful: neither $\psi_{\bbM}$ nor $\phi_{\bbM}$ are natural in the enriched setting) via
\begin{flalign}
\Phi_{\bbM}(F) = \psi_{\bbM}(f) + \phi_{\bbM}(h)~,
\end{flalign}
for all $F=f+\theta\,h\in\OO_{\cc}(\bbM)$, from which we recover 
the usual supersymmetry transformations
\begin{flalign}
\widehat{Q}\big(\psi_{\bbM}(f)\big) = \phi_{\bbM}(\partial_t f)~~,\qquad \widehat{Q}\big(\phi_{\bbM}(h)\big)= - \psi_{\bbM}(h)~, 
\end{flalign}
for all $f,h\in C^\infty(t_0,t_1)$.

\subsection{\texorpdfstring{$3\vert 2$}{3,2}-dimensions}
Our second example is the free Wess-Zumino model in $3\vert 2$-dimensions. 
There are two reasons why we prefer to analyse this toy-model instead of the physically 
more relevant Wess-Zumino model in $4\vert 4$-dimensions: First, in 
$3\vert 2$-dimensions there are only 2 odd dimensions (instead of 4 in $4\vert 4$-dimensions), 
which considerably simplifies the component analysis of the super-Cartan structures and superfields 
discussed below. Second, the Wess-Zumino model in $4\vert 4$-dimensions requires
a chirality constraint, the implementation of which would lead to additional technical/computational 
complications that we would like to postpone to future work.
\sk

We provide all the data listed in Definition \ref{defi:sft}.

\paragraph{Representation theoretic data:} We take $W=\bbR^3$ together with the 
standard  Lorentz metric $g=\mathrm{diag}(1,-1,-1)$.
The corresponding spin group is $\mathrm{Spin}(W,g) \simeq \mathrm{SL}(2,\bbR)$
and we take $\rho^{S} : \mathrm{Spin}(W,g) \times S\to S$ the defining
representation of $\mathrm{SL}(2,\bbR)$ on $S=\bbR^2$.
Notice that $\mathrm{SL}(2,\bbR)$ is the two-fold cover of the identity component 
$\mathrm{SO}_0(1,2)$ of the special pseudo-orthogonal group 
and that $\rho^{W} :\mathrm{Spin}(W,g) \times W\to W$ is given by composing the covering map with
the defining representation of $\mathrm{SO}_0(1,2)$.
\sk

The standard basis $\{p_\alpha\}$ of $W=\bbR^3$, with $\alpha =0,1,2$, is an orthonormal basis
of $(W,g)$. We denote the metric coefficients in this basis by $g_{\alpha\beta} := g(p_\alpha,p_\beta)$ and
the coefficients of the inverse metric in the dual basis $\{ p^\alpha\}$ by $g^{\alpha\beta}$.
Elements $w\in W$ will be indicated by $w = w^\alpha\, p_\alpha$, with summation over repeated indices understood.
 For the coefficients of the dual vector $g(w,\,\cdot\,)=w^\alpha\, g_{\alpha\beta}\, p^\beta \in W^\ast$ we shall 
also write $w_\beta = w^\alpha \, g_{\alpha\beta}$. Notice that  $w^\alpha = w_\beta\,g^{\beta\alpha}$.
The choice of basis $\{p_\alpha\}$ fixes an orientation on $W$.
\sk

The representation $\rho^{S}$ induces up to $\mathrm{SL}(2,\bbR)$-equivalence 
a unique representation of the Clifford algebra $\mathrm{Cl}(W,g)$ in terms of purely 
imaginary matrices $\gamma_\alpha$, with $\alpha = 0,1,2$, on the complexification $S_\bbC$ of 
$S$. These matrices thus satisfy the Clifford relations
\begin{flalign}
\gamma_\alpha \, \gamma_\beta + \gamma_\beta\, \gamma_\alpha = 2 \, g_{\alpha\beta}\, \mathrm{id}_{S_\bbC}~.
\end{flalign}
Furthermore, there exists a charge conjugation matrix $\mathsf{C}$ which satisfies
\begin{flalign}
\mathsf{C}^\mathrm{T}=-\mathsf{C}~,\qquad \gamma^\mathrm{T}_\alpha = - \mathsf{C} \, \gamma_\alpha\,  \mathsf{C}^{-1}~,
\end{flalign}
where $^{\mathrm{T}}$ denotes matrix transposition.
A possible representation is given in terms of the Pauli matrices by 
$\gamma_0 = \sigma_2$, $\gamma_1 = i\, \sigma_1$, $\gamma_2=i\, \sigma_3$ and
$\mathsf{C}=\gamma_0$. We define the antisymmetrized product 
\begin{flalign}
\gamma_{\alpha \beta} := \frac{1}{2} \big(\gamma_\alpha \gamma_\beta - \gamma_\beta\gamma_\alpha\big)
\end{flalign}
and note the identities
\begin{subequations}
\begin{gather}
\left(\gamma_\alpha\, \mathsf{C}^{-1}\right)^\mathrm{T}= \gamma_\alpha \,\mathsf{C}^{-1}~,\qquad 
\left(\gamma_{\alpha\beta} \,\mathsf{C}^{-1}\right)^\mathrm{T}= \gamma_{\alpha\beta}\, \mathsf{C}^{-1}~,
\qquad \mathrm{Tr}(\gamma_\alpha)=\mathrm{Tr}(\gamma_{\alpha\beta})=0~, \\
\gamma_{\alpha\beta} = i \,\epsilon_{\alpha\beta\delta} \,\gamma^\delta~, \qquad 
L = \frac{1}{2} \mathrm{Tr}(L)\,\mathrm{id}_{S_\bbC} + \frac{1}{2} \mathrm{Tr}(L\, \gamma_\alpha)\,\gamma^\alpha~,
\end{gather}
\end{subequations}
where $\epsilon_{\alpha\beta\delta}$ is totally antisymmetric with $\epsilon_{012}=1$ and 
$L$ is an arbitrary endomorphism of $S_{\bbC}$.
\sk

As the pairing $\Gamma$ we shall take  
\begin{flalign}
\Gamma : S \otimes S \longrightarrow W~,~~s_1\otimes s_2\longmapsto 
(s_1, \gamma^\alpha\, \mathsf{C}^{-1}\, s_2 ) ~p_\alpha ~,
\end{flalign}
where $(\,\cdot\,, \,\cdot\, )$ is the standard inner product on $S=\bbR^2$.
(Notice that $\gamma^\alpha\, \mathsf{C}^{-1}$ is a real matrix.) 
The pairing $\Gamma$ is symmetric, $\mathrm{Spin}(W,g)$-equivariant 
and positive with respect to the forward light cone
$C = \{w\in W : g(w,w)>0~\text{and}~ w^0>0 \}$.
For $\epsilon$ we take 
\begin{flalign}
\epsilon : S \otimes S \longrightarrow \bbR~,~~s_1\otimes s_2\longmapsto ( s_1, i\, \mathsf{C}^{-1} s_2)~,
\end{flalign}
which is $\mathrm{Spin}(W,g)$-invariant, antisymmetric and non-degenerate.
We choose a symplectic basis $\{q_a\}$ of $S=\bbR^2$, with $a = 1,2$,
and use the index notation $\epsilon_{ab} := \epsilon(q_a,q_b)$ with $\epsilon_{12} = -\epsilon_{21}=1$.
We further set $\epsilon^{ab} := \epsilon_{ab}$  for the coefficients of the
 symplectic structure on the dual vector space $S^\ast$ in the dual basis $\{q^a\}$ and notice that
$\epsilon^{ab}\,\epsilon_{bc} = -\delta^a_c$. Elements $s\in S$ will be indicated by
$s=s^a\,q_a$, with summation over repeated indices understood.
For the coefficients of the dual spinor $\epsilon(s,\,\cdot\,)= s^a\,\epsilon_{ab}\,q^b \in S^\ast$ 
we shall also write $s_b = s^a\,\epsilon_{ab}$.  Notice that  $s^a =  s_b\,\epsilon^{ab}$.
The choice of basis $\{q_a\}$ fixes an orientation on $S$. 
The two pairings $\Gamma$ and $\epsilon$ read in our bases as
\begin{flalign}
\Gamma(s_1,s_2)=-s^a_1 \, s^b_2 \,i\,\gamma^\alpha_{ab}\, p_\alpha ~,\qquad \epsilon(s_1,s_2)=s^a_1 \, s^b_2 \,\epsilon_{ab}~.
\end{flalign}
In order to state explicitly the super-Lie bracket relations in the super-Poincar{\'e} super-Lie algebra 
$\mathfrak{sp}=(\mathfrak{spin} \oplus W) \oplus S$ corresponding to this choice of data, 
we recall that the Lie algebra $\mathfrak{spin}$ may be spanned by generators $L_{\alpha\beta}=-L_{\beta\alpha}$,  with
$\alpha, \beta= 0,1,2$, and that the Lie algebra actions induced by $\rho^W$ and $\rho^S$ read as
\begin{subequations}
\begin{flalign}
\rho_\ast ^W:\mathfrak{spin}\otimes W&\longrightarrow W ~,~~ L_{\alpha\beta}\otimes p_\gamma\longmapsto 
g_{\alpha\gamma}\, p_\beta - g_{\beta\gamma}\,p_\alpha~,\\
\rho_\ast^S : \mathfrak{spin}\otimes S&\longrightarrow  S~,~~ L_{\alpha\beta}\otimes q_a \longmapsto 
\frac{1}{2} {\gamma_{\alpha\beta}}_a^{\phantom{a}b}\,q_b~.
\end{flalign}
\end{subequations}
Thus, the super-Lie bracket relations in $\mathfrak{sp}$ read in the adapted basis $\{p_\alpha,q_a\}$ as
\begin{subequations}
\begin{gather}
\left[L_{\alpha\beta},L_{\gamma\delta}\right]=g_{\beta\gamma}\, L_{\alpha\delta}+g_{\alpha\delta}\, L_{\beta\gamma}-g_{\alpha\gamma}\, L_{\beta\delta}-g_{\beta\delta}\, L_{\alpha\gamma}~,\\
\left[L_{\alpha\beta},p_\gamma\right]=g_{\alpha\gamma} \, p_\beta-g_{\beta\gamma}\,  p_\alpha~,\qquad \left[L_{\alpha\beta},q_a\right]=\frac{1}{2} {\gamma_{\alpha\beta}}_a^{\phantom{a}b}\, q_b~,\\
[p_\alpha,p_\beta]=[p_\alpha,q_a]=0~,\qquad [q_a,q_b]=2\,i\, \gamma^\alpha_{ab}\,  p_\alpha~.
\end{gather}
\end{subequations}

\paragraph{The objects in $\ghSCart$:} We characterize explicitly the objects $\bbM = (M,\Omega,E)$
in the category $\ghSCart$ for the case of  $3\vert 2$-dimensions under the following simplifying assumptions: 
As in the $1\vert 1$-dimensional case, we assume that the underlying topological spaces $\und{M}$ are connected
and furthermore that the structure sheaf $\OO_M$ is {\it globally} isomorphic
to $C^\infty_{\und{M}}\otimes \bigwedge^\bullet \bbR^2$.
Due to the latter assumption, there exist global odd coordinate functions $\theta^a$, with $a=1,2$, and we set
(with abuse of notation) 
\begin{flalign}
\theta^2:= -\epsilon_{ab}\, \theta^a\, \theta^b = -2\, \theta^1\, \theta^2~.
\end{flalign}
Notice that
\begin{flalign}
\theta^a\, \theta^b = -\frac{1}{2} \epsilon^{ab}\, \theta^2~.
\end{flalign}
The most general even supervielbein $E=e^\alpha\otimes   p_\alpha + \xi^a  \otimes q_a\in
\Omega^1(M,\mathfrak{t})$  (summation over
repeated indices understood) is given by
\begin{subequations}
\begin{flalign}
e^\alpha &= {\und{e}}^\beta\left(\delta^{\phantom{\alpha}\alpha}_\beta+j^{\phantom{\alpha}\alpha}_\beta \, \frac{\theta^2}{2}\right)-
\dd\theta^b\, h^\alpha_{bc}\, \theta^c~,\\
\xi^a &= {\und{e}}^\beta\, l^{\phantom{\alpha a}a}_{\beta\,b}\,\theta^b + \dd \theta^b \, \left(\delta^{\phantom{a}a}_b+k^{\phantom{a}a}_b\, \frac{\theta^2}{2}\right)~,
\end{flalign}
\end{subequations}
where ${\und{E}}={\und{e}}^\alpha \otimes p_\alpha$ is the vielbein on the reduced 
Lorentz manifold $\und{M}$ and $j^{\phantom{\alpha}\alpha}_\beta, 
h^\alpha_{bc}, l^{\phantom{\alpha a}a}_{\beta\,b}, k^{\phantom{a}a}_b\in C^\infty(\und{M})$ 
are coefficient functions. Notice that we have chosen the odd coordinates $\theta^b$
in such a way that the coefficient of  $\dd\theta^b$ in $\xi^a$ has a very simple form. This is always possible 
due to the assumption of non-degeneracy of $E$. Similar to the $1\vert 1$-dimensional case, we call such odd 
coordinates {\em geometric coordinates}. We introduce the dual Lorentz vielbein 
${\und{e}}_\alpha$ by the duality condition $\langle{\und{e}}_\alpha, {\und{e}}^\beta\rangle= 
\delta^{\phantom{\alpha}\beta}_\alpha$ and the dual superderivations $\partial_a$  by 
$\langle \partial_a,\dd\theta^b\rangle = \delta^{\phantom{a}b}_a$. Notice 
that $\langle\partial_a,{\und{e}}^\alpha \rangle =\langle  {\und{e}}_\alpha, \dd\theta^a\rangle = 0 $. 
Using these dual superderivations, we can write the inverse supervielbein as 
\begin{subequations}
\begin{flalign}
e_\alpha &= \left(\delta^{\phantom{\alpha}\beta}_\alpha+J^{\phantom{\alpha}\beta}_\alpha\,  \frac{\theta^2}{2}\right){\und{e}}_\beta- l^{\phantom{\alpha a}b}_{\alpha\,c}\, \theta^c \, \partial_b~,\quad\text{with }~~ J^{\phantom{\alpha}\beta}_\alpha:=-j^{\phantom{\alpha}\beta}_\alpha+l^{\phantom{\alpha a}b}_{\alpha\,c}\, h^{\beta\,c}_{b}~,\\
\xi_a &= h^\beta_{ab}\, \theta^b\, {\und{e}}_\beta + \left(\delta^{\phantom{a}b}_a+K^{\phantom{a}b}_a\, \frac{\theta^2}{2}\right)\partial_b~,\quad \text{with} ~~  K^{\phantom{a}b}_a:=-k^{\phantom{a}b}_a-h^{\beta\,c}_{a}\, l^{\phantom{\alpha a}b}_{\beta\,c}~.
\end{flalign}
\end{subequations}
The duality relations $\langle e_\alpha, e^\beta\rangle= 
\delta^{\phantom{\alpha}\beta}_\alpha$, $\langle \xi_a,\xi^b\rangle =
 \delta^{\phantom{a}b}_a$ and $\langle\xi_a,e^\alpha \rangle =\langle  e_\alpha, \xi^a\rangle = 0 $
 hold true. For the super-spin connection,
 the most general even $\Omega\in\Omega^1(M,\mathfrak{spin})$  can be expanded as
\begin{flalign}
\Omega = {\und{e}}^\alpha\left(\omega_\alpha+\lambda_\alpha \, \frac{\theta^2}{2}\right)+\dd\theta^a \, \phi_{ab}\, \theta^b~,
\end{flalign}
where $\omega_\alpha, \lambda_\alpha, \phi_{ab}\in C^\infty(\und{M},\mathfrak{spin})$ are coefficient functions
with values in the Lie algebra $\mathfrak{spin}$.

\paragraph{The category $\SLoc$:} We define a full subcategory $\SLoc$ of $\ghSCart$ 
by the conditions that 1.)\ the underlying topological spaces are connected, 2.)\ the structure sheaves
are globally isomorphic to $C^\infty_{\und{M}}\otimes\bigwedge^\bullet\bbR^2$ and 3.)\ 
the supergravity supertorsion constraints given in \cite[Eqns.\ (10) and (11)]{Wess:1977fn} hold true. 
In order to discuss the latter constraints, we consider an arbitrary (local) coordinate system $x^\mu$ on $\und{M}$, 
with $\mu = 0,1,2$, and use the notation $X^M$ for the combined super-coordinate system $\{x^\mu,\theta^m\}$, 
where $\theta^m$ are global geometric odd coordinates. (With abuse of notation we will denote the indices
on $X^M$ by the same symbol as the supermanifold.) 
We set $\vert M\vert:=\vert X^M\vert$ for the $\bbZ_2$-parity of $X^M$.
Analogously, we use the notation $P_A$ for the combined generators $\{p_\alpha,q_a\}$ of the 
supertranslation super-Lie algebra $\mathfrak{t} = W\oplus S$ and set $\vert A\vert : =\vert P_A\vert$. Consequently, 
we can expand  the supervielbein as $E=E^A \otimes P^{\phantom{B}}_A = \dd X^M \, 
E^A_M \otimes P^{\phantom{B}}_A$ and its inverse as $E^{\phantom{B}}_A = E^M_A \, \partial^{\phantom{B}}_M$. Using this notation, 
we may expand and compute the supertorsion (\ref{eqn:supertorsion}) 
\begin{subequations}
\begin{flalign}
T_{\bbM}=T^A \otimes  P^{\phantom{B}}_A = \frac{1}{2} E^B \wedge E^C\,  T_{BC}^{\phantom{BC}A} \otimes P^{\phantom{B}}_A~,
\end{flalign}
where
\begin{flalign}
T_{BC}^{\phantom{BC}A} = (-1)^{\vert M\vert\,\vert C\vert } E^M_B \, E^N_C \left(\partial^{\phantom{A}}_N \, E^A_M - (-1)^{\vert N\vert \, \vert M\vert}\partial^{\phantom{A}}_M\, E^A_N\right)+\Omega_{BC}^{\phantom{BC}A}-(-1)^{\vert B\vert\,\vert C\vert}\Omega_{CB}^{\phantom{BC}A}~,\label{eqn:torsionCoeff}
\end{flalign}
and
\begin{flalign}
\Omega_{CB}^{\phantom{BC}A}:=\begin{cases}
\langle E^{\phantom{B}}_C,\Omega^{\phantom{A}A}_B \rangle & ~,~~\text{if $(A,B)=(\alpha,\beta)$ or $(A,B)=(a,b)$}~,\\
0 & ~,~~\text{otherwise}~.
\end{cases}
\end{flalign}
\end{subequations}
Here, $\Omega_\alpha^{\phantom{\alpha}\beta}$ and $\Omega_a^{\phantom{a}b}$ are defined, 
for arbitrary $w=w^\alpha \,p_\alpha\in W$ and $s=s^a \, q_a\in S$, as
\begin{flalign}
\rho^W_\ast(\Omega\otimes w) =: w^\alpha\,\Omega_\alpha^{\phantom{\alpha}\beta} \otimes p_\beta~,\qquad 
\rho^S_\ast(\Omega\otimes s)=:s^a\,\Omega_a^{\phantom{a}b} \otimes q_b~.
\end{flalign}
Note that the signs in \eqref{eqn:torsionCoeff} are only correct if $E$ is even.
\sk

The supertorsion constraints introduced in \cite{Wess:1977fn} read as
\begin{flalign}
T_{bc}^{\phantom{bc}\alpha}=2\, i\, \gamma^{\alpha}_{bc}~,\qquad T_{\beta c}^{\phantom{bc}a}=T_{bc}^{\phantom{bc}a}=T_{\beta\gamma}^{\phantom{bc}\alpha}=T_{\beta c}^{\phantom{bc}\alpha}=0~. 
\end{flalign}
It is remarkable that these constraints, together with the requirement that $E$ and $\Omega$ are
even, determine the supervielbein and super-spin connection uniquely in terms of the reduced Lorentz
vielbein $\und{E}$. In particular, demanding that $E$ is even rules out a non-vanishing 
Rarita-Schwinger field (gravitino) in the expansion of $E$. The unique solution to the supertorsion constraints is
\begin{subequations}
\begin{gather}
j_{\alpha}^{\phantom{\alpha}\beta}=-i\, {\gamma^\beta}_{a}^{\phantom{a}b}\, (\omega_\alpha)_{b}^{\phantom{a}a} =\frac{1}{2} \epsilon^{\beta\gamma\delta}\, (\omega_\alpha)_{\gamma\delta}~,\qquad J_{\alpha}^{\phantom{\alpha}\beta}=0~,\\
h^\alpha_{ab}=i\, \gamma^{\alpha}_{ab}~,\qquad {l_\alpha}_{a}^{\phantom{a}b}=-(\omega_\alpha)_{a}^{\phantom{a}b}~,\qquad k_{a}^{\phantom{a}b}=0~,\qquad K_{a}^{\phantom{a}b}=i\, {\gamma^\beta}_{a}^{\phantom{a}c}\, (\omega_\beta)_{c}^{\phantom{a}b}~,\\
(\omega_\beta)_{\gamma}^{\phantom{\alpha}\alpha}-(\omega_\gamma)_{\beta}^{\phantom{\alpha}\alpha}={\und{e}}
_\beta^\mu \, {\und{e}}^\nu_\gamma\, \left(\partial^{\phantom{\alpha}}_\mu {\und{e}}^{\alpha}_\nu-\partial^{\phantom{\alpha}}_\nu {\und{e}}^{\alpha}_\mu\right)~,\qquad \phi_{ab}=0~,\qquad (\lambda_\alpha)_{\beta}^{\phantom{\alpha}\gamma}=
-\mathrm{Ric}_{\alpha\delta}\,\epsilon^{\phantom{\delta}\gamma\delta}_{\beta}~,\label{eqn:solConstraints}
\end{gather}
\end{subequations}
where the first identity in \eqref{eqn:solConstraints} implies that $\omega = {\und{e}}^\alpha\,\omega_\alpha$ 
is the connection one-form of the Levi-Civita connection and $\mathrm{Ric}$ denotes the corresponding 
Ricci curvature tensor, cf.\ \cite[Section 3.4]{Wald}.
\sk

The Berezinian density $\Ber(E)$ has a particularly simple form for objects 
$\bbM = (M,\Omega,E)$ in $\SLoc$. In fact, for a general
\begin{flalign}
F=\varphi+\psi_a \, \theta^a + \eta\, \frac{\theta^2}{2}\in\OO_{\cc}(\bbM)\label{eqn:expansion32}
\end{flalign}
one has
\begin{flalign}
\int_M \Ber(E)\, F = \int_{\und{M}}\vol_{\und{\bbM}}\left(\eta+\varphi\left(j_{\alpha}^{\phantom{\alpha}\alpha}-l^{\phantom{\alpha a}b}_{\alpha\,c}h^{\alpha\,c}_{b}-k^{\phantom{a}a}_a\right)\right)= \int_{\und{M}}\vol_{\und{\bbM}}\, \eta~,
\end{flalign}
where $\vol_{\und{\bbM}}$ is the canonical volume form on the reduced oriented Lorentz manifold. 
Consequently, the pairing \eqref{eqn:pairing} reads as
\begin{flalign}\label{eqn:pairing32}
\ip{F_1}{F_2}_{\bbM} = \int_{\und{M}}\vol_{\und{\bbM}}\,\big(\varphi_1 \, \eta_2 + \varphi_2\,  \eta_1 + {\psi_1}^a \, {\psi_2}_a\big)~,
\end{flalign}
where the expansion \eqref{eqn:expansion32} has been used.

\paragraph{The natural transformation $P: \OO \Rightarrow \OO$:} Given any super-Cartan supermanifold
$\bbM$, we can consider the super-differential operator 
$\mathscr{D}^a_\Omega \circ \mathscr{D}_a : \OO(\bbM)\to \OO(\bbM)$ defined by
\begin{subequations}
\begin{flalign}
\mathscr{D}_a (F) & := \xi_a(F)~,\\
{\mathscr{D}_\Omega}_b \circ \mathscr{D}_a (F) & :=\xi_b\left(\mathscr{D}_a (F)\right) + \left\langle\xi_b,\Omega_a^{\phantom{a}c}\right\rangle\mathscr{D}_c (F)~,\\
\mathscr{D}^a_\Omega \circ \mathscr{D}_a (F) & := \epsilon^{ab}{\mathscr{D}_\Omega}_b \circ \mathscr{D}_a (F)\,,
\end{flalign}
\end{subequations}
for all $F\in \OO(\bbM)$.
For any object $\bbM $ in $\SLoc$, one may compute using the expansion \eqref{eqn:expansion32} 
\begin{flalign}
\nn\mathscr{D}^a_\Omega \circ \mathscr{D}_a(F)  =& -2 \eta\\
\nn& + \left(2\, {h^\alpha}_a^{\phantom{a}b}\, {\und{e}}_\alpha(\psi_b)+K_a^{\phantom{a}b}\, \psi_b+{h^\alpha}_a^{\phantom{a}b}\,(\omega_\alpha)_b^{\phantom{a}c}\, \psi_c + (\phi_a^{\phantom{a}b})_b^{\phantom{a}c}\, \psi_c\right)\theta^a\\
\nn & - \left(2 \, {h^\alpha}_a^{\phantom{a}b}\, {\und{e}}_\alpha\left({h^\beta}_b^{\phantom{a}a}\, {\und{e}}_\beta(\varphi)
\right)+\left(K_d^{\phantom{d}c}+{h^\beta}_d^{\phantom{a}a}\, (\omega_\beta)_a^{\phantom{a}c}+(\phi_d^{\phantom{a}a})_a^{\phantom{a}c}\right){h^\alpha}_c^{\phantom{a}d}\, {\und{e}}_\alpha(\varphi)\right) \frac{\theta^2}{2}\\
\nn& +\left({h^\beta}_a^{\phantom{a}b}\, (\omega_\beta)_b^{\phantom{a}a}+(\phi_a^{\phantom{a}b})_b^{\phantom{a}a}\right)\eta\, \frac{\theta^2}{2} \\
 =& -2\, \eta + 2\, (i\displaystyle{\not}\nabla \psi)_a\, \theta^a +  2\, \Box \varphi\, \frac{\theta^2}{2}~,
 \end{flalign}
 where $i\displaystyle{\not}\nabla$ is the geometric Dirac operator and $\Box$ is the geometric 
 d'Alembert operator. Here we used the essential identity
 \begin{flalign}
\sigma:= {\gamma^\beta}_a^{\phantom{a}b}\, (\omega_\beta)_b^{\phantom{a}a}=\frac{i}{2}\,(\omega_\alpha)_{\beta\gamma}\, \epsilon^{\alpha\beta\gamma}=0~,
 \end{flalign}
 which holds because for each point $x\in\und{M}$ one can find a vielbein ${\und{E}}$ such 
 that $(\omega_\alpha)_{\beta\gamma}(x)=0$ and thus $\sigma(x)=0$. However, $\sigma$ 
 is a scalar invariant and thus $\sigma\equiv 0$ independent of the chosen vielbein.
\sk
 
For any object $\bbM$ in $\SLoc$ and any constant $m\geq 0$, we define the super-differential operator
\begin{flalign}
P_{\bbM} := \frac{1}{2} \mathscr{D}^a_\Omega\circ  \mathscr{D}_a + m \,\id_{\OO(\bbM)}: \OO(\bbM)\longrightarrow \OO(\bbM)~.\label{eqn:wave32}
\end{flalign}
Solutions of $P_{\bbM} (F) = 0$ satisfy
\begin{flalign}
-\eta+m\, \varphi=0~,\qquad i\displaystyle{\not}\nabla \psi_a + m\, \psi_a = 0~,\qquad  \Box \varphi + m \,\eta = 0~,
\end{flalign}
and thus in particular $\psi_a$ is a solution of the Dirac equation and $\varphi$ is a solution 
of the Klein-Gordon equation $\Box\varphi + m^2\varphi = 0$. 
The super-differential operator $P_{\bbM}$ is 
formally super-self adjoint with respect to the pairing \eqref{eqn:pairing32}. In general, one may show 
for arbitrary homogeneous $F_1\in\OO(\bbM)$ and arbitrary $F_2\in\OO_\cc(\bbM)$ that (see e.g.\ \cite[Section 5.2.8.]{Buchbinder})
 \begin{flalign}
\int_M \Ber(E) \,F_1  \,\mathscr{D}^a_\Omega\circ \mathscr{D}_a (F_2) = (-1)^{1+\vert F_1\vert}\int_M \Ber(E)\, \mathscr{D}^a (F_1)  \, \mathscr{D}_a (F_2)  
 \end{flalign}
if $ T_{a\beta}^{\phantom{bc}\beta}-T_{ab}^{\phantom{bc}b}=0$ 
and thus not all supertorsion constraints are necessary for the formal super-self adjointness 
of $P_{\bbM}$. Notice that $P_{\bbM}$ is also super-Green's hyperbolic: If we write $P_{\bbM}$ in matrix form as
 \begin{flalign}
 P_{\bbM} = \begin{pmatrix}
  m & 0 & -1\\
 0 & i\displaystyle{\not}\nabla + m & 0\\
 \Box & 0 & m
 \end{pmatrix}~,
 \end{flalign}
then the retarded/advanced super-Green's operators $G^\pm_{\bbM}$ for $P_{\bbM}$ can be written as
 \begin{flalign}
 G^\pm_{\bbM} = \begin{pmatrix}
  m \, G_{\Box+m^2}^\pm & 0 & G_{\Box+m^2}^\pm\\
 0 & G_{i{\not}\nabla + m }^\pm& 0\\
 -\Box \circ G_{\Box+m^2}^\pm & 0 & m \,G_{\Box+m^2}^\pm
 \end{pmatrix}~,
 \end{flalign}
 where $G_{\Box+m^2}^\pm$ and $G_{i{\not}\nabla + m }^\pm$ are the retarded/advanced Green's operators 
 for the component differential operators $\Box+m^2$ and $i\displaystyle{\not}\nabla + m$, respectively.
  Finally, the super-differential  operators \eqref{eqn:wave32} are the components of a natural transformation
since they are constructed geometrically in terms of the supervielbein $E$ and a constant $m\geq 0$. 
Hence, we have constructed an example of a 
super-field theory according to Definition \ref{defi:sft}. Applying Theorem \ref{theo:LCQFT} 
we further obtain a super-QFT, which in the present case describes the quantized 
free Wess-Zumino model in $3\vert 2$-dimensions.

\paragraph{Enriched super-field theory:} We shall now discuss the super-field theory
defined above in the enriched setting. We consider two objects  
$\bbM$ and $\bbM^\prime$ in $\eSLoc^\op$ and
any object $\pt_n$ in $\SPt^\op$. Before discussing the set $\eSLoc^\op(\bbM^\prime,\bbM)(\pt_n)$ 
in more detail, we remark that, since any $\chi\in\eSLoc^\op(\bbM^\prime,\bbM)(\pt_n)$ preserves by definition 
the $\pt_n$-relative supervielbeins, and $P_{\bbM}$ is constructed geometrically, 
the super-field theory discussed in this example automatically satisfies the axioms of an enriched super-field theory
given in Definition \ref{defi:sftenriched}.
\sk
 
Instead of fully characterizing the set $\eSLoc^\op(\bbM^\prime,\bbM)(\pt_n)$, we aim 
for analyzing a presumably large subset which contains supersymmetry transformations 
by considering a well-motivated ansatz. For a generic $F^\prime \in\OO_{\cc}(\bbM^\prime)$ expanded 
as in \eqref{eqn:expansion32}, we consider $\SMan/\pt_n$-morphisms 
$\chi : M/\pt_n\to M^\prime/\pt_n$ of the form 
\begin{subequations}
\begin{flalign}
& \chi^\ast(\1\otimes F^\prime) = \1 \otimes F + \zeta \otimes Q(F)~,\\
&F:=\und{\chi}^\ast(\varphi^\prime)+\und{\chi}^\ast(\psi^\prime_a)\,\theta^a
+\und{\chi}^\ast(\eta^\prime)\, \frac{\theta^2}{2}~,
\end{flalign}
\end{subequations}
where $\zeta\in\Lambda_n$ is odd, $\und{\chi} : \und{M}\to\und{M^\prime}$ is a smooth map 
which preserves the reduced Lorentz vielbein $\und{\chi}^\ast(\und{E^\prime})=\und{E}$ and $Q$ is an odd superderivation. 
A necessary condition for such $\chi$ to be an element of $\eSLoc^\op(\bbM^\prime,\bbM)(\pt_n)$ 
is that $Q$ supercommutes with the dual supervielbein, i.e.\
\begin{flalign}\label{eqn:restrictionQ}
\left[Q,\xi_a\right]=0~,\qquad \left[Q,e_\alpha\right]=0~.
\end{flalign}
We can  expand $Q$ as
\begin{gather}
Q = A^\alpha_b\,  \theta^b \, \und{e}_\alpha + \left(B^a + C^a \, \frac{\theta^2}{2}\right)\partial_a 
\end{gather}
and evaluate the constraints \eqref{eqn:restrictionQ} in the given order. After a straightforward 
computation we find that these constraints are satisfied if and only if
\begin{gather}
B^c \, \left( \omega_{\alpha}\right)_{c}^{\phantom{b}a}=0~, \qquad \und{e}_\alpha(B^c)=0~, \qquad A^\alpha_b = - i\, B^c\, \gamma^\alpha_{bc}~,\qquad C^b = 0~.
\end{gather}
Thus, we find that a non-zero $Q$ is only possible if the Levi-Civita connection $\omega$ is 
vanishing, and hence also the super-spin connection $\Omega$ is vanishing on $\bbM$. In this case 
$\chi$ defined as above is an element of $\eSLoc^\op(\bbM^\prime,\bbM)(\pt_n)$ if and only 
if the super-spin connection $\Omega^\prime$ on $\bbM^\prime$ is also vanishing.
\sk

This rather restrictive condition for the existence of interesting enriched morphisms 
$\chi\in\eSLoc^\op(\bbM^\prime,\bbM)(\pt_n)$ originates from our requirement that the supervielbein 
$E$ is even and that $\chi$ must preserve the $\pt_n$-relative supervielbein. In the treatment of 
supergravity one usually deals with supervielbeins which are not purely even and considers, in the terminology 
of this paper, enriched morphisms which have to preserve the $\pt_n$-relative supervielbein and 
$\pt_n$-relative super-spin connection only up to local Lorentz transformations. This class of enriched 
morphisms contains the so-called supergravity transformations \cite{Wess:1992cp}.

\paragraph{Supersymmetry transformations in the enriched super-QFT:} We close the discussion 
of this example in analogy to the $1\vert 1$-dimensional case by illustrating the structure of 
supersymmetry transformations. As discussed above, these transformations appear only if we consider 
an object $\bbM=(M,\Omega,E)$ in $\eSLoc$ with $\Omega=0$, such as for example
the $3\vert 2$-dimensional super-Minkowski space. Given any such object, we can 
use the $\SSet$-functor $\eAA : \eSLoc \to \eSAlg$ constructed in Theorem \ref{theo:enLCQFT} to obtain a 
superalgebra of observables $\AA(\bbM)$ and a $\SSet$-morphism 
$\eAA_{\bbM,\bbM} : \eSLoc(\bbM,\bbM)\to \eSAlg(\AA(\bbM),\AA(\bbM))$, which describes the enriched
automorphism group of $\AA(\bbM)$. Proper supersymmetry transformations are described by the odd superderivations
\begin{flalign}
Q_B := B^a \, \partial_a - i \, B^a \, \gamma^\alpha_{ab}\, \theta^b\, \und{e}_\alpha~,
\end{flalign}
with $B^a$ a constant spinor, and they are parametrized by odd elements $\zeta\in (\Lambda_n)_1$. As in the 
$1\vert 1$-dimensional case, such supersymmetry transformations may be understood as odd superderivations 
$\widehat{Q}_B : \AA(\bbM)\to\AA(\bbM)$ which act on the generators $\Phi_{\bbM}(F) = [F]\in\AA(\bbM)$, with 
$F\in\OO_{\cc}(\bbM)$, as
\begin{flalign}
\widehat{Q_B}\big(\Phi_{\bbM}(F)\big) = -\Phi_{\bbM}(Q_B(F))~.
\end{flalign} 
We may decompose $\Phi_{\bbM}(F)$ into its component quantum fields by using the expansion
\begin{flalign}
 F = f + \rho_a \,\theta^a + h\,\frac{\theta^2}{2} ~.
 \end{flalign}
 Explicitly, we set 
\begin{flalign}
\Phi_{\bbM}(F) = \phi_{\bbM}(f) + \psi^a_{\bbM}(\rho_a) + \eta_{\bbM}(h)~,
\end{flalign}
and we recover the usual supersymmetry transformations
\begin{subequations}
\begin{flalign}
\widehat{Q_B}\big(\phi_{\bbM}(f)\big) &= \psi^a_{\bbM}(f\, B_a)~,\\
\widehat{Q}\big(\psi^a_{\bbM}(\rho_a)\big)& = \phi_{\bbM}(B^a \, i\displaystyle{\not}\nabla \rho_a)+\eta_{\bbM}(B^a\,  \rho_a)~,\\
\widehat{Q_B}\big(\eta_{\bbM}(h)\big) &= -\psi^a_{\bbM}(i\displaystyle{\not}\nabla (h \,B_a))~.
\end{flalign}
\end{subequations}

%%%%%%%%%%%%%%%%%%%%%%%%%%%%%%%%%%%%%%%%%%%%%%%%
%%%%%%%%%%%%%%%%%%%%%%%%%%%%%%%%%%%%%%%%%%%%%%%%

\section*{Acknowledgements}
We would like to thank the referees for valuable comments and suggestions.
We thank Chris Fewster for useful comments on this work.
A.S.\ also thanks Sven Meinhardt for many general discussions on supergeometry.
The work of T.-P.H.\ and A.S.\ is supported by a Research Fellowship of Deutsche Forschungsgemeinschaft (DFG) 
and F.H.\ is supported by the SFB 647:\emph{Raum-Zeit-Materie} funded by the DFG. 
Furthermore, T.-P.H.\ and A.S.\ would like to thank Mathematisches Forschungsinstitut Oberwolfach (MFO) 
for the support under the program Research in Pairs and the great hospitality at the institute.

\appendix

\section{\label{app:enriched}Basics of enriched category theory}
We review some elementary definitions of enriched category theory which will be used in our work. 
For detailed introductions to this subject see e.g.\ \cite{Enriched} and \cite{Enriched2}.
\sk

Let $\mathsf{V}$ be a monoidal category. For our purposes we can 
assume that the associator in $\mathsf{V}$ is trivial.
We denote the monoidal bifunctor by $\otimes : \mathsf{V}\times \mathsf{V}\to\mathsf{V}$ 
and the unit object in $\mathsf{V}$ by $I$.
\begin{defi}\label{defi:enrichedcategory}
A $\mathsf{V}$-category (or a category enriched over $\mathsf{V}$)
$\mathsf{C}$ consists of
\begin{itemize}
\item a class $\mathrm{Ob}(\mathsf{C})$ of objects;
\item for all objects $A,B\in \mathrm{Ob}(\mathsf{C})$, an object $\mathsf{C}(A,B)$ in $\mathsf{V}$ called
the ``object of morphisms from $A$ to $B$'';
\item for all objects $A,B,C\in \mathrm{Ob}(\mathsf{C})$, a morphism 
$\bullet_{A,B,C} : \mathsf{C}(B,C)\otimes \mathsf{C}(A,B)\to \mathsf{C}(A,C)$
in $\mathsf{V}$ called the ``composition'';
\item for every object $A\in \mathrm{Ob}(\mathsf{C})$, a morphism
$\oone_{A} : I \to \mathsf{C}(A,A)$ in $\mathsf{V}$ called the ``identity on $A$''.
\end{itemize}
This data must satisfy the associativity and unit axioms, which are expressed by commutativity of the diagrams
\begin{subequations}
\begin{flalign}
\xymatrix{
\ar[d]_-{\bullet_{B,C,D}\,\otimes\,\id_{\mathsf{C}(A,B)}}\mathsf{C}(C,D)\otimes\mathsf{C}(B,C)\otimes\mathsf{C}(A,B)\ar[rrr]^-{\id_{\mathsf{C}(C,D)} \,\otimes\, \bullet_{A,B,C} } &&&  \mathsf{C}(C,D)\otimes\mathsf{C}(A,C)\ar[d]^-{\bullet_{A,C,D}}\\
\mathsf{C}(B,D)\otimes\mathsf{C}(A,B) \ar[rrr]_-{\bullet_{A,B,D}} &&&\mathsf{C}(A,D)
}
\end{flalign}
\begin{flalign}
\xymatrix{
\ar[d]_-{\oone_{B} \,\otimes\, \id_{\mathsf{C}(A,B)}} I\otimes \mathsf{C}(A,B) \ar[rrd]^-{\simeq}&& && \ar[lld]_-{\simeq}\mathsf{C}(A,B)\otimes I \ar[d]^-{\id_{\mathsf{C}(A,B)} \,\otimes\,\oone_{A}} \\
\mathsf{C}(B,B)\otimes \mathsf{C}(A,B)\ar[rr]_-{\bullet_{A,B,B}} && \mathsf{C}(A,B) && \ar[ll]^-{\bullet_{A,A,B}}\mathsf{C}(A,B) \otimes \mathsf{C}(A,A)
}
\end{flalign}
\end{subequations}
in the category $\mathsf{V}$, for all objects $A,B,C,D \in \mathrm{Ob}(\mathsf{C})$.
\end{defi}

\begin{rem}
For ease of notation, we shall always drop the labels on the composition and identity, i.e.\ we simply write
 $\bullet : \mathsf{C}(B,C)\otimes \mathsf{C}(A,B)\to \mathsf{C}(A,C)$ and $\oone :  I \to \mathsf{C}(A,A)$.
\end{rem}

\begin{defi}\label{defi:enrichedfunctor}
Let $\mathsf{C}$ and $\mathsf{D}$ be two $\mathsf{V}$-categories. A $\mathsf{V}$-functor (or enriched functor)
$\mathfrak{F} : \mathsf{C}\to\mathsf{D}$ is given by the following assignment:
\begin{itemize}
\item for every object $A\in\mathrm{Ob}(\mathsf{C})$, an object $\mathfrak{F}(A) \in \mathrm{Ob}(\mathsf{D})$;
\item for all objects $A,B\in \mathrm{Ob}(\mathsf{C})$, a morphism $\mathfrak{F}_{A,B} : \mathsf{C}(A,B) \to \mathsf{D}(\mathfrak{F}(A),\mathfrak{F}(B))$ in $\mathsf{V}$.
\end{itemize}
This assignment must be compatible with the composition and identity, which is expressed by commutativity
of the diagrams
\begin{subequations}
\begin{flalign}
\xymatrix{
\ar[d]_-{\mathfrak{F}_{B,C}\,\otimes\,\mathfrak{F}_{A,B}}\mathsf{C}(B,C) \otimes \mathsf{C}(A,B) \ar[rr]^-{\bullet} && \mathsf{C}(A,C)\ar[d]^-{\mathfrak{F}_{A,C}}\\
\mathsf{D}(\mathfrak{F}(B),\mathfrak{F}(C)) \otimes \mathsf{D}(\mathfrak{F}(A),\mathfrak{F}(B))\ar[rr]_-{\bullet} &&\mathsf{D}(\mathfrak{F}(A),\mathfrak{F}(C))
}
\end{flalign}
\begin{flalign}
\xymatrix{
&& \mathsf{C}(A,A) \ar[dd]^-{\mathfrak{F}_{A,A}}\\
I \ar[rru]^-{\oone} \ar[rrd]_-{\oone}&&\\
  && \mathsf{D}(\mathfrak{F}(A),\mathfrak{F}(A))
}
\end{flalign}
\end{subequations}
in the category $\mathsf{V}$, for all objects $A,B,C \in \mathrm{Ob}(\mathsf{C})$.
\end{defi}

\begin{rem}\label{rem:enrichedcomposition}
Notice that $\mathsf{V}$-functors can be composed:
Consider three  $\mathsf{V}$-categories $\mathsf{C},\mathsf{D},\mathsf{E}$ and
two $\mathsf{V}$-functors  $\mathfrak{F} : \mathsf{C}\to\mathsf{D}$ and $\mathfrak{G} :\mathsf{D}\to\mathsf{E}$.
We define the $\mathsf{V}$-functor $\mathfrak{G}\circ \mathfrak{F} : \mathsf{C}\to\mathsf{E}$ by the following assignment:
To every object $A\in\mathrm{Ob}(\mathsf{C})$ we assign $(\mathfrak{G}\circ\mathfrak{F})(A) 
:= \mathfrak{G}(\mathfrak{F}(A)) \in\mathrm{Ob}(\mathsf{E})$.
To all objects $A,B\in\mathrm{Ob}(\mathsf{C})$ we assign the
$\mathsf{V}$-morphism 
\begin{flalign}
(\mathfrak{G}\circ\mathfrak{F})_{A,B} := \mathfrak{G}_{\mathfrak{F}(A),\mathfrak{F}(B)}\circ \mathfrak{F}_{A,B}  : \mathsf{C}(A,B) \longrightarrow \mathsf{E}(\mathfrak{G}(\mathfrak{F}(A)),\mathfrak{G}(\mathfrak{F}(B)))~,
\end{flalign}
where $\circ$ denotes the composition of morphisms in $\mathsf{V}$.
It is easy to check that $\mathfrak{G}\circ\mathfrak{F} : \mathsf{C}\to\mathsf{E}$ is a $\mathsf{V}$-functor.
\end{rem}

\begin{defi}\label{defi:enrichednatural}
Let $\mathsf{C}$ and $\mathsf{D}$ be two $\mathsf{V}$-categories and
$\mathfrak{F} ,\mathfrak{G}: \mathsf{C}\to\mathsf{D}$ two parallel $\mathsf{V}$-functors.
A $\mathsf{V}$-natural transformation (or enriched natural transformation)
$\eta : \mathfrak{F} \Rightarrow \mathfrak{G}$ is given by assigning to every object $A\in\mathrm{Ob}(\mathsf{C})$
a morphism $\eta_A : I \to \mathsf{D}(\mathfrak{F}(A),\mathfrak{G}(A))$ in $\mathsf{V}$, such that
the diagram
\begin{flalign}
\xymatrix{
&I\otimes \mathsf{C}(A,B) \ar[rr]^-{\eta_{B}\,\otimes\,\mathfrak{F}_{A,B}} && \mathsf{D}(\mathfrak{F}(B),\mathfrak{G}(B))\otimes \mathsf{D}(\mathfrak{F}(A),\mathfrak{F}(B)) \ar[dr]^-{\bullet} &\\
\ar[ur]^-{\simeq}\ar[dr]_-{\simeq} \mathsf{C}(A,B) & &&  & \mathsf{D}(\mathfrak{F}(A),\mathfrak{G}(B))\\
&\mathsf{C}(A,B)\otimes I \ar[rr]_-{\mathfrak{G}_{A,B}\,\otimes\,\eta_A}&&  \mathsf{D}(\mathfrak{G}(A),\mathfrak{G}(B))\otimes \mathsf{D}(\mathfrak{F}(A),\mathfrak{G}(A)) \ar[ur]_-{\bullet} &
}
\end{flalign}
in the category $\mathsf{V}$ commutes, for all objects $A,B\in\mathrm{Ob}(\mathsf{C})$.
\end{defi}

\begin{rem}
Notice that for $\mathsf{V}$ being the monoidal category $\Set$ of sets, with monoidal bifunctor
given by the Cartesian product and unit object given by any singleton set $\pt := \{\star\}$,
all definitions above reduce to the definitions in ordinary category theory.
Hence, category theory enriched over $\Set$ is the same as ordinary category theory.
\end{rem}

%%%%%%%%%%%%%%%%%%%%%%%%%%%%%%%%%%%%%%%%%%%%%%%%
%%%%%%%%%%%%%%%%%%%%%%%%%%%%%%%%%%%%%%%%%%%%%%%%

\end{document}